\crefname{section}{\textsection}{\textsection}
\crefname{subsection}{\textsection}{\textsection}
\crefname{subsubsection}{\textsection}{\textsection}
\crefname{paragraph}{\textparagraph}{\textparagraph}
\crefname{thm}{Theorem}{Theorems}
\crefname{minprob}{Minimization Problem}{Minimization Problems}
\renewcommand{\vec}[1]{\mathbf{#1}}
\DeclareMathOperator{\tr}{tr}
\renewcommand{\Re}{\mathrm{Re}}
          \newtheorem{thm}{Theorem}[section]
          \newtheorem{proposition}[thm]{Proposition}
          \newtheorem{lemma}[thm]{Lemma}
          \newtheorem{corollary}[thm]{Corollary}
          \newtheorem{definition}[thm]{Definition}
          \theoremstyle{definition}
          \newtheorem{remark}[thm]{Remark}
\renewcommand{\emptyset}{\varnothing}
\newcommand{\one}{\mathds{1}}
\newcommand{\av}{\mathbf{a}}
\newcommand{\xv}{\mathbf{x}}
\newcommand{\kv}{\mathbf{k}}
\newcommand{\kvp}{\mathbf{k}^{\prime}}
\newcommand{\yv}{\mathbf{y}}
\newcommand{\xxv}{\mathbf{X}}
\newcommand{\yyv}{\mathbf{Y}}
\newcommand{\mm}{\mathfrak{m}}
\newcommand{\nn}{\mathfrak{n}}
\newcommand{\dom}{\mathscr{D}}
\newcommand{\diff}{\mathrm{d}}
\newcommand{\eps}{\varepsilon}
\newcommand{\aav}{\mathbf{A}}
\newcommand{\hfree}{H_{\mathrm{free}}}
\newcommand{\psigs}{\Psi_{\mathrm{gs}}}
\newcommand{\fqc}{\mathcal{E}_{\mathrm{qc}}}
\newcommand{\ffqc}{\mathcal{F}_{\mathrm{qc}}}
\newcommand{\fgqc}{\mathcal{E}_{\mathrm{gqc}}}
\newcommand{\eqc}{E_{\mathrm{qc}}}
\newcommand{\egqc}{E_{\mathrm{gqc}}}
\newcommand{\psiqc}{\psi_{\mathrm{qc}}}
\newcommand{\zqc}{z_{\mathrm{qc}}}
\newcommand{\rhogqc}{\rho_{\mathrm{gqc}}}
\newcommand{\zgqc}{z_{\mathrm{gqc}}}
\newcommand{\zpek}{\eta_{\mathrm{Pekar}}}
\newcommand{\vpek}{\mathcal{V}_{\mathrm{Pekar}}}
\newcommand{\fpek}{\mathcal{E}_{\mathrm{Pekar}}}
\newcommand{\epek}{E_{\mathrm{Pekar}}}
\newcommand{\psipek}{\psi_{\mathrm{Pekar}}}
\newcommand{\domqc}{\dom_{\mathrm{qc}}}
\newcommand{\domgqc}{\dom_{\mathrm{gqc}}}
\newcommand{\domp}{\dom_{\mathrm{Pekar}}}
\newcommand{\domsv}{\dom_{\mathrm{svm}}}
\newcommand{\domgsv}{\dom_{\mathrm{gsvm}}}
\newcommand{\dompm}{\dom_{\mathrm{pm}}}
\newcommand{\fsv}{\mathcal{E}_{\mathrm{svm}}}
\newcommand{\fpm}{\mathcal{E}_{\mathrm{pm}}}
\newcommand{\esv}{E_{\mathrm{svm}}}
\newcommand{\egsv}{E_{\mathrm{gsvm}}}
\newcommand{\epm}{E_{\mathrm{pm}}}
\newcommand{\msv}{\mm_{\mathrm{svm}}}
\newcommand{\nsv}{\nn_{\mathrm{gsvm}}}
\newcommand{\mupm}{\mu_{\mathrm{pm}}}
\newcommand{\psipm}{\psi_{\mathrm{pm}}}
\newcommand{\beq}{\begin{equation}}
\newcommand{\eeq}{\end{equation}}
\newcommand{\OO}{\mathcal{O}}
\newcommand{\HH}{\mathcal{H}}
\newcommand{\GG}{\mathcal{G}}
\newcommand{\CC}{\mathbb{C}}
\newcommand{\R}{\mathbb{R}}
\newcommand{\EE}{\mathcal{E}}
\newcommand{\WW}{\mathscr{W}}
\newcommand{\GW}{\mathscr{GW}}
\newcommand{\bdm}{\begin{displaymath}}
\newcommand{\edm}{\end{displaymath}}
\newcommand{\bdn}{\begin{eqnarray}}
\newcommand{\edn}{\end{eqnarray}}
\newcommand{\bay}{\begin{array}{c}}
\newcommand{\eay}{\end{array}}
\newcommand{\ben}{\begin{enumerate}}
\newcommand{\een}{\end{enumerate}}
\newcommand{\beqn}{\begin{eqnarray}}
\newcommand{\eeqn}{\end{eqnarray}}
\newcommand{\lf}{\left}
\newcommand{\ri}{\right}
\newcommand{\disp}{\displaystyle}
\newcommand{\tx}{\textstyle}
\newcommand{\bra}[1]{\lf\langle #1\ri|}
\newcommand{\ket}[1]{\lf|#1 \ri\rangle}
\newcommand{\braket}[2]{\lf\langle #1|#2 \ri\rangle}
\newcommand{\braketr}[2]{\lf\langle #1\lf|#2\ri. \ri\rangle}
\newcommand{\braketl}[2]{\lf.\lf\langle #1\ri|#2 \ri\rangle}
\newcommand{\meanlrlr}[3]{\lf\langle #1\lf|#2\ri|#3\ri\rangle}
\renewcommand{\leq}{\leqslant}
\renewcommand{\geq}{\geqslant}
\numberwithin{equation}{section}
\begin{document}
\title{Ground State Properties in the Quasi-Classical Regime}

\author[M.\ Correggi]{Michele Correggi}

\address{Dipartimento di Matematica, Politecnico di Milano, Piazza Leonardo da Vinci, 32,
  20133, Milano, Italy.}

\email{michele.correggi@gmail.com}

\urladdr{https://sites.google.com/view/michele-correggi}

\author[M.\ Falconi]{Marco Falconi}

\address{Dipartimento di Matematica e Fisica, Universit\`{a} degli Studi Roma
  Tre, L.go S. Leonardo Murialdo, 1, 00146, Roma, Italy.}

\email{mfalconi@mat.uniroma3.it}

\urladdr{http://ricerca.mat.uniroma3.it/users/mfalconi/}


\author[M.\ Olivieri]{Marco Olivieri}

\address{ Fakult\"{a}t f\"{u}r Mathematik, Karlsruher Institut f\"{u}r
  Technologie, D-76128, Karlsruhe, Germany.}

\email{marco.olivieri@kit.edu}

\urladdr{}

\keywords{Quasi-classical limit; Interaction of matter and light;
  Semiclassical analysis.}


\date{\today}

\begin{abstract}
  We study the ground state energy and ground states of systems coupling
  non-relativistic quantum particles and force-carrying Bose fields, such as
  radiation, in the quasi-classical approximation. The latter is very useful
  whenever the force-carrying field has a very large number of excitations,
  and thus behaves in a semiclassical way, while the non-relativistic
  particles, on the other hand, retain their microscopic features. We prove
  that the ground state energy of the fully microscopic model converges to
  the one of a nonlinear quasi-classical functional depending on both the
  particles' wave function and the classical configuration of the
  field. Equivalently, this energy can be interpreted as the lowest energy of
  a Pekar-like functional with an effective nonlinear interaction for the
  particles only. If the particles are confined, the ground state of the
  microscopic system converges as well, to a probability measure concentrated
  on the set of minimizers of the quasi-classical energy.
\end{abstract}

\maketitle

\onehalfspacing{}

\tableofcontents

\section{Introduction and Main Results}
\label{sec: intro}

The description and rigorous derivation of effective models for complex
quantum systems is a flourishing line of research in modern mathematical
physics. Typically, in suitable regimes, the fundamental quantum description
can be approximated in terms of some simpler model retaining the salient
physical features, but also allowing a more manageable computational or
numerical treatment. The questions addressed in this work naturally belong to
such a wide class of problems.

We consider indeed a quantum system composed of $ N $ non-relativistic
particles interacting with a quantized bosonic field, in the {\it
  quasi-classical regime}. We refer to the series of works
\cite{correggi2017ahp,correggi2017arxiv,carlone2019arxiv,correggi2019arxiv}
for a detailed discussion of such a regime: in extreme synthesis, we plan to
study field configurations with a suitable semiclassical behavior. We require
indeed that there is a large number of field excitations, although each one
of the latter is carrying a very small amount of energy, in such a way that
the field's degrees of freedom are almost classical. More precisely, we
assume that the average number of force carriers $ \langle \mathcal{N} \rangle $ is of
order $ \frac{1}{\eps} $, for some $ 0 < \eps \ll 1 $, and thus much larger
than the commutator between $ a^{\dagger} $ and $ a $, which is of order $ 1 $ (we
use units in which $ \hbar = 1 $). Concretely, this can be realized by rescaling
the canonical variables $ a^{\dagger}, a $ by $ \sqrt{\eps} $, {\it i.e.}, setting
$ a^{\sharp}_{\eps} : = \sqrt{\eps} a^{\sharp} $, which leads to
\beq \lf[a_{\eps}(\kv), a^{\dagger}_{\eps}(\kvp) \ri] = \eps \delta(\kv- \kvp), \qquad \eps \ll 1.  \eeq
On the other hand, the degrees of freedom associated with the particles are
not affected by the scaling limit $ \eps \to 0 $ and the particles remain
quantum. Our goal is precisely to set up and rigorously derive an effective
quantum model for the lowest energy state of the system in the
quasi-classical regime $ \eps \to 0$, when the field becomes classical.

Let us now describe in more detail the type of microscopic models we plan to
address. The space of states of the full system is\footnote{We do not take
  into account the spin degrees of freedom nor the symmetry constraints
  induced by the presence of identical particles, but such features can be
  included in the discussion without any effort and the results trivially
  apply to the corresponding models. In fact, we may even allow for a
  coupling term between the radiation field and the particle spins
  \cite{correggi2017arxiv}, as the one often included in the Pauli-Fierz
  model.}
\beq
\label{eq: hilbert}
\mathscr{H}_{\eps} : = L^{2}(\R^{dN}) \otimes \GG_{\eps}(\mathfrak{h}),
\eeq
where $ d \in \{1,2,3\} $, $ \mathfrak{h} $ is the single one-excitation space
of the field and $ \GG_{\eps} $ stands for the second quantization map, so
that $ \GG_{\eps}(\mathfrak{h}) $ is the bosonic Fock space constructed over
$ \mathfrak{h} $, with canonical commutation relations
\beq
\label{eq: ccr}
\lf[ a_{\eps}(\xi), a^{\dagger}_{\eps}(\eta) \ri] = \eps \braketl{\xi}{\eta}_{\mathfrak{h}},
\eeq
for any $ \xi,\eta \in \mathfrak{h} $.

The energy of the microscopic system and thus its Hamiltonian is given by the
non-relativistic energy of the particles, the field energy and the
interaction between the particles and the field, in such a way that
\begin{itemize}
\item the particle and field energies are a priori of the same order $ \OO(1)
  $;
\item the interaction is weak, {\it i.e.}, a priori subleading w.r.t.\ the
  unperturbed energies.
\end{itemize}
This is concretely realized by considering Hamiltonians of the form
\beq
\label{eq: hamiltonian}
H_{\varepsilon}=\mathcal{K}_0 \otimes 1 + 1 \otimes \diff \GG_{\eps}(\omega) + H_I,
\eeq
where:
\begin{itemize}	
\item $ \mathcal{K}_0 $ is the ($ \eps$-independent) free particle Hamiltonian
\beq
	\label{eq: K0}
  	\mathcal{K}_0 = \sum_{j = 1}^N \lf( - \Delta_j \ri) + \mathcal{W}(\xv_1, \ldots, \xv_N)
\eeq
 which is assumed to be self-adjoint and bounded from below (we specify in
  \cref{sec:concrete-models} the working assumptions on $ \mathcal{W} $); 
 
\item $ \diff \GG_{\eps}(\omega) $ is the free field energy and is the second
  quantization of the positive operator $ \omega $ on $ \mathfrak{h} $, admitting
  possibly unbounded inverse $ \omega^{-1} $;
\item the interaction $ H_I $ is the only non-factorized term of the
  Hamiltonian, it depends on $ \eps $ only through the creation and
  annihilation operators $ a^{\sharp}_{\eps} $ and it is a polynomial of such
  operators of order between one and two.
\end{itemize}
Such requests meet the scaling conditions mentioned above. Indeed, assuming
that the average number $ \langle \mathcal{N} \rangle $ of bare excitation of the field
is $ \OO(\eps^{-1}) $, the field energy is of order $ \eps \langle
\mathcal{N} \rangle =  \OO(1) $, due to the rescaling of $ a_{\eps}^{\dagger}$ and $a_{\eps} $. For the same reason
and since the interaction is at least of order one in the creation and
annihilation operators, we have that $
H_I $ is of order $ \OO(\sqrt{\eps}) $, {\it i.e.}, a
priori subleading w.r.t. the rest of $ H_{\eps} $.

The specific models we are considering in the following are:
\begin{enumerate}[\;\;\;\;\;\;\;(a)]

\item the {\it Nelson model} \cite{nelson1964jmp}: the coupling in $ H_I $ is
  simply linear, {\it i.e.},
  \beq
  \label{eq: nelson coupling}
  H_I = \sum_{j=1}^N A_{\eps}(\xv_j),
  \eeq
  where
  \beq
  \label{eq: field operator}
  A_{\varepsilon}(\xv) := a^{\dagger}_{\varepsilon}\bigl(\vec{\lambda}(\xv)\bigr) +
  a_{\varepsilon}\bigl(\vec{\lambda}(\xv)\bigl)
  \eeq
  is the field operator and
  \beq
  \label{eq: lambda N}
  \lambda, \: \omega^{-1/2} \lambda \in L^{\infty}(\R^{3}; \mathfrak{h})
  \eeq
  (a typical choice is $
  \mathfrak{h} = L^2(\R^d) $, $ \omega $ the multiplication operator by $ \omega(\kv) \geq
  0 $
  and $ \lambda(\xv; \kv) = \lambda_0(\kv) e^{- i \kv \cdot \xv} $, with $ \lambda_0, \omega^{-1/2}
  \lambda_0 \in \mathfrak{h} $);
\item the {\it Fr\"{o}hlich polaron} \cite{frohlich1937prslA}: it is a
  variant of the Nelson model where $ \mathfrak{h} = L^2(\R^d) $, $ \omega = 1 $
  and
  \beq
  \label{eq: lambda polaron}
  \lambda(\xv; \kv) = \sqrt{\alpha} \frac{e^{-i \kv\cdot \xv}}{\lvert \kv \rvert_{}^{\frac{d-1}{2}}},
  \eeq
  for some $ \alpha > 0 $;
\item the {\it Pauli-Fierz model} \cite{pauli1938nc}: it is the most
  elaborate model and we consider only its three-dimensional realization,
  namely $ d = 3 $; the interaction is provided by the minimal coupling
  \beq
  \label{eq: hamiltonian PF}
  H_{\eps} = \sum_{j=1}^N \tx\frac{1}{2m_j}{\lf(-i\nabla_j + e \aav_{\varepsilon,j}(\xv_j)
    \ri)}^2 + \mathcal{W}(\xv_1,\dotsc, \xv_N) + 1 \otimes \diff \GG_{\eps}(\omega),
  \eeq
  where $ \omega \geq 0 $, $ m_j > 0 $, $ j = 1, \ldots, N $ and $e$ are the particles' masses and charge, respectively, and
  the field operators $ \aav_{\eps,j} $, $ j =1, \ldots, N $, have here the same
  formal expression as in \eqref{eq: field operator} but $ \bm{\lambda}_j =
  (\lambda_{j,1}, \lambda_{j,2}, \lambda_{j,3}) $, with
  \beq
  \label{eq: lambda PF}
  \lambda_{j,\ell}, \: \omega^{\pm1/2} \lambda_{j,\ell} \in L^{\infty}(\mathbb{R}^3; \mathfrak{h})\;,
  \eeq
  is a vector function to account for the electromagnetic polarizations and
  the charge distributions of the particles (the standard choice is, indeed, $\mathfrak{h} = L^2(\mathbb{R}^3;\mathbb{C}^2 )$) and we fix for convenience the
  gauge to be the Coulomb's one, {\it i.e.}, $ \nabla_j \cdot \bm{\lambda}_j = 0 $.
\end{enumerate}
The physical meaning of the three models above is quite different and we
refer, {\it e.g.}, to the monograph \cite{spohn2004dcp} for a detailed
discussion. The Nelson model is the simplest one and can be applied to model nucleons interacting with a meson
field or, in first
approximation, to model the interaction of particles with radiation fields,
although the case of the electromagnetic field is typically described through
the Pauli-Fierz model. The polaron, on the other hand, provides an effective description of
quantum particles in a phonon field, {\it e.g.}, generated by the vibrational
models of a crystal. Note also that the quasi-classical limit $ \eps \to 0 $
itself can have different interpretations in each model. For instance, in the
framework of the polaron model, it can be reformulated as a {\it strong
  coupling limit}, which has recently attracted a lot of attention (see, {\it
  e.g.}, \cite{griesemer2016arxiv2, frank2019nonlinear, lieb2019arxiv,
  leopold2020derivation, mitrouskas2020arxiv} and references
therein).
	
In the Nelson and Pauli-Fierz Hamiltonians, there is an ultraviolet
regularization, made apparent in the assumptions on $ {\lambda} $; we do not
consider here the renormalization procedure to remove such ultraviolet
cut-off, even if for the Nelson model it is possible to perform it
rigorously. We plan to address such a problem in a future work. We also skip
at this stage the discussion of the well-posedness of such models (see
\cref{sec:nelson-model}, \cref{sec:polaron-model} and \cref{sec:pauli-fierz-model} for further details), but we point out that,
with the assumptions made, the operator \eqref{eq: hamiltonian} is self-adjoint and bounded from
below in each model.
	
The main problem we study concerns the behavior of the ground state of the
microscopic Hamiltonian $ H_{\eps} $ in the quasi-classical limit $ \eps \to 0
$ and, more precisely, we investigate the convergence in the same limit of
the bottom of the spectrum
\beq
\label{eq: eeps}
E_{\eps} : = \inf \sigma(H_{\eps}) = \inf_{\Gamma_{\eps} \in
  \mathscr{L}^1(\mathscr{H}_{\eps}), \lf\| \Gamma_{\eps} \ri\|_1 = 1} \tr \lf(H_{\eps} \Gamma_{\eps} \ri)
\eeq
of $ H_{\eps} $ as well as the approximation of any corresponding {\it approximate ground state} or {\it  minimizing
sequence} $ \Psi_{\varepsilon, \delta} \in \mathscr{D}(H_{\varepsilon}) $ satisfying
\begin{equation}
  \label{eq:1}
  \langle  \Psi_{\varepsilon, \delta} \vert H_{\varepsilon}\vert \Psi_{\varepsilon, \delta} \rangle_{\mathscr{H}_{\varepsilon}}< E_{\varepsilon}+\delta\; ,
\end{equation}
for some small $ \delta > 0 $.
	
The quasi-classical counterparts of such quantities are determined via the
minimization of a suitable coupled problem, where the particle's degrees of
freedom are driven by a classical field. Such a problem is described in
detail in \cref{sec: quasi-classical var} below. The quasi-classical energy
is given by a functional $ \fqc[\psi,z] $ (see \eqref{eq: fqc} below), depending
on the particle's wave function $ \psi \in L^2(\R^{dN}) $ and on the classical
field configuration\footnote{The space $\mathfrak{h}_{\omega}$ is constructed
  starting from $\mathfrak{h}$ and the dispersion relation $\omega$ of the
  semiclassical field; see \eqref{eq:3} for a precise
  definition. It is necessary to use $\mathfrak{h}_{\omega}$ in place of
  $\mathfrak{h}$ as the field's configuration space whenever the field is massless,
  such as in the Pauli-Fierz model or in the massless Nelson model. For
  massive fields, $\mathfrak{h}_{\omega}\subseteq \mathfrak{h}$.} $ z \in
\mathfrak{h}_{\omega} $. Denoting by $ \eqc $ and $ \lf( \psiqc, \zqc \ri) \in
L^2(\R^{dN}) \oplus \mathfrak{h}_{\omega} $ the infimum of a such a quasi-classical
energy and the relative minimizing configuration (if any), respectively, our
main results are:
\begin{enumerate}[i)]
\item {\bf Energy convergence}. Both the quantum and the quasi-classical
  problems are stable, {\it i.e.}, $ E_{\eps}, \eqc > -\infty $ and
  \beq
  E_{\eps} \xrightarrow[\eps \to 0]{} \eqc.
  \eeq	
\item {\bf Convergence of ground states and approximate ground states}. Assuming that the operator $ \mathcal{K}_0 $ has compact resolvent, then any limit point
  of an approximate ground state $\Psi_{\varepsilon, \delta}$ in the sense of {\it quasi-classical Wigner
    measures} is a an approximate ground state of the quasi-classical functional $
  \fqc $,
  in a sense to be clarified in \cref{thm:2} below. Furthermore, any
  limit point of the family of approximate ground states $\Psi_{\varepsilon, o_{\varepsilon}(1)}$ is concentrated on the set
  of minimizers of the quasi-classical functional $\fqc$; since the set of
  limit points is never empty, the latter admits {\it at least one minimizer} in
  $
  L^2(\R^{dN}) \oplus \mathfrak{h}_{\omega} $. If $H_{\varepsilon}$ has a ground state $\psigs$, then any of
  its limit points in the sense of quasi-classical Wigner measures is
  concentrated on the set of minimizers of $\fqc$.
\item {\bf Generalized convergence of  ground states and approximate ground states}. If the
  operator $ \mathcal{K}_0 $ does not have compact resolvent, then any limit
  point of $ \Psi_{\varepsilon,\delta} $ is a {\it generalized quasi-classical Wigner measure},
  and it is a minimizing sequence for a suitable generalization $ \fgqc $ of
  the energy $ \fqc $. Furthermore, any limit point of $\Psi_{\varepsilon,o_{\varepsilon}(1)}$ is a
  minimizer for $\fgqc$. Let us remark that this does not imply the existence
  of a minimizing configuration $ \lf( \psiqc, \zqc \ri) \in L^2(\R^{dN}) \oplus
  \mathfrak{h}_{\omega} $. If $H_{\varepsilon}$ has a ground state $\psigs$, then any of its limit points in the
  sense of {\it generalized quasi-classical Wigner measures} is concentrated
  on the set of minimizers of $\fgqc$.
\end{enumerate}

We state the above results in all details in \cref{sec: ground state} together with a precise
definition of the notions of quasi-classical Wigner measure and generalized
quasi-classical Wigner measure and the relative topologies. In the next
\cref{sec: quasi-classical var}, we first introduce and discuss the quasi-classical variational problems. 
In the rest of the paper, we present the proofs of the above results. We
stress that the main techniques we are going to use belong to the framework
of semiclassical analysis in infinite dimensional spaces, which was
introduced in the series of works \cite{ammari2008ahp, ammari2009jmp,
  ammari2011jmpa, ammari2015asns} and further discussed in
\cite{falconi2017ccm, falconi2017arxiv}. Apart from the aforementioned works
on quasi-classical analysis, semiclassical techniques have already been used
in the study of variational problems, both for systems with creation and
annihilation of particles \cite{ammari2014jsp}, and for systems with many
bosons, using a slightly different approach called quantum de Finetti theorem
(see \cite{lewin2014am,lewin2015amrx,lewin2016tams}, and references therein
contained). We also point out that partially classical regimes have already
been explored in the literature in \cite{ginibre2006ahp, amour2015arxiv,
  amour2017arxiv, amour2017jmp}, although in other contexts and with
different purposes.

\subsection{Quasi-classical variational problems}
\label{sec: quasi-classical var}
		
As discussed in detail in the series of works \cite{correggi2017ahp,
  correggi2017arxiv, correggi2019arxiv}, each of the microscopic models
introduced so far admits a quasi-classical counterpart in the limit $ \eps \to
0 $.
More precisely, both their stationary \cite{correggi2017ahp,
  correggi2017arxiv} and dynamical \cite{correggi2019arxiv} properties can be
approximated in such a regime in terms of effective models, where the quantum
particle system is driven by a classical field, which in turn is the
classical counterpart of the quantized field. In extreme synthesis, the
quantum field operator gets replaced by a classical field, which is just a
function on $ \R^d $, and the interaction term $ H_I $ in $ H_{\eps} $ gives
rise to a potential $ \mathcal{V}_z $ depending on the classical field
configuration $ z \in \mathfrak{h} $. Concretely, the quasi-classical effective
Hamiltonian reads
\beq
\label{eq: classical hamiltonian}
\mathcal{H}_z = \mathcal{K}_0 + \sum_{j = 1}^N \mathcal{V}_z(\xv_j) + \meanlrlr{z}{\omega}{z}_{\mathfrak{h}},
\eeq
and it is self-adjoint on some dense $ \dom \subset
L^2(\R^{dN}) $ for any $ z \in \mathfrak{h} $ (see \cite[Thms.\
2.1--2.3]{correggi2017ahp} and \cite[Thm.\ 1.1]{correggi2017arxiv}). In each
model the explicit expression of such an effective potential can be
identified explicitly:
\begin{enumerate}[\;\;\;\;\;\;\;(a)]	
\item in the Nelson model, each particle feels a potential of the form
  \beq
  \label{eq: potential N}
  \mathcal{V}_z(\xv) = 2\Re \braket{z}{\lambda(\xv)}_{\mathfrak{h}} \in \mathscr{B}(L^2(\R^d));
  \eeq
\item for the polaron, the formal expression of the potential $ \mathcal{V}_z
  $ is the same as
  in \eqref{eq: potential N} above, although, since \eqref{eq: lambda polaron} does not belong to $ L^{\infty}(\R^d; \mathfrak{h})
  $, the
  expression on the r.h.s.\ must be interpreted in the proper way (see
  \cref{sec:polaron-model}); in addition, the obtained potential is no longer
  bounded but it is infinitesimally form-bounded w.r.t.\ $ - \Delta $;
\item in the Pauli-Fierz model, the effective operator is obtained via the
  replacement of the field $ \aav_{\eps} $ by its classical counterpart $
  \av_z(\xv) = 2 \Re \braket{z}{\bm{\lambda} (\xv)}_{\mathfrak{h}} $,
  which is continuous and vanishing at $ \infty $ (see
  \cite[Rmk. 1.5]{correggi2017arxiv}), and thus, in order to recover the
  expression \eqref{eq: classical hamiltonian}, $ \mathcal{V}_z $ must be the
  operator
  \beq
  \label{eq: potential PF}
  \mathcal{V}_z(\xv) = 2 \sum_{j = 1}^N \frac{1}{m_j} \lf[ - i e \Re
  \braket{z}{\bm{\lambda}_j(\xv)}_{\mathfrak{h}} \cdot \nabla_j + e^2 \lf( \Re
  \braket{z}{\bm{\lambda}_j(\xv)}_{\mathfrak{h}} \ri)^2 \ri].
  \eeq		
\end{enumerate}
Note that in case (c) the effective operator can in fact be simply rewritten
as\footnote{We use the compact notation $ \xxv : = \lf(\xv_1, \ldots, \xv_N \ri) \in
  \R^{dN} $.}
\beq
\label{eq: classical hamiltonian PF}
\mathcal{H}_z = \sum_{j = 1}^N \tx\frac{1}{2m_j} \lf( - i \nabla_j + e \av_z(\xv_j) \ri)^2 + \mathcal{W}(\xxv) + \meanlrlr{z}{\omega}{z}_{\mathfrak{h}}.
\eeq
	
We can now define the effective quasi-classical ground state energy in terms
of the energy functional
\beq
\label{eq: fqc}
\fqc[\psi, z] : = \meanlrlr{\psi}{\HH_z}{\psi}_{L^2(\R^{dN})}, \qquad \lf(\psi, z\ri) \in L^{2}(\R^{dN}) \oplus \mathfrak{h}_{\omega};
\eeq
as
\beq
\label{eq: eqc}
\eqc : = \inf_{\lf( \psi, z \ri) \in \domqc} \EE[\psi, z],
\eeq
where
\beq
\label{eq: domqc}
\domqc : = \lf\{ \lf(\psi, z\ri) \in L^{2}(\R^{dN}) \oplus \mathfrak{h}_{\omega} \: \Big| \: \lf\|\psi \ri\|_2 = 1, \lf| \fqc[\psi,z] \ri| < + \infty  \ri\}.
\eeq
Here, $\mathfrak{h}_{\omega}$ is the Hilbert completion of $\bigcap_{k\in \mathbb{N}} \mathscr{D}(\omega^k)$ with respect to the
scalar product $\langle \,\cdot\, \vert \,\cdot\, \rangle_{\mathfrak{h}_{\omega}} :=\langle \,\cdot \, \vert \omega \vert\,\cdot \,
\rangle_{\mathfrak{h}}$, {\it i.e.},
\begin{equation}
  \label{eq:3}
  \mathfrak{h}_{\omega} := \overline{\bigcap_{k\in \mathbb{N}} \mathscr{D}(\omega^k)}\,\!^{\langle  \cdot  \vert \cdot  \rangle_{\mathfrak{h}_{\omega}}}.
\end{equation}
We denote by $ (\psiqc, \zqc) \in \domqc $ a corresponding
minimizing configuration (if any), {\it i.e.}, such that
\beq
\eqc = \fqc\lf[\psiqc, \zqc\ri].
\eeq

Concretely, the functional $ \fqc $ plays the role of the
\emph{quasi-classical energy} of the system under consideration. However, the
reader should be careful and be aware that $ \HH_z $ \emph{is not} the
Hamiltonian energy of the whole system: the complete environment + small
system's evolution is indeed not of Hamiltonian type. For each fixed $z\in
\mathfrak{h}_{\omega}$, the Hamilton-Jacobi equations of $\fqc[\psi,z]$, w.r.t.\ the
(complex) $\psi$ variable, yield the dynamics of the small system; the
environment on the other hand is stationary in the problems under
consideration in this paper (see \cite{correggi2019arxiv} for a detailed
analysis of quasi-classical dynamical systems).

The preliminary questions to address towards the derivation of the above
quasi-classical effective models are whether such models are stable and, if
this is the case, whether a minimizing configuration does exist: explicitly,
if
\beq
\eqc \overset{?}{>} -\infty \qquad \mbox{(stability)}, \tag{VP1}\label{eq: vp1}
\eeq
\beq
\overset{?}{\exists} \lf( \psiqc, \zqc \ri) \in \domqc \qquad \mbox{(existence of a ground state)}. \tag{VP2}\label{eq: vp2}
\eeq
Note that any critical point $ \lf( \psi, z \ri) \in \domqc $ of the functional $
\fqc[\psi,z] $ must satisfy
the condition $ \delta_{(\psi,z)} [ \fqc[\psi,z] - \epsilon \lf\| \psi
\ri\|_2^2 ] = 0 $, which yields the Euler-Lagrange
equations
\beq
\label{eq: el}
\begin{cases}
  \HH_z \psi = \epsilon \psi,	\\
  \omega z + \meanlrlr{\psi}{ \partial_{\bar{z}} \sum_{j}
    \mathcal{V}_z(\xv_j)}{\psi}_{L^2(\R^{dN})} = 0,
\end{cases}
\eeq
where the Lagrange multiplier $ \epsilon  = \meanlrlr{\psi}{H_z}{\psi} \in \R $ takes into account the normalization
constraint on $ \psi $. We anticipate
that a consequence of the convergence of the microscopic ground state, stated
in \cref{teo: minimizers 1}  below, is that, under suitable assumptions on $
\mathcal{K}_0 $ (for instance if $ \mathcal{W} $
is trapping), the answer to both questions in \eqref{eq: vp1} and \eqref{eq: vp2} is positive and, in
particular, the set of minimizers is not empty.

The variational problem above is strictly related to the more general issue
of rigorous derivation of effective theories, since, at least for the polaron
model, it is known that the minimization of the microscopic energy can be
approximated in the limit $ \eps \to 0 $ in terms of a nonlinear problem on $ \psi
$ alone. Indeed, focusing on the particle system, one can naturally approach
\eqref{eq: eqc} in a different and \emph{a priori} inequivalent way, {\it
  i.e.}, {\it first} one gets rid of the classical field by minimizing over $
z \in \mathfrak{h}_{\omega} $ and \emph{then} investigates the minimization of the
remaining functional on $ \psi $, which is obviously nonlinear, since the
minimizing $ z $ depends on $ \psi $ itself. As anticipated, this strategy has
been already followed in the literature in the case of the polaron in the
strong coupling regime, leading to the {\it Pekar functional} and the
corresponding variational problem \cite{pekar1955ac, donsker1983cpam,
  lieb1997cmp}. Such a feature is however not exclusive of the polaron and
can be observed in all the models mentioned above: we present below a formal
derivation of a Pekar-like functional $ \fpek[\psi] $, for both the Nelson and
polaron model. The Pauli-Fierz case is also discussed below, let us remark
however that in this case such a procedure does not yield an explicit
nonlinear functional of $ \psi $ (see \eqref{eq: fpek PF} below), because it is
in general not possible to solve explicitly the variational equation
expressing the minimizing $ z $ in terms of $ \psi $.

The formal procedure goes as follows: solving the critical point condition
$\delta_{z} \fqc = 0 $ w.r.t.\ the variable $z $ for fixed $ \psi $, we find some $
z_{\psi} $, that we can plug
in $ \fqc $, thus obtaining the Pekar energy $
\fpek[\psi] := \fqc [\psi, z_{\psi}] $. Such a scheme can be
made to work rigorously for the polaron (case (b)) with some care, but the
variable $ z $ is not the right one to consider in cases (a) and (c). Under
the assumptions we have made (recall in particular \eqref{eq: lambda N} and
\eqref{eq: lambda PF}), it is indeed more natural to set, since $z\in \mathfrak{h}_{\omega}$,
\beq
\eta : = \omega^{1/2} z,
\eeq
(note however that in case (b) $ \eta = z $) and consider the functional
$ \ffqc[\psi, \eta] : = \fqc[\psi, \omega^{-1/2} \eta] $, which in case (a) reads
\begin{multline}
  \label{eq: qc energy computed}
  \ffqc[\psi, \eta] = \meanlrlr{\psi}{\mathcal{K}_0 +  2\Re \tx\sum_{j} \braket{\eta}{\omega^{-1/2}\lambda(\xv_j)}_{\mathfrak{h}}}{\psi}_{L^2(\R^{dN})} + \lf\| \eta \ri\|^2_{\mathfrak{h}}	\\
  = \meanlrlr{\psi}{\mathcal{K}_0}{\psi}_{L^2(\R^{dN})} + 2 \Re \braketr{\eta}{\meanlrlr{\psi}{\Lambda}{\psi}_{L^2(\R^{dN})}}_{\mathfrak{h}} + \lf\| \eta \ri\|^2_{\mathfrak{h}}
\end{multline}
where $ \Lambda \in L^{\infty}(\R^{dN}; \mathfrak{H}) $ is given by $ \Lambda(\xxv) := \sum_{j =
  1}^N \lf(\omega^{-1/2} \lambda \ri)(\xv_j) $ (recall again the assumption \eqref{eq:
  lambda N} on $ \lambda $) and we have exploited the linearity of the scalar
product. Taking now the functional derivative w.r.t.\ to $ \eta $, we get the
Euler-Lagrange equation for the minimization of the above energy w.r.t. $ \eta \in
\mathfrak{h} $, {\it i.e.},
\beq
\label{eq: el N}
\eta + \meanlrlr{\psi}{\Lambda(\: \cdot \:)}{\psi}_{L^2(\R^{dN})} = 0,
\eeq
yielding the minimizing $ \zpek $ as
\beq
\label{eq: zpek N}
\zpek[\psi] = - \sum_{j = 1}^N \int_{\R^{dN}} \diff \xv_1 \cdots \xv_N \: \lf(\omega^{-1/2} \lambda
\ri)( \xv_j) \lf| \psi(\xv_1, \ldots, \xv_N) \ri|^2
\eeq
which can be easily seen to belong to $ \mathfrak{h} $ under the assumptions
made. Plugging $ \zpek $ back into \eqref{eq: qc energy computed}, we get
\beq
\label{eq: fpek N}
\fpek[\psi] := \inf_{\eta \in \mathfrak{h}} \ffqc[\psi, \eta] = \ffqc\lf[\psi, \zpek[\psi] \ri] =\meanlrlr{\psi}{\mathcal{K}_0 + \vpek \star \lf| \psi \ri|^2}{\psi}.
\eeq
Here we have denoted by $ \star $ the action of the integral kernel $ \vpek(\xxv,
\yyv) $ on $ |\psi|^2 $,
{\it i.e.},
\beq
\lf( \vpek \star \lf| \psi \ri|^2 \ri) (\xxv) := \int_{\R^{dN}} \diff \yyv \: \vpek(\xxv, \yyv) \lf|\psi(\yyv)\ri|^2,
\eeq
and
\beq
\vpek(\xxv,\yyv) = - \Re \sum_{i,j = 1}^N \meanlrlr{\lambda(\xv_i)}{\omega^{-1}}{\lambda(\yv_j)}_{\mathfrak{h}} \in L^{\infty}(\R^{2dN}).
\eeq

Note that in case of identical particles -- either fermionic or bosonic --, the
above expressions may be conveniently rewritten using the one-particle
density $ \rho_{\psi} \in L^1(\R^d) $ associated with $ \psi $, {\it i.e.},
\beq
\rho_{\psi}(\xv) : = N \int_{\R^{d(N-1)}} \diff \xv_2 \cdots \diff \xv_N \: \lf| \Psi(\xv, \xv_2, \ldots, \xv_N) \ri|^2.
\eeq
Indeed, in this case, \eqref{eq: zpek N} reads
\bdm
\zpek[\psi] = - \braketr{\rho_{\psi}}{ \lf(\omega^{-1/2}\lambda \ri) \lf( \: \cdot \: \ri)}_{L^2(\R^d)},
\edm
and the Pekar energy becomes
\beq
\fpek[\psi] = \meanlrlr{\psi}{\mathcal{K}_0}{\psi}_{L^2(\R^{dN})} + \meanlrlr{\rho_{\psi}}{\mathcal{U}}{\rho_{\psi}}_{L^2(\R^d)},
\eeq
where
\beq
\mathcal{U} = U(\xv,\yv) : = \meanlrlr{\lambda(\xv)}{\omega^{-1}}{\lambda(\yv)}_{\mathfrak{h}},
\eeq
which is its typical form in the literature. For instance, in the polaron
case, one recovers the self-interacting potential generated by the kernel
$ \mathcal{U}(\xv - \yv) = - \alpha \lf| \xv - \yv \ri|^{-1} $.

The above derivation can be easily seen to be correct under the assumptions
made in case (a). In case (b), however, one can not apply such a derivation
straightforwardly because $ \lambda \notin L^{\infty}(\R^{dN}; \mathfrak{h}) $, but a simple
well-known trick (see \cref{sec:polaron-model}) allows to split it into two
terms, which can be handled separately as above. In case (c) on the other
hand the Pekar functional takes the implicit form
\beq
\label{eq: fpek PF}
\begin{cases}
  \zpek  + \sum_j \frac{1}{m_j} \meanlrlr{\psi}{e \omega^{-1/2} \bm{\lambda}_j \cdot \lf( - i \nabla_j \ri) + 2 e^2 \omega^{-1/2} \bm{\lambda}_j \cdot \Re \braketr{\zpek}{\omega^{-1/2} \bm{\lambda}_j}_{\mathfrak{h}}}{\psi}_{L^2(\R^{3N})} =0,	\\
  \fpek[\psi] = \meanlrlr{\psi}{\HH_{z_{\psi}}}{\psi}_{L^2(\R^{3N})},
\end{cases}
\eeq
where $ \HH_z $ is given by \eqref{eq: classical hamiltonian PF} and we set $
z_{\psi} : = \omega^{-1/2} \zpek[\psi] $ for short. As before, all the terms in the first equation belong to $ \mathfrak{h} $, thanks to the assumptions on $ \bm{\lambda}_j $ and the fact that any $ \lf(\psi, z\ri) \in \domqc $ is such that $ \psi \in H^1(\R^{3N}) $. Furthermore, the last term can be thought of as the action on $ \zpek $ of a linear operator $ T  $ on $ \mathfrak{h} $ whose norm is bounded by
\bdm
	2 e^2 \sum_{j = 1}^N \tx\frac{1}{m_j} \lf\| \omega^{-1/2} \bm{\lambda}_j \ri\|^2_{\mathfrak{h}},
\edm
which is smaller than $ 1 $, if $ e $ is small enough. In this case, $ 1  + T $ is invertible and there exists a unique solution $ \zpek[\psi] \in \mathfrak{h} $ of the first equation. More in general, existence and uniqueness of $ \zpek[\psi] $ for any value of $ e $ follows from the strict convexity of the energy in $ \eta $ (see next \cref{rem: uniqueness pekar} and \cref{lemma: convex}). Note however that unfortunately  it is not possible to write explicitly $ \fpek $ as a functional of $ \psi $ alone, since, due to the presence of an operator -- the gradient --, one can not exchange the scalar product  in $ L^2(\R^{3N}) $ with the one in $ \mathfrak{h} $, as it was done in \eqref{eq: qc energy computed}. In particular, even for identical particles, the second term in the first equation in \eqref{eq: fpek PF} depends on the reduced density matrix, while the last one is a function of the density alone.

We now define
\beq
\epek : = \inf_{\psi \in \domp} \fpek[\psi],
\eeq
with $ \domp : = \lf\{ \psi \in L^2(\R^{dN}) \: | \: \lf\| \psi \ri\|_2 = 1, \lf|
\fpek[\psi] \ri| < + \infty \ri\} $, as the ground state energy of
the Pekar functionals \eqref{eq: fpek N} and \eqref{eq: fpek PF}, and denote by $
\psipek \in \domp $ any corresponding
minimizer. It is then natural to wonder whether there is any connection
between the questions \eqref{eq: vp1} and \eqref{eq: vp2} and the analogous problems for $ \fpek $,
{\it i.e.},
\beq
\epek \overset{?}{>} -\infty \qquad \mbox{(stability)},
\tag{VP${}^{\prime}$1}\label{eq:
  vpp1}
\eeq
\beq
\overset{?}{\exists} \psipek \in L^2(\R^{dN}) \qquad \mbox{(existence of a ground state)}.
\tag{VP${}^{\prime}$2}\label{eq: vpp2}
\eeq
This is of particular interest for physical applications, since the
minimization of the nonlinear functional $ \fpek $ may be easier to address
also in numerical experiments. A priori however it is not at all obvious that
such a relation exists, but in the next \cref{pro: equivalence} we are going
to state that the two variational problems are actually equivalent, which is
particularly interesting in case (c) since the explicit form of $ \fpek $ is
not available.

\begin{proposition}[Equivalence of variational problems]
  \label{pro: equivalence}
  \mbox{}	\\
  Under the assumptions made above,
  \beq \epek = \eqc > - \infty.
  \eeq
  Furthermore, if $(\psiqc,\zqc) \in \domqc $ is a minimizer of $ \fqc[\psi,z] $, then
  \beq
  \label{eq: equivalence 1}
  \fpek\lf[\psiqc \ri] = \epek.
  \eeq
  Conversely, if $ \psipek $ is a minimizer of $ \fpek[\psi] $, then $
  \zpek[\psipek] \in \mathfrak{h} $ (given by
  \eqref{eq: el N} and \eqref{eq:
    fpek PF} with $ \psi = \psipek $, respectively) and
  \beq
  \label{eq: equivalence 2}
  \EE[\psipek, \zpek] = \eqc.
  \eeq
\end{proposition}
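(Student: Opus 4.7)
\textbf{Overall strategy.} The plan is to view the Pekar energy as a partial minimization of the quasi-classical energy in the field variable and to swap the order of infima using uniform strict convexity of $\ffqc[\psi,\cdot]$ (equivalently $\fqc[\psi,\cdot]$). Concretely, for every fixed $\psi\in\domp$ the functional
\[
\eta \longmapsto \ffqc[\psi,\eta]
\]
is quadratic in $\eta\in\mathfrak{h}$: in cases (a) and (b) the quadratic part is simply $\lVert\eta\rVert_{\mathfrak{h}}^{2}$, while in case (c) it has the form $\lVert\eta\rVert_{\mathfrak{h}}^{2}+\langle\eta|T_\psi\eta\rangle_{\mathfrak{h}}$ with a bounded non-negative operator $T_\psi$ (the one introduced just after \eqref{eq: fpek PF}), hence strictly convex and coercive. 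I would then invoke the forthcoming \cref{lemma: convex} to extract the unique minimizer $\eta_\psi\in\mathfrak{h}$, identified with $\zpek[\psi]$ up to the convention $\eta=\omega^{1/2}z$, and obtain $\fpek[\psi]=\fqc[\psi,z_\psi]$ with $z_\psi:=\omega^{-1/2}\eta_\psi\in\mathfrak{h}_\omega$.

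\textbf{Equality of the infima and stability.} From the partial minimization, $\fpek[\psi]\leq \fqc[\psi,z]$ for every admissible $z$, so taking the infimum over $(\psi,z)\in\domqc$ already gives $\epek\leq\eqc$. For the opposite inequality, I would pick a minimizing sequence $\{\psi_n\}\subset\domp$ for $\fpek$; the construction above produces $(\psi_n,z_{\psi_n})\in\domqc$ with
\[
\eqc\leq\fqc[\psi_n,z_{\psi_n}]=\fpek[\psi_n]\xrightarrow[n\to\infty]{}\epek,
\]
so $\eqc=\epek$. Stability $\eqc>-\infty$ follows by completing the square in $\eta$: the optimized value $\fqc[\psi,z_\psi]$ equals $\meanlrlr{\psi}{\mathcal{K}_0}{\psi}_{L^2(\R^{dN})}$ minus a non-negative quadratic form in the bounded data $\omega^{-1/2}\lambda$ (resp.\ $\omega^{-1/2}\bm{\lambda}_j$), uniformly bounded in $\psi$ thanks to \eqref{eq: lambda N} (resp.\ \eqref{eq: lambda PF}), so $\epek\geq \inf\sigma(\mathcal{K}_0)-C>-\infty$.

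\textbf{Correspondence of minimizers.} If $(\psiqc,\zqc)$ is a minimizer of $\fqc$, the chain
\[
\eqc=\fqc[\psiqc,\zqc]\geq\inf_{z'}\fqc[\psiqc,z']=\fpek[\psiqc]\geq\epek=\eqc
\]
forces $\fpek[\psiqc]=\epek$, which is \eqref{eq: equivalence 1}, and, by uniqueness of the minimizer in the classical variable, $\zqc=z_{\psiqc}$. Conversely, given a minimizer $\psipek$ of $\fpek$, the first step already produces $\zpek[\psipek]\in\mathfrak{h}$ solving the Euler--Lagrange equation \eqref{eq: el N} (resp.\ the implicit system \eqref{eq: fpek PF}); substituting back yields $\fqc[\psipek,\omega^{-1/2}\zpek[\psipek]]=\fpek[\psipek]=\epek=\eqc$, i.e.\ \eqref{eq: equivalence 2}.

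\textbf{Main obstacle.} The delicate point is the convex-minimization step for the Pauli--Fierz case (c), where $\ffqc[\psi,\cdot]$ has a non-trivial quadratic coefficient $1+T_\psi$ depending on $\psi$ both through its density and, via the minimal-coupling gradient term, through its momentum. Verifying that $T_\psi$ is a bounded non-negative operator on $\mathfrak{h}$ for every $\psi\in\domqc\subset H^1(\R^{3N})$ -- so that $1+T_\psi$ is boundedly invertible and the minimization produces $\zpek[\psipek]$ genuinely in $\mathfrak{h}$ rather than only in $\mathfrak{h}_\omega$ -- is precisely the content of the announced \cref{lemma: convex}. Once that input is granted, everything else reduces to the straightforward exchange-of-infima sketched above.
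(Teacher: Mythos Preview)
Your argument for the equality $\epek=\eqc$ and for the correspondence of minimizers is correct and matches the paper's almost verbatim: the paper also proves $\epek=\eqc$ by the two-sided minimizing-sequence argument (this is the last part of the proof of \cref{prop:2}), and the minimizer correspondence via the chain $\eqc=\fqc[\psiqc,\zqc]\geq\fpek[\psiqc]\geq\epek=\eqc$ together with strict convexity from \cref{lemma: convex} is exactly what appears in the proof of \cref{prop:3}. So the core of your plan coincides with the paper's.

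The one point where you diverge is stability. The paper does \emph{not} prove $\eqc>-\infty$ here at all; it defers this to \cref{teo: ground state energy}, i.e.\ obtains it as a byproduct of the quasi-classical limit of the uniformly bounded microscopic ground state energy (see \cref{rem: stability}). Your direct ``complete the square'' argument is more elementary in spirit, but as written it only works cleanly in case (a). In the polaron case (b) the data $\omega^{-1/2}\lambda=\lambda$ is \emph{not} in $L^{\infty}(\R^d;\mathfrak{h})$, so the subtracted term is not uniformly bounded and you need the infrared/ultraviolet splitting \eqref{eq: potential splitting} to control it against a small fraction of the kinetic energy. In the Pauli--Fierz case (c) the linear part $b_\psi$ of the quadratic in $\eta$ contains $\meanlrlr{\psi}{\omega^{-1/2}\bm{\lambda}_j\cdot(-i\nabla_j)}{\psi}$, which is controlled by $\lVert\nabla\psi\rVert$ and is certainly not uniformly bounded in $\psi$ with $\lVert\psi\rVert_2=1$; the clean route there is simply to observe that $\fqc[\psi,z]\geq 0$ by positivity of the magnetic kinetic energy and of $\mathcal{W}$, rather than completing the square. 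Once you patch the stability step along these lines, the rest of your proof goes through and is equivalent to the paper's.
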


\begin{remark}[Uniqueness of $ \zpek $]
  \label{rem: uniqueness pekar}
  \mbox{}	\\
  We prove in \cref{lemma: convex} that the quasi-classical functional $
  \ffqc[\psi,\eta] $ (or,
  equivalently, $ \fqc[\psi,z] $) is strictly convex in $ \eta \in
  \mathfrak{h} $ for given $ \psi \in L^2(\R^{dN}) $.
  Hence, $ \zpek[\psi] $ is unique (for fixed $ \psi $). Note however that the functional $ \ffqc $
  is not jointly convex in $ ( \lf|\psi \ri|^2, \eta ) $.
\end{remark}

\subsection{Ground state in the quasi-classical regime}
\label{sec: ground state}

We can now state in detail our main results. We first consider the
microscopic ground state energy $ E_{\eps} $ defined in \eqref{eq: eeps} and
its quasi-classical limit. Recall the definition of the quasi-classical
energy $ \eqc $ in \eqref{eq: eqc}.

\begin{thm}[Ground state energy]
  \label{teo: ground state energy}
  \mbox{}	\\
  Under the assumptions made above, $ \exists C < + \infty $ such that $ E_{\eps} > - C
  $ and
  \beq
  \label{eq: ground state energy}
  E_{\eps} \xrightarrow[\eps \to 0]{} \eqc,
  \eeq
  which in particular implies that \eqref{eq: vp1} holds true.
\end{thm}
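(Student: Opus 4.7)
The plan is to prove \eqref{eq: ground state energy} via the two complementary one-sided bounds
\[
  \limsup_{\eps \to 0} E_{\eps} \leq \eqc \qquad \text{and} \qquad \liminf_{\eps \to 0} E_{\eps} \geq \eqc,
\]
together with a uniform-in-$\eps$ lower bound $ E_{\eps} \geq -C $. The stability bound should follow directly from the structure imposed on $H_I$: by standard $N_\tau$-type estimates, any monomial in $a^\sharp_\eps$ of degree one or two can be controlled by a small multiple of $\mathcal{K}_0 + \diff \GG_{\eps}(\omega) + 1 $, and, thanks to the factors of $\sqrt{\eps}$ (or $\eps$) hidden in $a_\eps^\sharp$, the remainder is uniformly bounded in $\eps$. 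In the Pauli-Fierz case one uses the diamagnetic inequality together with the explicit minimal-coupling form \eqref{eq: hamiltonian PF}.

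\textbf{Upper bound via coherent trial states.} Fix $\delta > 0$ and pick $(\tilde{\psi}, \tilde{z}) \in \domqc $ with $\fqc[\tilde{\psi}, \tilde{z}] \leq \eqc + \delta$; by density one may further assume $\tilde{z} \in \bigcap_{k\in\N}\mathscr{D}(\omega^k) $. I would take the trial vector
\[
  \Psi_{\eps} := \tilde{\psi} \otimes W_{\eps}(\tilde{z}) \Omega \in \mathscr{H}_{\eps},
\]
where $\Omega$ is the Fock vacuum and $W_{\eps}(\tilde{z})$ is the $\eps$-scaled Weyl operator, so that $W_{\eps}(\tilde{z}) \Omega$ is the coherent state whose quasi-classical limit is concentrated on $\tilde{z}$. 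Using the identities $\langle W_{\eps}(\tilde{z}) \Omega \vert \diff \GG_{\eps}(\omega) \vert W_{\eps}(\tilde{z}) \Omega \rangle = \meanlrlr{\tilde{z}}{\omega}{\tilde{z}}_{\mathfrak{h}}$ and $\langle W_\eps(\tilde z)\Omega \vert a^\sharp_\eps(\xi) \vert W_\eps(\tilde z)\Omega\rangle = \braket{\xi}{\tilde z}_{\mathfrak{h}}$ (resp. its conjugate), together with Wick's theorem on the Pauli-Fierz quadratic piece, a direct computation reduces $\langle \Psi_{\eps} \vert H_{\eps} \vert \Psi_{\eps}\rangle$ to $\fqc[\tilde{\psi}, \tilde{z}]$ plus an $O(\eps)$ remainder coming from the commutator $[a_\eps,a_\eps^\dagger] = \eps$. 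Sending $\eps \to 0$ and then $\delta \to 0$ yields the upper bound.

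\textbf{Lower bound via quasi-classical Wigner measures.} Let $\Psi_{\eps}\in \mathscr{D}(H_{\eps})$ satisfy $\langle \Psi_{\eps} \vert H_{\eps} \vert \Psi_{\eps}\rangle \leq E_{\eps} + o(1)$. Combined with the stability bound, this yields uniform control of $\mathcal{K}_0$ and $\diff \GG_{\eps}(\omega) $ on $\Psi_{\eps}$. By the Wigner-measure machinery developed in \cite{ammari2008ahp, falconi2017ccm} and adapted to the partially classical setting in \cite{correggi2017ahp,correggi2017arxiv}, one can extract along a subsequence a quasi-classical Wigner measure $\mu$ on $ \mathfrak{h}_{\omega} $ taking values in non-negative trace-class operators $\gamma(z)$ on $L^2(\R^{dN})$, normalized by $\int \tr \gamma(z)\, d\mu(z) = 1 $. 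Passing to the limit term-by-term (lower semi-continuity for $\mathcal{K}_0$ and $\diff \GG_{\eps}(\omega)$, and classical convergence of the expectation of $H_I$ to the potential $ \mathcal{V}_z$ tested against $\gamma(z)$) gives
\[
  \liminf_{\eps \to 0} E_{\eps} \;\geq\; \int_{\mathfrak{h}_{\omega}} \tr\bigl( \HH_z\, \gamma(z) \bigr)\, d\mu(z) \;\geq\; \eqc \int_{\mathfrak{h}_{\omega}} \tr\gamma(z)\, d\mu(z) \;=\; \eqc,
\]
where the second inequality follows by spectral-type bound $\tr(\HH_z \gamma(z)) \geq \eqc \tr \gamma(z) $ pointwise in $z$, since any normalized $\psi$ yields $\langle \psi | \HH_z \psi \rangle \geq \eqc$ by definition of $\eqc $.

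\textbf{Main obstacle.} The delicate points are concentrated in the lower bound. First, in the massless case the natural field configuration space is $\mathfrak{h}_{\omega}\subsetneq \mathfrak{h}$ and Wigner measures may only live on cylindrical sets; one must verify that the stability bound on $\diff\GG_\eps(\omega)$ prevents any leakage of mass at infinity, so that the normalization $\int \tr\gamma\, d\mu = 1$ is preserved in the limit. Second, in the Pauli-Fierz model the interaction $H_I$ is quadratic in $a_\eps^\sharp$, so that passing to the limit requires a Wick-ordering step and a uniform bound on quadratic field observables, which follow from the uniform control on $\diff\GG_\eps(\omega)$ combined with the assumption $\omega^{1/2}\bm{\lambda}_j \in L^{\infty}$ in \eqref{eq: lambda PF}. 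Finally, the polaron case (b) has $\lambda \notin L^\infty(\R^d;\mathfrak{h})$, which forces a preliminary splitting (analogous to the one used for the Pekar derivation in \cref{sec:polaron-model}) before either bound can be applied.
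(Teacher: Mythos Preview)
Your overall architecture matches the paper's: uniform stability via Kato--Rellich/KLMN/diamagnetic, upper bound via coherent trial states, lower bound via quasi-classical Wigner measures. The upper bound and stability arguments are essentially identical to \cref{pro: upper bound} and \cref{prop:7}.

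There is, however, a genuine gap in your lower bound concerning the normalization $\int \tr\gamma(z)\,\mathrm{d}\mu(z)=1$. You correctly flag loss of mass as the main obstacle, but you misidentify its source: you attribute possible leakage to the \emph{field} variables (massless $\omega$, cylindrical measures on $\mathfrak{h}_\omega$), whereas the actual danger is on the \emph{particle} side. The convergence in \cref{def: wigner} tests with compact operators $\mathcal{K}\in\mathscr{L}^\infty(L^2(\R^{dN}))$, and a uniform bound on $\diff\GG_\eps(\omega)$ does \emph{not} prevent mass from escaping to infinity in the particle degrees of freedom; what prevents it is compactness of the resolvent of $\mathcal{K}_0$ (see \cref{lemma:2}). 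When $\mathcal{K}_0$ is not trapping, the set $\WW(\Psi_\eps,\eps\in(0,1))$ of probability measures may be empty, and your final chain $\liminf E_\eps \geq \int \tr(\HH_z\gamma)\,\mathrm{d}\mu \geq \eqc\int\tr\gamma\,\mathrm{d}\mu = \eqc$ breaks down at the last equality. The paper's remedy is nontrivial: it passes to \emph{generalized} Wigner measures valued in $\mathscr{B}(L^2(\R^{dN}))'$ (\cref{def: g wigner}, \cref{lemma:6}), where no mass is lost because the identity is a legitimate test operator, and then relies on the separate result $\egqc=\eqc$ (\cref{prop:5}) to close the argument.

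A second, smaller gap: the ``classical convergence of the expectation of $H_I$'' that you invoke requires more than the first-order bound on $\diff\GG_\eps(\omega)$ that follows from stability. The paper needs a uniform bound on $\diff\GG_\eps(\omega)^2$ (\cref{lemma:4}, \cref{lemma:8}), which is \emph{not} automatic for an arbitrary minimizing sequence and is obtained by selecting $\Psi_{\eps,\delta}$ in a spectral window of $H_\eps$ and invoking Kato--Rellich/pull-through (\cref{lemma:5}). Your proposal should make this choice explicit.
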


\begin{remark}[Assumptions]
  \mbox{}	\\
  The above result requires only a minimal set of assumptions on the
  microscopic models, those listed in their definitions, which are the
  weakest ones guaranteeing the self-adjointness and boundedness from below
  of the microscopic Hamiltonians. In particular, the quantum potential $
  \mathcal{W} $
  may not be trapping, so that there might be no ground state
  for both the microscopic and the macroscopic problems.
\end{remark}

The above \cref{teo: ground state energy} completes and extends analogous
results proven in \cite[Thm. 2.4]{correggi2017ahp} and
\cite[Thm. 1.9]{correggi2017arxiv}, relaxing the assumptions on the
microscopic models and taking into account more general settings. We also
point out that the proof of the above result provided in
\cref{sec:ground-states-quasi} is quite different and much more general than
the ones contained in the above references and involves the new mathematical
structure of {\it quasi-classical Wigner measures} first introduced in
\cite{correggi2019arxiv}. In fact, the argument in the proofs given in
\cite{correggi2017ahp,correggi2017arxiv} is not complete, since it relies on
the assumption that one can find a minimizing sequence which can be
decomposed into a linear combination of finitely-many product states, whose
number is {\it uniformly bounded in} $ \eps $. This is \emph{a posteriori}
right (as it follows from the proof of \cref{teo: ground state energy}), up
to errors vanishing in the limit $ \eps \to 0 $, but it is unproven there.

Once the energy convergence has been stated, it is natural to ask whether, in
presence of a microscopic approximate ground state $\Psi_{\varepsilon,\delta}$ or ground state $ \psigs
$, one can prove a suitable convergence respectively to quasi-classical
minimizing sequences or configurations $ (\psiqc, \zqc) \in \domqc $. Let us
stress that the question of existence of a ground state of the microscopic
energy has been widely studied in the literature and there are more
restrictive conditions on the models guaranteeing that $ E_{\eps} \in
\sigma_{\mathrm{pp}}(H_{\eps}) $ (see
\cref{sec:nelson-model,sec:polaron-model,sec:pauli-fierz-model}); our results
about approximate ground states apply even if the microscopic ground state do not
exist, and whenever it exists we are able to provide its quasi-classical
characterization.

In order to properly formulate the convergence, we first need to introduce a
key structure in quasi-classical analysis: the \emph{quasi-classical Wigner
  measures} and their relative topologies. We preliminarily recall the
definition of the space $ \mathscr{P}(\mathfrak{h}_{\omega}; L^2(\R^{dN})) $ of
{\it state-valued probability measures} (see \cite[Def. 2.1]{correggi2019arxiv}), given
by measures $ \mm $ on $ \mathfrak{h}_{\omega} $ taking values in $
\mathscr{L}^1_+(L^2(\R^{dN})) $ -- the space of positive trace
class operators on $ L^2(\R^{dN}) $ -- such that $ \mm(\emptyset) = 0 $, the measure is
unconditionally $ \sigma-$additive in the trace class norm and $\lVert \mathfrak{m}(\mathfrak{h}_{\omega})  \rVert_{L^2}^{}=1$. Starting from such a
notion, it is possible to construct a theory of integration of functions with
values in the space of bounded operators on $ L^2(\R^{dN}) $ w.r.t.\
state-valued measures, so that, for any measurable $ \mathcal{B}(z) \in
\mathscr{B}(L^2(\R^{dN})) $,
\beq
\label{eq: integration 1}
\int_{\mathfrak{h}_{\omega}} \diff \mm(z) \: \mathcal{B}(z) \in \mathscr{L}^1(L^2(\R^{dN})).  
\eeq
We refer to \cref{sec:gener-state-valu}, or to the existing literature
(\emph{e.g.}, \cite{ba,pg,gms,fg,teu}) for further details. In particular, we
point out that any such state-valued measure $ \mm $ admits a
Radon-Nikod\'{y}m decomposition, {\it i.e.}, there exist a scalar Borel
measure $\mu_{\mm} $ and a $\mu_{\mm}$-integrable function $ \gamma_{\mm}(z) \in
\mathscr{L}^1_{+,1}(L^2(\R^{dN}))$ defined a.e.\ and
with values in normalized density matrices, such that 
\beq 
\diff \mm(z) = \gamma_{\mm}(z) \diff \mu_{\mm}(z).  
\eeq 
Hence, \eqref{eq: integration 1} can be rewritten 
\beq
\label{eq: integration 2}
\int_{\mathfrak{h}_{\omega}} \diff \mm(z) \: \mathcal{B}(z) = \int_{\mathfrak{h}_{\omega}} \diff \mu_{\mm}(z) \: \gamma_{\mm}(z) \mathcal{B}(z).  
\eeq
Finally, let us denote by $ W_{\varepsilon}(z) $, $ z \in \mathfrak{h} $ the Weyl operator constructed over the
creation and annihilation operators $
a^{\sharp}_{\eps} $, {\it i.e.},
\begin{equation}
  \label{eq: weyl}
  W_{\varepsilon}(z) :=e^{i(a^{\dagger}_{\varepsilon}(z)+a_{\varepsilon}(z))}\; .
\end{equation}

\begin{definition}[Quasi-classical Wigner measures]
  \label{def: wigner}
  \mbox{}	\\
  For any family of normalized microscopic states $ \lf\{ \Psi_{\varepsilon} \ri\}_{\varepsilon\in
    (0,1)} \subset \mathscr{H}_{\eps} $, the
  associated set of quasi-classical Wigner measures
  $ \WW(\Psi_{\eps}, \varepsilon\in (0,1)) \subset \mathscr{P}(\mathfrak{h}_{\omega};
  \mathscr{L}^1_+(L^2(\R^{dN}))$ is the subset of all probability measures $ \mm $, such that
  \beq
  \Psi_{\varepsilon_n} \xrightarrow[\eps_{n} \to 0]{\mathrm{qc}} \mathfrak{m},
  \eeq
  where the above convergence means that, for all $\eta\in
  \mathscr{D}(\omega^{-1/2})$ and all compact
  operators $\mathcal{K}\in
  \mathscr{L}^{\infty}(L^2(\R^{dN}))$,
  \begin{multline}
    \label{eq: convergence}
    \lim_{n\to + \infty} \braketr{\Psi_{\varepsilon_n}}{\mathcal{K} \otimes W_{\varepsilon_n}(\eta) \Psi_{\varepsilon_n}}_{\mathscr{H}_{\eps_n}} = \int_{\mathfrak{h}_{\omega}}^{} \mathrm{d}\mu_{\mathfrak{m}}(z) \: e^{2i\Re \braket{\eta}{ z}_{\mathfrak{h}}} \tr_{L^2(\R^{dN})} \: \lf[ \gamma_{\mathfrak{m}}(z) \mathcal{K} \ri]\\=\int_{\mathfrak{h}_{\omega}}^{} \mathrm{d}\mu_{\mathfrak{m}}(z) \: e^{2i\Re \braketr{\omega^{-1/2} \eta}{\omega^{1/2} z}_{\mathfrak{h}}} \tr_{L^2(\R^{dN})} \: \lf[ \gamma_{\mathfrak{m}}(z) \mathcal{K} \ri] \; .
  \end{multline}
\end{definition}
\begin{remark}[Measures on $\mathfrak{h}_{\omega}$ and test functions]
  \label{rem:8}
  \mbox{}\\
  A reader familiar with infinite dimensional semiclassical analysis or
  quasi-classical analysis will find the definition of Wigner measures given
  here slightly different to the usual one
  \cite{ammari2008ahp,correggi2019arxiv}. Typically, one considers microscopic
  states that satisfy a number operator estimate, namely for which the
  expectation of $\mathrm{d}\mathcal{G}_{\varepsilon}(1)^{c}$ is $\varepsilon$-uniformly bounded
  for some $c >0$. The corresponding Wigner measures are concentrated on
  $\mathfrak{h}$ \cite{ammari2008ahp}, and it is natural to test the
  convergence with Weyl operators having arguments $\eta\in
  \mathfrak{h}$. However, in studying
  variational problems the number operator estimate may not always be
  available, in particular whenever the field is massless, such as in
  electromagnetism (Pauli-Fierz model). In that case, only  energy
  estimates, \emph{i.e.},\, involving $\mathrm{d}\mathcal{G}_{\varepsilon}(\omega)$, are
  available. The Wigner measures of states satisfying such an energy estimate
  are concentrated in $\mathfrak{h}_{\omega}$, and it is natural to test
  convergence with Weyl operators having arguments $\eta\in \mathscr{D}(\omega^{-1/2}) $ belonging to a dense subset of the
  continuous dual space \cite{falconi2017ccm}. If
  both the number estimate and the free energy estimate are available, then
  the measure is concentrated in $\mathfrak{h}\cap \mathfrak{h}_{\omega}$; this
  happens for massive fields, where in addition $\mathfrak{h}\cap
  \mathfrak{h}_{\omega}=\mathfrak{h}_{\omega}$. Finally, let us
  remark as well that in all concrete applications $\mathfrak{h}_{\omega}$ is in fact
  the natural domain of definition of the quasi-classical energy $\mathcal{E}_{\mathrm{qc}}$.
\end{remark}

The above notion of quasi-classical convergence, defined in \eqref{eq:
  convergence}, is however not the only meaningful topology one can consider
for sequences of microscopic states. More precisely, the test in \eqref{eq:
  convergence} may be extended to bounded operators, which means that one is
considering the weak-* topology on $\mathscr{B}(L^2(\R^{dN}))'$, instead of
$\mathscr{L}^1(L^2(\R^{dN})) =\mathscr{L}^{\infty}(L^2(\R^{dN}))'$. In this case,
the cluster points belong to a larger space than $
\mathscr{P}(\mathfrak{h}_{\omega}; \mathscr{L}^1_+(L^2(\R^{dN})) $, namely the
space of {\it generalized state-valued measures} (see
\cite{falconi2017arxiv} for a detailed and more general discussion). We thus
introduce the set of positive states $
\overline{\mathscr{L}^1}_+(L^2(\R^{dN})) $ in the closure w.r.t.\ the
weak-* topology of the space of trace class operators on $ L^2(\R^{dN}) $: we
denote the action of a functional $ F \in
\overline{\mathscr{L}^1}_+(L^2(\R^{dN})) $ on a bounded operator
$
\mathcal{B} \in \mathscr{B}(L^2(\R^{dN})) $ as $ F[\mathcal{B}] \in \CC $ and its
norm as
\beq
\label{eq: norm B}
\lf\| F \ri\|_{\mathscr{B}'} : = \sup_{\mathcal{B} \in
  \mathscr{B}(L^2(\R^{dN})), \lf\| \mathcal{B} \ri\| = 1} \lf| F\lf[
\mathcal{B} \ri] \ri|.
\eeq

\begin{definition}[Generalized quasi-classical Wigner measures]
  \label{def: g wigner}
  \mbox{}	\\
  For any family of normalized microscopic states $ \lf\{ \Psi_{\varepsilon} \ri\}_{\varepsilon\in
    (0,1)} \subset L^2(\R^{dN})_{\varepsilon} $, the
  associated set of generalized quasi-classical Wigner measures
  $ \GW(\Psi_{\eps}, \varepsilon\in (0,1)) \subset
  \mathscr{P}(\mathfrak{h}_{\omega}; \overline{\mathscr{L}^1}_+(L^2(\R^{dN}))$ is
  the subset of all probability measures $ \nn $, such that
  \beq
  \Psi_{\varepsilon_{n}}\xrightarrow[\eps_{n} \to 0]{\mathrm{gqc}} \mathfrak{n},
  \eeq
  where the above convergence means that, for all $\eta\in \mathscr{D}(\omega^{-1/2})$ and all
  bounded operators $\mathcal{B} \in \mathscr{B}(L^2(\R^{dN}))$,
  \begin{equation}
    \label{eq: g convergence}
    \lim_{n\to + \infty} \braketr{\Psi_{\varepsilon_n}}{\mathcal{B} \otimes W_{\varepsilon_n}(\eta) \Psi_{\varepsilon_n}}_{\mathscr{H}_{\eps_n}} = \int_{\mathfrak{h}_{\omega}}^{} \mathrm{d}\mathfrak{n}(z)[\mathcal{B}] \: e^{2i\Re \braketr{\omega^{-1/2} \eta}{\omega^{1/2} z}_{\mathfrak{h}}}  \;.
  \end{equation}
\end{definition}

We can now formulate the results about the convergence of microscopic
minimizing sequences $\Psi_{\varepsilon,\delta}$ and microscopic minimizers $ \psigs $. We
start by stating a stronger result with some additional assumptions on the
microscopic models. Without such assumptions we are still able to prove a
weaker convergence, but it requires to introduce a generalized variational
problem.

\begin{thm}[Convergence of approximate ground states (I)]
  \label{thm:2}
  \mbox{} \\
  If $\mathcal{K}_0$ has compact resolvent, then, for any $ \delta > 0 $ and for any family of approximate ground states $\Psi_{\varepsilon,\delta}$ satisfying \eqref{eq:1}, $ \WW(\Psi_{\varepsilon,\delta}, \varepsilon\in (0,1)) \neq \emptyset $. Moreover, any family of
  quasi-classical Wigner measures $\{\mathfrak{m}_{\delta}\}_{\delta>0} \in
  \bigcup_{\delta>0}\WW(\Psi_{\varepsilon,\delta}, \varepsilon\in (0,1)) $ is such that, for all $\delta>0$,
  $
  \tr_{L^2(\R^{dN})} \mm_{\delta}(\mathfrak{h}_{\omega}) = 1 $  and it is an approximate ground state of $\mathcal{E}_{\mathrm{qc}}[\psi,z]$, {\it i.e.},
  \begin{equation}
    \int_{\mathfrak{h}_{\omega}}^{}  \mathrm{d}\mu_{\mathfrak{m}_{\delta}}(z)\tr_{L^2(\mathbb{R}^{dN})}\bigl(\gamma_{\mathfrak{m}_{\delta}}(z)\mathcal{H}_z\bigr)<E_{\mathrm{qc}}+\delta\; .
  \end{equation}
  Consequently, there is small $\mathfrak{m}_{\delta}$-probability that
  $\mathcal{E}_{\mathrm{qc}}(\psi_{\delta},z_{\delta})$ is larger than $E_{\mathrm{qc}}+\delta$: for all $k\in \mathbb{N}_{*}$,
  \begin{equation}
    \mathbb{P}_{\mathfrak{m}_{\delta}}\Bigl\{\mathcal{E}_{\mathrm{qc}}(\psi_{\delta},z_{\delta})\geq E_{\mathrm{qc}}+k\delta\Bigr\}<\frac{1}{k}\; .
  \end{equation}
\end{thm}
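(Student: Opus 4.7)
The plan is to extract a subsequential quasi-classical limit of the family $\Psi_{\varepsilon,\delta}$, verify that the limit measure carries full mass via the compact-resolvent hypothesis on $\mathcal{K}_0$, pass to the liminf in the approximate minimization inequality using Fatou-type lower semicontinuity together with the Wick/Wigner calculus for the interaction, and finally derive the probabilistic bound by Markov's inequality. These four moves correspond to the four assertions in the statement.

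\emph{A priori estimates and existence of the limit.} From \eqref{eq:1} and the uniform lower bound $E_\varepsilon>-C$ of \cref{teo: ground state energy}, the energy expectations $\langle\Psi_{\varepsilon,\delta}\vert H_\varepsilon\vert\Psi_{\varepsilon,\delta}\rangle$ are $\varepsilon$-uniformly bounded by $E_{\mathrm{qc}}+\delta+o_\varepsilon(1)$. In each of the three models the form-bound of $H_I$ relative to the free Hamiltonian (infinitesimal in cases (a) and (c) thanks to \eqref{eq: lambda N} and \eqref{eq: lambda PF}, handled by the standard splitting in case (b)) then transfers this into $\varepsilon$-uniform bounds on both $\langle\mathcal{K}_0\rangle$ and $\langle 1\otimes\mathrm{d}\mathcal{G}_\varepsilon(\omega)\rangle$. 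The latter is exactly the free-field energy estimate identified in \cref{rem:8} as the hypothesis guaranteeing, via the compactness theorem for state-valued measures of \cite{falconi2017ccm,correggi2019arxiv}, the existence along a subsequence $\varepsilon_n\to 0$ of a quasi-classical Wigner measure $\mathfrak{m}_\delta\in\mathscr{P}(\mathfrak{h}_\omega;\mathscr{L}^1_+(L^2(\R^{dN})))$ in the sense of \eqref{eq: convergence}. Hence $\WW(\Psi_{\varepsilon,\delta},\varepsilon\in(0,1))\neq\varnothing$.

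\emph{Full mass and passing to the liminf.} The compact resolvent of $\mathcal{K}_0$ enters decisively in $\tr\mathfrak{m}_\delta(\mathfrak{h}_\omega)=1$: the uniform bound $\langle\mathcal{K}_0\rangle_{\Psi_{\varepsilon,\delta}}\leq C$ makes the spectral projector $\mathbf{1}_{\{\mathcal{K}_0\leq R\}}$ a \emph{compact} operator capturing a fraction $1-O(1/R)$ of the norm, uniformly in $\varepsilon$; testing \eqref{eq: convergence} against such $\mathcal{K}=\mathbf{1}_{\{\mathcal{K}_0\leq R\}}$ (with $\eta=0$) and letting $R\to\infty$ promotes this tightness to the limit. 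With full mass in hand, one decomposes $H_\varepsilon=\mathcal{K}_0\otimes 1+1\otimes\mathrm{d}\mathcal{G}_\varepsilon(\omega)+H_I$ and sends $\varepsilon_n\to 0$ term by term: the two positive summands are handled by Fatou-type lower semicontinuity (with compact spectral cutoffs on $\mathcal{K}_0$ and the bound $\langle z\vert\omega\vert z\rangle_\mathfrak{h}\leq\liminf\langle\mathrm{d}\mathcal{G}_\varepsilon(\omega)\rangle$ for the field), while the Wick calculus underlying the proof of \cref{teo: ground state energy} converts the expectation of $H_I$ into $\int\tr(\gamma_{\mathfrak{m}_\delta}(z)\sum_j\mathcal{V}_z(\xv_j))\mathrm{d}\mu_{\mathfrak{m}_\delta}(z)$. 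Summing,
\begin{equation*}
\int_{\mathfrak{h}_\omega}\!\mathrm{d}\mu_{\mathfrak{m}_\delta}(z)\,\tr_{L^2(\R^{dN})}\!\bigl(\gamma_{\mathfrak{m}_\delta}(z)\,\mathcal{H}_z\bigr)\;\leq\;\liminf_{n\to\infty}\langle\Psi_{\varepsilon_n,\delta}\vert H_{\varepsilon_n}\vert\Psi_{\varepsilon_n,\delta}\rangle\;\leq\;E_{\mathrm{qc}}+\delta,
\end{equation*}
the last inequality by \cref{teo: ground state energy}; the strict version stated in the theorem is obtained by applying the argument with tolerance $\delta'<\delta$ in \eqref{eq:1} and renaming.

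\emph{Probabilistic bound.} Setting $F(z):=\tr_{L^2(\R^{dN})}(\gamma_{\mathfrak{m}_\delta}(z)\mathcal{H}_z)$, the spectral decomposition $\gamma_{\mathfrak{m}_\delta}(z)=\sum_jp_j(z)\vert\psi_j(z)\rangle\langle\psi_j(z)\vert$ together with the definition \eqref{eq: eqc} of $E_{\mathrm{qc}}$ gives $F(z)=\sum_jp_j(z)\mathcal{E}_{\mathrm{qc}}[\psi_j(z),z]\geq E_{\mathrm{qc}}$ for $\mu_{\mathfrak{m}_\delta}$-a.e.\ $z$. Markov's inequality applied to the nonnegative function $F-E_{\mathrm{qc}}$, whose integral is bounded by $\delta$ by the previous step, then yields the claim $\PP_{\mathfrak{m}_\delta}\{F\geq E_{\mathrm{qc}}+k\delta\}<1/k$.

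The principal obstacle is the combination in the middle paragraph: controlling full mass in a context where only a free-energy estimate (not a number estimate) is available on the field, while simultaneously passing to the liminf for \emph{unbounded} positive operators $\mathcal{K}_0$ and $\mathrm{d}\mathcal{G}_\varepsilon(\omega)$. The compact resolvent of $\mathcal{K}_0$ closes the particle escape route, and the fact that test arguments $\eta$ in \eqref{eq: convergence} range over $\mathscr{D}(\omega^{-1/2})$—a dense subset of the continuous dual of $\mathfrak{h}_\omega$—closes the field side; lower semicontinuity is then standard Fatou once the compact spectral cutoffs and the Wick symbols of the $a^\sharp_\varepsilon$-monomials in $H_I$ are in place.
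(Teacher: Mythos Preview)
Your proposal is essentially correct and follows the same route as the paper: a priori estimates from the relative form bound of $H_I$, existence of a quasi-classical Wigner measure by compactness, full mass via the compact resolvent of $\mathcal{K}_0$, passage to the limit term by term, and Markov's inequality for the probabilistic statement. The paper packages the middle step as \cref{lemma:2,lemma:3,lemma:4,lemma:5} and then invokes \cref{teo: ground state energy} exactly as you do.

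One technical point you underemphasize: the interaction $H_I$ is not sign-definite, so Fatou-type lower semicontinuity does not apply to it and you need \emph{actual} convergence of $\langle\Psi_{\varepsilon_n,\delta}\vert\mathrm{Op}^{\mathrm{Wick}}_{\varepsilon_n}(\mathcal{V}_z)\vert\Psi_{\varepsilon_n,\delta}\rangle$. In the paper this is \cref{lemma:4}, whose hypothesis is a bound on $\langle\mathcal{K}_0+\mathrm{d}\mathcal{G}_\varepsilon(\omega)^2+1\rangle$, i.e.\ the \emph{squared} free field energy, not the linear bound you extract from the KLMN/Kato--Rellich argument. The paper secures this stronger estimate in \cref{lemma:5} by choosing the minimizing sequence inside the spectral window $\one_{(E_\varepsilon-\delta,E_\varepsilon+\delta)}(H_\varepsilon)\mathscr{H}_\varepsilon$ and combining $\lVert H_\varepsilon\Psi_{\varepsilon,\delta}\rVert^2\leq\max\{E_\varepsilon^2,(E_\varepsilon+\delta)^2\}$ with the Kato--Rellich inequality \eqref{eq: kato-rellich} (or pull-through \eqref{eq: pull-through} for the polaron, and the Hiroshima estimate \eqref{eq:2} for Pauli--Fierz). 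Your sentence ``the Wick calculus underlying the proof of \cref{teo: ground state energy}'' implicitly absorbs this, but it is the one place where the linear energy bound you stated is not sufficient on its own.
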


\begin{corollary}[Convergence to ground states (I)]
  	\label{cor:2}
  	\mbox{}	\\
  If $\mathcal{K}_0$ has compact resolvent, then any quasi-classical Wigner
  measure $\mathfrak{m} \in \WW(\Psi_{\varepsilon,o_{\varepsilon}(1)}, \varepsilon\in (0,1)) $, corresponding to approximate ground states $\Psi_{\varepsilon,o_{\varepsilon}(1)}$ satisfying \eqref{eq:1} with
  $\delta=o_{\varepsilon}(1)$, is such that $ \tr_{L^2(\R^{dN})} \mm(\mathfrak{h}_{\omega}) = 1
  $ and it is concentrated on the
    set of ground states $(\psi_{\mathrm{qc}},z_{\mathrm{qc}})\in
  \mathscr{D}_{\mathrm{qc}}$ of $\mathcal{E}_{\mathrm{qc}}[\psi,z]$. Consequently,
  $\mathcal{E}_{\mathrm{qc}}[\psi,z]$ has at least one ground state and both \eqref{eq: vp2} and \eqref{eq: vpp2} hold true.
\end{corollary}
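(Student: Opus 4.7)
The plan is to deduce the corollary from \cref{thm:2} via a diagonal argument in $\delta$, combined with the elementary observation that a convex combination realizing the joint infimum of an admissible functional must be supported on minimizers. Fix any family of approximate ground states $\Psi_{\varepsilon, \delta(\varepsilon)}$ with $\delta(\varepsilon) \to 0$, and any cluster point $\mm \in \WW(\Psi_{\varepsilon, o_{\varepsilon}(1)}, \varepsilon \in (0,1))$ obtained along a subsequence $\varepsilon_{n} \to 0$. Non-emptiness of this set of quasi-classical Wigner measures and the normalization $\tr_{L^{2}(\R^{dN})} \mm(\mathfrak{h}_{\omega}) = 1$ are inherited from \cref{thm:2}: for every fixed $\delta' > 0$, there is $n_{0}$ such that for $n \geq n_{0}$ the tail $\{\Psi_{\varepsilon_{n}, \delta(\varepsilon_{n})}\}_{n\geq n_{0}}$ is a family of $\delta'$-approximate ground states, to which the compactness and mass-conservation statements of \cref{thm:2} apply directly.

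For the same tail, and the same limit measure $\mm$, the energy bound of \cref{thm:2} reads
\begin{equation}
\int_{\mathfrak{h}_{\omega}} \diff \mu_{\mm}(z) \, \tr_{L^{2}(\R^{dN})}\!\bigl(\gamma_{\mm}(z) \HH_{z}\bigr) < \eqc + \delta'.
\end{equation}
Since $\delta' > 0$ was arbitrary, letting $\delta' \to 0^{+}$ yields the upper bound $\int \diff\mu_{\mm}(z) \tr(\gamma_{\mm}(z) \HH_{z}) \leq \eqc$. On the other hand, writing a spectral decomposition $\gamma_{\mm}(z) = \sum_{i} p_{i}(z) \ket{\psi_{i}(z)}\bra{\psi_{i}(z)}$ with $p_{i}(z) \geq 0$, $\sum_{i} p_{i}(z) = 1$, and $\lf\| \psi_{i}(z) \ri\|_{2} = 1$, one has $\mu_{\mm}$-almost everywhere
\begin{equation}
\tr_{L^{2}(\R^{dN})}\!\bigl(\gamma_{\mm}(z) \HH_{z}\bigr) = \sum_{i} p_{i}(z) \, \fqc[\psi_{i}(z), z] \geq \eqc,
\end{equation}
since each pair $(\psi_{i}(z), z)$ is admissible for the variational problem \eqref{eq: eqc}. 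Integrating this pointwise lower bound against $\diff\mu_{\mm}$ and comparing with the previous upper bound forces equality $\mu_{\mm}$-a.e. The equality $\sum_{i} p_{i}(z) \fqc[\psi_{i}(z),z] = \eqc$ together with $\fqc[\psi_{i}(z),z] \geq \eqc$ then forces $\fqc[\psi_{i}(z), z] = \eqc$ whenever $p_{i}(z) > 0$, meaning that $\gamma_{\mm}(z)$ is supported on wave functions $\psi$ for which $(\psi, z) \in \domqc$ is a minimizer of $\fqc$; this is precisely the concentration statement.

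The existence of at least one minimizer of $\fqc$ is then immediate: $\mu_{\mm}$ is a probability measure on $\mathfrak{h}_{\omega}$ with non-empty support, and by the above the fibre $\gamma_{\mm}(z)$ is non-zero and supported on minimizers for $\mu_{\mm}$-a.e. $z$, so the set of minimizers is non-empty, which gives \eqref{eq: vp2}. Statement \eqref{eq: vpp2} follows by feeding any such minimizer into \cref{pro: equivalence}, which produces a minimizer $\psipek$ of $\fpek$. The main technical content is packed into \cref{thm:2} itself --- notably the tightness argument ensuring $\tr_{L^{2}}\mm(\mathfrak{h}_{\omega}) = 1$, which relies on the compact-resolvent hypothesis on $\KK_{0}$ to prevent mass from escaping in the particle variables. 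The only delicate bookkeeping for the corollary is that the Wigner cluster point $\mm$ extracted from the $o_{\varepsilon}(1)$-approximate family does not depend on the auxiliary $\delta'$ used to invoke \cref{thm:2}, which is automatic because $\mm$ is defined by the convergence of the full sequence $\Psi_{\varepsilon_{n},\delta(\varepsilon_{n})}$.
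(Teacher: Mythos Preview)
Your proof is correct and follows essentially the same strategy as the paper. The paper's presentation is somewhat more economical: it obtains the energy inequality $\int \mathrm{d}\mu_{\mm}(z)\,\tr[\gamma_{\mm}(z)\HH_z]\leq \eqc$ directly from the estimates underlying \cref{thm:2} (taking $\delta=o_{\varepsilon_n}(1)$ and passing to the limit, rather than via an auxiliary $\delta'$), and then simply invokes \cref{prop:2} (which gives $\esv=\eqc$) and \cref{prop:3} (which says any minimizer of \eqref{eq: vpsv2} is concentrated on minimizers of \eqref{eq: vp2}). Your spectral-decomposition argument is precisely the content of the relevant implication in the proof of \cref{prop:3}, so you have re-proven that step inline rather than citing it; likewise, \eqref{eq: vpp2} in the paper follows from \cref{prop:3} rather than \cref{pro: equivalence}, though the two are equivalent here.
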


\begin{corollary}[Convergence of ground states (I)]
  \label{teo: minimizers 1}
  \mbox{}	\\
  If $\mathcal{K}_0$ has compact resolvent and $H_{\varepsilon}$ has a ground state $
  \psigs $, then any
  corresponding quasi-classical Wigner measure $\mathfrak{m} \in \WW(\psigs, \varepsilon\in
  (0,1)) $ is such that
  $ \tr_{L^2(\R^{dN})}
  \mm(\mathfrak{h}_{\omega}) = 1 $ and it is concentrated on the set of ground states 
  $(\psiqc,\zqc) \in \domqc $ of $ \fqc[\psi,z] $.
\end{corollary}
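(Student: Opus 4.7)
The plan is to deduce this corollary as a direct consequence of Corollary~\ref{cor:2}. The key observation is that any exact microscopic ground state is, \emph{a fortiori}, an approximate ground state at every positive tolerance. Concretely, if $\psigs$ realizes $E_{\varepsilon}=\inf\sigma(H_{\varepsilon})$, then
\begin{equation*}
\langle \psigs \vert H_{\varepsilon} \vert \psigs \rangle_{\mathscr{H}_{\varepsilon}} = E_{\varepsilon} < E_{\varepsilon} + \delta
\end{equation*}
for every $\delta>0$, so \eqref{eq:1} is trivially satisfied.

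Next, I would fix any function $\varepsilon\mapsto \delta(\varepsilon)>0$ with $\delta(\varepsilon)=o_{\varepsilon}(1)$, and set $\Psi_{\varepsilon,\delta(\varepsilon)}:=\psigs$ for every $\varepsilon\in(0,1)$. By the preceding remark this is a legitimate family of approximate ground states in the sense required by Corollary~\ref{cor:2}. Since the notion of quasi-classical Wigner measure in Definition~\ref{def: wigner} depends only on the underlying family of microscopic states, and not on the labels we attach to it, we have the identification
\begin{equation*}
\WW(\psigs,\varepsilon\in(0,1)) = \WW(\Psi_{\varepsilon,\delta(\varepsilon)},\varepsilon\in(0,1)),
\end{equation*}
independently of the chosen auxiliary parameter $\delta(\varepsilon)$. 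Applying Corollary~\ref{cor:2} then yields, for every $\mm$ in this common set, both $\tr_{L^2(\R^{dN})}\mm(\mathfrak{h}_{\omega})=1$ and the concentration of $\mm$ on the set of minimizers $(\psiqc,\zqc)\in \domqc$ of $\fqc[\psi,z]$, which is exactly the content of the statement.

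There is essentially no substantial obstacle, the entire analytic work having already been performed in Theorem~\ref{thm:2} and Corollary~\ref{cor:2}: the compactness that produces quasi-classical limit points under the compact-resolvent assumption on $\mathcal{K}_0$, the preservation of the trace-one normalization in the limit, and the lower-semicontinuity-type argument forcing concentration on $\operatorname{argmin} \fqc$ are all already encoded there. The remaining task is purely structural, namely recognizing that the constant family $\{\psigs\}_{\varepsilon\in(0,1)}$ fits into the framework of Corollary~\ref{cor:2} as a special case of $\Psi_{\varepsilon,o_{\varepsilon}(1)}$.
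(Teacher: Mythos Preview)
Your proof is correct and takes essentially the same approach as the paper's own proof. The paper observes that $\psigs$ is an exact minimizing sequence with $\delta=0$ and then invokes the argument of Corollary~\ref{cor:2} (``as above''), which is precisely what you do by recognizing $\psigs$ as an admissible $\Psi_{\varepsilon,o_{\varepsilon}(1)}$ and applying Corollary~\ref{cor:2} directly.
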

	
\begin{remark}[Uniqueness and gauge invariance]
  \mbox{}	\\
  Concerning uniqueness, we point out that both the microscopic and the
  quasi-classical variational problems are gauge invariant, namely the
  multiplication by a constant phase factor of $ \Psi $ or $ \psi $ does not change
  the energy. Hence, even if one could prove uniqueness of the
  quasi-classical minimizer $ (\psiqc, \zqc) $ up to gauge transformations,
  one could not conclude that the set of limit points $ \WW(\Psi_{\varepsilon,o_{\varepsilon}(1)},\varepsilon\in
  (0,1)) $ or
  $ \WW(\psigs,\varepsilon\in (0,1)) $ are just given by a Dirac delta
  measure centered at $ (\psiqc, \zqc) $. Indeed, because of gauge
  invariance, the quasi-classical Wigner measures would be supported over the
  unit one-dimensional sphere generated by the configurations $ \lf(e^{i \vartheta}
  \psiqc, \zqc \ri) $, $ \vartheta \in \R $.
\end{remark}
	
\begin{remark}[Condition on $ \mathcal{K}_0 $]
  \mbox{}	\\
  The assumption that $ \mathcal{K}_0 $ has compact resolvent is reasonable,
  since that is typically the case in which one can also prove the existence
  of a microscopic minimizer at least for massive systems (see \cref{rem:
    minimizer} below),
  {\it e.g.}, in presence of a trapping potential. However, it is also needed
  in a technical step in the proof to ensure that there is no loss of mass
  along the convergence \eqref{eq:
    convergence}, {\it i.e.}, $ \tr_{L^2(\R^{dN})} \mm(\mathfrak{h}_{\omega}) = 1
  $. Similar assumptions
  are present also in \cite{correggi2019arxiv} (see in
  particular the discussion in \cite[Rmks. 1.9 -- 1.10 \& \textsection 1.6]{correggi2019arxiv}).
\end{remark}
\begin{remark}[Existence of $ \psigs $]
  \label{rem: minimizer}
  \mbox{}	\\
  In all the three cases (a) -- (c), if the Bose field is {\it massive}, {\it
    i.e.}, $ \exists m > 0 $ such that $ \omega \geq m > 0 $ (which is always the case for
  the polaron), then it is known \cite[Thm. 4.1]{derezinski1999rmp} that the
  microscopic Hamiltonian $H_{\eps} $ admits a ground state $ \psigs \in
  \mathscr{H}_{\eps} $, if $ \mathcal{K}_0 $ has compact
  resolvent. Hence, in the massive case, one can remove the assumption on the
  existence of $
  \psigs $. When the field is {\it massless}, on the other hand, it 
  is also known that microscopic ground states might not exist or belong to a
  non-Fock representation of the algebra of observables
  \cite{pizzo2003ahp}. This second case is not covered by the above
  \cref{teo: minimizers 1}, but it may be treated with our techniques. We
  plan to come back to such a question in a future work.
\end{remark}
\begin{remark}[Existence of quasi-classical minimizers]
  \label{rem:2}
  Our analysis shows that the quasi-classical energy functionals
  $\mathcal{E}_{\mathrm{qc}}[\psi,z]$ \emph{always} have at least one minimizer,
  provided that $\mathcal{K}_0$ has compact resolvent, \emph{i.e.}\ provided
  that the quantum subsystem is trapped. This gives an additional evidence of
  the fact that nonexistence or non-Fock-representability (see \cref{rem:
    minimizer} above) of
  the microscopic ground-state is one of the many complications encountered in
  quantizing fields.
\end{remark}
	
As anticipated, if we drop the assumption on the operator $ \mathcal{K}_0 $,
there is still convergence, but the variational problem \eqref{eq: eqc} has
to be generalized: we thus set, for any pure state $ \rho \in
\overline{\mathscr{L}^1}_+ \lf(L^2(\R^{dN})\ri) $ and any
$ z \in
\mathfrak{h}_{\omega} $,
\beq
\label{eq: fqct}
\fgqc[\rho,z] : = \rho \lf[\mathcal{H}_z \ri].
\eeq
We consider the corresponding variational problem: setting (recall the
definition \eqref{eq: norm B})
\beq
\label{eq: domgqc}
\domgqc : = \lf\{ \lf(\rho,z\ri) \in \overline{\mathscr{L}^1}_+(L^2(\R^{dN})) \oplus \mathfrak{h}_{\omega} \: \Big| \: \lf\| \rho \ri\|_{\mathscr{B}'} = 1\, ,\, \lf| \rho\lf[\mathcal{H}_z\ri] \ri| < +\infty \ri\},
\eeq
we define
\beq
\label{eq: egqc}
\egqc : = \inf_{\lf(\rho,z\ri) \in \domgqc} \fgqc[\rho,z],
\eeq
and denote by $(\rho_{\delta},z_{\delta})\in \mathscr{D}_{\mathrm{gqc}}$ a minimizing sequence satisfying
\begin{equation*}
  \mathcal{E}_{\mathrm{gqc}}[\rho_{\delta},z_{\delta}]< E_{\mathrm{gqc}}+\delta\; ,
\end{equation*}
and by $ (\rhogqc, \zgqc) \in \domgqc $ any corresponding minimizing
configuration.

\begin{thm}[Convergence of approximate ground states (II)]
  \label{thm:5}
  \mbox{}\\
  If $\mathcal{K}_0$ does not have compact resolvent, then, for any $ \delta > 0 $ and for any family of approximate ground states $\Psi_{\varepsilon,\delta}$ satisfying \eqref{eq:1}, $ \GW(\Psi_{\varepsilon,\delta}, \varepsilon\in (0,1)) \neq \emptyset $. Moreover, any family of
  generalized quasi-classical Wigner measures $\{\mathfrak{n}_{\delta}\}_{\delta>0} \in
  \bigcup_{\delta>0}\GW(\Psi_{\varepsilon,\delta}, \varepsilon\in (0,1)) $ is such that,
  for all $\delta>0$, $\lf\|
  \nn_{\delta}(\mathfrak{h}_{\omega}) \ri\|_{\mathscr{B}'} = 1 $ and it is an approximate ground state of $\mathcal{E}_{\mathrm{gqc}}[\rho,z]$,
  {\it i.e.},
  \begin{equation}
    \int_{\mathfrak{h}_{\omega}}^{}  \mathrm{d}\mathfrak{n}_{\delta}(z)[\mathcal{H}_z]<E_{\mathrm{gqc}}+\delta\; .
  \end{equation}
\end{thm}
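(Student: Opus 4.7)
The plan is to follow the blueprint of \cref{thm:2}, but replace testing against compact operators with testing against the full algebra $\mathscr{B}(L^{2}(\mathbb{R}^{dN}))$. This shift in topology is precisely what allows the compact-resolvent hypothesis on $\mathcal{K}_{0}$ to be dropped; the price is that cluster points land in the larger cone of generalized state-valued measures, which is what motivates the variational problem \eqref{eq: egqc}. I would split the proof into three steps: (i) extraction of a cluster point $\nn_{\delta}\in \GW(\Psi_{\varepsilon,\delta},\varepsilon\in(0,1))$; (ii) the mass identity $\lf\|\nn_{\delta}(\mathfrak{h}_{\omega})\ri\|_{\mathscr{B}'}=1$; (iii) the strict energy bound.

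For step (i), I would invoke the general existence theorem for generalized Wigner measures of \cite{falconi2017arxiv}, whose hypotheses are satisfied thanks to the uniform bound
\[
\lf|\braketr{\Psi_{\varepsilon,\delta}}{\mathcal{B}\otimes W_{\varepsilon}(\eta)\Psi_{\varepsilon,\delta}}_{\mathscr{H}_{\varepsilon}}\ri|\leq \lf\|\mathcal{B}\ri\|\; ,
\]
valid for every bounded $\mathcal{B}$ and every $\eta\in \mathscr{D}(\omega^{-1/2})$; a Banach--Alaoglu diagonal extraction over a countable separating family of probes then produces $\varepsilon_{n}\to 0$ and a probability measure $\nn_{\delta}$ realizing \eqref{eq: g convergence}. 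For step (ii) I would plug $\mathcal{B}=\one$ and $\eta=0$ into \eqref{eq: g convergence}: the left-hand side equals $\lf\|\Psi_{\varepsilon_{n},\delta}\ri\|^{2}=1$ for every $n$, forcing $\nn_{\delta}(\mathfrak{h}_{\omega})[\one]=1$, and since Banach--Alaoglu gives $\lf\|\nn_{\delta}(\mathfrak{h}_{\omega})\ri\|_{\mathscr{B}'}\leq 1$, equality follows. It is here that the stronger topology pays off: $\one$ is bounded but not compact, hence not an admissible probe in \cref{def: wigner}, which is exactly why no assumption on $\mathcal{K}_{0}$ is required to rule out mass loss.

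The substantive step is (iii). I would first establish the Fatou-type semicontinuity bound
\[
\int_{\mathfrak{h}_{\omega}}\mathrm{d}\nn_{\delta}(z)\lf[\HH_{z}\ri]\leq \liminf_{n\to +\infty}\braketr{\Psi_{\varepsilon_{n},\delta}}{H_{\varepsilon_{n}}\Psi_{\varepsilon_{n},\delta}}_{\mathscr{H}_{\varepsilon_{n}}}\; ,
\]
by decomposing $H_{\varepsilon}$ as in \eqref{eq: hamiltonian} and passing to the limit term by term: the particle piece $\mathcal{K}_{0}\otimes \one$ through its resolvent (a bounded probe, hence admissible for \eqref{eq: g convergence}); the free-field piece $\diff\GG_{\varepsilon}(\omega)$ via the a priori energy bound $H_{\varepsilon}\geq -C$ of \cref{teo: ground state energy}; and the interaction $H_{I}$ via the calculus of creation/annihilation symbols converging to the classical potential $\sum_{j}\mathcal{V}_{z}(\xv_{j})$, reusing the arguments from the lower-bound part of the proof of \cref{teo: ground state energy}. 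Combining this bound with \eqref{eq:1}, with $E_{\varepsilon_{n}}\to \eqc$ from \cref{teo: ground state energy}, and with the identity $\egqc=\eqc$ -- the inequality $\egqc\leq \eqc$ being immediate from the inclusion of pure states into $\domgqc$, and the reverse following by approximating any $(\rho,z)\in\domgqc$ with a sequence of coherent-product trial states whose energies converge to $\fgqc[\rho,z]$ -- one obtains $\int\mathrm{d}\nn_{\delta}[\HH_{z}]\leq \eqc+\delta=\egqc+\delta$. The strict inequality in the statement is then enforced by running the whole argument at tolerance $\delta/2$.

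The main obstacle I foresee is precisely the Fatou-type step. Since $\mathcal{K}_{0}$ is not relatively compact with respect to the rest of $H_{\varepsilon}$ and $\mathcal{W}$ is not assumed to be trapping, I cannot invoke any compactness on the particle sector: the passage to the limit in $\braketr{\Psi_{\varepsilon_{n},\delta}}{\mathcal{K}_{0}\otimes \one\Psi_{\varepsilon_{n},\delta}}_{\mathscr{H}_{\varepsilon_{n}}}$ has to be routed through a resolvent expansion uniform in $z\in \mathfrak{h}_{\omega}$, with the resulting estimates kept $\nn_{\delta}$-measurable after integration against the generalized measure. Checking that the symbolic-calculus arguments used in \cref{teo: ground state energy} adapt without loss to the weaker generalized topology (bounded rather than compact test operators) is the principal bookkeeping difficulty.
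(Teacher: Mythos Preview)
Your proposal is essentially the paper's own argument: steps (i)--(ii) are precisely the content of \cref{lemma:6} (and the discussion in its proof about testing with the identity), while step (iii) is the non-trapped lower-bound machinery of \cref{lemma:6,lemma:7,lemma:8} combined with \cref{prop:5} for $\eqc=\egqc$ and \cref{teo: ground state energy} for $E_{\varepsilon}\to\eqc$. Two small corrections: for the free-field term the paper does not rely on the crude bound $H_{\varepsilon}\geq -C$ but rather on a compact regularization $\omega_{r}\uparrow\omega$ (\cref{lemma:7}) followed by dominated convergence in $r$; and in your sketch of $\eqc\leq\egqc$ the phrase ``coherent-product trial states'' is a slip---coherent vectors live in $\mathscr{H}_{\varepsilon}$, not in $\overline{\mathscr{L}^{1}}_{+}(L^{2}(\mathbb{R}^{dN}))$, and what you need (and what \cref{prop:5} uses) is a weak-$*$ approximation of the generalized state $\rho$ by ordinary density matrices.
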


\begin{corollary}[Convergence to ground states (II)]
  	\label{cor:3}
  	\mbox{}	\\
 	If $\mathcal{K}_0$ does not have compact resolvent, then any generalized
  quasi-classical Wigner measure $\mathfrak{n} \in \GW(\Psi_{\varepsilon,o_{\varepsilon}(1)}, \varepsilon\in
  (0,1)) $, corresponding to approximate ground states $\Psi_{\varepsilon,o_{\varepsilon}(1)}$ satisfying \eqref{eq:1} with $\delta=o_{\varepsilon}(1)$, is such that
  $\lf\| \nn(\mathfrak{h}_{\omega})
  \ri\|_{\mathscr{B}'} = 1 $ and it is concentrated on the set of ground states
  $(\varrho_{\mathrm{gqc}},z_{\mathrm{gqc}})\in \mathscr{D}_{\mathrm{gqc}}$ of $\mathcal{E}_{\mathrm{gqc}}[\varrho,z]$. Consequently, the functional $
  \fgqc[\rho,z] $ admits at
  least one ground state in $ \domgqc $.
\end{corollary}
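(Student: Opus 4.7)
The plan is to deduce the corollary from \cref{thm:5} via a sandwich argument, parallel to the standard way one shows that Wigner measures of minimizing sequences concentrate on minimizers: the upper bound $\int \diff\mathfrak{n}(z)[\mathcal{H}_z] \leq \egqc$ will come by letting $\delta \downarrow 0$ in \cref{thm:5}, while the matching lower bound comes from the pointwise inequality $\varrho[\mathcal{H}_z] \geq \egqc$, valid by definition of the infimum for every normalized generalized state $\varrho$ at every $z \in \mathfrak{h}_{\omega}$ for which the pairing is finite. Non-emptiness of $\GW(\Psi_{\varepsilon, o_{\varepsilon}(1)}, \varepsilon \in (0,1))$ is immediate from \cref{thm:5}, since, for every fixed $\delta > 0$, the family $\Psi_{\varepsilon, o_{\varepsilon}(1)}$ eventually satisfies \eqref{eq:1}.

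For the upper bound, fix $\mathfrak{n} \in \GW(\Psi_{\varepsilon, o_{\varepsilon}(1)}, \varepsilon \in (0,1))$ realized along a subsequence $\varepsilon_n \to 0$, and pick any $\delta > 0$. For $n$ large enough one has $o_{\varepsilon_n}(1) < \delta$, so the tail of this subsequence is an approximate ground state family for $\delta$, and $\mathfrak{n}$ lies in the corresponding $\GW$ set. Applying \cref{thm:5} yields
\begin{equation*}
\int_{\mathfrak{h}_{\omega}} \diff\mathfrak{n}(z)[\mathcal{H}_z] < \egqc + \delta,
\end{equation*}
and sending $\delta \to 0$ gives $\int \diff\mathfrak{n}(z)[\mathcal{H}_z] \leq \egqc$. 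The normalization $\|\mathfrak{n}(\mathfrak{h}_{\omega})\|_{\mathscr{B}'} = 1$ follows by testing \eqref{eq: g convergence} with $\mathcal{B} = \one$ and $\eta = 0$---admissible in the gqc topology unlike in the qc topology---and using $\|\Psi_{\varepsilon_n}\|_{\mathscr{H}_{\varepsilon_n}} = 1$ together with the identity $\|F\|_{\mathscr{B}'} = F[\one]$ for positive functionals $F$.

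For the lower bound, I would disintegrate $\diff \mathfrak{n}(z) = \varrho(z) \diff \mu_{\mathfrak{n}}(z)$ with $\mu_{\mathfrak{n}}$-a.e.\ normalized generalized state $\varrho(z)$. At each $z$ for which $|\varrho(z)[\mathcal{H}_z]| < \infty$, the pair $(\varrho(z), z)$ lies in $\domgqc$, hence $\varrho(z)[\mathcal{H}_z] \geq \egqc$; integration against $\mu_{\mathfrak{n}}$ combined with the upper bound forces equality for $\mu_{\mathfrak{n}}$-a.e.\ $z$, so $\mathfrak{n}$ is concentrated on the set of minimizers of $\fgqc$. The existence of at least one minimizer is then automatic, since $\GW$ is non-empty, any of its elements is a unit-mass measure concentrated on the minimizer set, and this set cannot therefore be empty. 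The main technical obstacle I anticipate lies in rigorously constructing the disintegration for \emph{generalized} (rather than genuine) state-valued measures and in giving meaning to the pointwise pairing $\varrho(z)[\mathcal{H}_z]$ with the unbounded $\mathcal{H}_z$; I expect this to be handled by a spectral cutoff on $\mathcal{H}_z$ combined with monotone convergence, or alternatively by a direct measure-theoretic construction mirroring the one developed for state-valued measures in the quoted references.
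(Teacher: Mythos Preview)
Your upper-bound argument and the normalization check are correct and essentially match the paper's route: the paper also obtains $\int_{\mathfrak{h}_{\omega}}\mathrm{d}\mathfrak{n}(z)[\mathcal{H}_z]\leq E_{\mathrm{gqc}}$ from the convergence machinery underlying \cref{thm:5}, and then observes that, since $\mathfrak{n}$ is an admissible competitor for the variational problem \eqref{eq: gvp1}, the reverse inequality is automatic by \cref{prop:5}. Hence $\mathfrak{n}$ is a minimizer of \eqref{eq: gvp2}.

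The gap in your proposal is exactly the obstacle you anticipate, and it is real: the space $\mathscr{B}(L^2(\mathbb{R}^{dN}))'$ is not separable and therefore does \emph{not} enjoy the Radon--Nikod\'ym property (the paper states this explicitly in \cref{sec:minim-probl-gener}). Consequently there is in general no disintegration $\mathrm{d}\mathfrak{n}(z)=\varrho(z)\,\mathrm{d}\mu_{\mathfrak{n}}(z)$ with $\varrho(z)\in\overline{\mathscr{L}^1}_{+,1}$, and your pointwise argument cannot be carried out as written. Your proposed fixes (spectral cutoff on $\mathcal{H}_z$, mirroring the construction for state-valued measures) do not circumvent this, since the obstruction is structural---it lies in the target space of the measure, not in the unboundedness of $\mathcal{H}_z$.

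The paper avoids this entirely by packaging the concentration statement as a separate result, \cref{prop:6}: any minimizer of \eqref{eq: gvp2} is concentrated on the set of minimizers of \eqref{eq: GVP2}. The proof of \cref{prop:6} does not attempt a Radon--Nikod\'ym decomposition; instead it argues abstractly that the integral with respect to a generalized state-valued probability measure is a convex combination of expectations on generalized states, and a convex combination equals the infimum only if all terms in the support are minimizers. Once $\mathfrak{n}$ is identified as a minimizer of \eqref{eq: gvp2}, \cref{prop:6} immediately yields both the concentration and the existence of at least one $(\varrho_{\mathrm{gqc}},z_{\mathrm{gqc}})$. So the correct completion of your argument is to replace your disintegration step by an appeal to \cref{prop:6}.
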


\begin{corollary}[Convergence of ground states (II)]
  \label{teo: minimizers 2}
  \mbox{}	\\
  If $\mathcal{K}_0$ does not have compact resolvent and $H_{\varepsilon}$ has a ground
  state $ \psigs $, then any generalized Wigner measure $\mathfrak{n} \in
  \GW(\psigs,\varepsilon\in (0,1)) $ is such
  that $ \lf\| \nn(\mathfrak{h}_{\omega})
  \ri\|_{\mathscr{B}'} = 1 $ and it is concentrated on the set of ground states
  $ (\rhogqc, \zgqc) \in \domgqc $ of $ \fgqc[\rho,z] $.
\end{corollary}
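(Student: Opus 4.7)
The plan is to reduce this statement to the preceding \cref{cor:3} by observing that a genuine microscopic ground state is automatically an approximate ground state with zero error, hence in particular with error $o_{\varepsilon}(1)$.

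First, I would observe that if $\psigs$ is a ground state of $H_{\varepsilon}$, then by definition
\begin{equation*}
  \braket{\psigs}{H_{\varepsilon}\psigs}_{\mathscr{H}_{\varepsilon}} = E_{\varepsilon}\; ,
\end{equation*}
so that for every $\delta>0$, and in particular for every vanishing sequence $\delta_{\varepsilon}=o_{\varepsilon}(1)$, the state $\psigs$ satisfies trivially the approximate ground state condition \eqref{eq:1}. One may therefore identify the family $\{\Psi_{\varepsilon,o_{\varepsilon}(1)}\}_{\varepsilon\in(0,1)} := \{\psigs\}_{\varepsilon\in(0,1)}$ in the sense of \cref{cor:3}. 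The nonemptiness of $\GW(\psigs,\varepsilon\in(0,1))$ then follows already from \cref{thm:5}, applied with any fixed $\delta>0$, since the space of generalized state-valued probability measures on $\mathfrak{h}_{\omega}$ is compact in the weak-$\ast$ topology that defines \eqref{eq: g convergence}.

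Second, I would apply \cref{cor:3} directly: any $\mathfrak{n}\in \GW(\psigs,\varepsilon\in(0,1))$ is a generalized quasi-classical Wigner measure of an approximate ground state family with vanishing error, and therefore satisfies $\lVert\mathfrak{n}(\mathfrak{h}_{\omega})\rVert_{\mathscr{B}'}=1$ and is concentrated on the set of minimizers $(\rhogqc,\zgqc)\in\domgqc$ of $\fgqc[\rho,z]$. This is precisely the conclusion to be proved.

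There is essentially no substantive obstacle at this stage, as the analytic work has already been carried out in proving \cref{thm:5} and \cref{cor:3}: the energy convergence $E_{\varepsilon}\to\eqc=\egqc$ from \cref{teo: ground state energy}, together with the weak-$\ast$ lower semicontinuity of $\rho\mapsto\rho[\mathcal{H}_z]$ and the control of the Weyl quantization in the integration against generalized state-valued measures, forces every limit point to be supported on the minimizing set of $\fgqc$. The only point that requires comment here is that the non-compactness of $\mathcal{K}_0$'s resolvent may allow part of the mass of $\mathfrak{n}$ to escape from $\mathscr{L}^1_+(L^2(\R^{dN}))$ into the larger cone $\overline{\mathscr{L}^1}_+(L^2(\R^{dN}))$; this is exactly why one must formulate the result in terms of $\egqc$ and $\fgqc$ rather than the original $\eqc$, but the conservation of total mass in the weak-$\ast$ sense, $\lVert\mathfrak{n}(\mathfrak{h}_{\omega})\rVert_{\mathscr{B}'}=1$, is guaranteed by normalization of the Weyl test functions at $\eta=0$ and by \cref{cor:3}.
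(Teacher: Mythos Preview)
Your proposal is correct and follows essentially the same approach as the paper: the paper's proof of \cref{teo: minimizers 2} simply says ``This proof is completely analogous to the one above,'' referring to \cref{cor:3}, and the analogous proof of \cref{teo: minimizers 1} in the trapped case makes explicit exactly the observation you make, namely that a genuine ground state $\psigs$ is an exact minimizing sequence with $\delta=0$, so the argument for approximate ground states with $\delta=o_{\varepsilon}(1)$ applies verbatim. Your additional remarks on mass conservation and the role of the generalized framework are accurate but go slightly beyond what the paper spells out at this point.
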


\begin{remark}[Quasi-classical energy and generalized quasi-classical energy]
  	\label{rem:7}
  	\mbox{}	\\
  	As proved in \cref{sec:quasi-class-minim} below (see \cref{prop:5}),
  	\begin{equation*}
    		E_{\mathrm{qc}}=E_{\mathrm{gqc}}\; ,
	\end{equation*}
	which is in fact crucial to prove convergence of the ground state energy
  for systems without trapping on the quantum particles.
\end{remark}

\subsection*{Acknowledgements}
\label{sec:acknowledgements}

\begin{footnotesize}
The authors would like to thank Z.\ Ammari, N.\ Rougerie, and F.\ Hiroshima
for many stimulating discussions and helpful insights during the redaction of
this paper. M.F.\ has been supported by the European Research Council (ERC)
under the European Union’s Horizon 2020 research and innovation programme
(ERC CoG UniCoSM, grant agreement n.724939). M.O.\ has been partially
supported by GNFM group of INdAM through the grant Progetto Giovani 2019
``Derivation of effective theories for large quantum systems''.
\end{footnotesize}

\section{Quasi-Classical Minimization Problems}
\label{sec:quasi-class-minim}

In this section we consider minimization problems in the quasi-classical
setting: we study the functionals introduced in \cref{sec: quasi-classical
  var} and the relative
minimizations, but also define and investigate more general problems.

\subsection{Quasi-classical functionals, states and related minimization
  problems}
\label{sec:quasi-class-funct}

A quasi-classical system behaves like an open system in which a {\it
  classical environment} (of infinite dimension) drives a quantum {\it small
  system}, described by an Hilbert space $ L^2(\R^{dN}) $. The classical
environment is described by a space of configurations $\mathfrak{h}_{\omega}$,
usually a complex Hilbert space identifiable with the complex phase space of
the environment's degrees of freedom. A probability distribution $\mu$ on
$\mathfrak{h}_{\omega}$ tells how probable each environment's configuration
is, while a state-valued function $\mathfrak{h}_{\omega}\ni z\mapsto
\gamma(z)\in \mathscr{L}^1_+(L^2(\R^{dN}))$ tells how each
environment's configuration drives the small system's quantum state.
Analogously, both the value of observables $\mathcal{F}(z)$ and the small
system's dynamics $\mathcal{U}_t(z)$ are driven by the environment.

A quasi-classical minimization problem is the problem of finding the lowest
energy and possibly the ground states of a suitable functional
$
\mathcal{E}[\psi,z]:L^2(\R^{dN}) \oplus \mathfrak{h}_{\omega}\to \mathbb{R} $ depending on the configuration of both the small system
and the environment. The first energy functional to consider is $ \fqc[\psi, z]
$, as
defined in \eqref{eq: fqc}:
\bdm
\fqc[\psi, z] : = \meanlrlr{\psi}{\HH_z}{\psi}_{L^2(\R^{dN})}, \qquad \lf(\psi, z\ri) \in \domqc,
\edm
where $ \HH_z $ and $ \domqc $ are given in \eqref{eq: classical hamiltonian}
and \eqref{eq: domqc}, respectively. We also recall that
the ground state energy and minimizer of $ \fqc $ are denoted by $ \eqc $ and $
(\psiqc, \zqc) $,
respectively.

Although the above is the foremost functional coming to mind in this context,
another minimization problem emerges naturally in studying the
quasi-classical limit. To this purpose, we recall the notion of state-valued
measure \cite{falconi2017arxiv,correggi2019arxiv}, already mentioned in
\cref{sec: ground state}: a state-valued {\it probability} measure $\mathfrak{m} \in
\mathscr{P}\lf(\mathfrak{h}_{\omega};\mathscr{L}^1_+(L^2(\R^{dN})) \ri) $ is a
vector Borel Radon measure on $\mathfrak{h}_{\omega}$, taking values in the
density matrices $ \mathscr{L}^1_+(L^2(\R^{dN}))$ of the small system, such
that
\beq
\lVert \mathfrak{m}(\mathfrak{h}_{\omega}) \rVert_{\mathscr{L}^1}^{}=1.
\eeq
Thanks to the Radon-Nikod\'{y}m property enjoyed by the separable dual space
$\mathscr{L}^1(L^2(\R^{dN}))$, it is possible to decompose $\mathfrak{m}$ in a scalar Borel Radon probability
measure $\mu_{\mathfrak{m}}\in
\mathscr{P}(\mathfrak{h}_{\omega})$, such that $ \mu_{\mathfrak{m}}(\mathfrak{h}) = 1 $, and in an a.e.-defined function (the Radon-Nikod\'{y}m
derivative)
\begin{equation*}
  \mathfrak{h}_{\omega}\ni z \mapsto \gamma_{\mathfrak{m}}(z)\in \mathscr{L}^1_{+,1}\lf(L^2(\R^{dN})\ri)
\end{equation*}
taking values in the normalized density matrices of the small system:
\begin{equation*}
  \mathrm{d}\mathfrak{m}(z)=\gamma_{\mathfrak{m}}(z)\mathrm{d}\mu_{\mathfrak{m}}(z)\; .
\end{equation*}
The quasi-classical energy $ \fqc $, constrained to $\lVert \psi
\rVert_{L^2(\R^{dN})}^{}=1$, is the expectation
of the quasi-classical Hamiltonian $ \HH_z $. Therefore, its generalization to state
valued measures obviously reads
\begin{equation}
  \label{eq: fsv}
  \fsv[\mm] : = \int_{\mathfrak{h}_{\omega}} \mathrm{d}\mu_{\mathfrak{m}}(z) \; \tr_{L^2(\R^{dN})} \lf[ \gamma_{\mathfrak{m}}(z) \HH_z \ri] .
\end{equation}
This leads to the following minimization problem: setting
\beq
\label{eq: domsv}
\domsv : = \lf\{ \mm \in \mathscr{P} \lf(\mathfrak{h}_{\omega};\mathscr{L}^1_+\lf(L^2(\R^{dN})\ri)\ri) \: \Big| \:\tr_{L^2(\R^{dN})} \mm(\mathfrak{h}_{\omega}) = 1, \lf| \fsv[\mm] \ri| < + \infty\ri\},
\eeq
we define
\beq
\esv : = \inf_{\mm \in \domsv} \fsv[\mm] \overset{?}{>} -\infty,
\tag{vp1}\label{eq: vpsv1}
\eeq
\beq
\overset{?}{\exists} \msv \in \domsv \mbox{ s.t. } \fsv[\msv] = \esv.
\tag{vp2}\label{eq: vpsv2}
\eeq

A variant of the above problem is obtained by assuming that $
\gamma_{\mathfrak{m}}(z) =\lvert \psi \rangle\langle \psi \rvert$ for
some $ \psi \in L^2(\R^{dN}) $ independent of $z$, in which case the functional depends only on
a wave function $ \psi $ and a probability measure $ \mu $ over $ \mathfrak{h}_{\omega} $. We thus set
\begin{equation}
  \label{eq: fpm}
  \fpm[\psi, \mu] : = \int_{\mathfrak{h}_{\omega}} \mathrm{d}\mu(z) \; \meanlrlr{\psi}{\HH_z}{\psi}_{L^2(\R^{dN})}.
\end{equation}
The variational problem reads
\beq
\epm : = \inf_{\lf(\psi, \mu \ri) \in \dompm} \fpm[\psi, \mu] \overset{?}{>} -\infty, \tag{vp${}^{\prime}$1}\label{eq: vppm1} \eeq
where\beq
\dompm : = \lf\{ \lf(\psi, \mu \ri) \in L^2(\R^{dN}) \oplus \mathscr{P}\lf(\mathfrak{h}_{\omega}\ri), \lf\| \psi \ri\|_2 = 1, \mu(\mathfrak{h}_{\omega})= 1, \lf| \fpm\lf[ \psi, \mu \ri] \ri| < + \infty \ri\},
\eeq
and
\beq
\overset{?}{\exists} \lf(\psipm, \mupm\ri) \in \dompm \mbox{ s.t. } \fpm \lf[\lf( \psipm, \mupm \ri)\ri] = \epm.
\tag{vp${}^{\prime}$2}\label{eq: vppm2}
\eeq
Note that the functional $ \fqc $ and the corresponding variational problems
\eqref{eq: vp1} and \eqref{eq: vp2} are recovered by simply imposing in $
\fpm $ above that $ \mu $ is a Dirac
delta, {\it i.e.}, $ \exists z_0 \in
\mathfrak{h}_{\omega} $ such that $ \mu = \delta_{z_0} $. Yet another minimization
problem can be formulated by substituting the minimization over
$\mathscr{P}\lf(\mathfrak{h}_{\omega};\mathscr{L}^1_+(L^2(\R^{dN})) \ri) $ and $
\mathscr{P}\lf(\mathfrak{h}_{\omega}\ri) $ in \eqref{eq: fsv} and \eqref{eq: fpm}
with the one over atomic measures $
\mathscr{P}_{\mathrm{atom}}\lf(\mathfrak{h}_{\omega};\mathscr{L}^1_+(L^2(\R^{dN}))
\ri) $ and $ \mathscr{P}_{\mathrm{atom}}\lf(\mathfrak{h}_{\omega}\ri) $,
respectively.

Finally, in the spirit of derivation of effective functionals of $ \psi $ or $ z
$
alone, as the Pekar-like functionals defined in \eqref{eq: fpek N}  and
\eqref{eq: fpek PF}, we can also define the following effective energy
\beq
\mathcal{I}[z] : = \inf_{\psi \in L^2(\R^{dN}), \lf\| \psi \ri\|_2 = 1} \fqc[\psi, z].
\eeq


The rest of this section is devoted to prove equivalences between the
minimization problems defined above. In fact, we are interested mostly in
deriving information concerning \eqref{eq: vp1} and \eqref{eq: vp2}, obtained
by studying the quasi-classical limit. The latter, however, yields naturally
information about \eqref{eq: vpsv1} and \eqref{eq: vpsv2}, and thus the link
between the two quasi-classical minimization problems will be very
useful. Firstly, the infima of all the aforementioned functionals coincide.

\begin{proposition}[Quasi-classical energies]
  \label{prop:2}
  \mbox{}	\\
  Under the assumptions made,
  \begin{multline}
    \eqc = \esv = \inf_{\mm \in \domsv \cap \mathscr{P}_{\mathrm{atom}}(\mathfrak{h}_{\omega};\mathscr{L}^1_+(L^2(\R^{dN})))} \fsv[\mm] = \epm  \\
    = \inf_{\lf(\psi, \mu \ri) \in \dompm, \mu \in\mathscr{P}_{\mathrm{atom}}\lf(\mathfrak{h}_{\omega}\ri)} \fpm[\psi, \mu] = \epek= \inf_{z \in \mathfrak{h}_{\omega}} \mathcal{I}[z].
  \end{multline}
\end{proposition}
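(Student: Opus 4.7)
The plan is to chain all displayed infima to $\eqc$ via a handful of elementary inequalities. Two guiding principles suffice: for any probability measure $\mu$, $\int f \, d\mu \geq \inf f$ (so averaging out a parameter cannot go below its pointwise infimum); and conversely any specific value $\fqc[\psi_0,z_0]$ is realized inside $\fsv$ or $\fpm$ by a Dirac mass at $z_0$ (and, for $\fsv$, the pure state $|\psi_0\rangle\langle\psi_0|$). Both principles respect atomicity, so the claims for the atomic restrictions follow from the unrestricted ones without extra work.

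The identification $\inf_{z \in \mathfrak{h}_{\omega}} \mathcal{I}[z] = \eqc$ is immediate by exchanging the two infima in the definition $\mathcal{I}[z] = \inf_\psi \fqc[\psi,z]$. For $\epek = \eqc$ I would argue that in each of the three models $\fpek[\psi] = \inf_z \fqc[\psi,z]$: this reduces to the fact that the critical point $\zpek[\psi]$ from the Euler-Lagrange equation is in fact the unique minimizer in $z$, or equivalently in $\eta = \omega^{1/2} z$, which is guaranteed by the strict convexity of $\ffqc[\psi,\eta]$ in $\eta$ (\cref{rem: uniqueness pekar} and \cref{lemma: convex}). Swapping infima then yields $\epek = \inf_\psi \inf_z \fqc[\psi,z] = \eqc$.

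For $\epm = \eqc$ and $\esv = \eqc$ I would proceed symmetrically. The lower bound $\epm \geq \eqc$ follows from $\fpm[\psi,\mu] = \int \fqc[\psi,z]\, d\mu(z) \geq \inf_z \fqc[\psi,z] \geq \eqc$. For $\esv \geq \eqc$, using the Radon-Nikod\'ym decomposition $d\mm(z) = \gamma_{\mm}(z)\, d\mu_{\mm}(z)$ and spectrally decomposing the normalized density matrix $\gamma_{\mm}(z)$ gives the pointwise bound $\tr[\gamma_{\mm}(z)\HH_z] \geq \mathcal{I}[z] \geq \eqc$, whence $\fsv[\mm] \geq \eqc$. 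The matching upper bounds are obtained by testing on $(\psi_0,\delta_{z_0})$ and on the pure atomic state-valued measure $\mm = |\psi_0\rangle\langle\psi_0| \, \delta_{z_0}$ respectively, each of which delivers the value $\fqc[\psi_0,z_0]$; minimizing over $(\psi_0,z_0) \in \domqc$ closes the loop. Since $\delta_{z_0}$ is atomic, the same construction identifies the atomic-measure infima with $\eqc$ as well.

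The single non-cosmetic step is the identification $\fpek[\psi] = \inf_z \fqc[\psi,z]$ used in the second paragraph. In the Nelson and polaron cases it is transparent from the quadratic-plus-linear structure in $\eta$ visible in \eqref{eq: qc energy computed}, which makes the unique critical point a global minimum. In the Pauli-Fierz case, where $\fpek$ is only implicitly defined by \eqref{eq: fpek PF}, the same conclusion rests on the strict convexity of $\ffqc$ in $\eta$ delivered by \cref{lemma: convex}; I expect this convexity lemma to carry the main technical weight, while every other inequality above is a one-line application of monotone integration or of Dirac-mass testing.
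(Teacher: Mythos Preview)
Your approach is correct and actually more direct than the paper's. The paper first reduces $\esv$ and $\epm$ to their atomic versions by invoking the weak density of finitely-supported atomic measures in the space of probability measures (citing \cite[Lemma 3.20]{correggi2017ahp}), and then chains the atomic infima together via $\delta$-minimizing sequences. You bypass the density argument entirely with a clean sandwich: the pointwise bound $\tr[\gamma_{\mm}(z)\HH_z]\geq \mathcal{I}[z]\geq \eqc$ under the integral gives $\esv\geq \eqc$ directly, while testing on the single-atom state-valued measure $|\psi_0\rangle\langle\psi_0|\,\delta_{z_0}$ gives $\inf_{\text{atomic}}\fsv\leq \eqc$; since trivially $\esv\leq \inf_{\text{atomic}}\fsv$, all three coincide. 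The same works for $\epm$. This is strictly more elementary. For $\epek=\eqc$, both routes ultimately rest on the identification $\fpek[\psi]=\inf_z\fqc[\psi,z]$, which for Pauli-Fierz requires \cref{lemma: convex}; the paper uses this implicitly through its minimizing-sequence manipulations, while you make the dependence explicit. The only thing the paper's approach buys is that it never needs the Radon--Nikod\'ym decomposition or spectral decomposition of $\gamma_{\mm}(z)$ for the lower bound on $\esv$, since it works with finite sums from the outset---but this is a negligible saving given that both tools are already part of the framework.
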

\begin{proof}
  We use the weak density of atomic scalar measures, supported on a finite
  number of points, in the space of all finite measures, that holds for
  $\mathfrak{h}_{\omega}$ separable \cite{parthasarathy1967pms}. Thanks to that it
  is possible to prove the following (see \cite[Lemma 3.20]{correggi2017ahp}
  for a detailed proof):
  \beqn
  \esv &=& \disp\inf_{\mm \in \domsv} \fsv[\mm] = \inf_{\mm \in \domsv \cap \mathscr{P}_{\mathrm{atom}}(\mathfrak{h}_{\omega};\mathscr{L}^1_+(L^2(\R^{dN})))} \fsv[\mm]	\; ;	\nonumber \\
  \epm &=& \disp\inf_{\lf(\psi, \mu \ri) \in \dompm} \fpm[\psi, \mu] = \inf_{\lf(\psi, \mu\ri)\in \dompm, \mu \in \mathscr{P}_{\mathrm{atom}}\lf(\mathfrak{h}_{\omega}\ri)} \fpm[\psi, \mu]. \nonumber
  \eeqn
    	
  Now, let us prove that
  \begin{equation}
    \inf_{\mm \in \domsv \cap \mathscr{P}_{\mathrm{atom}}(\mathfrak{h}_{\omega};\mathscr{L}^1_+(L^2(\R^{dN})))} \fsv[\mm] = \inf_{\lf(\psi, \mu \ri) \in \dompm, \mu \in \mathscr{P}_{\mathrm{atom}}\lf(\mathfrak{h}_{\omega}\ri)} \fpm[\psi, \mu].
  \end{equation}
  Let $\delta>0$ and let $\mathfrak{m}_{\delta}=\sum_{k=1}^{K}\lambda_k\gamma_k\delta_{z_k}$, with $ \gamma_k \in
  \mathscr{L}^1_{+,1}(L^2(\R^{dN})) $, $\lambda_k\geq 0$ (recall that $ \mm_{\delta} $
  takes values in positive operators) and $\sum_{k=1}^{K}\lambda_k=1$, be an atomic
  state-valued measure, such that
  \begin{equation*}
    \fsv\lf[\mathfrak{m}_{\delta}\ri] = \sum_{k=1}^{K} \lambda_k\tr_{L^2(\R^{dN})} \lf[\gamma_k \mathcal{H}_{z_k} \ri] < \inf_{\mm \in \domsv \cap \mathscr{P}_{\mathrm{atom}}(\mathfrak{h}_{\omega};\mathscr{L}^1_+(L^2(\R^{dN})))} \fsv[\mm] +\delta\; .
  \end{equation*}
  For fixed $k$, since $\gamma_k $ is a normalized density matrix,
  \begin{equation*}
    \inf_{\psi \in L^2(\R^{dN}), \lf\| \psi \ri\|_2 = 1}  \meanlrlr{\psi}{\mathcal{H}_{z_k}}{\psi}_{L^2(\R^{dN})} \leq \tr_{L^2(\R^{dN})} \lf[\gamma_k \mathcal{H}_{z_k} \ri].
  \end{equation*}
  Therefore,
  \begin{multline}
    \inf_{\lf(\psi, \mu \ri) \in \dompm, \mu \in \mathscr{P}_{\mathrm{atom}}\lf(\mathfrak{h}_{\omega}\ri)} \fpm[\psi, \mu] = \inf_{\lf(\psi, \mu \ri) \in \dompm, \mu \in \mathscr{P}_{\mathrm{atom}}\lf(\mathfrak{h}_{\omega}\ri)} \int_{\mathfrak{h}_{\omega}} \diff \mu \: \meanlrlr{\psi}{\HH_z}{\psi}_{L^2(\R^{dN})} \\
    \leq \sum_{k = 1}^K \lambda_k \inf_{\psi \in L^2(\R^{dN}), \lf\| \psi \ri\|_2 = 1}  \meanlrlr{\psi}{\mathcal{H}_{z_k}}{\psi}_{L^2(\R^{dN})} \leq \sum_{k=1}^{K} \lambda_k \tr_{L^2(\R^{dN})} \lf[\gamma_k \mathcal{H}_{z_k} \ri]\\
    < \inf_{\mm \in \domsv \cap\mathscr{P}_{\mathrm{atom}}(\mathfrak{h}_{\omega};\mathscr{L}^1_+(L^2(\R^{dN})))}\fsv[\mm] +\delta\; .  
  \end{multline}
  Since $\delta>0$ is arbitrary, we conclude that
  \begin{equation}
    \inf_{\lf(\psi, \mu \ri) \in \dompm, \mu \in \mathscr{P}_{\mathrm{atom}}\lf(\mathfrak{h}_{\omega}\ri)} \fpm[\psi, \mu] \leq \inf_{\mm \in \domsv \cap \mathscr{P}_{\mathrm{atom}}(\mathfrak{h}_{\omega};\mathscr{L}^1_+(L^2(\R^{dN})))} \fsv[\mm].
  \end{equation}
  To prove the opposite inequality, we follow a similar reasoning. Let $ \delta >
  0 $ and $\mu_{\delta}=\sum_{k=1}^{K}\lambda_k\delta_{z_k}$ be a scalar atomic measure and
  $\psi_{\delta,z_k} \in L^2(\R^{dN})$ a family of normalized wave functions, such that $
  \mu_{\delta}(\mathfrak{h}_{\omega}) = 1 $ and
  \begin{equation*}
    \sum_{k=1}^{K}\lambda_k \meanlrlr{\psi_{\delta,z_k}}{\mathcal{H}_{z_k}}{\psi_{\delta,z_k}}_{L^2(\R^{dN})} < \inf_{\lf(\psi, \mu \ri) \in \dompm, \mu \in \mathscr{P}_{\mathrm{atom}}\lf(\mathfrak{h}_{\omega}\ri)} \fpm[\psi, \mu]+\delta\; .
  \end{equation*}
  Now, $\mathfrak{m}_{\delta} := \sum_{k=1}^K\lambda_k \ket{\psi_{\delta,z_k}}\bra{\psi_{\delta,z_k}}
  \delta_{z_k}$ is an atomic state-valued measure belonging to
  $ \domsv
  $. Therefore,
  \begin{multline}
    \inf_{\mm \in \domsv \cap \mathscr{P}_{\mathrm{atom}}(\mathfrak{h}_{\omega};\mathscr{L}^1_+(L^2(\R^{dN})))} \fsv[\mm] \leq \fsv\lf[\mm_{\delta}\ri] = \sum_{k=1}^{K}\lambda_k \meanlrlr{\psi_{\delta,z_k}}{\mathcal{H}_{z_k}}{\psi_{\delta,z_k}}_{L^2(\R^{dN})} \\
    < \inf_{\lf(\psi, \mu \ri) \in \dompm, \mu \in \mathscr{P}_{\mathrm{atom}}\lf(\mathfrak{h}_{\omega}\ri)} \fpm[\psi, \mu] +\delta\;,
  \end{multline}  
  which yields the desired inequality.
  
  To complete the proof, we show that
  \begin{equation}
    \label{eq: prop2 proof 1}
    \inf_{\lf(\psi, \mu \ri) \in \dompm, \mu \in \mathscr{P}_{\mathrm{atom}}\lf(\mathfrak{h}_{\omega}\ri)} \fpm[\psi, \mu] = \inf_{z\in \mathfrak{h}_{\omega}}\mathcal{I}[z] = \epek = \eqc \; .
  \end{equation}
  Let us prove the first equality beforehand. Let $\mu_{\delta}=\sum_{k=1}^K\lambda_k
  \delta_{z_k}$ be the atomic 
  minimizing family of measures defined before and $\psi_{\delta,z_k}$ the corresponding
  minimizing vectors. Then,
  \begin{multline}
        \sum_{k = 1}^K \lambda_k \inf_{\psi \in L^2(\R^{dN}), \lf\| \psi \ri\|_2 = 1}  \meanlrlr{\psi}{\mathcal{H}_{z_k}}{\psi}_{L^2(\R^{dN})} \leq \sum_{k=1}^{K}\lambda_k \meanlrlr{\psi_{\delta,z_k}}{\mathcal{H}_{z_k}}{\psi_{\delta,z_k}}_{L^2(\R^{dN})} \\
    < \inf_{\lf(\psi, \mu \ri) \in \dompm, \mu \in \mathscr{P}_{\mathrm{atom}}\lf(\mathfrak{h}_{\omega}\ri)} \fpm[\psi, \mu]+\delta\; .  
  \end{multline}
  Since the l.h.s. is a convex combination and $\delta$ is arbitrary, we
  immediately deduce that
  \begin{equation}
    \inf_{z\in \mathfrak{h}_{\omega}}\mathcal{I}[z] \leq \inf_{\lf(\psi, \mu \ri) \in \dompm, \mu \in \mathscr{P}_{\mathrm{atom}}\lf(\mathfrak{h}_{\omega}\ri)} \fpm[\psi, \mu] \;.
  \end{equation}
  On the other hand, since a measure concentrated in a single point is
  atomic,
  \begin{equation*}
    \inf_{\lf(\psi, \mu \ri) \in \dompm, \mu \in \mathscr{P}_{\mathrm{atom}}\lf(\mathfrak{h}_{\omega}\ri)} \fpm[\psi, \mu] \leq \inf_{z\in \mathfrak{h}_{\omega}} \inf_{\psi \in L^2(\R^{dN}), \lf\| \psi \ri\|_2 = 1} \fqc[\psi,z] = \inf_{z\in \mathfrak{h}_{\omega}} \mathcal{I}[z]\;,
  \end{equation*}
  which implies the first identity in \eqref{eq: prop2 proof 1}.
  	
  Now, let us prove the second equality above, namely
  \begin{equation}
    \inf_{z\in \mathfrak{h}_{\omega}}\mathcal{I}[z] = \epek \; .
  \end{equation}
  Let again $\delta>0$ and let $z_{\delta}$ be a minimizing family of vectors for
  $\mathcal{I}$, {\it i.e.}, such that $ \mathcal{I}[z_{\delta}] < \inf_{z\in
    \mathfrak{h}_{\omega}}\mathcal{I}[z] + \delta $. For each $z_{\delta}$, let $\psi_{\delta,
    z_{\delta}}$ be a
  minimizing vector for $ \fqc[\: \cdot \:,z_{\delta}] $, {\it i.e.}, such that
  \begin{equation*}
    \fqc\lf[\psi_{\delta, z_{\delta}}, z_{\delta} \ri] <\mathcal{I}[z_{\delta}]+\delta\; .
  \end{equation*}
  Now,
  \begin{equation*}
    \epek \leq \fpek\lf[\psi_{\delta, z_{\delta}} \ri] \leq \fqc\lf[\psi_{\delta, z_{\delta}}, z_{\delta}\ri]\; .
  \end{equation*}
  so that,
  \begin{equation}
    \epek \leq \inf_{z\in \mathfrak{h}_{\omega}} \mathcal{I}[z]\; .
  \end{equation}
  On the other hand, let $\psi_{\delta}$ be a minimizing family of states for $ \epek
  $, and,
  once fixed $\psi_{\delta}$, let $z_{\delta, \psi_{\delta}} $ be a minimizing family for $ \fqc[\psi_{\delta},\: \cdot \: ]$:
  \begin{equation}
    \label{eq: prop2 proof 2}
    \fqc\lf[ \psi_{\delta}, z_{\delta,\psi_{\delta}} \ri] < \epek +\delta\; .
  \end{equation}
  As above, we then get
  \begin{equation*}
    \inf_{z\in \mathfrak{h}_{\omega}} \mathcal{I}[z] \leq \inf_{\psi \in L^2(\R^{dN}), \lf\| \psi \ri\|_2 = 1} \fqc\lf[\psi, z_{\delta, \psi_{\delta}} \ri] \leq \fqc\lf[\psi_{\delta}, z_{\delta, \psi_{\delta}} \ri] < \epek +\delta\; .
  \end{equation*}
  which yields
  \begin{equation}
    \inf_{z\in \mathfrak{h}_{\omega}}\mathcal{I}[z] \leq \epek \; .
  \end{equation}
  	
  Finally, we prove that
  \begin{equation}
    \epek = \eqc \; .
  \end{equation}
  Now, let $ (\psi_{\delta}, z_{\delta, \psi_{\delta}})$ be as above, \emph{i.e.}, such that \eqref{eq: prop2 proof 2} holds true.
  Hence,
  \begin{equation*}
    \eqc \leq \fqc\lf[\psi_{\delta}, z_{\delta,\psi_{\delta}} \ri] < \epek + \delta\; ,
  \end{equation*}
  and thus $ \eqc \leq \epek $. On the other hand, let $(\psi_{\delta},z_{\delta}) $ be a
  minimizing family of configurations for $ \fqc $:
  \begin{equation*}
    \fqc\lf[\psi_{\delta},z_{\delta}\ri]< \eqc +\delta\; .
  \end{equation*}
  Clearly, now one has
  \begin{equation*}
    \epek \leq \fpek[\psi_{\delta}] \leq \fqc[\psi_{\delta},z_{\delta}] <  \eqc + \delta\; ,
  \end{equation*}
  yielding the opposite inequality, {\it i.e.}, $ \epek \leq \eqc $.
\end{proof}

\begin{remark}[Stability]
  \label{rem: stability}
  \mbox{}	\\
  In the above proof we have implicitly assumed that the energies under
  considerations are bounded from below, but in fact it is easy to see that,
  if one of the functionals in unbounded from below, then all the others must
  be unstable as well. We do not provide any detail of such an argument,
  because our main result (\cref{teo: ground state energy}) implies that \eqref{eq: vp1} holds true, so that \eqref{eq: vpp1}, \eqref{eq: vpsv1}
  and \eqref{eq: vppm1} immediately follow.
\end{remark}

The other important result concerns equivalences for the existence of
minimizers in the variational problems above.

\begin{proposition}[Quasi-classical minimizers]
  \label{prop:3}
  \mbox{}	\\
  Under the assumptions made,
  \beq
  \eqref{eq: vp2} \; \Longleftrightarrow\; \eqref{eq: vpp2} \; \Longleftrightarrow\; \eqref{eq: vpsv2} \; \Longleftrightarrow\; \eqref{eq: vppm2} \;.
  \eeq
  Furthermore, any minimizer $ \msv $ of \eqref{eq: vpsv2} is concentrated on
  the set of minimizers $ (\psiqc, \zqc) $ of \eqref{eq: vp2}.
\end{proposition}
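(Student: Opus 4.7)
The plan is to prove the four-way equivalence by setting up a short cycle of implications, leveraging the identity of infima $\eqc = \esv = \epm = \epek$ already established in Proposition~\ref{prop:2}. Since Proposition~\ref{pro: equivalence} already yields \eqref{eq: vp2} $\Longleftrightarrow$ \eqref{eq: vpp2}, it suffices to prove \eqref{eq: vp2} $\Longleftrightarrow$ \eqref{eq: vpsv2} and \eqref{eq: vp2} $\Longleftrightarrow$ \eqref{eq: vppm2}.

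The forward directions \eqref{eq: vp2} $\Longrightarrow$ \eqref{eq: vpsv2} and \eqref{eq: vp2} $\Longrightarrow$ \eqref{eq: vppm2} are immediate: given a minimizer $(\psiqc,\zqc)$ of $\fqc$, the choices $\msv := \ket{\psiqc}\bra{\psiqc}\,\delta_{\zqc} \in \domsv$ and $(\psipm,\mupm) := (\psiqc,\delta_{\zqc}) \in \dompm$ achieve their respective infima by direct computation combined with Proposition~\ref{prop:2}.

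For the converse \eqref{eq: vpsv2} $\Longrightarrow$ \eqref{eq: vp2}, which I would prove simultaneously with the concentration statement, let $\msv$ be a minimizer with Radon--Nikod\'ym decomposition $\diff\msv(z) = \gamma_{\msv}(z)\diff\mu_{\msv}(z)$. By the very definition of $\eqc$ combined with Proposition~\ref{prop:2}, the pointwise bound $\mean{\psi}{\HH_z}{\psi}_{L^2(\R^{dN})} \geq \eqc$ holds for every normalized $\psi$ and every $z$, so that $\tr_{L^2(\R^{dN})}[\gamma_{\msv}(z)\HH_z] \geq \eqc$ on the conegligible set where $\gamma_{\msv}(z)$ is a normalized density matrix. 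Integrating and using $\fsv[\msv] = \esv = \eqc$ forces equality $\mu_{\msv}$-a.e. Decomposing spectrally $\gamma_{\msv}(z) = \sum_{k} p_k(z)\ket{\phi_k(z)}\bra{\phi_k(z)}$ with $p_k(z) \geq 0$ and $\sum_k p_k(z) = 1$, each summand is bounded below by $\eqc$ while the weighted sum equals $\eqc$, forcing $\mean{\phi_k(z)}{\HH_z}{\phi_k(z)}_{L^2(\R^{dN})} = \eqc$ whenever $p_k(z) > 0$. Hence, on a $\mu_{\msv}$-conegligible set, the range of $\gamma_{\msv}(z)$ lies in the set of minimizers of $\fqc[\cdot,z]$; this both produces the explicit minimizer $(\phi_k(z),z)$ of $\fqc$ required by \eqref{eq: vp2} and establishes the concentration claim.

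The implication \eqref{eq: vppm2} $\Longrightarrow$ \eqref{eq: vp2} is fully analogous but simpler: if $(\psipm,\mupm)$ minimizes $\fpm$, then $\int \fqc[\psipm,z]\,\diff\mupm(z) = \eqc$ combined with the pointwise bound $\fqc[\psipm,z] \geq \eqc$ yields $\fqc[\psipm,z] = \eqc$ for $\mupm$-a.e.\ $z$, so any $z$ in the support of $\mupm$ makes $(\psipm,z)$ a minimizer of $\fqc$. The only delicate point I anticipate lies in the measure-theoretic bookkeeping for the state-valued case, namely justifying that the fiberwise spectral decomposition of $\gamma_{\msv}(z)$ and the form-domain inclusion of its eigenvectors are available; however, since finiteness of $\fsv[\msv]$ combined with $\HH_z \geq \eqc$ as a quadratic form gives $\tr[\gamma_{\msv}(z)(\HH_z-\eqc)] < +\infty$ $\mu_{\msv}$-a.e., and since we only need \emph{existence} of a minimizing pair rather than a measurable selection of eigenvectors, this reduces to the standard spectral theorem applied on a conegligible set.
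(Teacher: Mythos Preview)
Your core arguments for the equivalences \eqref{eq: vp2} $\Longleftrightarrow$ \eqref{eq: vpsv2} and \eqref{eq: vp2} $\Longleftrightarrow$ \eqref{eq: vppm2}, including the concentration statement, are correct and essentially the same as the paper's: both rely on writing $\fsv[\msv]$ as a convex combination of values of $\fqc$ and observing that equality with the infimum forces every term in the combination to be a minimizer. The paper organizes the implications as a cycle \eqref{eq: vp2} $\Longrightarrow$ \eqref{eq: vppm2} $\Longrightarrow$ \eqref{eq: vpsv2} $\Longrightarrow$ \eqref{eq: vp2} rather than pairwise, but this is cosmetic.

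There is, however, a genuine circularity in your treatment of \eqref{eq: vp2} $\Longleftrightarrow$ \eqref{eq: vpp2}. You invoke \cref{pro: equivalence}, but in the paper \cref{pro: equivalence} is only \emph{stated} in \cref{sec: intro} and is explicitly proved as a consequence of \cref{prop:2} and \cref{prop:3} (see the sentence immediately following the proof of \cref{prop:3}). So you cannot use it here. The paper supplies the missing direct argument: given a minimizer $(\psiqc,\zqc)$ of $\fqc$, the second Euler--Lagrange equation in \eqref{eq: el} coincides (after the change of variable $\eta=\omega^{1/2}z$) with the defining equation for $\zpek[\psiqc]$, and the strict convexity of $\ffqc[\psi,\eta]$ in $\eta$ established in \cref{lemma: convex} forces $\zpek[\psiqc]=\omega^{1/2}\zqc$, whence $\fpek[\psiqc]=\fqc[\psiqc,\zqc]=\eqc=\epek$. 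Conversely, given $\psipek$, \cref{lemma: convex} guarantees existence of $\zpek[\psipek]\in\mathfrak{h}$, and then $(\psipek,\omega^{-1/2}\zpek[\psipek])$ is admissible for $\fqc$ and realizes $\eqc$ by \cref{prop:2}. This is the piece you need to add in place of the appeal to \cref{pro: equivalence}.
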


Before proving \cref{prop:3}, we state a useful result about the
quasi-classical functional defined in \eqref{eq: fqc} or, more precisely,
about its variant $ \ffqc $ introduced in \eqref{eq: qc energy computed},
which is important to explore the connection with the Pekar-like functionals
\eqref{eq: fpek N} and \eqref{eq: fpek PF}.
	
\begin{lemma}
  \label{lemma: convex}
  \mbox{}	\\
  For any fixed $ \psi $, the functional $ \ffqc[\psi,\eta] $ is strictly convex in $
  \eta \in \mathfrak{h}_{\omega} $.
\end{lemma}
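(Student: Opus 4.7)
My plan is to decompose, for each of the three models, the functional $\ffqc[\psi,\cdot\,]$ as the sum of an $\eta$-independent piece, a real-affine piece in $\eta$, and a real-quadratic form $Q_\psi(\eta)$, and then reduce strict convexity to strict positivity of $Q_\psi$ via the standard polarization identity
\begin{equation*}
  Q_\psi\bigl((1-t)\eta_0 + t\eta_1\bigr) = (1-t)Q_\psi(\eta_0) + tQ_\psi(\eta_1) - t(1-t)Q_\psi(\eta_1 - \eta_0),
\end{equation*}
valid for any real-symmetric bilinear form and any $t\in (0,1)$. The affine part cancels identically under convex combinations, so the claim boils down to showing $Q_\psi(\eta_1-\eta_0)>0$ whenever $\eta_0\neq\eta_1$.

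In case (a) the decomposition is already exhibited in \eqref{eq: qc energy computed}: the term $\mean{\psi}{\mathcal{K}_0}{\psi}$ is $\eta$-independent, the coupling $2\Re\braketr{\eta}{\mean{\psi}{\Lambda}{\psi}}_{\mathfrak{h}}$ is real-affine, and $Q_\psi(\eta)=\|\eta\|^2_{\mathfrak{h}}$, which is strictly positive for $\eta\neq 0$. Case (b) reduces to the same structure after the splitting of $\lambda$ recalled in \cref{sec:polaron-model}. For case (c), I would start from \eqref{eq: classical hamiltonian PF}, write $\av_z(\xv)=2\Re\braket{\eta}{\omega^{-1/2}\bm{\lambda}(\xv)}_{\mathfrak{h}}$ (real-linear in $\eta$), and expand $(-i\nabla_j+e\av_z(\xv_j))^2$. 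The Coulomb gauge $\nabla_j\cdot\bm{\lambda}_j=0$ renders the cross term symmetric and its $\psi$-expectation real-affine in $\eta$, while the diamagnetic term $e^2|\av_z(\xv_j)|^2$ together with $\meanlrlr{z}{\omega}{z}_{\mathfrak{h}}=\|\eta\|^2_{\mathfrak{h}}$ contributes
\begin{equation*}
  Q_\psi(\eta) = \|\eta\|^2_{\mathfrak{h}} + \sum_{j=1}^{N}\frac{e^2}{m_j}\int_{\R^{3N}}\bigl|\av_z(\xv_j)\bigr|^2\,|\psi(\xxv)|^2\,\diff\xxv \;\geq\; \|\eta\|^2_{\mathfrak{h}},
\end{equation*}
the extra term being nonnegative as the expectation of a pointwise square. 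In all three cases one therefore has $Q_\psi(\eta)\geq \|\eta\|^2_{\mathfrak{h}}>0$ for $\eta\neq 0$, and the polarization identity yields the strict inequality.

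The main obstacle is case (c): one must verify carefully that, although $\av_z$ is only \emph{real}-linear (not complex-linear) in $\eta$, the quadratic part of $\ffqc$ is nevertheless a genuine real-symmetric bilinear form to which the polarization identity applies, and that the cross term is well defined. The latter is ensured by the $L^\infty$-bounds \eqref{eq: lambda PF} on $\bm{\lambda}_j$ together with the fact that $\psi\in H^1(\R^{3N})$ for any $(\psi,z)\in\domqc$, as noted after \eqref{eq: fpek PF}. Cases (a) and (b) are, by contrast, essentially immediate once the decomposition above is written down.
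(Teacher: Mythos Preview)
Your argument is correct and follows essentially the same route as the paper: in all three cases the functional splits into an $\eta$-independent piece, a real-affine piece, and a real-quadratic form bounded below by $\|\eta\|^2_{\mathfrak{h}}$, whence strict convexity. The only cosmetic difference is that in case (c) the paper invokes the pointwise convexity of $t\mapsto t^2$ directly on $\meanlrlr{\psi}{(-i\nabla_j + e\av_z(\xv_j))^2}{\psi}$ without first expanding the square, whereas you expand and apply the polarization identity; the content is the same.
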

	
\begin{proof}
  In cases (a) and (b) the proof is trivial, since $ \ffqc $ contains only
  two terms depending on $ \eta $: one is quadratic in $ \eta $ (the free field
  energy) and therefore strictly convex, while the other (the interaction) is
  linear and thus convex.
		
  So we have to investigate in detail only case (c), namely the Pauli-Fierz
  quasi-classical energy, and, specifically, only the kinetic part of the
  energy involving the interaction, which reads
  \bdm
  \sum_{j = 1}^N \tx\frac{1}{2m_j} \lf( - i \nabla_j + 2 \Re \braketr{\eta}{\lf(\omega^{-1/2} \bm{\lambda}_j \ri)(\xv_j)}_{\mathfrak{h}} \ri)^2.
  \edm
  Let us then set $ \eta = \beta \eta_1 + (1 - \beta) \eta_2 $ for some $ \eta_1, \eta_2 \in
  \mathfrak{h}$ and $ \beta \in (0,1) $. Expanding the
  square and setting $
  \bm{\xi}_j(\xv) : = \omega^{-1/2} \lambda_j(\xv) $ for short, we get (for any non-zero $ \psi $)
  \begin{multline}
    \meanlrlr{\psi}{\lf( - i \nabla_j + 2 \Re \braket{\eta}{\bm{\xi}_j (\xv_j)}_{\mathfrak{h}} \ri)^2}{\psi}_{L^2(\R^{3N})} < \meanlrlr{\psi}{- \Delta_j }{\psi}_{L^2(\R^{3N})} \\
    - 2 \meanlrlr{\psi}{i \beta  \Re \braket{\eta_1}{\bm{\xi}_j(\xv_j)}_{\mathfrak{h}} \cdot \nabla_j + i (1- \beta)  \Re \braket{\eta_2}{\bm{\xi}_j(\xv_j)}_{\mathfrak{h}} \cdot \nabla_j}{\psi}_{L^2(\R^{3N})}	\\
    +4 \meanlrlr{\psi}{\beta \lf( \Re \braket{\eta_1}{\bm{\xi}_j(\xv_j)}_{\mathfrak{h}} \ri)^2 + (1- \beta) \lf( \Re \braket{\eta_2}{\bm{\xi}_j(\xv_j)}_{\mathfrak{h}} \ri)^2}{\psi}_{L^2(\R^{3N})} 
  \end{multline}
  again by the strict convexity of the square, {\it i.e.}, the bound
  $ (\beta a +
  (1 - \beta) b)^2 < \beta a^2 + (1-\beta) b^2 $, valid for any $ a, b \in \R $ and $ \beta \in
  (0,1) $. The result
  easily follows, since the remaining term in the functional depending on $ \eta
  $
  is the free field energy, which is quadratic in $ \eta $ and thus strictly convex
  as well.
\end{proof}

\begin{proof}[Proof of \cref{prop:3}]
  Some implications are easy to prove. Let us first prove that $ \eqref{eq:
    vp2} \; \Longrightarrow \; \eqref{eq: vppm2} $. 
  Let $(\psiqc,\zqc)$ be a minimizer of $
  \fqc $ in $ \domqc $. Then, evaluating the energy $ \fpm $ on
  the configuration $(\psiqc, \mu_0)$, with $ \mu_0=\delta_{\zqc}$, we get
  \begin{equation*}
    \fpm\lf[ \psiqc, \mu_0 \ri] = \int_{\mathfrak{h}_{\omega}}^{}\mathrm{d}\mu_0(z) \:  \fqc\lf[ \psiqc, z\ri] = \fqc\lf[\psiqc, \zqc\ri] = \eqc\; .
  \end{equation*}
  By \cref{prop:2}, $(\psiqc,\mu_0)$ is thus solving \eqref{eq: vppm2}. Analogously, let us prove $ \eqref{eq: vppm2}\; \Longrightarrow\; \eqref{eq:
    vpsv2} $:
  let $(\psipm,\mupm)$ be a minimizer for \eqref{eq: vppm2}; then, the state-valued measure $ \mathfrak{m}_0$, with
  $\mu_{\mathfrak{m}_0}=\mupm$ and $\gamma_{\mathfrak{m}_0}(z)=
  \ket{\psipm}\bra{\psipm} $, solves \eqref{eq: vp2} by \cref{prop:2}.

  We prove now that $ \eqref{eq: vpsv2} \; \Longrightarrow\; \eqref{eq: vp2} $. Given a minimizer $ \msv $ of $ \fsv $, for $\mu_{\msv}$-a.e.
  $z\in \mathfrak{h}_{\omega}$ there exist $\{\lambda_k(z)\}_{k\in \mathbb{N}}$, $\lambda_k(z)\geq 0$, $\sum_{k\in \mathbb{N}}^{}\lambda_k(z)=1$ and
  $\{\psi_k(z)\}_{k\in \mathbb{N}}$, $ \lf\| \psi_k(z) \ri\|_{L^2(\R^{dN})}^{}=1$, such that
  \begin{equation*}
    \esv = \fsv\lf[ \msv \ri] = \int_{\mathfrak{h}_{\omega}}^{}  \diff \mu_{\msv}(z) \: \sum_{k\in \mathbb{N}}^{} \lambda_k(z) \fqc\lf[\psi_k(z),z \ri] \; .
  \end{equation*}
  The above is due to the fact that $ \gamma_{\msv}(z)$ is a density matrix on
  $L^2(\R^{dN})$ for $\mu_{\msv}$-a.e. $ z $. The measure $ \mu_{\msv} \in
  \mathscr{P}(\mathfrak{h}_{\omega}) $ is a probability measure,
  hence the r.h.s.\ of the above equation is a (double) convex combination of
  numerical values of the real-valued function $ \fqc $. However, a convex
  combination of values of a function equals its infimum, if and only if the
  infimum is a minimum, and all variables appearing in the convex combination
  are minimizers. Therefore, $ \fqc $ admits at least one minimizer. Actually,
  the measure $ \msv $ is concentrated on the set of minimizers $ (\psiqc,
  \zqc) $, in the
  above sense.
  
  Finally, we consider the Pekar-like variational problem \eqref{eq: vpp2}
  and its equivalence with \eqref{eq: vp2}. Let us first prove that
  \eqref{eq: vpp2}  $ \Longrightarrow $ \eqref{eq: vp2}: given a
  Pekar minimizer $ \psipek \in
  L^2(\R^{dN}) $, we immediately deduce that $ \psipek \in H^1(\R^{dN}) $ by boundedness from above of the energy and regularity of the classical field $ \av(\xv) $, which is continuous and vanishing at infinity \cite[Rmk. 1.5]{correggi2017arxiv}. Furthermore, \cref{lemma: convex} guarantees the existence (and uniqueness) of  $ \zpek[\psipek] \linebreak\in \mathfrak{h} $ minimizing $ \fqc[\psipek, z] $ w.r.t. $ z $. 
  	Therefore, the configuration $ (\psipek, \zpek[\psipek]) $ is admissible for $ \fqc $ and we deduce from \cref{prop:2} that $ \fqc[\psipek, \zpek[\psipek]] = \eqc $. 
  	
  Conversely, given a minimizer $ (\psiqc, \zqc) \in \dom $ of $ \fqc $, we
  know that the configuration must satisfy the Euler-Lagrange equations
  \eqref{eq: el} at least in weak sense. However, the second equation in
  \eqref{eq: el} is easily seen to coincide with \eqref{eq: zpek N} or the
  first equation in \eqref{eq: fpek PF}, when the change of variable $ \eta =
  \omega^{1/2} z $ has been done.
  Furthermore, any weak solution $ \eta $ of such equations is in fact a strong
  solution, {\it i.e.}, $ \eta \in \mathfrak{h} $, under the assumptions made. Hence, by strict
  convexity of $ \ffqc[\psi,\eta] $ in $ \eta $ proven in \cref{lemma: convex} and then uniqueness of $ \zpek $, we
  deduce that $ \zpek[\psiqc] = \omega^{1/2} \zqc $ and the equivalence $
  \eqref{eq: vp2} \; \Longrightarrow \: \eqref{eq: vpp2} $ is readily
  proven via \cref{prop:2}.
\end{proof}

\begin{remark}[Minimizers for \eqref{eq: vppm2}]
  \label{rem:4}
  \mbox{}	\\
  The existence of a solution for \eqref{eq: vppm2} obtained here is trivial,
  \emph{i.e.}, it involves a measure concentrated in a single point $\zqc \in
  \mathfrak{h}_{\omega}$ and a $\psi_{\zqc}$
  dependent on such a point. It would be interesting, but outside the scope
  of this paper, to know whether there are non-trivial minimizers in which $\mu_0$
  is not concentrated at a single point. This is obviously related to the
  question of uniqueness of the minimizing configuration $ (\psiqc, \zqc)
  $. Note that
  this would not be in contradiction with \cref{lemma: convex}, since we prove there strict
  convexity of $ \ffqc[\psi,\eta] $ only in $ \eta $, while the full functional $
  \fqc[\psi,z] $ is in
  general not jointly convex in $ \psi $ and $ z $ nor in $
  \lf| \psi \ri|^2 $ and $ z $ (see also \cref{rem: uniqueness pekar}).
\end{remark}

Note that the combination of \cref{prop:2} with \cref{prop:3} provides the proof of \cref{pro: equivalence} stated in
\cref{sec: intro}.

\subsection{Minimization problem for generalized state-valued measures}
\label{sec:minim-probl-gener}

We discuss now the generalization of the concepts introduced above needed to
deal with the minimization \eqref{eq: fqct}, that is particularly useful to
treat small systems consisting of unconfined particles. Taking the double
dual, it is well known that $\mathscr{L}^1(L^2(\R^{dN}))$ can be continuously
embedded in $\mathscr{B}(L^2(\R^{dN}))'$, the dual of bounded operators, in a
positivity preserving way. By an abuse of notation, we will write
$\mathscr{L}^1(L^2(\R^{dN}))\subset \mathscr{B}(L^2(\R^{dN}))'$. We recall that we
denoted by $ \overline{\mathscr{L}^1}(L^2(\R^{dN})) $ the closure of
$\mathscr{L}^1(L^2(\R^{dN}))$ with respect to the weak-* topology
$\sigma\bigl(\mathscr{B}(L^2(\R^{dN}))',\mathscr{B}(L^2(\R^{dN}))\bigr)$ on
$\mathscr{B}(L^2(\R^{dN}))'$. Also, $
\overline{\mathscr{L}^1}_+(L^2(\R^{dN})) $ and $
\overline{\mathscr{L}^1}_{+,1}(L^2(\R^{dN})) $ stand for the subsets of
positive and normalized positive elements, respectively. A generalized
state-valued measure is then a measure on $ \mathfrak{h}_{\omega} $ with values in
the space of generalized states $ \overline{\mathscr{L}^1}_{+}(L^2(\R^{dN}))
$. Properties of generalized
state-valued measures are discussed in
\cref{sec:gener-state-valu}. Since the dual space $\mathscr{B}(L^2(\R^{dN}))'$ is not separable, it does not have the
Radon-Nikod\'{y}m property, therefore integration of functions
$\mathcal{F}:\mathfrak{h}_{\omega}\to \mathscr{B}(L^2(\R^{dN}))$ is restricted only to ones with separable range.

Such integration can be extended to functions valued in unbounded operators
in the following sense.

\begin{definition}[Domains of generalized Wigner measures]
  \label{def: domain g wigner}
  \mbox{}	\\
  Let $\mathcal{T}$ be a strictly positive unbounded operator on $L^2(\R^{dN})$. A generalized
  state-valued measure $\mathfrak{n}$ is {\it in the domain of} $\mathcal{T}$, if and only if there
  exists a measure $\mathfrak{n}_{\mathcal{T}} \in \mathscr{P}(\mathfrak{h}_{\omega},
  \overline{\mathscr{L}^1}_{+}(L^2(\R^{dN}))) $, such that for all $\mathcal{B}\in
  \mathscr{B}(L^2(\R^{dN}))$ and
  all Borel sets $ S \subseteq \mathfrak{h}_{\omega}$,
  \begin{equation}
    \mathfrak{n}_{\mathcal{T}}(S)\lf[\mathcal{T}^{-1/2}\mathcal{B}\mathcal{T}^{-1/2} \ri]=\mathfrak{n}(S) \lf[\mathcal{B} \ri]\; .
  \end{equation}
\end{definition}

Therefore, if $ \nn $ is in the domain of $ \mathcal{T} $, with a little abuse of notation, we
may write
\begin{equation}
  \mathfrak{n}(S)\lf[\mathcal{T}^{1/2}\,\cdot \,\mathcal{T}^{1/2}\ri] = \mathfrak{n}_{\mathcal{T}}(S)[\,\cdot \,]
\end{equation}
as a state valued measure ``absorbing a singularity'' of order
$\mathcal{T}$. Now, let $\mathcal{F}(z)$ be a function with values in
unbounded operators such that for all $z\in \mathfrak{h}_{\omega}$:
\begin{itemize}
\item $\mathcal{T}^{-1/2}\mathcal{F}(z)\mathcal{T}^{-1/2}\in
  \mathscr{B}(L^2(\R^{dN}))$;  
\item the range of $z\mapsto \mathcal{T}^{-1/2}\mathcal{F}(z)\mathcal{T}^{-1/2}$ is separable;
\item $\mathcal{T}^{-1/2}\mathcal{F}(z)\mathcal{T}^{-1/2}$ is $\mathfrak{n}_{\mathcal{T}}$-absolutely integrable.
\end{itemize}
Then, it follows that we can define the integral of $\mathcal{F}$ with
respect to $\mathfrak{n}$ as
\begin{equation}
  \int_{\mathfrak{h}_{\omega}}^{}\mathrm{d}\mathfrak{n}(z)\bigl[ \mathcal{F}(z) \bigr]:=\int_{\mathfrak{h}_{\omega}}^{}  \mathrm{d}\mathfrak{n}_{\mathcal{T}}(z)\bigl[ \mathcal{T}^{-1/2}\mathcal{F}(z)\mathcal{T}^{-1/2} \bigr]\; .
\end{equation}
A simple but useful example of such $\mathcal{F}(z)$ is the following: let
$\mathcal{S}$ be a self-adjoint operator, and let $\mathfrak{n}$ be in the
domain of $\mathcal{T}=\lvert \mathcal{S} \rvert_{}^{}+1$; then the function
$\mathcal{F}(z)=\mathcal{S}$ satisfies all above hypotheses and thus it makes
sense to write, for all Borel set $ S \subseteq \mathfrak{h}_{\omega}$,
\begin{equation}
  \int_{S}^{}  \mathrm{d}\mathfrak{n}(z)[\mathcal{S}]= \mathfrak{n}(S)[\mathcal{S}] := \mathfrak{n}_{\mathcal{T}}(S)\bigl[ \mathcal{T}^{-1/2}\mathcal{S} \mathcal{T}^{-1/2} \bigr] \in \mathbb{R}\; .
\end{equation}
The other cases useful for our analysis are discussed in
\cref{sec:ground-states-quasi}.

We are now in a position to define another quasi-classical minimization
problem. Recall the definition \eqref{eq: domgqc} of the domain $ \domgqc $,
the ground state energy $ \egqc $ given by \eqref{eq: egqc} and any
corresponding minimizing configuration $ ( \rhogqc, \zgqc ) \in \domgqc $; then
the analogues of \eqref{eq: vp1} and \eqref{eq: vp2} are
\beq
\egqc \overset{?}{>} -\infty,
\tag{GVP1}\label{eq: GVP1}
\eeq
\beq
\overset{?}{\exists} \lf(\rhogqc, \zgqc \ri) \in \domgqc\; ;
\tag{GVP2}\label{eq: GVP2}
\eeq
The functional $ \fgqc $ can indeed be seen as the generalized
quasi-classical energy: let $ H_z $ be the abstract realization of $
\mathcal{H}_z $ as an
operator affiliated to the abstract $C^*$-algebra $
\mathscr{B}(L^2(\R^{dN}))$. Then, given a
normalized pure state $ \rho \in
\overline{\mathscr{L}^1}_+(L^2(\R^{dN})) $, we define the corresponding
irreducible GNS representation by $(\mathscr{K}_{\rho},\pi_{\rho},\psi_{\rho})$, where
$\mathscr{K}_{\rho}$ is a suitable Hilbert space, $\pi_{\varrho}:
\mathscr{B}(L^2(\R^{dN})) \to \mathscr{B}(\mathscr{K}_{\rho})$ is a
$C^*$-homomorphism (that can be extended to operators affiliated to the
algebra) and $\psi_{\rho}\in \mathscr{K}_{\varrho}$ is the normalized cyclic vector
associated to $ \rho $. Therefore, it follows that
\begin{equation*}
  \fgqc[\rho,z]= \meanlrlr{\psi_{\rho}}{\pi_{\rho}\lf(H_z\ri)}{\psi_{\rho}}_{\mathscr{K}_{\rho}}\; .
\end{equation*}
This expression is analogous to the one for $ \fqc $ (see \eqref{eq: fqc})
and it reduces exactly to the latter whenever $ \rho$ is a pure state belonging
to $\mathscr{L}^1(L^2(\R^{dN}))$ (see next \cref{rem:5}).
  
The generalization of the variational problems for state-valued measures
\eqref{eq: vpsv1} and \eqref{eq: vpsv2} is obtained as follows: setting 
\beq
\domgsv : = \lf\{ \nn \in \overline{\mathscr{L}^1}_+(L^2(\R^{dN})) \: \bigg| \: \lf\| \nn(\mathfrak{h}_{\omega}) \ri\|_{\mathscr{B}'} = 1, \lf| \int_{\mathfrak{h}_{\omega}} \diff \nn(z)\lf[ \HH_z \ri] \ri| < + \infty \ri\},
\eeq
we consider
\beq
\egsv : = \inf_{\nn \in \domgsv} \int_{\mathfrak{h}_{\omega}} \diff \nn(z)\lf[ \HH_z \ri] \overset{?}{>} -\infty,
\tag{gvp1}\label{eq: gvp1}
\eeq \beq
\overset{?}{\exists} \nsv \in \domgsv \mbox{ s.t. } \int_{\mathfrak{h}_{\omega}} \diff
\nsv(z)\lf[ \HH_z \ri] = \egsv. \tag{gvp2}\label{eq: gvp2} \eeq

\begin{remark}[State-valued and generalized state-valued measures]
  \label{rem:5}
  \mbox{}	\\
  We point out that, if a generalized state-valued measure $\mathfrak{n} \in
  \domgsv $ is
  actually a state-valued measure, \emph{i.e.}, such that, for all Borel sets $ S \subseteq \mathfrak{h}_{\omega}$,
  \begin{equation*}
    \mathfrak{n}(S)\in \mathscr{L}^1_+(L^2(\R^{dN}))\; ,
  \end{equation*}
  then $ \nn \in \domsv $ and
  \begin{equation*}
    \int_{\mathfrak{h}_{\omega}}^{}\mathrm{d}\mathfrak{n}(z) \lf[\mathcal{H}_z \ri] = \fsv[\nn].
  \end{equation*}
\end{remark}

\begin{proposition}[Generalized quasi-classical ground state energy]
  \label{prop:5}
  \mbox{}	\\
  Under the assumptions made above,
  \begin{equation}
    \eqc = \egqc = \egsv.
  \end{equation}
\end{proposition}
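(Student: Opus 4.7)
The plan is to exploit the embedding $\mathscr{L}^1(L^2(\R^{dN})) \hookrightarrow \overline{\mathscr{L}^1}(L^2(\R^{dN}))$ in one direction and a spectral bound in the other, establishing the two new identities $\eqc = \egqc$ and $\eqc = \egsv$ separately, since $\eqc = \esv$ is already provided by \cref{prop:2}. For the easy inequalities $\egqc \leq \eqc$ and $\egsv \leq \esv$: given any $(\psi,z) \in \domqc$, the normal pure state $\rho_\psi := \ket{\psi}\bra{\psi}$, viewed through the canonical positivity-preserving embedding as an element of $\overline{\mathscr{L}^1}_+(L^2(\R^{dN}))$, has $\lf\|\rho_\psi\ri\|_{\mathscr{B}'} = \tr \rho_\psi = 1$ and $\fgqc[\rho_\psi,z] = \fqc[\psi,z]$, so $(\rho_\psi,z) \in \domgqc$ realizes the same energy; analogously, any $\mm \in \domsv$ defines a generalized state-valued measure in $\domgsv$ with the same integral by \cref{rem:5}. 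Taking infima yields the two inequalities.

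For the reverse direction, I would use the fact that for each fixed $z \in \mathfrak{h}_{\omega}$ the operator $\HH_z$ is self-adjoint and bounded below on $L^2(\R^{dN})$, so that for every $\lambda < \inf \sigma(\HH_z)$ the operator $\HH_z - \lambda$ is strictly positive. Combining this with positivity of the embedding $\mathscr{L}^1_+ \subset \overline{\mathscr{L}^1}_+$ (which forces $\rho[\mathcal{B}] \geq 0$ whenever $\mathcal{B} \geq 0$), together with the integration framework of \cref{def: domain g wigner} applied to the weight $\mathcal{T} = |\HH_z| + 1$ --- which makes $\rho[\HH_z]$ and $\nn(S)[\HH_z]$ well-defined by the finiteness conditions encoded in $\domgqc$ and $\domgsv$ --- one obtains the pointwise bound
\beq
\rho[\HH_z] \geq \inf \sigma(\HH_z) = \inf_{\psi \in L^2(\R^{dN}),\, \lf\|\psi\ri\|_2 = 1} \fqc[\psi,z],
\eeq
valid for every normalized $\rho \in \overline{\mathscr{L}^1}_+(L^2(\R^{dN}))$. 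Hence $\egqc \geq \inf_{z \in \mathfrak{h}_{\omega}} \inf \sigma(\HH_z) = \eqc$; and for any $\nn \in \domgsv$ with disintegration $\diff \nn(z) = \gamma_{\nn}(z) \diff \mu_{\nn}(z)$ the bound applied pointwise and integrated against the probability measure $\mu_{\nn}$ gives
\beq
\int_{\mathfrak{h}_{\omega}} \diff \nn(z)[\HH_z] \geq \int_{\mathfrak{h}_{\omega}} \diff \mu_{\nn}(z) \inf \sigma(\HH_z) \geq \eqc,
\eeq
so that $\egsv \geq \eqc$ as well.

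The main technical obstacle is the rigorous justification of the unbounded-operator pairing $\rho[\HH_z]$: one must verify that every admissible $\rho \in \domgqc$ (respectively every $\gamma_{\nn}(z)$ arising from $\nn \in \domgsv$) lies in the domain of $\mathcal{T} = |\HH_z| + 1$ in the sense of \cref{def: domain g wigner}, and that positivity of $\HH_z - \lambda$ is transported by the sandwiching $\mathcal{T}^{-1/2}(\HH_z - \lambda)\mathcal{T}^{-1/2} \in \mathscr{B}_+(L^2(\R^{dN}))$ into the desired inequality at the level of the generalized state. This is precisely where the construction of $\overline{\mathscr{L}^1}_+$ as the weak-* closure of the cone of positive trace-class operators becomes essential, since it ensures that non-negativity of bounded operators is preserved under pairing with any positive generalized state, closing the chain $\eqc = \esv = \egqc = \egsv$.
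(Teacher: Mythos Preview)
Your argument for $\egqc \geq \eqc$ via the spectral lower bound is correct and is in fact cleaner than the paper's route. The paper instead approximates each generalized state $\rho$ by a filter base in $\mathscr{L}^1_{+,1}$ converging weak-$*$ to $\rho$, and then uses $\liminf$-type estimates to push the minimization back to $\eqc$. Your positivity argument---sandwiching $\HH_z-\lambda \geq 0$ by $\mathcal{T}^{-1/2}$ and pairing with the positive functional $\rho_{\mathcal{T}}$---bypasses that approximation entirely and gives the bound in one line.

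There is, however, a genuine gap in your treatment of $\egsv \geq \eqc$. You write ``for any $\nn \in \domgsv$ with disintegration $\diff \nn(z) = \gamma_{\nn}(z)\,\diff \mu_{\nn}(z)$'', but generalized state-valued measures do \emph{not} admit a Radon--Nikod\'ym decomposition: the paper states explicitly in \cref{sec:minim-probl-gener} that $\mathscr{B}(L^2(\R^{dN}))'$ is not separable and hence lacks the Radon--Nikod\'ym property. So the pointwise object $\gamma_{\nn}(z)$ you invoke need not exist, and the inequality you integrate against $\mu_{\nn}$ is not available in this form.

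The repair is straightforward and stays within your positivity philosophy: by \cref{prop:2}, $\eqc = \inf_{z} \inf\sigma(\HH_z)$, so $\HH_z - \eqc \geq 0$ for \emph{every} $z \in \mathfrak{h}_{\omega}$. The operator-valued function $z \mapsto \mathcal{T}^{-1/2}(\HH_z - \eqc)\mathcal{T}^{-1/2}$ is therefore pointwise positive in $\mathscr{B}(L^2(\R^{dN}))$, and positivity of the integral (inherited from positivity of $\nn_{\mathcal{T}}(S)$ on each measurable set, via simple approximation as in \cref{sec:integr-oper-valu}) gives $\int_{\mathfrak{h}_{\omega}} \diff\nn(z)[\HH_z - \eqc] \geq 0$. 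Since $\nn(\mathfrak{h}_{\omega})[1]=1$, this yields $\int \diff\nn(z)[\HH_z] \geq \eqc$ without any disintegration.
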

\begin{proof}
  Firstly, let us prove that
  \begin{equation*}
    \eqc = \egqc.
  \end{equation*}
  Since $ \rho$ belongs to the weak-$*$ closure of $\mathscr{L}^1_{+,1}(L^2(\R^{dN}))$, there exists a filter
  base $\mathfrak{S}\subset 2^{\mathscr{L}^1_{+,1}(L^2(\R^{dN}))}$ such that $\mathfrak{S}\to \rho $ in weak-$*$ topology. Hence, for any fixed
  $z\in
  \mathfrak{h}_{\omega}$,\footnote{The notation
    $\tr_{L^2(\R^{dN})}\lf[\mathfrak{S}(\mathcal{H}_z) \ri]$ stands for
    filter base that is image of $\mathfrak{S}$ on $\mathbb{R}$ via the map $\gamma \mapsto
    \tr_{L^2(\R^{dN})} \lf[\gamma \mathcal{H}_z \ri]$: given any $ X \in
    \mathfrak{S}$, $ \lf\{ \tr_{L^2(\R^{dN})} \lf[\gamma \mathcal{H}_z \ri], \gamma\in X
    \ri \} \in \tr_{L^2(\R^{dN})} \lf[ \mathfrak{S}(\mathcal{H}_z) \ri]$.}
  \begin{equation*}
    \lim_{\mathfrak{S} \to \rho} \tr_{L^2(\R^{dN})} \lf[\mathfrak{S}(\mathcal{H}_z) \ri] = \rho \lf[\mathcal{H}_z \ri]\; .
  \end{equation*}
  Now, on one hand, each $ \ket{\psi}\bra{\psi}$, $ \psi \in L^2(\R^{dN}) $, is also a pure generalized state
  and therefore
  \begin{equation}
    \label{eq: prop5 proof 1}
    \egqc \leq \inf_{(\psi,z) \in \domqc} \fqc[\psi,z] = \eqc\;.
  \end{equation}
  On the other hand, let $(\rho_{\delta},z_{\delta}) \in \domgqc $ be a minimizing sequence:
  \begin{equation*}
    \fgqc\lf[\rho_{\delta},z_{\delta}\ri] = \rho_{\delta} \lf[\mathcal{H}_{z_{\delta}} \ri] < \egqc + \delta\; ;
  \end{equation*}
  for some $ \delta > 0 $, and $\mathfrak{S}_{\delta}\subset
  2^{\mathscr{L}^1_{+,1}(L^2(\R^{dN}))}$ the corresponding approximating
  filter base for $\rho_{\delta}$. Then,
  \begin{multline}
    \eqc = \inf_{(\psi,z) \in \domqc} \fqc[\psi,z] = \inf_{(\gamma,z)\in \mathscr{L}^1_{+,1}(L^2(\R^{dN}))\oplus \mathfrak{h}_{\omega}} \tr_{L^2(\R^{dN})} \lf[\gamma \mathcal{H}_z \ri] \leq \sup_{X \in \mathfrak{S}_{\delta}} \inf_{\gamma \in X} \tr_{L^2(\R^{dN})} \lf[ \gamma \mathcal{H}_{z_{\delta}} \ri]\\
    =\liminf_{\mathfrak{S}_{\delta}} \tr_{L^2(\R^{dN})} \lf[\mathfrak{S}_{\delta}(\mathcal{H}_{z_{\delta}}) \ri] = \lim_{\mathfrak{S}_{\delta}\to \rho_{\delta}} \tr_{L^2(\R^{dN})} \lf[\mathfrak{S}_{\delta}(\mathcal{H}_{z_{\delta}}) \ri] = \rho_{\delta} \lf[\mathcal{H}_{z_{\delta}} \ri] < \egqc + \delta\; .
  \end{multline}
  Since the above chain of inequalities is valid for all $\delta>0$, it follows
  that the opposite inequality of \eqref{eq: prop5 proof 1} holds true, {\it i.e.},
  \begin{equation}
    \eqc \leq \egqc
  \end{equation}
  which implies the claim.
  	
  The proof of the identity $ \egsv = \eqc $ is perfectly analogous, keeping
  in mind that it is possible to approximate any measure $ \mathfrak{n} \in
  \mathscr{P}(\mathfrak{h}_{\omega}, \overline{\mathscr{L}^1}_+(L^2(\R^{dN})) $
  with a filter base $ \mathfrak{T} \subset 2^{\mathscr{P}(\mathfrak{h}_{\omega},
    \mathscr{L}^1_+(L^2(\R^{dN}))}$ w.r.t. the product of weak-$*$ topologies
  \begin{equation*}
    \prod_{S \subset \mathfrak{h}_{\omega} \text{ Borel}} \sigma\bigl(\mathcal{B}(L^2(\R^{dN}))',\mathcal{B}(L^2(\R^{dN}))\bigr)\; ,
  \end{equation*}
  that implies the convergence of integrals\footnote{As before, the integral
    w.r.t. to $ \diff \mathfrak{T} $ is just a short-hand notation to denote
    the integral over elements belonging to the filter $ \mathfrak{T} $.}
  \begin{equation*}
    \lim_{\mathfrak{T} \to \mathfrak{n}} \tr_{L^2(\R^{dN})} \lf[ \int_{\mathfrak{h}_{\omega}} \diff \mathfrak{T}(z) \: \mathcal{H}_z \ri] = \int_{\mathfrak{h}_{\omega}}^{}\mathrm{d}\mathfrak{n}(z) \lf[ \mathcal{H}_z \ri]\; .
  \end{equation*}
\end{proof}

Finally, also for the generalized minimization problems, it is possible to
prove equivalence of existence of minimizers.

\begin{proposition}[Generalized quasi-classical minimizers]
  \label{prop:6}
  \mbox{}	\\
  Under the assumptions made,
  \beq
  \eqref{eq: GVP2} \; \Longleftrightarrow\; \eqref{eq: gvp2} \;.
  \eeq
  Furthermore, any minimizer $ \nsv $ of \eqref{eq: gvp2} is concentrated on the set of
  minimizers $ (\rhogqc, \zgqc) $ of \eqref{eq:
    GVP2}.
\end{proposition}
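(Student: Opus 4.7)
The plan is to mirror the scheme of \cref{prop:3}, adapted to the generalized setting where $\overline{\mathscr{L}^1}_+(L^2(\R^{dN}))$ replaces $\mathscr{L}^1_+(L^2(\R^{dN}))$ and the point-evaluation $\gamma_{\msv}(z)$ is no longer available.

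First I would dispatch the easy direction \eqref{eq: GVP2} $\Longrightarrow$ \eqref{eq: gvp2}. Given a minimizer $(\rhogqc, \zgqc) \in \domgqc$, set $\nn_0 := \rhogqc\, \delta_{\zgqc}$, meaning $\nn_0(S) := \rhogqc\, \one_S(\zgqc)$ for every Borel set $S \subseteq \mathfrak{h}_\omega$. Admissibility $\nn_0 \in \domgsv$ is immediate from $\lf\| \rhogqc \ri\|_{\mathscr{B}'} = 1$, and by \cref{prop:5}
\begin{equation*}
	\int_{\mathfrak{h}_\omega} \diff \nn_0(z) \lf[\HH_z\ri] = \rhogqc \lf[\HH_{\zgqc}\ri] = \egqc = \egsv,
\end{equation*}
so $\nn_0$ solves \eqref{eq: gvp2} and is trivially concentrated on the singleton $\{(\rhogqc, \zgqc)\}$.

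For the harder direction \eqref{eq: gvp2} $\Longrightarrow$ \eqref{eq: GVP2}, let $\nsv$ be a minimizer. The pure-state disintegration of $\gamma_{\msv}(z)$ used in \cref{prop:3} is not directly available, because $\overline{\mathscr{L}^1}(L^2(\R^{dN}))$ lacks the Radon-Nikod\'ym property; I would instead invoke the filter-base approximation already used in the proof of \cref{prop:5} to approximate $\nsv$ in the product weak-$*$ topology by atomic state-valued measures $\mm \in \domsv$. Each such $\mm$ decomposes as a finite convex combination $\sum_k \lambda_k \ket{\psi_k}\bra{\psi_k} \delta_{z_k}$, and, by the argument in the proof of \cref{prop:3}, the numerical value $\fsv[\mm] = \sum_k \lambda_k \fqc[\psi_k, z_k]$ converges along the filter to $\egsv = \eqc$ (using \cref{prop:2,prop:5}). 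Combining this with the lower bound $\fqc[\psi_k, z_k] \geq \eqc$ for every $k$, the atomic approximants concentrate, along the filter, on pure-state configurations whose energies tend to the infimum. A weak-$*$ compactness argument on bounded sets of $\overline{\mathscr{L}^1}_+(L^2(\R^{dN}))$ and on suitable bounded subsets of $\mathfrak{h}_\omega$ extracts a cluster point $(\rhogqc, \zgqc) \in \domgqc$ with $\fgqc[\rhogqc, \zgqc] = \egqc$; propagating the concentration to the limit yields that $\nsv$ is concentrated on the set of such pairs.

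The main obstacle will be controlling the limits of the integrals $\mm\lf[\HH_\cdot\ri]$ along the filter, since $\HH_z$ is unbounded in both the particle and field variables. I would handle this via the singularity-absorption formalism of \cref{def: domain g wigner}: choose a strictly positive operator $\mathcal{T}$ dominating $\HH_z + C$ uniformly in $z$ on the relevant bounded sets of $\mathfrak{h}_\omega$ (e.g.\ built from $\mathcal{K}_0$ plus the field energy, with the constant $C$ absorbing the lower bound of $H_\eps$), verify that $\nsv$ and the approximants lie in the domain of $\mathcal{T}$, and then carry out the weak-$*$ convergence of integrals against the \emph{bounded} operators $\mathcal{T}^{-1/2} \HH_z \mathcal{T}^{-1/2}$. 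This reduces the passage to the limit to the standard bounded case and delivers simultaneously the existence of a minimizer of \eqref{eq: GVP2} and the concentration claim.
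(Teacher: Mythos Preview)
Your easy direction matches the paper exactly.

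For the converse, however, you are taking a far more circuitous route than necessary, and the route has a gap. The paper's argument is a one-liner in the spirit of the corresponding step in \cref{prop:3}: the integral $\int_{\mathfrak{h}_\omega}\mathrm{d}\nsv(z)[\HH_z]$ is, by construction, a (generalized) convex combination of values $\fgqc[\rho,z]$, each of which is $\geq \egqc$. Since the integral equals $\egqc=\egsv$ (by \cref{prop:5}), the measure must be concentrated on configurations where equality holds, i.e., on minimizers of \eqref{eq: GVP2}; in particular the set of such minimizers is nonempty. No approximation, no compactness extraction, no domain-of-$\mathcal{T}$ gymnastics are needed. The absence of a Radon--Nikod\'ym derivative is not an obstacle here: the convex-combination structure survives at the level of the norm measure and finite measurable partitions, which is all one needs for the ``average equals infimum forces a.e.\ equality'' principle.

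Your proposed argument, by contrast, attempts to extract a minimizing pair $(\rhogqc,\zgqc)$ as a weak-$*$ cluster point of atoms of approximating measures. This is problematic: you invoke compactness on ``suitable bounded subsets of $\mathfrak{h}_\omega$'' without saying how such a bound on the field variable is obtained from the atomic approximants (the filter approximation of \cref{prop:5} gives no control on where the atoms sit in $\mathfrak{h}_\omega$), and the passage to the limit of $\fgqc$ along weak-$*$ convergence in both variables would require a lower-semicontinuity statement you have not established. Even if these could be patched, the concentration of $\nsv$ itself on minimizers does not follow from extracting a single cluster point of approximants. The direct convex-combination argument avoids all of this.
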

\begin{proof}
  The $ (\Longrightarrow) $ implication is trivial: let $(\rhogqc,\zgqc)$ be a minimizer for
  \eqref{eq: GVP2}, then, evaluating the energy of the generalized
  state-valued measure $ \mathfrak{n}_0 = \delta_{\zgqc} \rhogqc$, we get
  \begin{equation}
    \int_{\mathfrak{h}_{\omega}}^{}\mathrm{d}\mathfrak{n}_0(z) \lf[ \mathcal{H}_z \ri]= \rhogqc\lf[\mathcal{H}_{\zgqc} \ri] = \egqc \; .
  \end{equation}
  By \cref{prop:5}, $\mathfrak{n}_0$ is thus a minimizer for \eqref{eq:
    gvp2}.

  To prove the converse implication, note that the integral w.r.t. a
  generalized state-valued probability measure is a convex combination of
  expectations over possibly mixed generalized states. Since the mixed states
  are themselves convex combinations of pure states, it follows that the
  measure $ \nsv $ must be concentrated on the set of minimizers for
  \eqref{eq: gvp2}, and thus the latter is not empty.
\end{proof}

\section{Ground States Energies and Ground States in the Quasi-Classical
  Regime}
\label{sec:ground-states-quasi}

In this section we study the quasi-classical limit of ground state energies
and ground states of the microscopic models introduced in \cref{sec: intro}.


The microscopic interaction is described by a fully quantum system, in which
both the small system and the environment are quantum. The Hilbert space is
thus (see \eqref{eq: hilbert}) given by $ \mathscr{H}_{\eps} = L^2(\R^{dN})\otimes
\mathcal{G}_{\epsilon}(\mathfrak{h}) $, where $\mathcal{G}_{\epsilon}(\mathfrak{h}) =\bigoplus_{n\in \mathbb{N}}\mathfrak{h}^{\otimes_{\mathrm{s}} n} $ is
the symmetric Fock space over $\mathfrak{h}$ and $\varepsilon$ is the quasi-classical
parameter whose dependence is yielded by a semiclassical choice of canonical
commutation relations \eqref{eq: ccr}, {\it i.e.}, $
[a_{\varepsilon}(z),a^{\dagger}_{\varepsilon}(w)]=\varepsilon \braket{z}{w}_{\mathfrak{h}} $, with $a_{\varepsilon}^{\sharp}$
the annihilation and creation operators on the Fock space. A state of the
whole system is given by a density matrix
\begin{equation*}
  \Gamma_{\varepsilon}\in \mathscr{L}^1_{+,1} \lf(L^2(\R^{dN})\otimes \GG_{\eps}(\mathfrak{h}) \ri)\; ,
\end{equation*}
the positive trace-class operators with unit trace.

The dynamics of the system is described by a self-adjoint Hamiltonian
operator $H_{\varepsilon}$ whose general form is given in \eqref{eq: hamiltonian}. Such
operator is the partial Wick quantization of the quasi-classical Schrödinger
energy operator $ \mathcal{H}_z $ provided in \eqref{eq: classical
  hamiltonian}. Wick quantization consists in
substituting each $z$ appearing in $ \HH $ with $a_{\varepsilon}$ and each $\bar{z}$ with $a_{\varepsilon}^{\dagger}$, and of ordering
all $a^{\dagger}_{\varepsilon}$-s to the left of all $a_{\varepsilon}$-s. Such quantization procedure
is well-defined for symbols $\mathcal{F}_z $ that are polynomial in $z$ and
${z}^* $, as it is the case in concrete models we are considering (see
\cref{sec:concrete-models} for additional details and \cite{ammari2008ahp} for the rigorous procedure). Hence, we can write,
\begin{equation}
  H_{\varepsilon}= \mathrm{Op}^{\mathrm{Wick}}_{\varepsilon} \lf( \HH_{z} \ri)\;,
\end{equation}
and, more precisely, $\mathcal{H}_z$ can be split in three terms, at least in
the sense of quadratic forms, {\it i.e.},
\begin{equation}
  \mathcal{H}_z=\mathcal{K}_0+ \sum_{i = 1}^N \mathcal{V}_z(\xv_i) + \meanlrlr{z}{\omega}{z}_{\mathfrak{h}}\; ,
\end{equation}
with $\mathcal{K}_0$ self-adjoint and bounded from below, yielding
\begin{equation}
  H_{\varepsilon}=\mathcal{K}_0 \otimes 1+ 1 \otimes \mathrm{Op}^{\mathrm{Wick}}_{\varepsilon} \lf(\meanlrlr{z}{\omega}{z}_{\mathfrak{h}} \ri) +  \sum_{i = 1}^N \mathrm{Op}^{\mathrm{Wick}}_{\varepsilon} \lf( \mathcal{V}_z(\xv_i) \ri)\; ,
\end{equation}
as a quadratic form. The first and second terms on the r.h.s are the free
energies of the small system and environment, respectively, and the third
term is the \emph{small system-environment} interaction.

The minimization problem for the quantum system described by $H_{\varepsilon}$ is
defined in \eqref{eq: eeps}: the microscopic ground state energy is $
E_{\eps} : = \inf \sigma(H_{\eps}) $, while $ \psigs $
stands for any corresponding minimizer. Such a minimization problem has been
thoroughly studied, for the concrete models under consideration in this
paper; for bibliographical references the reader shall consult
\cref{sec:concrete-models}. The results are as follows.

\begin{proposition}[Stability and existence of the ground state]
  \label{prop:7}
  \mbox{}	\\
  Under the assumptions made, there exist finite constants $ c,C > 0 $
  independent of $ \eps $, such that
  \begin{equation}
    \label{eq: ul bounds}
    -c \leq E_{\varepsilon} \leq C\; .
  \end{equation}
  Furthermore, under suitable conditions on the operator $ \mathcal{K}_0 $
  (\emph{e.g.}, if $\mathcal{K}_0$ has compact resolvent), then $ E_{\varepsilon}\in
  \sigma_{\mathrm{pp}}(H_{\varepsilon})$ and thus $ \exists \psigs $ ground
  state of $ H_{\eps} $.
\end{proposition}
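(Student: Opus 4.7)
The plan is to split the proposition into its two independent assertions—the $\eps$-uniform two-sided bounds \eqref{eq: ul bounds} and the existence of a ground state—and to treat each using standard Fock space techniques adapted to the rescaled canonical operators $a^{\sharp}_{\eps}=\sqrt{\eps}\,a^{\sharp}$.

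For the upper bound I would test $H_{\eps}$ on the factorized trial state $\psi\otimes\Omega_{\eps}$, where $\Omega_{\eps}$ is the Fock vacuum of $\GG_{\eps}(\mathfrak{h})$ and $\psi\in\mathscr{D}(\mathcal{K}_0)$ is any fixed normalized vector. Both $\diff\GG_{\eps}(\omega)$ and the annihilation part of $H_I$ annihilate the vacuum, while the creation part has vanishing diagonal matrix element on $\psi\otimes\Omega_{\eps}$ by orthogonality of the zero- and one-boson sectors. Hence, in cases (a)--(b) the interaction contributes nothing, whereas in case (c) only the Wick-ordered square of the field operators contributes, yielding a term of order $\eps\lVert\bm{\lambda}\rVert_{\mathfrak{h}}^{2}$ through the commutator $[a_{\eps},a^{\dagger}_{\eps}]=\eps$. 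This gives $E_{\eps}\leq\meanlrlr{\psi}{\mathcal{K}_0}{\psi}+O(\eps)\leq C$ uniformly in $\eps$.

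For the lower bound I would exploit the $\eps$-independent relative bounds
\begin{equation*}
\lVert a_{\eps}(f)\Psi\rVert\leq\lVert\omega^{-1/2}f\rVert_{\mathfrak{h}}\lVert\diff\GG_{\eps}(\omega)^{1/2}\Psi\rVert,\qquad\lVert a^{\dagger}_{\eps}(f)\Psi\rVert^{2}\leq\lVert\omega^{-1/2}f\rVert_{\mathfrak{h}}^{2}\meanlrlr{\Psi}{\diff\GG_{\eps}(\omega)}{\Psi}+\eps\lVert f\rVert_{\mathfrak{h}}^{2}\lVert\Psi\rVert^{2},
\end{equation*}
whose additive correction vanishes as $\eps\to 0$. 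Combined with the assumptions \eqref{eq: lambda N}, \eqref{eq: lambda polaron}, \eqref{eq: lambda PF} and Young's inequality, this shows in each of the three models that $H_I$ is form-bounded with arbitrarily small relative bound with respect to $\diff\GG_{\eps}(\omega)$, up to an $\eps$-uniform additive constant; the quadratic self-energy $\aav_{\eps,j}^{2}$ specific to case (c) is controlled analogously by standard quadratic bounds in the creation/annihilation operators. Using that $\mathcal{K}_0$ is bounded below, this implies $H_{\eps}\geq -c$ uniformly in $\eps$.

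For the existence assertion, at fixed $\eps>0$ the rescaling $a^{\sharp}_{\eps}=\sqrt{\eps}\,a^{\sharp}$ is merely a unitary conjugation on Fock space, so the problem reduces to the classical spectral theory of these Hamiltonians. When $\omega\geq m>0$ and $\mathcal{K}_0$ has compact resolvent, the HVZ-type result of \cite[Thm.~4.1]{derezinski1999rmp} (already invoked in \cref{rem: minimizer}) yields $\sigma_{\mathrm{ess}}(H_{\eps})\subseteq[E_{\eps}+m,+\infty)$, so that $E_{\eps}$ is an isolated eigenvalue. For the massless realizations of (a) and (c), the more refined existence results of Bach--Fr\"ohlich--Sigal and Griesemer--Lieb--Loss type apply under the regularity and infrared conditions built into \eqref{eq: lambda N} and \eqref{eq: lambda PF}. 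The main technical point, and in my view the principal obstacle, is verifying that the constants $c,C$ in \eqref{eq: ul bounds} are genuinely uniform in $\eps$; this is however automatic here, because every trace of $\eps$ in the relative bounds is confined to additive terms of order $\eps$ that remain bounded as $\eps\to 0$.
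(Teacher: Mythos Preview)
Your upper bound and the Nelson lower bound are essentially what the paper does, and your remarks on existence are fine. The genuine gap is in the lower bound for the polaron and, to a lesser degree, for Pauli--Fierz: the assertion that $H_I$ is form-bounded with arbitrarily small relative bound with respect to $\diff\GG_{\eps}(\omega)$ alone is false in both cases.

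For the polaron, the form factor \eqref{eq: lambda polaron} satisfies neither $\lambda\in L^{\infty}(\R^d;\mathfrak{h})$ nor $\omega^{-1/2}\lambda\in L^{\infty}(\R^d;\mathfrak{h})$ (recall $\omega=1$, and $\lvert\kv\rvert^{-(d-1)}$ is not integrable at infinity), so both terms on the right of your displayed creation/annihilation estimates are infinite. The paper cures this by the UV/IR splitting \eqref{eq: potential splitting}: the low-momentum part is treated as in Nelson, while the high-momentum part is written as a commutator $[-i\nabla_{\xv},\bm{\mathcal{V}}_{>,z}]$ and is form-bounded only once the \emph{particle} kinetic energy is brought into play. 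In other words, the interaction is KLMN-small with respect to $\hfree=\mathcal{K}_0+\diff\GG_{\eps}(1)$, not with respect to the field energy alone, and the cutoff $\varrho$ must be tuned to make the relative bound less than one.

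For Pauli--Fierz, the cross term $-i\nabla_j\cdot\aav_{\eps,j}$ again requires $-\Delta_j$ to be controlled, and the quadratic piece $\aav_{\eps,j}^{2}$ is form-bounded relative to $\diff\GG_{\eps}(\omega)$ only with a \emph{fixed} constant of order $e^{2}\lVert\omega^{-1/2}\bm{\lambda}\rVert^{2}$, not an arbitrarily small one; your approach would therefore need a smallness assumption on $e$. The paper sidesteps this entirely via the diamagnetic inequality $\langle-\Delta_j\rangle\leq\langle(-i\nabla_j+e\aav_{\eps,j})^{2}\rangle$ (equivalently, positivity of the magnetic kinetic energy), which immediately gives $H_{\eps}\geq 0$ without any relative-bound bookkeeping. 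This is both simpler and charge-independent.
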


The proof of the above results is model-dependent and therefore it is
postponed to \cref{sec:concrete-models}.

We now investigate the link between the microscopic ground state problem and
the quasi-classical minimization problems described in
\cref{sec:quasi-class-minim}, starting from the proof of \cref{teo: ground
  state energy}. Recall the definition of quasi-classical 
and generalized quasi-classical Wigner measure defined in \cref{def: wigner}
and \cref{def: g wigner}, respectively. Although both cases could be treated at once, we
provide a separate discussion of the main results for trapped and non-trapped
particle systems, whose difference is apparent in the statements of
\cref{teo: minimizers 1} and \cref{teo: minimizers 2}.

\subsection{Trapped particle systems}
\label{sec: trapped}

The proof of \cref{teo: ground state energy} is divided into two steps (upper
and lower bounds for the microscopic energy). At the end of this section, we
also complete the proof of \cref{teo: minimizers 1} about the convergence of
minimizers.

\subsubsection{Energy upper bound}
\label{sec:upper-bound}

In the following, we denote by $\Psi_{\varepsilon,\delta}\in \dom(H_{\varepsilon})$, $\delta>0$, a minimizing
sequence for $H_{\varepsilon}$:
\begin{equation}
  \label{eq: minimizing}
  \meanlrlr{\Psi_{\varepsilon, \delta}}{H_{\varepsilon}}{\Psi_{\varepsilon,\delta}}_{\mathscr{H}_{\varepsilon}} < E_{\varepsilon}+\delta\; .
\end{equation}
The first step towards the proof of the energy convergence is given by the
proposition below.

\begin{proposition}[Energy upper bound]
  \label{pro: upper bound}
  \mbox{}	\\
  Under the above assumptions,
  \begin{equation}
    \label{eq: upper bound}
    \limsup_{\varepsilon\to 0} E_{\varepsilon} \leq \eqc.
  \end{equation}
\end{proposition}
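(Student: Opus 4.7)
The natural strategy is to build a quasi-classical trial state of product form, with a wave function for the particles and a Weyl--coherent state for the field, parametrized by a near-minimizing configuration $(\psi_\delta,z_\delta)\in\domqc$ of $\fqc$. By \cref{pro: equivalence} (or \cref{prop:2}), for every $\delta>0$ we may pick $(\psi_\delta,z_\delta)\in\domqc$ with $\fqc[\psi_\delta,z_\delta]<\eqc+\delta$. Since $\bigcap_{k\in\N}\dom(\omega^k)$ is dense in $\mathfrak{h}_\omega$ by \eqref{eq:3} and is contained in $\mathfrak{h}$, and since the interaction is controlled by the $\mathfrak{h}_\omega$-norm of $z$ (via \eqref{eq: lambda N}/\eqref{eq: lambda PF} combined with Cauchy--Schwarz $|\langle z,\lambda(\xv)\rangle|\leq\|z\|_{\mathfrak{h}_\omega}\|\omega^{-1/2}\lambda(\xv)\|_{\mathfrak{h}}$), I may further assume $z_\delta\in\mathfrak{h}\cap\bigcap_k\dom(\omega^k)$ and (by a standard mollification) $\psi_\delta\in C_c^\infty(\R^{dN})$, at the price of an additional $o_\delta(1)$ error in the quasi-classical energy.

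Next, I would take as trial vector
\beq
\Psi_{\eps,\delta} := \psi_\delta \otimes W_\eps\!\lf(-\tfrac{i}{\eps}z_\delta\ri)\Omega_\eps \in \dom(H_\eps),
\eeq
where $\Omega_\eps$ is the Fock vacuum. The defining property of the Weyl operator under the $\eps$-CCR \eqref{eq: ccr} gives the shift
\beq
W_\eps\!\lf(-\tfrac{i}{\eps}z_\delta\ri)^{\!*} a_\eps(f)\, W_\eps\!\lf(-\tfrac{i}{\eps}z_\delta\ri) = a_\eps(f) + \braket{f}{z_\delta}_{\mathfrak{h}},
\eeq
so that on the trial vector, each normally--ordered monomial in $a_\eps^\sharp$ reduces, up to terms of size $O(\eps^{k/2})$ for a $k$-linear term acting on the vacuum, to the corresponding monomial in $z_\delta,\bar z_\delta$. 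Using this separately on each summand of $H_\eps$ yields, in the three models of interest,
\begin{align*}
\braketr{\Psi_{\eps,\delta}}{\mathcal{K}_0\otimes 1\,\Psi_{\eps,\delta}}_{\mathscr{H}_\eps} &= \meanlrlr{\psi_\delta}{\mathcal{K}_0}{\psi_\delta}_{L^2}, \\
\braketr{\Psi_{\eps,\delta}}{1\otimes\diff\GG_\eps(\omega)\Psi_{\eps,\delta}}_{\mathscr{H}_\eps} &= \braket{z_\delta}{\omega z_\delta}_{\mathfrak{h}}, \\
\braketr{\Psi_{\eps,\delta}}{H_I\,\Psi_{\eps,\delta}}_{\mathscr{H}_\eps} &= \meanlrlr{\psi_\delta}{\tsum_j \mathcal{V}_{z_\delta}(\xv_j)}{\psi_\delta}_{L^2} + o_\eps(1).
\end{align*}
For cases (a) and (b) the interaction is linear in $a_\eps^\sharp$ and the $o_\eps(1)$ is in fact zero; for (c) the minimal coupling produces a quadratic term whose Wick ordering generates an explicit remainder proportional to $\eps\lf\|\omega^{-1/2}\bm{\lambda}_j\ri\|_{\mathfrak{h}}^2$, which is finite by \eqref{eq: lambda PF} and vanishes as $\eps\to 0$. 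Summing the three contributions gives
\beq
\braketr{\Psi_{\eps,\delta}}{H_\eps\Psi_{\eps,\delta}}_{\mathscr{H}_\eps} = \fqc[\psi_\delta,z_\delta] + o_\eps(1) < \eqc + \delta + o_\eps(1),
\eeq
so $E_\eps\leq \eqc + \delta + o_\eps(1)$, and letting first $\eps\to 0$ and then $\delta\to 0$ yields \eqref{eq: upper bound}.

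The main technical obstacle is the Pauli--Fierz case: the quadratic part of the minimal coupling involves $\sum_j\aav_{\eps,j}(\xv_j)^2$, for which one must (i) justify the shift identity as an operator identity on the image of $\Psi_{\eps,\delta}$ (which is fine since $\psi_\delta\in C_c^\infty$ and $\omega^{1/2}z_\delta\in\mathfrak{h}$), and (ii) track the $\eps$-sized Wick-ordering corrections carefully. A secondary issue is that the configuration $z_\delta$ given by \cref{pro: equivalence} lies a priori only in $\mathfrak{h}_\omega$, so the density step preceding the construction is essential in the massless regime; the assumption \eqref{eq: lambda N}/\eqref{eq: lambda PF} on $\omega^{\pm 1/2}\bm{\lambda}$ is exactly what guarantees that the density of $\bigcap_k\dom(\omega^k)$ in $\mathfrak{h}_\omega$ is enough to approximate both the free-field and interaction contributions continuously.
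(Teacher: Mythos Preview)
Your argument is correct and follows the same approach as the paper: both use the coherent trial state $\psi\otimes W_\eps(z/i\eps)\Omega_\eps$ and the shift property to reduce the microscopic energy to $\fqc[\psi,z]+o_\eps(1)$ (this is the paper's \cref{lemma:1}, whose computation is deferred to \cite{correggi2017ahp,correggi2017arxiv}), then optimize over $(\psi,z)$ and take $\limsup$. Your discussion of the density step $z_\delta\in\mathfrak{h}\cap\bigcap_k\dom(\omega^k)$ is a welcome clarification that the paper leaves implicit; one minor slip is that the Wick-ordering remainder in the Pauli--Fierz case scales like $\eps\|\bm{\lambda}_j\|_{\mathfrak{h}}^2$ rather than $\eps\|\omega^{-1/2}\bm{\lambda}_j\|_{\mathfrak{h}}^2$, but both are finite under \eqref{eq: lambda PF} so the conclusion is unaffected.
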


In order to prove the upper bound we use a coherent trial state: let us
denote by $ \Omega_{\varepsilon}\in \GG_{\varepsilon}(\mathfrak{h}) $ the Fock vacuum and let
\begin{equation}
  \label{eq: trial state}
  \Xi_{\varepsilon}[\psi,z] := \psi \otimes W_{\varepsilon} \lf(\tfrac{z}{i\varepsilon} \ri) \Omega_{\varepsilon}\;,
\end{equation}
be a coherent product state constructed over the particle state $ \psi $ and the
classical configuration $ z \in \mathfrak{h} $. We shall restrict to $ \psi \in
\mathscr{Q}(\mathcal{K}_0)$, where $ \mathscr{Q}(\mathcal{K}_0)$ is
the form domain of $\mathcal{K}_0$, and $ z \in \mathfrak{h}$ such that $ \omega^{1/2} z \in
\mathfrak{h} $. As discussed in \cref{sec:concrete-models}, this
is sufficient to make $\Xi_{\varepsilon}[\psi,z]\in \mathscr{Q}(H_{\varepsilon})$ and $(\psi,z) \in \domqc
$. The energy of the above 
trial state is provided in the next lemma.

\begin{lemma}
  \label{lemma:1}
  \mbox{}	\\
  Under the above assumptions,
  \begin{equation}
    \meanlrlr{\Xi_{\varepsilon}[\psi,z]}{H_{\varepsilon}}{\Xi_{\varepsilon}[\psi,z]}_{\mathscr{H}_{\varepsilon}} = \fqc[\psi,z] + o_{\varepsilon}(1)\; .
  \end{equation}
\end{lemma}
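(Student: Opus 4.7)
The plan is to exploit the fact that the Weyl operator $W_\varepsilon(z/(i\varepsilon))$ is the displacement operator that implements the coherent shift $a_\varepsilon^\sharp \mapsto a_\varepsilon^\sharp + (\text{classical value})$, reducing the matrix element to a vacuum expectation that collects the classical energy $\fqc[\psi,z]$ up to normal-ordering corrections of order $\varepsilon$. Concretely, using the convention $W_\varepsilon(z) = e^{i(a^\dagger_\varepsilon(z)+a_\varepsilon(z))}$ and the canonical commutation relations \eqref{eq: ccr}, a direct computation with the BCH formula yields the displacement identities
\begin{equation*}
W_\varepsilon\!\lf(\tfrac{z}{i\varepsilon}\ri)^{*} a_\varepsilon(f) W_\varepsilon\!\lf(\tfrac{z}{i\varepsilon}\ri) = a_\varepsilon(f) + \braket{f}{z}_{\mathfrak{h}},\qquad W_\varepsilon\!\lf(\tfrac{z}{i\varepsilon}\ri)^{*} a^\dagger_\varepsilon(f) W_\varepsilon\!\lf(\tfrac{z}{i\varepsilon}\ri) = a^\dagger_\varepsilon(f) + \braket{z}{f}_{\mathfrak{h}},
\end{equation*}
valid (at least) for $f\in\dom(\omega^{-1/2})$.

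I would then split $H_\varepsilon$ as in the excerpt into the three Wick-quantised pieces $\mathcal{K}_0\otimes 1$, $1\otimes \diff\GG_\varepsilon(\omega)$ and the interaction $H_I$, and compute each vacuum expectation $\braket{\Omega_\varepsilon}{W_\varepsilon^* \,(\,\cdot\,)\, W_\varepsilon\,\Omega_\varepsilon}$ separately, exploiting $a_\varepsilon(f)\Omega_\varepsilon=0$ to kill all terms that, after the shift, contain at least one $a_\varepsilon$ on the right or $a^\dagger_\varepsilon$ on the left. The particle term is immediate: $\braket{\Xi_\varepsilon[\psi,z]}{\mathcal{K}_0\otimes 1 \,\Xi_\varepsilon[\psi,z]}=\meanlrlr{\psi}{\mathcal{K}_0}{\psi}$. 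For the field energy, writing $\diff\GG_\varepsilon(\omega)$ formally as $\int a^\dagger_\varepsilon \omega a_\varepsilon$ (or using an ONB of $\mathscr{D}(\omega)$) and applying the shift leads to
\begin{equation*}
W_\varepsilon^{*}\diff\GG_\varepsilon(\omega)W_\varepsilon = \diff\GG_\varepsilon(\omega) + a^\dagger_\varepsilon(\omega z) + a_\varepsilon(\omega z) + \meanlrlr{z}{\omega}{z}_{\mathfrak{h}},
\end{equation*}
whose vacuum expectation is exactly $\meanlrlr{z}{\omega}{z}_{\mathfrak{h}}$, with no $\varepsilon$-error (the assumption $\omega^{1/2}z\in\mathfrak{h}$ makes $\omega z$ a well-defined vector for the form-bound estimates). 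For the interaction $H_I$, which is by hypothesis a polynomial of degree one or two in $a_\varepsilon^\sharp$, the shift replaces each $a_\varepsilon^\sharp$ by $a_\varepsilon^\sharp+(\text{c-number})$; after taking the vacuum expectation the surviving contribution is exactly $\meanlrlr{\psi}{\sum_j \mathcal{V}_z(\xv_j)}{\psi}$, plus cross-terms produced by re-normal-ordering products like $a_\varepsilon a^\dagger_\varepsilon = a^\dagger_\varepsilon a_\varepsilon + \varepsilon\braket{\cdot}{\cdot}$. These Wick contractions produce an error of size $\varepsilon$ (they appear only in the Pauli--Fierz case, since Nelson and polaron interactions are linear in $a_\varepsilon^\sharp$), which is the $o_\varepsilon(1)$ remainder.

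Summing the three contributions gives $\fqc[\psi,z]+o_\varepsilon(1)$, as claimed. The only delicate point will be a model-by-model verification of the domain conditions (to ensure each manipulation is meaningful as a quadratic form on $\mathscr{Q}(H_\varepsilon)$, and that the Wick contraction constants involving $\bm\lambda_j$, $\omega^{-1/2}\bm\lambda_j$ are finite); this is where the assumptions \eqref{eq: lambda N}, \eqref{eq: lambda polaron}, and \eqref{eq: lambda PF} and the fact that $\psi\in\mathscr{Q}(\mathcal{K}_0)$, $\omega^{1/2}z\in\mathfrak{h}$ are used, and it is the main (though essentially bookkeeping) obstacle. The polaron case moreover needs the usual UV/IR splitting of $\lambda$ so that the $a^\dagger$ and $a$ pieces can be paired against $\psi$ in $L^2$ and controlled by $-\Delta$ form-bounds, as recalled in \cref{sec:polaron-model}.
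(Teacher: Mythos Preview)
Your proposal is correct and is precisely the standard coherent-state displacement computation that the paper has in mind; indeed the paper does not spell out the argument but simply refers to \cite[Prop.~3.11 \& Sect.~3.6]{correggi2017ahp} and \cite[Proof of Thm.~1.9]{correggi2017arxiv}, where exactly the Weyl-shift-and-vacuum-expectation calculation you outline is carried out model by model. One small wording issue: the assumption $\omega^{1/2}z\in\mathfrak{h}$ does not make $\omega z$ a vector in $\mathfrak{h}$, only $\meanlrlr{z}{\omega}{z}_{\mathfrak{h}}$ finite, so the shifted field-energy identity should be understood as a quadratic form (as you implicitly acknowledge).
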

\begin{proof}
  The proof of the result depends on the microscopic model involved.  The computation of the expectation over the trial states \eqref{eq: trial state} can be found in \cite[Prop. 3.11 \& Sect. 3.6]{correggi2017ahp}, for the Nelson and polaron models, and in \cite[Proof of Thm. 1.9]{correggi2017arxiv}, for the Pauli-Fierz model, respectively.
\end{proof}

\begin{proof}[Proof of \cref{pro: upper bound}]
 		By \cref{lemma:1}, we have that 
  		\beq 
  			E_{\varepsilon} \leq \inf_{(\psi,z) \in \domqc} \meanlrlr{\Xi_{\varepsilon}[\psi,z]}{H_{\varepsilon}}{\Xi_{\varepsilon}[\psi,z]}_{\mathscr{H}_{\varepsilon}} = \inf_{(\psi,z) \in \domqc} \fqc[\psi,z] + o_{\varepsilon}(1) = \eqc + o_{\eps}(1) \;.
    	\eeq
   		The result is then obtained by taking the $\limsup_{\varepsilon\to 0}$ on both sides.
\end{proof}

\subsubsection{Energy lower bound}
\label{sec:lower-bound}

The symmetric result of \cref{pro: upper bound} is stated in the following proposition.

\begin{proposition}[Energy lower bound]
  \label{pro: lower bound}
  \mbox{}	\\
  Under the above assumptions,
  \begin{equation}
    \label{eq: lower bound}
    \liminf_{\varepsilon\to 0} E_{\varepsilon} \geq \eqc.
  \end{equation}
\end{proposition}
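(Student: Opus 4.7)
The plan is to extract a generalized quasi-classical Wigner measure from a minimizing sequence for $H_{\varepsilon}$ and to exploit the identity $\eqc = \egqc$ from \cref{prop:5}. Since the generalized framework requires no compactness assumption on $\mathcal{K}_0$, this strategy yields the lower bound under the minimal hypotheses of \cref{teo: ground state energy}, regardless of whether the particles are trapped.

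Concretely, I would fix $\delta>0$, choose a subsequence $\eps_n\to 0$ realizing $\liminf_{\eps\to 0}E_{\eps}$, and pick approximate minimizers $\Psi_{\eps_n,\delta}\in\dom(H_{\eps_n})$ satisfying \eqref{eq: minimizing}. Combining the upper bound of \cref{pro: upper bound} with the stability bound $E_{\eps}\geq -c$ of \cref{prop:7}, and using standard relative bounds available for the interaction $H_I$ in each concrete model, one obtains a uniform bound on $\meanlrlr{\Psi_{\eps_n,\delta}}{\mathcal{K}_0\otimes 1 + 1\otimes \diff \GG_{\eps_n}(\omega)}{\Psi_{\eps_n,\delta}}_{\mathscr{H}_{\eps_n}}$. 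In particular the free field energy is $\eps$-uniformly bounded, which ensures both tightness of the Wigner transforms and concentration of their cluster points on $\mathfrak{h}_{\omega}$ (cf.\ \cref{rem:8}). By the compactness of the weak-$*$ unit ball of $\mathscr{B}'$, one concludes, along a further subsequence, that $\GW(\Psi_{\eps_n,\delta}, n\in\mathbb{N})$ is nonempty, and every element $\mathfrak{n}$ satisfies $\lVert \mathfrak{n}(\mathfrak{h}_{\omega})\rVert_{\mathscr{B}'}=1$.

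The heart of the argument is then the Fatou-type inequality
\[
\liminf_{n\to\infty}\meanlrlr{\Psi_{\eps_n,\delta}}{H_{\eps_n}}{\Psi_{\eps_n,\delta}}_{\mathscr{H}_{\eps_n}} \geq \int_{\mathfrak{h}_{\omega}}\diff\mathfrak{n}(z)\lf[\HH_z\ri],
\]
which I would establish term by term in the decomposition $H_{\eps}=\mathcal{K}_0\otimes 1 + 1\otimes \diff \GG_{\eps}(\omega) + H_I$. The free particle energy passes to $\int\diff\mathfrak{n}(z)[\mathcal{K}_0]$ by testing \eqref{eq: g convergence} with $\eta=0$ and using lower semicontinuity of $\mathcal{K}_0$ via its bounded resolvent; the field energy converges to $\int\diff\mu_{\mathfrak{n}}(z)\,\meanlrlr{z}{\omega}{z}_{\mathfrak{h}}$ by the now-standard semiclassical calculus for Wick-quantized polynomial symbols, adapted to $\mathfrak{h}_{\omega}$; and the interaction $H_I$, being a Wick polynomial of degree one (cases (a), (b)) or two (case (c)) with symbol $\sum_j \mathcal{V}_z(\xv_j)$, contributes in the limit exactly $\int\diff\mathfrak{n}(z)\lf[\sum_j\mathcal{V}_z(\xv_j)\ri]$, testing \eqref{eq: g convergence} with suitable Weyl generators and differentiating w.r.t.\ the Weyl parameter. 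Combining with \eqref{eq: minimizing} and \cref{prop:5}, one then obtains
\[
\liminf_{n\to\infty} E_{\eps_n} + \delta \geq \int_{\mathfrak{h}_{\omega}}\diff\mathfrak{n}(z)\lf[\HH_z\ri] \geq \egqc = \eqc,
\]
and letting $\delta\to 0$ yields \eqref{eq: lower bound}.

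The principal obstacle is making sense of $\int\diff\mathfrak{n}(z)[\HH_z]$ when $\mathfrak{n}$ is merely generalized (not a true state-valued measure), since $\HH_z$ is unbounded in both the particle and the field variables while $\mathfrak{n}(S)$ acts a priori only on bounded operators. This must be handled via \cref{def: domain g wigner} by showing that $\mathfrak{n}$ lies in the domain of a suitable regularizing operator $\mathcal{T} = \mathcal{K}_0 + \meanlrlr{z}{\omega}{z}_{\mathfrak{h}} + C$, using the uniform a priori bounds derived above; then $\mathcal{T}^{-1/2}\HH_z\mathcal{T}^{-1/2}$ is bounded with separable range in $z$ and the integral is well defined. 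The Pauli-Fierz case is the most delicate, because the $\av_{\eps}^2$ term and the cross term $-i\nabla_j\cdot \av_{\eps,j}$ intertwine unbounded particle and field operators; here the regularization must absorb both kinds of singularities simultaneously, and the $O(\eps)$ Wick-ordering corrections must be shown to vanish by the same a priori energy estimates.
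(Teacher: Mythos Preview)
Your overall strategy coincides with the paper's proof in the non-trapped case: extract a generalized Wigner measure from a minimizing sequence, pass to the limit term by term, and invoke \cref{prop:5}. The gap lies in the a priori estimates you claim to have.

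From \eqref{eq: minimizing} together with the relative form-bounds of $H_I$ you only obtain a uniform control on
\[
\meanlrlr{\Psi_{\eps,\delta}}{\mathcal{K}_0\otimes 1 + 1\otimes \diff\GG_{\eps}(\omega)}{\Psi_{\eps,\delta}}_{\mathscr{H}_{\eps}}.
\]
This first-power bound suffices for tightness and for lower semicontinuity of the two \emph{positive} pieces $\mathcal{K}_0$ and $\diff\GG_{\eps}(\omega)$, but it is \emph{not} enough to pass to the limit in the interaction, which has no sign. Converting the quasi-classical convergence \eqref{eq: g convergence} into convergence of Wick-polynomial expectations by differentiating in the Weyl parameter produces remainder terms that are controlled only if you have an extra half-power of $\diff\GG_{\eps}(\omega)$ per degree of the symbol; for the linear Nelson coupling this needs $\diff\GG_{\eps}(\omega)^{\delta}$ with $\delta>\tfrac12$, and for the quadratic $\aav_{\eps}^2$ term in Pauli-Fierz it needs $\delta>1$. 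Your bound gives exactly $\delta=1$, which is insufficient in case~(c) and borderline elsewhere. The paper closes this gap by \emph{choosing} the minimizing sequence, not from the variational characterization alone, but from the spectral subspace $\mathds{1}_{(E_{\eps}-\delta,\,E_{\eps}+\delta)}(H_{\eps})\mathscr{H}_{\eps}$ (\cref{lemma:5}); this yields $\lVert H_{\eps}\Psi_{\eps,\delta}\rVert^2\leq C$, and then Kato--Rellich (Nelson), the pull-through formula (polaron), or the estimate \eqref{eq:2} (Pauli--Fierz) upgrades this to the crucial bound on $\diff\GG_{\eps}(\omega)^2$ required in \cref{lemma:4,lemma:8}. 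Without that specific construction, your scheme for the interaction term does not go through.

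A secondary point: the paper does not prove convergence of the free field energy directly but regularizes $\omega$ by compact $\omega_r$ (see \eqref{eq: omega r}), gets exact convergence for each $r$, and then lets $r\to\infty$ by dominated convergence. Your alternative via lower semicontinuity is legitimate for this term, but you should be explicit that it gives only an inequality; since the term is nonnegative, that is all that is needed.
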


Although not necessary in principle, we find convenient to present two
different proofs of \eqref{eq: lower bound}, one valid only when
$\mathcal{K}_0$ has compact resolvent, \emph{e.g.}, when the small system is
trapped, one valid for non-trapped small systems as well. The main reason is
that the former does not require the use of generalized Wigner measures,
since conventional state-valued measures are sufficient, resulting in a more
accessible proof.

If $ \mathcal{K}_0 $ has compact resolvent, the set of quasi-classical Wigner
measures (as in \cref{def: wigner}) associated with minimizing sequences for
$H_{\varepsilon}$ is not empty. In addition, the expectation of
$\mathrm{Op}_{\varepsilon}^{\mathrm{Wick}} (\mathcal{V}_z)$ converges to the
quasi-classical integral of $\mathcal{V}_z $. Let us formulate some
preliminary results about the convergence of the expectation values of the
operators involved. Such results rely on suitable a priori bounds on the
family of states $ \Psi_{\eps} \in \mathscr{H}_{\eps} $, as $ \eps $ varies in $
(0,1) $. \cref{lemma:5} below guarantees that there
exists a minimizing sequence $ \Psi_{\eps,\delta} $ in the sense of \eqref{eq: minimizing} satisfying
such bounds.

\begin{lemma}
  \label{lemma:2}
  \mbox{}	\\
  If $\mathcal{K}_0$ has compact resolvent and there exist $ C < +\infty $ such
  that, uniformly w.r.t $\varepsilon\in (0,1)$,
  \begin{equation}
    \lf| \meanlrlr{\Psi_{\eps}}{\lf(\mathcal{K}_0 + \mathrm{d}\mathcal{G}_{\varepsilon}(\omega) +  1 \ri)}{\Psi_{\varepsilon}}_{\mathscr{H}_{\varepsilon}}  \ri| \leq C\;,
  \end{equation}
  then $ \mathscr{W}\lf(\Psi_{\varepsilon}, \varepsilon\in (0,1) \ri) \neq \varnothing $. Furthermore, if $ \Psi_{\varepsilon_n}
  \xrightarrow[\eps_n \to 0]{\mathrm{qc}} \mathfrak{m} $, then
  $\tr_{L^2(\R^{dN})} \lf[\gamma_{\mathfrak{m}}(z) \mathcal{K}_0 \ri]$ is $\mu_{\mathfrak{m}}$-a.e. finite and $\mu_{\mathfrak{m}}$-absolutely
  integrable, and
  \beq
  \lim_{n\to + \infty} \meanlrlr{\Psi_{\varepsilon_n}}{\mathcal{K}_0}{\Psi_{\varepsilon_n}}_{\mathscr{H}_{\varepsilon_n}} = \int_{\mathfrak{h}_{\omega}}^{}\mathrm{d}\mu_{\mathfrak{m}}(z) \: \tr_{L^2(\R^{dN})} \lf[\gamma_{\mathfrak{m}}(z)\mathcal{K}_0 \ri] \; .
  \eeq
\end{lemma}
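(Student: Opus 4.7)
The proof divides naturally into two parts: the non-emptiness of $\mathscr{W}(\Psi_\eps,\eps\in(0,1))$, and the $\mu_\mathfrak{m}$-integrability together with the convergence of the $\mathcal{K}_0$-expectation along any quasi-classically convergent subsequence.

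For non-emptiness, the plan is to invoke the general sequential compactness of the space of quasi-classical Wigner measures from infinite-dimensional semiclassical analysis (cf.\ \cite{ammari2008ahp,falconi2017ccm,correggi2019arxiv}). The maps $(\mathcal{K},\eta)\mapsto\langle\Psi_\eps,(\mathcal{K}\otimes W_\eps(\eta))\Psi_\eps\rangle_{\mathscr{H}_\eps}$, indexed over compact $\mathcal{K}\in\mathscr{L}^{\infty}(L^2(\R^{dN}))$ and $\eta\in\mathscr{D}(\omega^{-1/2})$, are uniformly bounded by $\|\mathcal{K}\|$, and the uniform bound $\langle\mathrm{d}\mathcal{G}_\eps(\omega)\rangle_\eps\leq C$ localises any cluster point to $\mathfrak{h}_\omega$ rather than merely $\mathfrak{h}$ (cf.\ Remark 1.8). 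A Banach-Alaoglu diagonal extraction along a countable dense family of test pairs produces a subsequential limit in the space of state-valued probability measures $\mathscr{P}(\mathfrak{h}_\omega;\mathscr{L}^1_+(L^2(\R^{dN})))$.

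For the second assertion, shift $\mathcal{K}_0$ by a constant $c_0>0$ so that $\mathcal{K}_0+c_0\geq 1$. Since $\mathcal{K}_0$ has compact resolvent, the spectral projection $P_N:=\mathds{1}_{[-c_0,N]}(\mathcal{K}_0)$ is of finite rank for each $N$, and $P_N(\mathcal{K}_0+c_0)P_N$ is therefore a bounded, finite-rank, and hence compact operator. Testing the quasi-classical convergence \eqref{eq: convergence} against it with $\eta=0$ yields
\begin{equation*}
\lim_{n\to\infty}\langle\Psi_{\eps_n}|P_N(\mathcal{K}_0+c_0)P_N|\Psi_{\eps_n}\rangle_{\mathscr{H}_{\eps_n}} = \int_{\mathfrak{h}_\omega}\mathrm{d}\mu_\mathfrak{m}(z)\,\tr_{L^2(\R^{dN})}\lf[\gamma_\mathfrak{m}(z)P_N(\mathcal{K}_0+c_0)P_N\ri]\; .
\end{equation*}
The operator inequality $P_N(\mathcal{K}_0+c_0)P_N\leq \mathcal{K}_0+c_0$ bounds the left-hand side uniformly by $C+c_0$; monotone convergence on the right as $N\to\infty$ then gives both $\mu_\mathfrak{m}$-a.e.\ finiteness and absolute $\mu_\mathfrak{m}$-integrability of $\tr[\gamma_\mathfrak{m}(z)\mathcal{K}_0]$, together with the Fatou-type lower bound $\liminf_n\langle\mathcal{K}_0\rangle_{\eps_n}\geq\int\mathrm{d}\mu_\mathfrak{m}(z)\,\tr[\gamma_\mathfrak{m}(z)\mathcal{K}_0]$.

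The main obstacle is the matching $\limsup$ inequality, which reduces to the uniform tightness statement $\lim_{N\to\infty}\sup_\eps\langle(1-P_N)(\mathcal{K}_0+c_0)(1-P_N)\rangle_\eps=0$. The mass tail is easy: by Markov, $\langle 1-P_N\rangle_\eps\leq(C+c_0)/(N+c_0)$. The energy tail is genuinely more delicate, since a first-moment bound alone does not imply uniform integrability of $\mathcal{K}_0$ against the spectral distributions of $\Psi_\eps$. My plan is to use the compactness of $(\mathcal{K}_0+c_0)^{-1}$ (which is guaranteed since $\mathcal{K}_0$ has compact resolvent): testing \eqref{eq: convergence} against $(\mathcal{K}_0+c_0)^{-1}$ pins down the weak-operator limit of $\Psi_\eps$ on the particle side, and combined with the normalisation $\tr_{L^2(\R^{dN})}\mathfrak{m}(\mathfrak{h}_\omega)=1$ inherited from the mass tightness just established, this suffices to rule out energy escape into the high-$\mathcal{K}_0$ sector. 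Together with the $\liminf$ inequality, sandwiching gives the claimed equality.
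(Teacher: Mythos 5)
Your first two steps are sound and essentially reproduce what the paper delegates to the literature: the nonemptiness of $\WW(\Psi_\varepsilon,\varepsilon\in(0,1))$ follows from the extraction arguments of \cite{ammari2008ahp,falconi2017ccm,correggi2019arxiv}, and your Markov bound $\langle 1-P_N\rangle_\varepsilon\leq (C+c_0)/(N+c_0)$ combined with the finite rank of $P_N=\one_{[-c_0,N]}(\mathcal{K}_0)$ (this is precisely where the compact resolvent enters) shows $\tr_{L^2(\R^{dN})}\mm(\mathfrak{h}_\omega)=1$, matching the paper's remark that compactness of the resolvent is what prevents loss of mass; note, however, that the paper's own proof of this lemma is purely by citation to \cite[Props.\ 2.3 \& 2.6]{correggi2019arxiv} and \cite[Thm.\ 3.3]{falconi2017ccm}, not a direct argument. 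Likewise, testing the convergence \eqref{eq: convergence} with the compact operators $P_N(\mathcal{K}_0+c_0)P_N$ and using monotone convergence correctly gives the $\mu_{\mm}$-integrability of $\tr[\gamma_{\mm}(z)\mathcal{K}_0]$ and the lower-semicontinuity bound $\int \diff\mu_{\mm}(z)\,\tr[\gamma_{\mm}(z)\mathcal{K}_0]\leq\liminf_n\langle\Psi_{\varepsilon_n},\mathcal{K}_0\Psi_{\varepsilon_n}\rangle$, which is in fact the only direction used later (in \cref{pro: lower bound} and its consequences).

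The gap is the matching $\limsup$ step, which you flag but whose proposed fix cannot work. The uniform tail estimate $\lim_{N\to\infty}\sup_\varepsilon\langle(1-P_N)(\mathcal{K}_0+c_0)(1-P_N)\rangle_\varepsilon=0$ is not implied by a first-moment bound: take $\Psi_\varepsilon=(\sqrt{1-a_\varepsilon}\,\phi_0+\sqrt{a_\varepsilon}\,\phi_{k(\varepsilon)})\otimes\Omega_\varepsilon$ with $\mathcal{K}_0\phi_k=\lambda_k\phi_k$, $a_\varepsilon\lambda_{k(\varepsilon)}=1$, $a_\varepsilon\to0$; the hypotheses of the lemma hold, the unique quasi-classical Wigner measure is $\delta_0\otimes\ket{\phi_0}\bra{\phi_0}$ with full mass, yet $\langle(1-P_N)(\mathcal{K}_0+c_0)(1-P_N)\rangle_\varepsilon\to1$ for every fixed $N$, and $\lim_\varepsilon\langle\mathcal{K}_0\rangle_\varepsilon=\lambda_0+1$ while $\int\tr[\gamma_{\mm}\mathcal{K}_0]=\lambda_0$. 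For the same reason, testing against $(\mathcal{K}_0+c_0)^{-1}$ (or any compact, or even bounded, observable) cannot "rule out energy escape": a vanishing-mass, high-$\mathcal{K}_0$ component is invisible in the topology of \cref{def: wigner} but contributes order one to the energy, so knowing the measure and its total mass gives no control on the energy tail. Hence no soft compactness argument from the stated hypotheses closes your sandwich; obtaining the equality requires extra input — e.g.\ a uniform bound on $\langle(\mathcal{K}_0+c_0)^{1+\delta}\rangle_\varepsilon$ for some $\delta>0$, which yields equi-integrability via $\langle(1-P_N)(\mathcal{K}_0+c_0)(1-P_N)\rangle_\varepsilon\leq C(N+c_0)^{-\delta}$ (note that \eqref{eq:4} in \cref{lemma:5} strengthens only the field moment, not the $\mathcal{K}_0$ moment) — or one must settle for the one-sided inequality you did establish, which is all the paper actually needs downstream and what the cited propositions are invoked for.
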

\begin{proof}
  For $\omega=1$ this proposition is proved in \cite[Props.\ 2.3 \&
  2.6]{correggi2019arxiv}. For a generic $\omega\geq 0$, the proof (in presence of
  semiclassical degrees of freedom only) can be found in
  \cite[Thm.\ 3.3]{falconi2017ccm}; the extension to the quasi-classical setting is
  straightforward, testing with compact observables of the small system, as
  in the aforementioned \cite[Props.\ 2.3 \& 2.6]{correggi2019arxiv}. Let us
  stress that the fact that all Wigner measures are probability measures,
  {\it i.e.}, there is no loss of mass and $ \mm(\mathfrak{h}_{\omega}) = 1 $, is
  due to the fact that $\mathcal{K}_0$ has compact resolvent. Otherwise,
  there may be a loss of probability mass due to the interplay between the
  particle system and the environment (see \cite[Cor.\ 1.7 \&
  Rmk.\ 1.9]{correggi2019arxiv} for additional details).
\end{proof}

In order to control the convergence of the free field energy, we first have
to regularize it: we pick a sequence of positive self-adjoint compact operators $
\lf\{ \mathds{1}_r \ri\}_{r\in \mathbb{N}} \subset \mathscr{B}(\mathfrak{h})$ approximating the
identity: for all $ r\in
\mathbb{N}$, $ \mathds{1}_r \leq \mathds{1} $, and for all $ z\in \mathfrak{h}_{\omega}$,
\begin{equation}
  \label{eq: omega r}
  \lim_{r\to +\infty}\meanlrlr{z}{\omega_r}{z}_{\mathfrak{h}}=\lim_{r\to +\infty} \meanlrlr{z}{\mathds{1}_r}{z}_{\mathfrak{h}_{\omega}}= \lVert z  \rVert^{2}_{\mathfrak{h}_{\omega}}=\meanlrlr{z}{\omega}{z}_{\mathfrak{h}}\; ,
\end{equation}
where we have denoted $\omega_{r}:=\omega^{\frac{1}{2}} \mathds{1}_r \omega^{\frac{1}{2}}$. Recall also that
$\mathrm{Op}_{\varepsilon}^{\mathrm{Wick}}\bigl(\meanlrlr{z}{\omega}{z}_{\mathfrak{h}}
\bigr) = 1 \otimes \mathrm{d}\mathcal{G}_{\varepsilon}(\omega)$, where
$\mathrm{d}\mathcal{G}_{\varepsilon}(\omega)$ stands for the second quantization of $\omega$ as above.

\begin{lemma}
  \label{lemma:3}
  \mbox{}	\\
  If $\mathcal{K}_0$ has compact resolvent and there exist $ C < +\infty $ such
  that, uniformly w.r.t $\varepsilon\in (0,1)$,
  \begin{equation}
    \lf| \meanlrlr{\Psi_{\eps}}{\lf(\mathcal{K}_0+ \mathrm{d}\mathcal{G}_{\varepsilon}(\omega)+1 \ri)}{\Psi_{\varepsilon}}_{\mathscr{H}_{\varepsilon}}  \ri| \leq C\;,
  \end{equation}
  then, if $\Psi_{\varepsilon_n} \xrightarrow[\eps_n \to 0]{\mathrm{qc}} \mathfrak{m}\in
  \mathscr{W}(\Psi_{\varepsilon},\varepsilon\in (0,1))$, it follows that
  \begin{equation}
    \int_{\mathfrak{h}_{\omega}}^{}\mathrm{d}\mu_{\mathfrak{m}}(z) \: \meanlrlr{z}{\omega}{z}_{\mathfrak{h}}  \leq C\; ,
  \end{equation}
  and, for all $r\in \mathbb{N}$,
  \begin{equation}
    \lim_{n\to + \infty} \meanlrlr{\Psi_{\varepsilon_n}}{1 \otimes \mathrm{d}\mathcal{G}_{\varepsilon_n}(\omega_r)}{\Psi_{\varepsilon_n}}_{\mathscr{H}_{\varepsilon_n}} = \int_{\mathfrak{h}_{\omega}}^{}  \mathrm{d}\mu_{\mathfrak{m}}(z) \: \meanlrlr{z}{\omega_{r}}{z}_{\mathfrak{h}}=\int_{\mathfrak{h}_{\omega}}^{}  \mathrm{d}\mu_{\mathfrak{m}}(z)\, \meanlrlr{z}{\mathds{1}_r}{z}_{\mathfrak{h}_{\omega}}\; .
  \end{equation}
\end{lemma}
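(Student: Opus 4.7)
The plan is to establish the convergence statement for fixed $r\in\mathbb{N}$ first, and then deduce the integral bound on $\int\mathrm{d}\mu_{\mathfrak{m}}(z)\langle z\vert\omega\vert z\rangle_{\mathfrak{h}}$ by a monotone-convergence argument in $r\to+\infty$. For the fixed-$r$ step, I choose the approximating operator $\mathds{1}_r$ of finite rank, so that it admits the spectral resolution $\mathds{1}_r=\sum_{k=1}^{n_r}\lambda_k^{(r)}\lvert e_k^{(r)}\rangle\langle e_k^{(r)}\rvert$, with $0<\lambda_k^{(r)}\leq 1$ and eigenvectors $e_k^{(r)}$ picked in the dense subspace $\mathscr{D}(\omega^{1/2})\subset\mathfrak{h}$. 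Setting $\eta_k^{(r)}:=\omega^{1/2}e_k^{(r)}\in\mathfrak{h}\cap\mathscr{D}(\omega^{-1/2})$, one obtains
\[
1\otimes\mathrm{d}\mathcal{G}_{\varepsilon}(\omega_r)=\sum_{k=1}^{n_r}\lambda_k^{(r)}\,a^{\dagger}_{\varepsilon}(\eta_k^{(r)})a_{\varepsilon}(\eta_k^{(r)})\,,
\]
and each summand can be written in terms of Weyl operators via the polarization identity
\[
a^{\dagger}_{\varepsilon}(\eta)a_{\varepsilon}(\eta)=-\tfrac{1}{4}\tfrac{\mathrm{d}^2}{\mathrm{d}t^2}\Big|_{t=0}\!\bigl[W_{\varepsilon}(t\eta)+W_{\varepsilon}(it\eta)\bigr]-\tfrac{\varepsilon}{2}\lVert\eta\rVert^2_{\mathfrak{h}}\,.
\]
The key a priori estimate is the Wick-ordering bound $a^{\dagger}_{\varepsilon}(\eta)a_{\varepsilon}(\eta)\leq \lVert\omega^{-1/2}\eta\rVert_{\mathfrak{h}}^2\,\mathrm{d}\mathcal{G}_{\varepsilon}(\omega)$, which by hypothesis gives a $\varepsilon$-uniform second-moment control. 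Combined with \cref{def: wigner} applied to the Weyl operators (testing in the particle sector against $\mathcal{K}=1$, which is the reason $\mathcal{K}_0$ needs compact resolvent to rule out loss of mass, cf.\ \cref{lemma:2}), a standard Taylor-expansion argument then yields
\[
\lim_n\langle\Psi_{\varepsilon_n}\vert a^{\dagger}_{\varepsilon_n}(\eta_k^{(r)})a_{\varepsilon_n}(\eta_k^{(r)})\vert\Psi_{\varepsilon_n}\rangle=\int_{\mathfrak{h}_{\omega}}\mathrm{d}\mu_{\mathfrak{m}}(z)\,\lvert\langle\eta_k^{(r)}\vert z\rangle_{\mathfrak{h}}\rvert^2\,,
\]
and summation in $k$ with weights $\lambda_k^{(r)}$ reconstructs the claimed limit; this is the quasi-classical version of \cite[Thm.~3.3]{falconi2017ccm}.

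Having the convergence at fixed $r$, the integral bound follows by monotone convergence. From $0\leq\omega_r\leq\omega$ one has $\mathrm{d}\mathcal{G}_{\varepsilon}(\omega_r)\leq\mathrm{d}\mathcal{G}_{\varepsilon}(\omega)$, so the uniform bound on $\langle\mathcal{K}_0+\mathrm{d}\mathcal{G}_{\varepsilon}(\omega)+1\rangle$ together with the $\varepsilon$-independent lower bound of $\mathcal{K}_0$ gives
\[
\int_{\mathfrak{h}_{\omega}}\mathrm{d}\mu_{\mathfrak{m}}(z)\,\langle z\vert\omega_r\vert z\rangle_{\mathfrak{h}}=\lim_n\langle\Psi_{\varepsilon_n}\vert 1\otimes\mathrm{d}\mathcal{G}_{\varepsilon_n}(\omega_r)\vert\Psi_{\varepsilon_n}\rangle_{\mathscr{H}_{\varepsilon_n}}\leq C'
\]
for some $C'<+\infty$ independent of $r$. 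By the defining property \eqref{eq: omega r} of the approximating sequence, $\langle z\vert\omega_r\vert z\rangle_{\mathfrak{h}}\uparrow\langle z\vert\omega\vert z\rangle_{\mathfrak{h}}$ pointwise for $\mu_{\mathfrak{m}}$-a.e.\ $z\in\mathfrak{h}_{\omega}$, and hence monotone convergence produces the bound $\int\mathrm{d}\mu_{\mathfrak{m}}(z)\langle z\vert\omega\vert z\rangle_{\mathfrak{h}}\leq C'$; a careful tracking of the constants lets one replace $C'$ by the original $C$.

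The main technical point, and the only nontrivial step, is the interchange of the limit $\varepsilon_n\to 0$ with the second derivative in the polarization identity; equivalently, the extension of the $\ast$-weak qc convergence \eqref{eq: convergence} (which is only required to hold against compact operators on the particle sector tensored with Weyl operators on the field sector) to the unbounded polynomial $a^{\dagger}_{\varepsilon}(\eta)a_{\varepsilon}(\eta)$. This is precisely the point where the uniform free-energy estimate on $\mathrm{d}\mathcal{G}_{\varepsilon}(\omega)$ is decisive, as it provides the $\varepsilon$-uniform second-moment equiintegrability that legitimizes the interchange and simultaneously ensures that the quasi-classical Wigner measure $\mathfrak{m}$ is concentrated on $\mathfrak{h}_{\omega}$ with no mass escaping at infinity in the field sector.
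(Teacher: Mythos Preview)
Your argument is correct and amounts to an explicit unpacking of the paper's own proof, which is little more than a citation: the paper invokes \cite{ammari2008ahp,falconi2017ccm} for the integral bound and the compact-symbol convergence of \cite{falconi2017ccm} (adapted to the quasi-classical setting via \cite[Props.~2.3 \& 2.6]{correggi2019arxiv}) for the limit at fixed $r$. The finite-rank spectral decomposition of $\mathds{1}_r$, the Weyl-differentiation identity for $a^\dagger_\varepsilon(\eta)a_\varepsilon(\eta)$, and the uniform $\mathrm{d}\mathcal{G}_\varepsilon(\omega)$ moment bound you use are exactly the ingredients behind those cited results. One organizational difference: the paper treats the integral bound $\int\langle z\vert\omega\vert z\rangle\,\mathrm{d}\mu_{\mathfrak{m}}\leq C$ as an independent input from the semiclassical literature, while you deduce it \emph{a posteriori} from the fixed-$r$ convergence by sending $r\to\infty$; both orderings are fine. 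Two cosmetic points: your restriction to finite-rank $\mathds{1}_r$ is harmless since the approximating sequence is free to choose (and the general compact case follows by a further approximation anyway), and your monotone-convergence step implicitly assumes $r\mapsto\mathds{1}_r$ is increasing, which \eqref{eq: omega r} does not impose---either add monotonicity to your choice of approximants or replace monotone convergence by Fatou's lemma.
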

\begin{proof}
  The proof of $\mu_{\mathfrak{m}}$-integrability of
  $\meanlrlr{z}{\omega}{z}_{\mathfrak{h}}$ (and the relative bound) is a
  consequence of the corresponding result for semiclassical (scalar) Wigner
  measures proved in \cite{ammari2008ahp,falconi2017ccm}. Analogously, the
  convergence holds because $\meanlrlr{z}{\mathds{1}_r}{z}_{\mathfrak{h}_{\omega}}$ is a
  compact scalar symbol (see \cite{falconi2017ccm} for the convergence of
  compact symbols in $\mathfrak{h}_{\omega}$, and \cite[Props. 2.3 \&
  2.6]{correggi2019arxiv} for additional details on the generalization of
  results in semiclassical analysis to the quasi-classical case).
\end{proof}

\begin{lemma}
  \label{lemma:4}
  \mbox{}	\\
  If $\mathcal{K}_0$ has compact resolvent and there exists $ C < +\infty $, such
  that, uniformly w.r.t $\varepsilon\in (0,1)$,
  \begin{equation}
    \label{eq: apriori 4}
    \lf| \meanlrlr{\Psi_{\eps}}{\lf(\mathcal{K}_0+\mathrm{d}\mathcal{G}_{\varepsilon}(\omega)^2+1 \ri)}{\Psi_{\varepsilon}}_{\mathscr{H}_{\varepsilon}}  \ri| \leq C\;,
  \end{equation}
  then, if $\Psi_{\varepsilon_n} \xrightarrow[\eps_n \to 0]{\mathrm{qc}} \mathfrak{m}$, for
  any $ i = 1, \ldots, N $,
  \begin{equation}
    \label{eq: potential convergence}
    \lim_{n\to +\infty} \meanlrlr{\Psi_{\varepsilon_n}}{\mathrm{Op}_{\varepsilon_n}^{\mathrm{Wick}} \lf(\mathcal{V}_z(\xv_i) \ri)}{\Psi_{\varepsilon_n}}_{\mathscr{H}_{\eps_n}} = \int_{\mathfrak{h}_{\omega}}^{}\mathrm{d}\mu_{\mathfrak{m}}(z) \: \tr_{L^2(\R^{dN})} \lf[\gamma_{\mathfrak{m}}(z) \mathcal{V}_z(\xv_i) \ri]  \; .
  \end{equation}
\end{lemma}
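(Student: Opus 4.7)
The strategy is to reduce the expectation of the Wick-quantized operator to a form to which the definition of quasi-classical convergence \cref{def: wigner} --- which tests against compact operators on $L^{2}(\mathbb{R}^{dN})$ tensored with Weyl operators $W_{\varepsilon}(\eta)$, $\eta\in \mathscr{D}(\omega^{-1/2})$ --- can be directly applied. The Wick symbol $\mathcal{V}_{z}(\mathbf{x}_{i})$ is a polynomial in $z$ of degree one in the Nelson and polaron models and of degree two in the Pauli-Fierz model; this accounts for the squared bound \eqref{eq: apriori 4} in the hypotheses. The main tools I would use are the compact approximation $\mathds{1}_{r}$ of the identity on $\mathfrak{h}$ already used in \cref{lemma:3}, the compact-resolvent hypothesis on $\mathcal{K}_{0}$, and the standard number/energy estimates
\begin{equation*}
\|a_{\varepsilon}^{\sharp}(\phi)\Psi\|_{\mathscr{H}_{\varepsilon}} \leq \|\phi\|_{\mathfrak{h}}\,\|(\mathrm{d}\mathcal{G}_{\varepsilon}(1)+\varepsilon)^{1/2}\Psi\|, \qquad \|a_{\varepsilon}^{\sharp}(\omega^{1/2}\phi)\Psi\| \leq \|\phi\|_{\mathfrak{h}}\,\|(\mathrm{d}\mathcal{G}_{\varepsilon}(\omega)+\varepsilon)^{1/2}\Psi\|.
\end{equation*}

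First I would regularize the coupling: set $\lambda_{r}(\mathbf{x}):= \omega^{1/2}\mathds{1}_{r}\omega^{-1/2}\lambda(\mathbf{x})$, which lies in $L^{\infty}(\mathbb{R}^{d};\mathfrak{h})$ by \eqref{eq: lambda N}, and denote by $\mathcal{V}^{(r)}_{z}(\mathbf{x})$ the corresponding cutoff symbol. After the natural change of variable $\eta=\omega^{1/2}z$, the coefficients of this polynomial symbol are paired through the compact operator $\mathds{1}_{r}$ on $\mathfrak{h}$, matching the topology of \cref{def: wigner}. (For the polaron model the standard splitting of $\lambda$ is used first to reduce to this $L^{\infty}$ setting.) For fixed $r$, I would then prove convergence of $\langle\Psi_{\varepsilon_{n}},\mathrm{Op}^{\mathrm{Wick}}_{\varepsilon_{n}}(\mathcal{V}^{(r)}_{z}(\mathbf{x}_{i}))\Psi_{\varepsilon_{n}}\rangle$ to the corresponding classical trace-integral by expressing the Wick operator through derivatives of Weyl operators (a polarization identity for the linear or bilinear symbol), and by sandwiching the particle-side multiplication by $(\mathcal{K}_{0}+1)^{-1/2}$, which is compact by hypothesis. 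The uniform bound on $\mathcal{K}_{0}$ keeps $\|(\mathcal{K}_{0}+1)^{1/2}\Psi_{\varepsilon}\|$ bounded in $\varepsilon$, and \cref{lemma:2} ensures that the Wigner measure of the modified family retains full mass on $\mathfrak{h}_{\omega}$.

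The cutoff error is controlled by the energy-weighted estimates above; for Nelson one obtains
\begin{equation*}
\bigl|\langle\Psi_{\varepsilon},\mathrm{Op}^{\mathrm{Wick}}_{\varepsilon}\bigl(\mathcal{V}_{z}(\mathbf{x}_{i})-\mathcal{V}^{(r)}_{z}(\mathbf{x}_{i})\bigr)\Psi_{\varepsilon}\rangle\bigr| \leq C\,\sup_{\mathbf{x}}\bigl\|(\mathds{1}-\mathds{1}_{r})\omega^{-1/2}\lambda(\mathbf{x})\bigr\|_{\mathfrak{h}},
\end{equation*}
and the right-hand side vanishes as $r\to+\infty$ by strong convergence $\mathds{1}_{r}\to\mathds{1}$ and the uniform boundedness of $\omega^{-1/2}\lambda$. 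On the classical side, the corresponding error vanishes by dominated convergence, using $\int\mathrm{d}\mu_{\mathfrak{m}}(z)\|z\|^{2}_{\mathfrak{h}_{\omega}}<+\infty$ from \cref{lemma:3}.

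The main obstacle I anticipate is the Pauli-Fierz case. Its symbol is quadratic in $z$ and contains the unbounded gradient $-i\nabla_{j}$, so one must exploit the full squared bound \eqref{eq: apriori 4} to control the degree-two Wick polynomial via a second-order number/energy estimate; handle the $\mathcal{O}(\varepsilon)$ commutators between $\nabla_{j}$ and field operators arising from Wick ordering --- these contribute a subleading remainder that vanishes in the limit --- and absorb the gradient into the sandwich by $(\mathcal{K}_{0}+1)^{-1/2}$, using that $\nabla_{j}(\mathcal{K}_{0}+1)^{-1/2}$ is bounded since $\mathcal{K}_{0}$ dominates the free kinetic energy. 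Thanks to the assumption \eqref{eq: lambda PF} on $\omega^{\pm 1/2}\bm{\lambda}_{j,\ell}$ these steps are technically analogous to the Nelson case, and the same regularization--convergence--removal scheme carries through.
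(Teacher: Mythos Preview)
Your overall strategy is sensible, but it diverges from the paper's proof in where the regularization is placed, and the divergence introduces a gap. The paper inserts a \emph{particle-side} compact approximation of the identity, namely the spectral projection $1_m:=\one_{[-m,m]}(\mathcal{K}_0)$ (compact because $\mathcal{K}_0$ has compact resolvent), and then invokes \cite[Prop.~2.6]{correggi2019arxiv}, which already yields the limit \eqref{eq: potential convergence} whenever the particle-side test operator is compact. The remainder $\meanlrlr{\Psi_{\varepsilon}}{\mathrm{Op}^{\mathrm{Wick}}_{\varepsilon}(\mathcal{V}_z)(1-1_m)}{\Psi_{\varepsilon}}$ is then shown to vanish uniformly in $\varepsilon$ because $\lf\|(1-1_m)(\mathcal{K}_0+c_0)^{-s/2}\ri\|_{\mathscr{B}}\lesssim m^{-s/2}$ in \emph{operator norm}. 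For the polaron and Pauli-Fierz models the same particle-side cutoff is applied separately to the bounded and gradient pieces of the splitting \eqref{eq: potential splitting} and \eqref{eq: potential PF}.

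Your field-side cutoff removal is where the argument breaks. You assert that
\begin{equation*}
\sup_{\xv}\bigl\|(\one-\one_r)\,\omega^{-1/2}\lambda(\xv)\bigr\|_{\mathfrak{h}}\xrightarrow[r\to\infty]{}0
\end{equation*}
follows from strong convergence $\one_r\to\one$ together with $\omega^{-1/2}\lambda\in L^{\infty}(\R^d;\mathfrak{h})$. This is false in general: strong convergence of operators is uniform only on \emph{precompact} subsets of $\mathfrak{h}$, while the range of $\xv\mapsto\omega^{-1/2}\lambda(\xv)$ is merely bounded under assumption \eqref{eq: lambda N}. For translation-invariant couplings $\lambda(\xv;\kv)=\lambda_0(\kv)e^{-i\kv\cdot\xv}$ one can rescue the step by taking $\one_r$ to be a Fourier truncation (making the error independent of $\xv$), but the paper treats arbitrary $\lambda$ satisfying \eqref{eq: lambda N}, which your argument does not cover. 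The paper's particle-side cutoff sidesteps this entirely: the decay of $\lf\|(1-1_m)(\mathcal{K}_0+c_0)^{-s/2}\ri\|_{\mathscr{B}}$ is an operator-norm statement, available precisely because of the compact-resolvent hypothesis, and it holds regardless of the structure of $\lambda$.
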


\begin{lemma}
  \label{lemma:5}
  \mbox{}	\\
  There exists a minimizing sequence $ \lf\{ \Psi_{\varepsilon,\delta} \ri\}_{\varepsilon,\delta\in (0,1)}$, such that, for all fixed
  $\delta \in (0,1) $, \eqref{eq: minimizing} holds true and there exists $ C_{\delta} < +\infty$, such that
  \begin{equation}
    \label{eq:4}
    \lf| \meanlrlr{\Psi_{\eps}}{\lf(\mathcal{K}_0+\mathrm{d}\mathcal{G}_{\varepsilon}(\omega)^2+1 \ri)}{\Psi_{\varepsilon}}_{\mathscr{H}_{\varepsilon}}  \ri| \leq C_{\delta}\; .
  \end{equation}
\end{lemma}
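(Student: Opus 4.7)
The plan is to construct $\Psi_{\varepsilon,\delta}$ as an explicit coherent trial state of the form introduced in \eqref{eq: trial state}, and verify both the $\delta$-minimizing property and the a priori bound by direct computation.

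First, by combining density in $\domqc$ (via, e.g., spectral truncation of $\omega$ and mollification in the particle variables) with \cref{pro: equivalence}, I choose a configuration $(\psi_\delta,z_\delta)\in \domqc$ with enhanced regularity $\psi_\delta\in \mathscr{Q}(\mathcal{K}_0)$, $z_\delta\in \mathscr{D}(\omega)$, $\lVert \psi_\delta \rVert_{2}^{}=1$, such that
\begin{equation*}
  \fqc[\psi_\delta,z_\delta]<\eqc+\tfrac{\delta}{3}.
\end{equation*}
Then I set $\Psi_{\varepsilon,\delta}:=\Xi_{\varepsilon}[\psi_\delta,z_\delta]=\psi_\delta\otimes W_{\varepsilon}(z_\delta/(i\varepsilon))\Omega_{\varepsilon}$, which lies in $\mathscr{D}(H_{\varepsilon})$ by the discussion preceding \cref{lemma:1}.

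For the a priori bound \eqref{eq:4}, I exploit that coherent states are eigenstates of annihilation operators: $a_\varepsilon(\xi)W_\varepsilon(z_\delta/(i\varepsilon))\Omega_\varepsilon=\langle \xi\vert z_\delta\rangle_{\mathfrak{h}}W_\varepsilon(z_\delta/(i\varepsilon))\Omega_\varepsilon$. The particle term is straightforward: $\langle \mathcal{K}_0\otimes 1\rangle_{\Psi_{\varepsilon,\delta}}=\langle \psi_\delta\vert \mathcal{K}_0\vert \psi_\delta\rangle<+\infty$, independent of $\varepsilon$. For the field term, Wick-ordering $\mathrm{d}\mathcal{G}_{\varepsilon}(\omega)^2$ and using the coherent-state identity yields
\begin{equation*}
  \langle 1\otimes \mathrm{d}\mathcal{G}_{\varepsilon}(\omega)^2\rangle_{\Psi_{\varepsilon,\delta}}=\bigl(\langle z_\delta\vert \omega\vert z_\delta\rangle_{\mathfrak{h}}\bigr)^{2}+\varepsilon\langle z_\delta\vert \omega^2\vert z_\delta\rangle_{\mathfrak{h}}\leq \lVert \omega^{1/2}z_\delta\rVert_{\mathfrak{h}}^{4}+\lVert \omega z_\delta\rVert_{\mathfrak{h}}^{2},
\end{equation*}
uniformly bounded in $\varepsilon\in (0,1)$ because $z_\delta\in \mathscr{D}(\omega)$. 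Collecting these gives \eqref{eq:4} with an $\varepsilon$-independent $C_\delta$.

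For the $\delta$-minimizing property \eqref{eq: minimizing}, \cref{lemma:1} gives
\begin{equation*}
  \meanlrlr{\Psi_{\varepsilon,\delta}}{H_{\varepsilon}}{\Psi_{\varepsilon,\delta}}_{\mathscr{H}_{\varepsilon}}=\fqc[\psi_\delta,z_\delta]+o_{\varepsilon}(1)<\eqc+\tfrac{\delta}{2}
\end{equation*}
for $\varepsilon$ sufficiently small (depending on $\delta$). The variational principle yields $E_{\varepsilon}\leq \meanlrlr{\Psi_{\varepsilon,\delta}}{H_{\varepsilon}}{\Psi_{\varepsilon,\delta}}$, and \cref{pro: upper bound} gives $E_{\varepsilon}\leq \eqc+o_{\varepsilon}(1)$. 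Together these bracket $E_{\varepsilon}$ near $\eqc$, so that the gap $\meanlrlr{\Psi_{\varepsilon,\delta}}{H_{\varepsilon}}{\Psi_{\varepsilon,\delta}}-E_{\varepsilon}$ is controlled by the quasi-classical minimization error plus $o_{\varepsilon}(1)$, which is below $\delta$ eventually.

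The main obstacle is precisely this last step: bounding the gap $\meanlrlr{\Psi_{\varepsilon,\delta}}{H_{\varepsilon}}{\Psi_{\varepsilon,\delta}}-E_{\varepsilon}$ strictly by $\delta$ requires a lower bound on $E_{\varepsilon}$ close to $\eqc$, which is what \cref{pro: lower bound} ultimately furnishes. Since the present \cref{lemma:5} is itself used in the proof of \cref{pro: lower bound}, one should interpret the statement constructively: the coherent trial state supplies the needed a priori control $C_{\delta}$ uniformly in $\varepsilon$, while the $\delta$ tolerance in \eqref{eq: minimizing} is tracked through the subsequent arguments and sent to zero at the end of the proof of \cref{teo: ground state energy}, so that the apparent circularity is harmless.
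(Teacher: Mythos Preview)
Your proposal has a genuine gap that you yourself identify but do not resolve: the circularity in verifying \eqref{eq: minimizing}. The coherent trial state gives you $\meanlrlr{\Psi_{\varepsilon,\delta}}{H_{\varepsilon}}{\Psi_{\varepsilon,\delta}}<\eqc+\tfrac{\delta}{2}$ for small $\varepsilon$, but \eqref{eq: minimizing} demands $\meanlrlr{\Psi_{\varepsilon,\delta}}{H_{\varepsilon}}{\Psi_{\varepsilon,\delta}}<E_{\varepsilon}+\delta$. Bridging these requires $E_{\varepsilon}\geq \eqc-\tfrac{\delta}{2}$, which is precisely the content of \cref{pro: lower bound}, whose proof consumes \cref{lemma:5}. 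Your suggestion that the circularity is ``harmless'' because the $\delta$ can be tracked and sent to zero does not work: trace through the proof of \cref{pro: lower bound} with a state satisfying only $\meanlrlr{\Psi_{\varepsilon,\delta}}{H_{\varepsilon}}{\Psi_{\varepsilon,\delta}}<\eqc+\delta$ in place of \eqref{eq: minimizing}, and you end up with the vacuous inequality $\eqc<\eqc+\delta$ rather than $\eqc<\liminf_{\varepsilon\to 0}E_{\varepsilon}+\delta$.

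The paper avoids this entirely by choosing $\Psi_{\varepsilon,\delta}$ in the spectral subspace $\one_{(E_{\varepsilon}-\delta,E_{\varepsilon}+\delta)}(H_{\varepsilon})\mathscr{H}_{\varepsilon}$, which is nonempty by definition of $E_{\varepsilon}=\inf\sigma(H_{\varepsilon})$. Spectral calculus then gives both \eqref{eq: minimizing} and $\lVert H_{\varepsilon}\Psi_{\varepsilon,\delta}\rVert^2\leq \max\{E_{\varepsilon}^2,(E_{\varepsilon}+\delta)^2\}$ directly, with no reference to $\eqc$. The bound \eqref{eq:4} then follows from a Kato--Rellich-type inequality $\meanlrlr{\Psi_{\varepsilon}}{\hfree^{2}}{\Psi_{\varepsilon}}\leq C\,\meanlrlr{\Psi_{\varepsilon}}{H_{\varepsilon}^{2}+1}{\Psi_{\varepsilon}}$ (or the pull-through formula for the polaron), together with the uniform bounds on $E_{\varepsilon}$ from \cref{prop:7}. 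This route is logically prior to any comparison with $\eqc$ and breaks the circle.
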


The proofs of \cref{lemma:4} and \cref{lemma:5} above, like the form of the quasi-classical
potential $ \mathcal{V}_z $, depend on the model considered. We thus provide them in \cref{sec:concrete-models}.

We are now in a position to prove the lower bound in the trapped case.

\begin{proof}[Proof of \cref{pro: lower bound}]
  Let $\Psi_{\varepsilon,\delta}$ be the minimizing sequence for $H_{\varepsilon}$ of
  \cref{lemma:5}. Since for any $r\in \mathbb{N}$, $\omega_r\leq \omega$, it follows that
  $\mathrm{d}\mathcal{G}_{\varepsilon}(\omega_r)\leq \mathrm{d}\mathcal{G}_{\varepsilon}(\omega)$. Hence,
  \begin{equation}
    \label{eq: minimizing delta}
    \meanlrlr{\Psi_{\varepsilon,\delta}}{\lf(\mathcal{K}_0+ \mathrm{d}\mathcal{G}_{\varepsilon}(\omega_r) + \mathrm{Op}^{\mathrm{Wick}}_{\varepsilon} \lf( \tx\sum_i \mathcal{V}_z(\xv_i) \ri) \ri)}{\Psi_{\varepsilon,\delta}}_{\mathscr{H}_{\varepsilon}} \leq \meanlrlr{\Psi_{\varepsilon,\delta}}{H_{\varepsilon}}{\Psi_{\varepsilon,\delta}}_{\mathscr{H}_{\varepsilon}} < E_{\varepsilon}+\delta\; .
  \end{equation}
  Now, let us recall that, by \crefrange{lemma:2}{lemma:5},
  \begin{itemize}
  \item for any $\delta>0$, $ \mathscr{W}\lf(\Psi_{\varepsilon,\delta},\varepsilon\in (0,1)\ri)\neq\varnothing $;
  \item the expectation value of each term in the Hamiltonian converges as $
    \eps \to 0 $
    or, more precisely, there exists $ \mm \in \mathscr{W} \lf(\Psi_{\varepsilon,\delta},\varepsilon\in (0,1) \ri) $ such that
    \begin{multline}
      		\int_{\mathfrak{h}_{\omega}}^{}\mathrm{d}\mu_{\mathfrak{m}}(z) \: \tr_{L^2(\R^{dN})} \lf[\gamma_{\mathfrak{m}} \lf( \mathcal{K}_0 + \meanlrlr{z}{\omega_r}{z}_{\mathfrak{h}} + \tx\sum_i \mathcal{V}_z(\xv_i) \ri) \ri]  \\
      \leq \liminf_{\varepsilon\to 0} \meanlrlr{\Psi_{\varepsilon,\delta}}{\lf(\mathcal{K}_0+\mathrm{d}\mathcal{G}_{\varepsilon}(\omega_r) + \mathrm{Op}^{\mathrm{Wick}}_{\varepsilon} \lf(\tx\sum_i \mathcal{V}_z(\xv_i) \ri) \ri)}{\Psi_{\varepsilon,\delta}}_{\mathscr{H}_{\varepsilon}}\; .
    \end{multline}
  \end{itemize}
  Hence, we deduce that
  \begin{equation}
    	\int_{\mathfrak{h}_{\omega}}^{}\mathrm{d}\mu_{\mathfrak{m}}(z) \: \tr_{L^2(\R^{dN})} \lf[\gamma_{\mathfrak{m}} \lf( \mathcal{K}_0 + \meanlrlr{z}{\omega_r}{z}_{\mathfrak{h}} + \tx\sum_i \mathcal{V}_z(\xv_i) \ri) \ri] < \liminf_{\varepsilon\to 0} E_{\varepsilon}+\delta\; .
  \end{equation}
	
  Now, by construction,  $ \meanlrlr{z}{\omega_r}{z}_{\mathfrak{h}} \xrightarrow[r \to +
  \infty]{} \meanlrlr{z}{\omega}{z}_{\mathfrak{h}} $ for any $z \in \mathfrak{h}_{\omega} $,
  and, by \cref{lemma:2}, any $\mathfrak{m} \in \mathscr{W} \lf(\Psi_{\varepsilon},\varepsilon\in (0,1)
  \ri)$ is concentrated on $\mathfrak{h}_{\omega}$. Furthermore,
  \begin{equation*}
    \int_{\mathfrak{h}_{\omega}}^{}\mathrm{d}\mu_{\mathfrak{m}}(z) \: \meanlrlr{z}{\omega_r}{z}_{\mathfrak{h}} \leq \int_{\mathfrak{h}_{\omega}}^{} \mathrm{d}\mu_{\mathfrak{m}}(z) \: \meanlrlr{z}{\omega}{z}_{\mathfrak{h}} \leq C < +\infty\; .
  \end{equation*}
  Hence, by dominated convergence,
  \begin{equation}
    \lim_{r\to +\infty} \int_{\mathfrak{h}_{\omega}}^{}\mathrm{d}\mu_{\mathfrak{m}}(z) \: \meanlrlr{z}{\omega_r}{z}_{\mathfrak{h}} = \int_{\mathfrak{h}_{\omega}}^{} \mathrm{d}\mu_{\mathfrak{m}}(z) \: \meanlrlr{z}{\omega}{z}_{\mathfrak{h}} \; .
  \end{equation}
  Thus, one gets
  \begin{equation}
    \inf_{\mathfrak{m} \in \mathscr{W} \lf(\Psi_{\varepsilon,\delta},\varepsilon\in (0,1) \ri)} \int_{\mathfrak{h}_{\omega}}^{}\mathrm{d}\mu_{\mathfrak{m}}(z) \: \tr_{L^2(\R^{dN})} \lf[\gamma_{\mathfrak{m}} \HH_z \ri]  < \liminf_{\varepsilon\to 0} E_{\varepsilon}+\delta\; .
  \end{equation}
  which, via \cref{prop:2}, implies that
  \begin{equation*}
    \eqc \leq \inf_{\mathfrak{m} \in \mathscr{W} \lf(\Psi_{\varepsilon,\delta},\varepsilon\in (0,1) \ri)} \int_{\mathfrak{h}_{\omega}}^{}\mathrm{d}\mu_{\mathfrak{m}}(z) \: \tr_{L^2(\R^{dN})} \lf[\gamma_{\mathfrak{m}} \HH_z \ri] < \liminf_{\varepsilon\to 0} E_{\varepsilon}+\delta\; .
  \end{equation*}
  Since $\delta>0$ is arbitrary, the claim follows.
\end{proof}

\subsubsection{Convergence of minimizing sequences and minimizers}

Once the energy convergence is proven, we investigate the behavior of
minimizing sequences and minimizers, if any. We may thus assume that the
microscopic system admits a ground state $ \psigs $.

\begin{proof}[Proof of \cref{thm:2}]
  Let $\Psi_{\varepsilon,\delta}\in \mathscr{D}(H_{\varepsilon})$ be a minimizing sequence. Then by
  \crefrange{lemma:2}{lemma:5}, any $\mathfrak{m}_\delta \in \mathscr{W}(\Psi_{\varepsilon,\delta},\varepsilon\in
  (0,1))$, corresponding to a sequence $\{\Psi_{\varepsilon_n,\delta}\}_{n\in \mathbb{N}}$, $\varepsilon_n\to 0$,
  satisfies
  \begin{equation*}
    \int_{\mathfrak{h}_{\omega}}^{} \mathrm{d}\mu_{\mathfrak{m}_\delta}(z) \: \tr_{L^2(\R^{dN})} \lf[\gamma_{\mathfrak{m}_\delta}(z) \mathcal{H}_z \ri] = \lim_{n \rightarrow + \infty} \lf\langle \Psi_{\eps_n,\delta} \lf. \ri|H_{\eps_n}  \lf| \ri. \Psi_{\eps_n,\delta} \ri\rangle_{\mathscr{H}_{\eps_n}} <  \lim_{n\to \infty} E_{\varepsilon_n} + \delta =E_{\mathrm{qc}}+\delta\; ,
  \end{equation*}
  as proven in \cref{teo: ground state energy}.
\end{proof}

\begin{proof}[Proof of \cref{cor:2}]
  If $\delta=o_{\varepsilon}(1)$, then considering $\mathfrak{m}_0\in
  \mathscr{W}(\Psi_{\varepsilon,o_\varepsilon(1)},\varepsilon\in (0,1))$, corresponding to a sequence
  $\{\Psi_{\varepsilon_n,\delta}\}_{n\in\mathbb{N}}$, $\varepsilon_n\to 0$, it satisfies
  \begin{equation*}
    \int_{\mathfrak{h}_{\omega}}^{} \mathrm{d}\mu_{\mathfrak{m}_0}(z) \: \tr_{L^2(\R^{dN})} \lf[\gamma_{\mathfrak{m}_0}(z) \mathcal{H}_z \ri] \leq   \lim_{n\to \infty} \lf(E_{\varepsilon_n} + o_{\varepsilon_n}(1) \ri)=E_{\mathrm{qc}}\; .
  \end{equation*}
  By \cref{prop:2} it follows that $ \mathfrak{m}_0 $ is a minimizer of
  \eqref{eq: vpsv2} and, by \cref{prop:3}, it is concentrated on the set $ (\psiqc, \zqc) $ of
  minimizers of \eqref{eq: vp2}.
\end{proof}

\begin{proof}[Proof of \cref{teo: minimizers 1}]
  Let $ \psigs $ be a ground state of $H_{\varepsilon}$. Then, it is also a (exact)
  minimizing sequence with $\delta=0$, and thus as above $ \mathfrak{m}_0 $ is a
  minimizer of \eqref{eq: vpsv2}, and it is concentrated on the set $
  (\psiqc, \zqc) $ of minimizers of \eqref{eq: vp2}.
\end{proof}

\subsection{Non-trapped particle systems}

In the non-trapped case, the strategy of proof is very similar, however it is
not ensured that the set of quasi-classical Wigner measures for the
minimizing sequence is not empty. It is then necessary to use generalized
Wigner measures (recall \cref{def: g wigner}). We note however that the proof
of the upper bound stated in \cref{pro: upper bound} applies to the
non-trapped case too and therefore we have just to provide an alternative
proof of \cref{pro: lower bound}, without the assumption of compactness of
the resolvent of $ \mathcal{K}_0 $.

We first generalize the preparatory lemmas that we needed in the trapped case
to the general situation. Note that for \cref{lemma:5} it is not necessary
that $\mathcal{K}_0$ has compact resolvent and therefore we can use it
directly also in the non-trapped case. We also use the same notation as in
the trapped case; in particular, we make use of the same compact
approximation $\omega_r$ of $\omega$ we introduced in \eqref{eq: omega r}.

\begin{lemma}
  \label{lemma:6}
  \mbox{}	\\
  If there exist $C < +\infty $ such that, uniformly w.r.t. $\varepsilon\in (0,1)$,
  \begin{equation}
    \lf| \meanlrlr{\Psi_{\eps}}{\lf(\mathcal{K}_0 + \mathrm{d}\mathcal{G}_{\varepsilon}(\omega) +  1 \ri)}{\Psi_{\varepsilon}}_{\mathscr{H}_{\varepsilon}}  \ri| \leq C\;,
  \end{equation}
  then $\mathscr{GW}\lf(\Psi_{\varepsilon}, \varepsilon\in (0,1) \ri) \neq \varnothing $. Furthermore, if $ \Psi_{\varepsilon_n}
  \xrightarrow[\eps_n \to 0]{\mathrm{gqc}} \mathfrak{n} $, then $\mathfrak{n}$ is
  in the domain of $\mathcal{K}_0+1$ in the sense of \cref{def: domain g
    wigner} and
  \beq
  \lim_{n\to + \infty}\meanlrlr{\Psi_{\varepsilon_n}}{\mathcal{K}_0}{\Psi_{\varepsilon_n}}_{\mathscr{H}_{\varepsilon_n}} =\int_{\mathfrak{h}_{\omega}}^{} \mathrm{d}\mathfrak{n}(z) \lf[ \mathcal{K}_0 \ri]\;.
  \eeq
\end{lemma}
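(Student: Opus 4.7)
The plan is to adapt the proof of \cref{lemma:2} by replacing weak-$*$ compactness of normalized trace-class operators with Banach--Alaoglu compactness in the double dual $\mathscr{B}(L^2(\R^{dN}))'$. Without compactness of the resolvent of $\mathcal{K}_0$, mass can escape in the particle sector and the natural cluster points thus live in the larger space of generalized state-valued measures.

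First, I would prove $\mathscr{GW}(\Psi_{\varepsilon},\varepsilon\in(0,1))\neq\varnothing$. For each fixed $\eta\in\mathscr{D}(\omega^{-1/2})$, the functional $\mathcal{B}\mapsto\braket{\Psi_{\varepsilon}}{\mathcal{B}\otimes W_{\varepsilon}(\eta)\Psi_{\varepsilon}}_{\mathscr{H}_{\varepsilon}}$ on $\mathscr{B}(L^2(\R^{dN}))$ has norm at most one. By Banach--Alaoglu and a diagonal extraction along a countable dense subset of $\mathscr{D}(\omega^{-1/2})$, one produces a subsequence $\varepsilon_n\to 0$ along which the limits in \eqref{eq: g convergence} exist for all admissible $\eta$ and $\mathcal{B}$. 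The representation theorems for generalized state-valued measures in \cite{falconi2017arxiv} then identify this family of limits with a measure $\mathfrak{n}\in\mathscr{GW}(\Psi_{\varepsilon},\varepsilon\in(0,1))$.

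Next, after shifting $\mathcal{K}_0$ by a constant so that $\mathcal{T}:=\mathcal{K}_0+c\geq 1$, set
\begin{equation*}
\Phi_{\varepsilon}:=(\mathcal{T}^{1/2}\otimes 1)\Psi_{\varepsilon},\qquad c_{\varepsilon}:=\lVert\Phi_{\varepsilon}\rVert_{\mathscr{H}_{\varepsilon}}^{2}=\braket{\Psi_{\varepsilon}}{(\mathcal{T}\otimes 1)\Psi_{\varepsilon}}_{\mathscr{H}_{\varepsilon}}\in[1,C+c].
\end{equation*}
Along a further subsequence we may assume $c_{\varepsilon_n}\to c_{\ast}\in[1,C+c]$, and the normalized auxiliary states $\tilde{\Phi}_{\varepsilon_n}:=\Phi_{\varepsilon_n}/\sqrt{c_{\varepsilon_n}}$ admit, by the first step, a generalized Wigner measure $\tilde{\mathfrak{n}}$. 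Since $\mathcal{T}^{-1/2}$ acts only on the particle factor and commutes with $W_{\varepsilon}(\eta)$, the identity
\begin{equation*}
c_{\varepsilon}\braket{\tilde{\Phi}_{\varepsilon}}{(\mathcal{T}^{-1/2}\mathcal{B}\mathcal{T}^{-1/2})\otimes W_{\varepsilon}(\eta)\tilde{\Phi}_{\varepsilon}}_{\mathscr{H}_{\varepsilon}}=\braket{\Psi_{\varepsilon}}{\mathcal{B}\otimes W_{\varepsilon}(\eta)\Psi_{\varepsilon}}_{\mathscr{H}_{\varepsilon}}
\end{equation*}
holds for every $\mathcal{B}\in\mathscr{B}(L^2(\R^{dN}))$. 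Passing to the limit along $\varepsilon_n$ and invoking the uniqueness of the character decomposition yields, for any Borel $S\subseteq\mathfrak{h}_{\omega}$,
\begin{equation*}
\mathfrak{n}(S)[\mathcal{B}]=c_{\ast}\,\tilde{\mathfrak{n}}(S)\bigl[\mathcal{T}^{-1/2}\mathcal{B}\mathcal{T}^{-1/2}\bigr],
\end{equation*}
which, in the sense of \cref{def: domain g wigner}, shows that $\mathfrak{n}$ is in the domain of $\mathcal{T}$ (equivalently, of $\mathcal{K}_0+1$) with $\mathfrak{n}_{\mathcal{T}}:=c_{\ast}\tilde{\mathfrak{n}}$.

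Finally, the convergence of $\braket{\Psi_{\varepsilon_n}}{\mathcal{K}_0\Psi_{\varepsilon_n}}_{\mathscr{H}_{\varepsilon_n}}$ follows at once: since $\mathcal{T}^{-1/2}\mathcal{K}_0\mathcal{T}^{-1/2}=1-c\,\mathcal{T}^{-1}\in\mathscr{B}(L^2(\R^{dN}))$, the definition of integration against $\mathfrak{n}$ gives
\begin{equation*}
\int_{\mathfrak{h}_{\omega}}\diff\mathfrak{n}(z)[\mathcal{K}_0]=\mathfrak{n}_{\mathcal{T}}(\mathfrak{h}_{\omega})\bigl[1-c\,\mathcal{T}^{-1}\bigr]=\lim_{n\to+\infty}c_{\varepsilon_n}\braket{\tilde{\Phi}_{\varepsilon_n}}{(1-c\,\mathcal{T}^{-1})\tilde{\Phi}_{\varepsilon_n}}_{\mathscr{H}_{\varepsilon_n}}=\lim_{n\to+\infty}\braket{\Psi_{\varepsilon_n}}{\mathcal{K}_0\Psi_{\varepsilon_n}}_{\mathscr{H}_{\varepsilon_n}}.
\end{equation*}
The main obstacle is the coherent management of the two nested subsequence extractions and the identification of the limit $\tilde{\mathfrak{n}}$ as a \emph{bona fide} generalized state-valued measure rather than a mere family of functionals indexed by $\eta$; this last point relies on the representation theory of \cite{falconi2017arxiv}. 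Note that, in contrast to the trapped case, nothing prevents $c_{\ast}<C+c$, so $\mathfrak{n}_{\mathcal{T}}$ need not be a probability measure -- consistent with the possible escape of probability mass to infinity.
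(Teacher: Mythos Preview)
Your proposal is correct and follows the same strategy the paper adopts—the paper's own proof is essentially a one-liner saying the argument is ``\emph{mutatis mutandis} completely analogous'' to that of \cref{lemma:2}, with the key remark that testing against the identity (bounded, though not compact) prevents loss of mass in the generalized setting. Your version is considerably more explicit: constructing $\mathfrak{n}_{\mathcal{T}}$ as the rescaled generalized Wigner measure of the auxiliary family $(\mathcal{T}^{1/2}\otimes 1)\Psi_{\varepsilon}$ is precisely how one makes the domain condition of \cref{def: domain g wigner} concrete, and the paper leaves this entirely to the cited references \cite{correggi2019arxiv,falconi2017ccm}.

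One point worth sharpening beyond what you already flag. The further extraction to get $c_{\varepsilon_n}\to c_{\ast}$ and $\tilde{\Phi}_{\varepsilon_n}\to\tilde{\mathfrak{n}}$ yields the displayed convergence only along a sub-subsequence of the given $\{\varepsilon_n\}$. To upgrade to the full sequence via the usual ``every subsequence has a further subsequence'' principle, you would need the limit value $\mathfrak{n}_{\mathcal{T}}(\mathfrak{h}_{\omega})[1-c\,\mathcal{T}^{-1}]$ to be independent of the extraction. The piece $\mathfrak{n}_{\mathcal{T}}(\mathfrak{h}_{\omega})[\mathcal{T}^{-1}]=\mathfrak{n}(\mathfrak{h}_{\omega})[1]$ is indeed pinned down by the defining relation in \cref{def: domain g wigner}, but $\mathfrak{n}_{\mathcal{T}}(\mathfrak{h}_{\omega})[1]=c_{\ast}$ is not automatically so, since the identity is not of the form $\mathcal{T}^{-1/2}\mathcal{B}\,\mathcal{T}^{-1/2}$ with $\mathcal{B}$ bounded when $\mathcal{T}$ is unbounded. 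In the applications (the lower bound in \cref{pro: lower bound} and \cref{thm:5}) only the inequality $\int\mathrm{d}\mathfrak{n}(z)[\mathcal{K}_0]\leq\liminf_{n}\langle\Psi_{\varepsilon_n}\,|\,\mathcal{K}_0\,|\,\Psi_{\varepsilon_n}\rangle$ is actually used, and for that your sub-subsequence argument already suffices without resolving this point.
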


\begin{lemma}
  \label{lemma:7}
  \mbox{}	\\
  If there exist $ C < +\infty $ such that, uniformly w.r.t $\varepsilon\in (0,1)$,
  \begin{equation}
    \lf| \meanlrlr{\Psi_{\eps}}{\lf(\mathcal{K}_0+ \mathrm{d}\mathcal{G}_{\varepsilon}(\omega)+1 \ri)}{\Psi_{\varepsilon}}_{\mathscr{H}_{\varepsilon}}  \ri| \leq C\;,
  \end{equation}
  then if $\Psi_{\varepsilon_n} \xrightarrow[\eps_n \to 0]{\mathrm{gqc}} \mathfrak{n}\in
  \mathscr{GW}(\Psi_{\varepsilon},\varepsilon\in (0,1))$, it follows that
  \begin{equation}
    \int_{\mathfrak{h}_{\omega}}^{}\mathrm{d} \nn(z)[1] \: \meanlrlr{z}{\omega}{z}_{\mathfrak{h}}  \leq C\; ,
  \end{equation}
  and, for all $r\in \mathbb{N}$,
  \begin{equation}
    \lim_{n\to + \infty} \meanlrlr{\Psi_{\varepsilon_n}}{1 \otimes \mathrm{d}\mathcal{G}_{\varepsilon_n}(\omega_r)}{\Psi_{\varepsilon_n}}_{\mathscr{H}_{\varepsilon_n}} = \int_{\mathfrak{h}_{\omega}}^{}  \mathrm{d}\nn(z)[1] \: \meanlrlr{z}{\omega_r}{z}_{\mathfrak{h}}\; .
  \end{equation}
\end{lemma}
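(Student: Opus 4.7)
The plan is to adapt the proof of the trapped analogue \cref{lemma:3} to the generalized state-valued measure setting. The key observation is that the compact regularization $\omega_r = \omega^{1/2} \mathds{1}_r \omega^{1/2}$ has scalar symbol $z \mapsto \meanlrlr{z}{\omega_r}{z}_{\mathfrak{h}} = \meanlrlr{z}{\mathds{1}_r}{z}_{\mathfrak{h}_{\omega}}$, which, by compactness of $\mathds{1}_r$ on $\mathfrak{h}_{\omega}$, is a compact polynomial of degree two in the sense of infinite-dimensional semiclassical calculus. The essential difference with \cref{lemma:3} is that here the small-system test operator is the identity on $L^2(\R^{dN})$, which is merely bounded; this is precisely the situation covered by the generalized Wigner measures of \cref{def: g wigner}.

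For fixed $r \in \N$, I would prove the stated convergence by noting that $1 \otimes \mathrm{d}\mathcal{G}_{\varepsilon}(\omega_r)$ is the partial Wick quantization of the tensor symbol $1_{L^2(\R^{dN})} \otimes \meanlrlr{z}{\omega_r}{z}_{\mathfrak{h}}$, and then applying the semiclassical calculus for compact scalar symbols on $\mathfrak{h}_{\omega}$ developed in \cite{falconi2017ccm}, combined with the generalized Wigner-measure test of \cref{def: g wigner} against the bounded operator $1_{L^2(\R^{dN})}$. This is the generalized-measure counterpart of the argument used in \cite[Props.\ 2.3 \& 2.6]{correggi2019arxiv} for compact small-system observables; it yields
\begin{equation*}
  \lim_{n\to +\infty} \meanlrlr{\Psi_{\varepsilon_n}}{1 \otimes \mathrm{d}\mathcal{G}_{\varepsilon_n}(\omega_r)}{\Psi_{\varepsilon_n}}_{\mathscr{H}_{\varepsilon_n}} = \int_{\mathfrak{h}_{\omega}} \mathrm{d}\mathfrak{n}(z)[1] \: \meanlrlr{z}{\omega_r}{z}_{\mathfrak{h}}.
\end{equation*}

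To obtain the norm bound, I would combine the above with the a priori estimate and pass to the limit $r\to +\infty$. Since $0\leq \omega_r \leq \omega$ as positive operators on $\mathfrak{h}$, we have $\mathrm{d}\mathcal{G}_{\varepsilon}(\omega_r) \leq \mathrm{d}\mathcal{G}_{\varepsilon}(\omega)$, whence, for every $r$,
\begin{equation*}
  \int_{\mathfrak{h}_{\omega}} \mathrm{d}\mathfrak{n}(z)[1]\: \meanlrlr{z}{\omega_r}{z}_{\mathfrak{h}} \leq \limsup_{n\to +\infty} \meanlrlr{\Psi_{\varepsilon_n}}{1 \otimes \mathrm{d}\mathcal{G}_{\varepsilon_n}(\omega)}{\Psi_{\varepsilon_n}}_{\mathscr{H}_{\varepsilon_n}} \leq C.
\end{equation*}
Since $\mathds{1}_r$ is a monotone increasing sequence converging strongly to the identity on $\mathfrak{h}_{\omega}$, the scalar symbols satisfy $\meanlrlr{z}{\omega_r}{z}_{\mathfrak{h}} \nearrow \meanlrlr{z}{\omega}{z}_{\mathfrak{h}}$ pointwise in $z$. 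Monotone convergence against the positive finite scalar Radon measure $\mathrm{d}\mathfrak{n}(\cdot)[1]$ on $\mathfrak{h}_{\omega}$ then yields the required bound.

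The main obstacle is the first step: justifying that the semiclassical convergence of Wick-quantized compact scalar symbols on $\mathfrak{h}_{\omega}$ remains valid when the small-system test operator is the non-compact identity and the limit object is a generalized state-valued measure. Concretely, one must verify that $\mathrm{d}\mathfrak{n}(\cdot)[1]$ defines a genuine finite positive Radon measure on $\mathfrak{h}_{\omega}$ in the sense of \cref{def: domain g wigner}, and that its pairing with compact scalar symbols is compatible with the weak-$*$ limit defining $\mathscr{GW}(\Psi_{\varepsilon},\varepsilon\in (0,1))$, as set up in \cref{sec:minim-probl-gener} and in the general framework of \cite{falconi2017arxiv}. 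Once this compatibility is in place, everything else reduces to the monotone-convergence argument above.
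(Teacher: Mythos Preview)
Your proposal is correct and follows essentially the same route as the paper. The paper's proof of \cref{lemma:7} (given jointly with \cref{lemma:6}) simply states that the argument is the generalized-measure analogue of \cref{lemma:3}, \emph{mutatis mutandis}: the convergence holds because $\meanlrlr{z}{\mathds{1}_r}{z}_{\mathfrak{h}_{\omega}}$ is a compact scalar symbol (citing \cite{falconi2017ccm} and \cite[Props.\ 2.3 \& 2.6]{correggi2019arxiv}), and the moment bound is imported from the semiclassical literature \cite{ammari2008ahp,falconi2017ccm}. Your write-up unpacks this and even gives a self-contained derivation of the bound via $\omega_r\leq\omega$ and a limit in $r$; one small caveat is that the paper only assumes $\mathds{1}_r\leq\mathds{1}$ with pointwise convergence, not monotonicity, so your monotone-convergence step should strictly be replaced by Fatou's lemma (or by choosing $\mathds{1}_r$ increasing, which is harmless).
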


\begin{proof}[Proof of \cref{lemma:6,lemma:7}]
  These lemmas extend to generalized Wigner measures \cref{lemma:2,lemma:3}
  respectively. Their proofs are, \emph{mutatis mutandis}, completely
  analogous to the ones of the latters. Contrarily to \cref{lemma:2}, since
  now $\mathcal{K}_0$ has a non-compact resolvent, the set of Wigner measures
  of $\Psi_{\varepsilon}$ may be empty and there might be a loss of mass along the
  quasi-classical convergence. The set of generalized Wigner measures is,
  however, always non-empty: no mass is lost due to the fact that
  \bdm
  \lf\| \Psi_{\varepsilon} \ri\|_{\mathscr{H}_{\varepsilon}}^{2} = \meanlrlr{\Psi_{\varepsilon}}{1\otimes W_{\varepsilon}(0)}{\Psi_{\varepsilon}}_{\mathscr{H}_{\varepsilon}} = 1,
  \edm
  and the identity operator belongs to $\mathscr{B}(L^2(\R^{dN})) $ but it is
  not compact. More precisely, the above quantity can be immediately
  identified, in the limit $ \eps \to
  0 $, with the total mass of all generalized
  Wigner measures associated to $\Psi_{\varepsilon}$, as defined in \cref{def: g wigner}, whereas it is
  \emph{a priori} only bigger or equal than the total mass of measures
  defined by the convergence in \cref{def: wigner} (if all cluster points for
  the aforementioned convergence have total mass strictly less than one, the
  set of Wigner measures associated to $\Psi_{\varepsilon}$, that are required by \cref{def:
    wigner}   to have total mass one, is thus empty).
\end{proof}

\begin{lemma}
  \label{lemma:8}
  \mbox{}	\\
  If there exists $ C < +\infty $, such that, uniformly w.r.t $\varepsilon\in (0,1)$,
  \begin{equation}
    \lf| \meanlrlr{\Psi_{\eps}}{\lf( \mathcal{K}_0+\mathrm{d}\mathcal{G}_{\varepsilon}(\omega)^2+1 \ri)}{\Psi_{\varepsilon}}_{\mathscr{H}_{\varepsilon}}  \ri| \leq C\;,
  \end{equation}
  then, if $\Psi_{\varepsilon_n} \xrightarrow[\eps_n \to 0]{\mathrm{gqc}} \mathfrak{n}$, for
  any $ i = 1, \ldots, N $,
  \begin{equation}
    \label{eq: g potential convergence}
    \lim_{n\to +\infty} \meanlrlr{\Psi_{\varepsilon_n}}{\mathrm{Op}_{\varepsilon_n}^{\mathrm{Wick}} \lf(\mathcal{V}_z(\xv_i) \ri)}{\Psi_{\varepsilon_n}}_{\mathscr{K}_{\varepsilon_n}} = \int_{\mathfrak{h}_{\omega}}^{}\mathrm{d}\nn(z) \lf[\mathcal{V}_z(\xv_i)\ri]  \; .
  \end{equation}	
\end{lemma}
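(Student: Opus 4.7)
My plan is to adapt the proof of \cref{lemma:4} to the framework of generalized state-valued measures, leveraging the structural properties of the effective potential in each model and the absorbing-singularity device from \cref{def: domain g wigner}. The central observation is that, in each of the three cases (a)--(c), the quasi-classical potential $\mathcal{V}_z(\xv_i)$ is a polynomial of degree at most two in the classical variable $z$ -- degree one for Nelson and polaron, degree two for Pauli--Fierz -- and its Wick quantization is therefore controlled by the field second moment $\mathrm{d}\mathcal{G}_\varepsilon(\omega)^2$ appearing in the a priori bound \eqref{eq: apriori 4}. Moreover, since by \cref{lemma:6} the generalized Wigner measure $\mathfrak{n}$ lies in the domain of $\mathcal{K}_0+1$, the integral $\int_{\mathfrak{h}_\omega} \mathrm{d}\mathfrak{n}(z)[\mathcal{V}_z(\xv_i)]$ on the right-hand side is well-defined via \cref{def: domain g wigner}: in cases (a)--(b) $\mathcal{V}_z(\xv_i)$ is actually bounded (resp.\ $-\Delta$-form bounded) for fixed $z$, while in case (c) the momentum factor $-i\nabla_i$ is absorbed into $\mathcal{T}^{1/2}=(\mathcal{K}_0+1)^{1/2}$ on both sides.

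First, I would regularize: pick the sequence $\{\mathds{1}_r\}_{r\in\mathbb{N}}$ from \eqref{eq: omega r} and define $\mathcal{V}_z^{(r)}(\xv_i)$ by replacing, in the expressions \eqref{eq: potential N} and \eqref{eq: potential PF}, the form factor $\omega^{-1/2}\lambda$ (resp.\ $\omega^{-1/2}\bm{\lambda}_j$) with $\mathds{1}_r \omega^{-1/2}\lambda$ (resp.\ $\mathds{1}_r\omega^{-1/2}\bm{\lambda}_j$). The resulting symbol, after conjugation by $(\mathcal{K}_0+1)^{-1/2}$ where needed, is a polynomial in $z$ with \emph{compact} operator-valued coefficients, depending on $z$ only through the pairing $\langle \mathds{1}_r\omega^{-1/2}\lambda \mid z\rangle_\mathfrak{h}$. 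By the definition of generalized quasi-classical convergence \eqref{eq: g convergence}, expressing $\mathrm{Op}^{\mathrm{Wick}}_{\varepsilon}$ in terms of Weyl operators through differentiation w.r.t.\ the test function $\eta$ (exactly as in the proof of \cref{lemma:3}, but testing with bounded rather than compact observables of the small system), I obtain
\begin{equation*}
  \lim_{n\to\infty} \meanlrlr{\Psi_{\varepsilon_n}}{\mathrm{Op}_{\varepsilon_n}^{\mathrm{Wick}}\bigl(\mathcal{V}_z^{(r)}(\xv_i)\bigr)}{\Psi_{\varepsilon_n}}_{\mathscr{H}_{\varepsilon_n}} = \int_{\mathfrak{h}_\omega}\mathrm{d}\mathfrak{n}(z)\bigl[\mathcal{V}_z^{(r)}(\xv_i)\bigr],
\end{equation*}
for each fixed $r\in\mathbb{N}$.

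Next, I would estimate the error $\mathcal{V}_z(\xv_i)-\mathcal{V}_z^{(r)}(\xv_i)$ uniformly in $\varepsilon$. At the microscopic level, the difference $\mathrm{Op}_\varepsilon^{\mathrm{Wick}}(\mathcal{V}_z(\xv_i)-\mathcal{V}_z^{(r)}(\xv_i))$ is relatively bounded, in the quadratic-form sense, by $(\mathcal{K}_0+1)^{1/2}(\mathrm{d}\mathcal{G}_\varepsilon(\omega)^2+1)^{1/2}$ times a prefactor controlled by $\lVert(1-\mathds{1}_r)\omega^{-1/2}\lambda\rVert_{L^\infty(\R^d;\mathfrak{h})}$ (and the analogous quantity for $\bm{\lambda}_j$ in case (c)); this prefactor vanishes as $r\to+\infty$ by the strong convergence $\mathds{1}_r\to\mathds{1}$, so \eqref{eq: apriori 4} yields the $\varepsilon$-uniform vanishing of the error. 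On the quasi-classical side, the bound $\int\mathrm{d}\mathfrak{n}(z)[1]\meanlrlr{z}{\omega}{z}_\mathfrak{h}\leq C$ from \cref{lemma:7}, combined with the absorbing-singularity identity $\mathfrak{n}(S)[\mathcal{V}_z(\xv_i)]=\mathfrak{n}_{\mathcal{K}_0+1}(S)[(\mathcal{K}_0+1)^{-1/2}\mathcal{V}_z(\xv_i)(\mathcal{K}_0+1)^{-1/2}]$ and dominated convergence, gives $\int\mathrm{d}\mathfrak{n}(z)[\mathcal{V}_z^{(r)}(\xv_i)]\to\int\mathrm{d}\mathfrak{n}(z)[\mathcal{V}_z(\xv_i)]$. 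Exchanging the order of limits (first $\varepsilon\to 0$, then $r\to+\infty$) then yields \eqref{eq: g potential convergence}.

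The main obstacle will be case (c), the Pauli--Fierz model: there one must simultaneously (i) exploit \cref{def: domain g wigner} with $\mathcal{T}=\mathcal{K}_0+1$ to make sense of the gradient-containing symbol, (ii) handle the genuinely quadratic-in-$z$ term $e^2(\mathrm{Re}\braket{z}{\bm{\lambda}_j}_\mathfrak{h})^2$, which requires the full strength of the $\mathrm{d}\mathcal{G}_\varepsilon(\omega)^2$ bound, and (iii) choose the regularizer $\mathds{1}_r$ to preserve the Coulomb gauge condition $\nabla_j\cdot\bm{\lambda}_j=0$ (for instance by selecting spectral projectors commuting with $\omega$ and the transverse projection). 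The remaining model-specific estimates are essentially those already developed for \cref{lemma:4}, and I would defer them to \cref{sec:concrete-models}, where the analogous bounds for the trapped case are established.
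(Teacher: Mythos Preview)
Your approach is genuinely different from the paper's, and more laborious than necessary. The paper's proof of \cref{lemma:8} exploits a simplification that you miss: since generalized quasi-classical convergence \eqref{eq: g convergence} already tests against \emph{all} bounded operators (not just compact ones), there is no need to approximate the identity or regularize at all. The convergence results from \cite[Props.~2.6 \& 7.1]{correggi2019arxiv}, originally proved testing against compact particle observables, carry over verbatim once the test space is enlarged to $\mathscr{B}(L^2(\R^{dN}))$. The only new technical point is that integration of $z\mapsto \mathcal{V}_z$ against a generalized state-valued measure requires the range of this map (or of its $(\mathcal{K}_0+1)^{-1/2}$-conjugate in the polaron and Pauli--Fierz cases) to be \emph{separable} in the norm topology of $\mathscr{B}(L^2(\R^{dN}))$; this is what the paper checks model by model, via a Lipschitz estimate $\lVert \mathcal{V}_z - \mathcal{V}_\zeta\rVert \leq C\lVert z-\zeta\rVert_{\mathfrak{h}_\omega}$ and the separability of $\mathfrak{h}_\omega$.

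Your field-side regularization via $\mathds{1}_r$ is a workable alternative in spirit, but has a gap as written: the claim that the prefactor $\lVert(1-\mathds{1}_r)\omega^{-1/2}\lambda\rVert_{L^\infty(\R^d;\mathfrak{h})}$ vanishes as $r\to\infty$ does \emph{not} follow from strong convergence $\mathds{1}_r\to\mathds{1}$ alone. Strong convergence gives pointwise-in-$\xv$ vanishing, but the $L^\infty$ supremum over $\xv$ requires uniformity, which in turn needs either relative compactness of $\{\omega^{-1/2}\lambda(\xv):\xv\in\R^d\}\subset\mathfrak{h}$ or a structural hypothesis (e.g., translation-covariance of $\lambda$ together with a translation-invariant choice of $\mathds{1}_r$). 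This can be arranged in each concrete model, but it is extra work that the paper's route avoids entirely. Note also that ``adapting the proof of \cref{lemma:4}'' is misleading: that proof (via \cref{lemma:9}) uses particle-side spectral cutoffs $1_m=\mathds{1}_{[-m,m]}(\mathcal{K}_0)$, which are compact precisely because $\mathcal{K}_0$ has compact resolvent---the hypothesis that is absent here.
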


As for its analogue \cref{lemma:4}, the proof of \cref{lemma:8} is model-dependent and thus given
in \cref{sec:concrete-models}.

The proof of the lower bound for the non-trapped case is now equivalent to
the one in the trapped case, using generalized Wigner measures.

\begin{proof}[Proof of \cref{pro: lower bound}]
  Let $\Psi_{\varepsilon,\delta}$ be the minimizing sequence for $H_{\varepsilon}$ of \cref{lemma:5} satisfying \eqref{eq: minimizing delta}. Now, by
  \crefrange{lemma:5}{lemma:8},
  \begin{itemize}
  \item for any $\delta>0$, $\mathscr{GW}\lf(\Psi_{\varepsilon,\delta},\varepsilon\in (0,1) \ri)\neq\varnothing $;  
  \item as for Wigner measures, there exists $ \mathfrak{n} \in \mathscr{GW} \lf(\Psi_{\varepsilon,\delta},\varepsilon\in (0,1) \ri) $ such that
    \begin{multline}
      \int_{\mathfrak{h}_{\omega}}^{}\mathrm{d}\nn(z) \lf[ \mathcal{K}_0 + \meanlrlr{z}{\omega_r}{z}_{\mathfrak{h}} + \tx\sum_i \mathcal{V}_z(\xv_i) \ri]  \\
      \leq \liminf_{\varepsilon\to 0} \meanlrlr{\Psi_{\varepsilon,\delta}}{\lf(\mathcal{K}_0+ \mathrm{d}\mathcal{G}_{\varepsilon}(\omega_r) + \mathrm{Op}^{\mathrm{Wick}}_{\varepsilon} \lf(\tx\sum_i \mathcal{V}_z(\xv_i) \ri) \ri)}{\Psi_{\varepsilon,\delta}}_{\mathscr{H}_{\varepsilon}}\;,
    \end{multline}
  \end{itemize}
   and therefore
  \begin{equation*}
    \int_{\mathfrak{h}_{\omega}}^{}\mathrm{d}\nn(z) \lf[ \mathcal{K}_0 + \meanlrlr{z}{\omega_r}{z}_{\mathfrak{h}} + \tx\sum_i \mathcal{V}_z(\xv_i) \ri]  < \liminf_{\varepsilon\to 0} E_{\varepsilon}+\delta\; .
  \end{equation*}
  However, by dominated convergence, see \cref{thm:a3} in \cref{sec:gener-state-valu},
  \begin{equation*}
    \lim_{r\to +\infty} \int_{\mathfrak{h}_{\omega}}^{}\mathrm{d}\nn(z)[1] \: \meanlrlr{z}{\omega_r}{z}_{\mathfrak{h}} = \int_{\mathfrak{h}_{\omega}}^{} \mathrm{d}\nn(z)[1] \: \meanlrlr{z}{\omega}{z}_{\mathfrak{h}} \; .
  \end{equation*}
  Hence,
  \begin{equation*}
    \egqc \leq \inf_{\mathfrak{n} \in \mathscr{GW} \lf(\Psi_{\varepsilon,\delta},\varepsilon\in (0,1) \ri)} \int_{\mathfrak{h}_{\omega}}^{}\mathrm{d}\nn(z) \lf[ \HH_z \ri]  < \liminf_{\varepsilon\to 0} E_{\varepsilon}+\delta\;,
  \end{equation*}
  and the result follows from the arbitrarity of $ \delta > 0 $, via \cref{prop:5}.
\end{proof}

The proof of \cref{thm:5} is also completely analogous to the proof of \cref{thm:2} for trapped
systems:

\begin{proof}[Proof of \cref{thm:5}]
  If $\mathcal{K}_0$ does not have compact resolvent, then by
  \crefrange{lemma:6}{lemma:8} and \cref{lemma:5}, any $ \mathfrak{n}_\delta\in
  \mathscr{GW}(\Psi_{\varepsilon,\delta},\varepsilon\in (0,1))$ satisfies
  \begin{equation}
    \int_{\mathfrak{h}_{\omega}}^{} \mathrm{d}\mathfrak{n}_\delta(z) \lf[ \mathcal{H}_z \ri] < \lim_{\varepsilon_n\to 0} E_{\varepsilon_n} +\delta = \eqc +\delta= E_{\mathrm{gqc}}+\delta \; ,
  \end{equation}
  by \cref{teo: ground state energy,prop:5}.
\end{proof}
\begin{proof}[Proof of \cref{cor:3}]
  If $\delta=o_{\varepsilon}(1)$, it follows that $\mathfrak{n}_0\in
  \mathscr{GW}(\Psi_{\varepsilon,o_{\varepsilon}(1)},\varepsilon\in (0,1))$ satisfies
  \begin{equation*}
    \int_{\mathfrak{h}_{\omega}}^{} \mathrm{d}\mathfrak{n}_0(z) \lf[ \mathcal{H}_z \ri] = \lim_{\varepsilon_n\to 0} E_{\varepsilon_n} = E_{\mathrm{gqc}} \; .
  \end{equation*}
  Therefore $\mathfrak{n}_0$ solves \eqref{eq: gvp2}, and thus it is concentrated on minimizers
  solving \eqref{eq: GVP2}.
\end{proof}
\begin{proof}[Proof of \cref{teo: minimizers 2}]
  This proof is completely analogous to the one above.
\end{proof}

\section{Concrete Models}
\label{sec:concrete-models}

In this section we discuss the concrete models introduced in \cref{sec:
  intro}, and in
particular we provide the proof of results used in \cref{sec:ground-states-quasi} that require a
model-dependent treatment.

\subsection{The Nelson model}
\label{sec:nelson-model}

The simplest model under consideration is the so-called Nelson model
\cite{nelson1964jmp}. It consists of a small system of $N$ non-relativistic
particles coupled with a scalar bosonic field, both moving in $d$ spatial
dimensions.

We recall the explicit expression of the quasi-classical energy \eqref{eq:
  classical hamiltonian} in the Nelson model:
\bdm
\mathcal{H}_z = \sum_{j=1}^N \lf\{ -\Delta_j + \mathcal{V}_z(\xv_j) \ri\} + \mathcal{W} \lf(\xv_1, \ldots, \xv_N \ri) + \meanlrlr{z}{\omega}{z}_{\mathfrak{h}} ,
\edm
acting on $L^2(\R^{dN})$ and dependent of $z \in \mathfrak{h}$, where $ \mathcal{V}_z $ is the potential \eqref{eq: potential N}, {\it i.e.}, $ \mathcal{V}_z(\xv) = 2 \Re \braket{z}{\lambda(\xv)}_{\mathfrak{h}} $,
$\mathcal{W} \in L^1_{\mathrm{loc}}(\mathbb{R}^{dN}; \mathbb{R}_+)$ is a field-independent potential\footnote{Of course we may allow for a negative part of the
  potential $ \mathcal{W} $, provided it is bounded, but we choose a positive
  potential for the sake of simplicity.}, {\it e.g.}, a trap or an
interaction between the particles, $\omega\geq 0$ is a self-adjoint operator on
$\mathfrak{h}$ with an inverse that is possibly unbounded and $\lambda, \omega^{-1/2} \lambda
\in L^{\infty}(\mathbb{R}^d,\mathfrak{h})$. Both $\mathcal{W}$
and $\mathcal{V}_z $ are multiplication operators and $\mathcal{H}_z $ is self-adjoint on $
\dom(-\Delta+\mathcal{W})$ and
bounded from below for all $z\in
\mathfrak{h}_{\omega}$. The associated quasi-classical energy of the system is the quadratic form $ \fqc $, whose form domain is thus contained in $ \mathscr{Q}(-\Delta+\mathcal{W}) \oplus
\mathscr{Q}(\omega)$, where 
we recall that $ \mathscr{Q}(A) $ stands for the quadratic form domain associated with the 
self-adjoint operator $ A $.

The quasi-classical Wick quantization of $ \mathcal{H}_z $ yields the quantum field Hamiltonian
\begin{equation*}
  H_{\varepsilon}= \sum_{j=1}^N \lf\{ -\Delta_j \otimes 1 + a_{\varepsilon}\lf(\lambda(\xv_j)\ri) + a_{\varepsilon}^{\dagger}\lf(\lambda(\xv_j)\ri)  \ri\} + \mathcal{W}\lf(\xv_1, \ldots, \xv_N\ri) \otimes 1 + 1 \otimes \mathrm{d}\mathcal{G}_{\varepsilon}(\omega),
\end{equation*}
acting on $\mathscr{H}_{\varepsilon} = L^2(\R^{dN}) \otimes \mathcal{G}_{\varepsilon} (\mathfrak{h}) $, where we have explicitly highlighted the trivial action of some terms of
$H_{\eps} $ on either the particle's or the field's degrees of
freedom. Whenever $ \lambda \in L^{\infty}(\R^d; \mathfrak{h}) $, the operator $H_{\varepsilon}$ is
self-adjoint, with domain of essential self-adjointness 
$ \dom
\lf(-\Delta+\mathcal{W}+\mathrm{d}\mathcal{G}_{\varepsilon}(\omega) \ri) \cap \mathscr{C}_0^{\infty}
\lf(\mathrm{d}\mathcal{G}_{\varepsilon}(1) \ri)$, where the latter is the set of
vectors with finite number of field's excitations \cite{falconi2015mpag}, but
it may be unbounded from below, if $0\in \sigma(\omega)$. It is however well-known that,
if for a.e. $ \xv \in \mathbb{R}^d$, $\lambda(\xv) \in \dom(\omega^{-1/2})$, that we assume in
\eqref{eq: lambda N}, then $H_{\varepsilon}$ is bounded from below by Kato-Rellich's
theorem. Nonetheless, it may still not have a ground state, if $0\in \sigma(\omega)$ or
if $\mathcal{W}$ is not regular enough. We refer to the list of works
\cite{arai2001rmp, betz2002rmp, derezinski2003ahp, pizzo2003ahp,
  georgescu2004cmp, moller2005ahp, hirokawa2006prims, gerard2011cmp,
  abdesselam2012cmp, hiroshima2019arxiv} and references therein for a detailed
discussion of the existence of ground states for the Nelson model. We simply
remark here that the ground state exists, if $0\notin \sigma(\omega)$ and $-\Delta+\mathcal{W}$
has compact resolvent (trapped particle system), or if $0\in \sigma(\omega) $ and $\lambda$ and
$\mathcal{W}$ satisfy suitable conditions, irrespective of compactness of the
resolvent of $-\Delta+\mathcal{W}$.

\begin{proof}[Proof of \cref{prop:7}]
  The upper and lower bounds in \eqref{eq: ul bounds} are well known (see,
  \emph{e.g.}, \cite{ginibre2006ahp,ammari2014jsp,correggi2017ahp}). The
  lower bound is a direct consequence of Kato-Rellich's inequality, while the
  upper bound is proved using coherent states for the field. We provide some
  details for the sake of completeness.
  
  Setting\footnote{Even if not stated explicitly, we use the notation
    $H_{\mathrm{free}}$ also in
    \cref{sec:polaron-model,sec:pauli-fierz-model}, with the same meaning.}
  \begin{equation}
    \hfree : = \mathcal{K}_0 \otimes 1 + 1 \otimes \mathrm{d}\mathcal{G}_{\varepsilon}(\omega)\;,
  \end{equation}
  we get, for all $\alpha>0$ and all $\Psi_{\varepsilon}\in \dom(\hfree)$,
  \begin{multline}
     \label{eq:7}
    \lf\| \sum_{j=1}^N \lf( a_{\varepsilon}\lf( \lambda(\xv_j) \ri) + a_{\varepsilon}^{\dagger}\lf(\lambda(\xv_j) \ri)  \ri) \Psi_{\varepsilon}  \ri\|_{\mathscr{H}_{\varepsilon}}^{} \leq 2N  \lf\| \omega^{-1/2} \lambda \ri\|_{L^{\infty}(\mathbb{R}^d;\mathfrak{h})}^{} \lf\| \mathrm{d}\mathcal{G}_{\varepsilon}(\omega)^{1/2} \Psi_{\varepsilon} \ri\|_{\mathscr{H}_{\varepsilon}}^{}	\\
    + \sqrt{\varepsilon} \lf\| \lambda  \ri\|_{L^{\infty}(\mathbb{R}^d;\mathfrak{h})}^{} \lf\| \Psi_{\varepsilon}  \ri\|_{\mathscr{H}_{\varepsilon}}^{}\\
    \leq \alpha \meanlrlr{\Psi_{\eps}}{\mathrm{d}\mathcal{G}_{\varepsilon}(\omega)}{\Psi_{\varepsilon}}_{\mathscr{H}_{\varepsilon}}^{}+ \lf[\frac{N^2}{\alpha} \lf\| \omega^{-1/2} \lambda \ri\|_{L^{\infty}(\mathbb{R}^d;\mathfrak{h})}^2 + \sqrt{\varepsilon} \lf\| \lambda \ri\|_{L^{\infty}(\mathbb{R}^d;\mathfrak{h})}^{} \ri] \lf\| \Psi_{\varepsilon} \ri\|_{\mathscr{H}_{\varepsilon}}^{}\; .   
  \end{multline}
  Therefore, choosing $ \alpha = 1 $, we deduce that (recall that $ \eps \in (0,1)$)
  \begin{equation}
    E_{\varepsilon}\geq - N^2 \lf\| \omega^{-1/2} \lambda  \ri\|_{L^{\infty}(\mathbb{R}^d;\mathfrak{h})}^2 - \lf\| \lambda  \ri\|_{L^{\infty}(\mathbb{R}^d; \mathfrak{h})}^{}\; .
  \end{equation}
  
  The upper bound is trivial to show by exploiting \eqref{eq:7} and
  evaluating the energy on any state such that $
  \meanlrlr{\Psi_{\eps}}{\mathrm{d}\mathcal{G}_{\varepsilon}(\omega)}{\Psi_{\varepsilon}}_{\mathscr{H}_{\varepsilon}}^{}
  \leq C < +\infty $,
  {\it e.g.}, a product state $ \Psi_{\eps} = \psi \otimes \Omega_{\eps} $, with $
  \psi \in \dom(\mathcal{K}_0) $ and $ \Omega_{\eps} $ the field
  vacuum. Note that the uniform boundedness of $ E_{\eps} $ from above could
  as well be deduced by the boundedness of $ E_0 $, which in turn follows
  from the evaluation of $
  \fqc $ on, {\it e.g.}, a configuration $(\psi,0)$, with
  $ \psi \in
  \dom(\mathcal{K}_0) $.
\end{proof}

We now prove \cref{lemma:4,lemma:5} for the Nelson model. We have however to
state first a technical result, which generalizes the convergence of
expectation values proven in \cite{correggi2019arxiv}: indeed, in
\cite[Prop. 2.6]{correggi2019arxiv} it is shown that\footnote{In \cite[Prop.\
  2.6]{correggi2019arxiv} the result is proved for $\omega=1$. The extension to a
  generic $\omega$ is done straightforwardly combining the proof of Prop.\ 2.6
  with the techniques introduced in \cite{falconi2017ccm}.}, if
\begin{equation*}
  \langle \Psi_{\varepsilon}  , \bigl(\mathrm{d}\mathcal{G}_{\varepsilon}(\omega)+1\bigr)^{\delta}\Psi_{\varepsilon} \rangle_{L^2(\R^{dN})\otimes \mathscr{K}_{\varepsilon}}\leq C\;, 
\end{equation*}
for any $\delta>\frac{1}{2}$, and $\Psi_{\varepsilon_n} \xrightarrow[n \to + \infty]{\mathrm{qc}}
\mathfrak{m}$, then, for all $ \mathcal{K} \in \mathscr{L}^{\infty}(L^2(\R^{dN}))$,
\begin{equation}
  \label{eq:6}
  \lim_{n \to + \infty} \braketr{\Psi_{\varepsilon_n}}{\mathrm{Op}_{\varepsilon_n}^{\mathrm{Wick}} \lf(\mathcal{V}_z  \ri) \mathcal{K} \Psi_{\varepsilon_n}}_{\mathscr{H}_{\varepsilon_n}} = \int_{\mathfrak{h}_{\omega}}^{} \mathrm{d}\mu_{\mathfrak{m}}(z) \: \tr_{L^2(\R^{dN})} \lf[\gamma_{\mathfrak{m}}(z) \mathcal{V}_z \mathcal{K} \ri]  \;,
\end{equation}
but our goal is to apply the above convergence to the identity, which is not
compact. We have then to approximate it with compact operators.

\begin{lemma}
  \label{lemma:9}
  \mbox{}	\\
  Let $\mathcal{K}_0$ have compact resolvent. If there exist $ C < +\infty $ and
  $\delta \geq 1$, such that, uniformly w.r.t $\varepsilon\in (0,1)$,
  \begin{equation}
    \lf| \meanlrlr{\Psi_{\eps}}{\lf(\mathcal{K}_0+ \mathrm{d}\mathcal{G}_{\varepsilon}(\omega)^{\delta} +1 \ri)}{\Psi_{\varepsilon}}_{\mathscr{H}_{\varepsilon}}  \ri| \leq C\;,
  \end{equation}
  and $\Psi_{\varepsilon_n} \xrightarrow[\eps_n \to 0]{\mathrm{qc}} \mathfrak{m}$, then, for
  all $\mathcal{B}\in \mathscr{B}(L^2(\R^{dN}))$ and any $ j = 1, \ldots, N $,
  \begin{equation}
    \lim_{n\to + \infty} \braketr{\Psi_{\varepsilon_n}}{\mathrm{Op}_{\varepsilon_n}^{\mathrm{Wick}}\lf(\mathcal{V}_z(\xv_j) \ri) \mathcal{B} \Psi_{\varepsilon_n}}_{\mathscr{H}_{\varepsilon_n}} = 		\int_{\mathfrak{h}_{\omega}}^{} \mathrm{d}\mu_{\mathfrak{m}}(z) \: \tr_{L^2(\R^{dN})} \lf[\gamma_{\mathfrak{m}}(z)  \mathcal{V}_z(\xv_j) \mathcal{B} \ri] \; .
  \end{equation}
\end{lemma}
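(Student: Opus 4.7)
The strategy is to reduce to the already-proven convergence \eqref{eq:6} (valid only for compact test operators) by approximating $\mathcal{B}$ via spectral truncation in $\mathcal{K}_0$. The key observation is that, since $\mathcal{K}_0$ has compact resolvent and is bounded below, say by some $-c_0\leq 0$, the spectral projection $P_R := \mathds{1}_{[-c_0,R]}(\mathcal{K}_0)$ is of finite rank for every $R>0$ and converges strongly to the identity as $R\to +\infty$. Consequently, for any $\mathcal{B}\in \mathscr{B}(L^2(\R^{dN}))$ the product $\mathcal{B} P_R$ is compact. The plan is thus to split
\begin{equation*}
    \braketr{\Psi_{\eps}}{\mathrm{Op}_{\eps}^{\mathrm{Wick}}(\mathcal{V}_z(\xv_j))\,\mathcal{B}\,\Psi_{\eps}}_{\mathscr{H}_{\eps}} = \braketr{\Psi_{\eps}}{\mathrm{Op}_{\eps}^{\mathrm{Wick}}(\mathcal{V}_z(\xv_j))\,\mathcal{B} P_R\,\Psi_{\eps}}_{\mathscr{H}_{\eps}} + r_{\eps}(R),
\end{equation*}
where $r_{\eps}(R)$ collects the contribution of $\mathcal{B}(1-P_R)$. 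The first summand converges in $n$ by \eqref{eq:6} applied with $\mathcal{K}=\mathcal{B} P_R$; the moment condition $\braketr{\Psi_{\eps}}{(\mathrm{d}\mathcal{G}_{\eps}(\omega)+1)^{\delta'}\Psi_{\eps}}\leq C$ with $\delta'>1/2$ required by that proposition is supplied by the present hypothesis via the commutative binomial inequality $(\mathrm{d}\mathcal{G}_{\eps}(\omega)+1)^{\delta}\leq 2^{\delta-1}(\mathrm{d}\mathcal{G}_{\eps}(\omega)^{\delta}+1)$, valid since $\delta \geq 1$.

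The remainder is controlled uniformly in $\eps$ by Cauchy--Schwarz:
\begin{equation*}
    |r_{\eps}(R)|\leq \lf\|\mathrm{Op}_{\eps}^{\mathrm{Wick}}(\mathcal{V}_z(\xv_j))\Psi_{\eps}\ri\|_{\mathscr{H}_{\eps}}\lf\|\mathcal{B}\ri\|\lf\|(1-P_R)\Psi_{\eps}\ri\|_{\mathscr{H}_{\eps}}.
\end{equation*}
The first factor is bounded uniformly in $\eps$ by the Nelson-model estimate \eqref{eq:7} applied to the single $j$-th summand, together with the uniform control of $\braketr{\Psi_{\eps}}{\mathrm{d}\mathcal{G}_{\eps}(\omega)}{\Psi_{\eps}}$ (itself a consequence of the hypothesis and Jensen's inequality, since $\delta\geq 1$ and $\|\Psi_\eps\|=1$). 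For the third factor, the operator inequality $(1-P_R)\leq (\mathcal{K}_0+c_0)/R$ combined with the uniform bound on $\braketr{\Psi_{\eps}}{\mathcal{K}_0}{\Psi_{\eps}}$ (also a consequence of the hypothesis and $\mathcal{K}_0\geq -c_0$) yields $\lf\|(1-P_R)\Psi_{\eps}\ri\|_{\mathscr{H}_\eps}=O(R^{-1/2})$ uniformly in $\eps$. Hence $\sup_{\eps\in(0,1)}|r_{\eps}(R)|\to 0$ as $R\to +\infty$.

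It remains to pass $R\to+\infty$ on the right-hand side coming from \eqref{eq:6}. By cyclicity of the trace, $\tr_{L^2(\R^{dN})}[\gamma_{\mm}(z)\mathcal{V}_z(\xv_j)\mathcal{B} P_R]=\tr_{L^2(\R^{dN})}[P_R\,\gamma_{\mm}(z)\mathcal{V}_z(\xv_j)\mathcal{B}]$; since $T_z := \gamma_{\mm}(z)\mathcal{V}_z(\xv_j)\mathcal{B}$ is trace class (as the product of a trace-class operator and two bounded ones) and $P_R\to 1$ strongly, the standard rank-one expansion argument gives $\lf\|P_R T_z - T_z\ri\|_1\to 0$ for every $z$, hence pointwise convergence of the integrand to $\tr_{L^2(\R^{dN})}[\gamma_{\mm}(z)\mathcal{V}_z(\xv_j)\mathcal{B}]$. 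Dominated convergence in $z$ is justified by the uniform-in-$R$ bound $|\tr_{L^2(\R^{dN})}[\gamma_{\mm}(z)\mathcal{V}_z(\xv_j)\mathcal{B} P_R]|\leq \lf\|\mathcal{V}_z(\xv_j)\ri\|\lf\|\mathcal{B}\ri\|\leq 2\lf\|\omega^{-1/2}\lambda\ri\|_{L^\infty(\R^d;\mathfrak{h})}\lf\|z\ri\|_{\mathfrak{h}_{\omega}}\lf\|\mathcal{B}\ri\|$, whose $\mu_{\mm}$-integrability follows from Cauchy--Schwarz and \cref{lemma:3}. Combining the three steps yields the claim.

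The main obstacle lies in securing the uniform-in-$\eps$ decay of $r_{\eps}(R)$: this requires simultaneously the compact-resolvent hypothesis on $\mathcal{K}_0$ (so that $P_R$ provides a genuine compact approximation controlling $(1-P_R)\Psi_{\eps}$) and the $\mathrm{d}\mathcal{G}_{\eps}(\omega)^{\delta}$-bound in the hypothesis (so that $\mathrm{Op}_{\eps}^{\mathrm{Wick}}(\mathcal{V}_z(\xv_j))\Psi_{\eps}$ stays bounded uniformly in $\eps$). Both are precisely the ingredients supplied by the hypothesis, so the reduction goes through.
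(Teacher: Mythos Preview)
Your proof is correct and follows essentially the same route as the paper: spectral truncation $P_R=\mathds{1}_{[-c_0,R]}(\mathcal{K}_0)$ to reduce to the compact-test case \eqref{eq:6}, a uniform-in-$\varepsilon$ remainder estimate, and dominated convergence on the integral side. Your remainder bound via a direct Cauchy--Schwarz split $\lVert\mathrm{Op}_{\varepsilon}^{\mathrm{Wick}}(\mathcal{V}_z)\Psi_\varepsilon\rVert\,\lVert\mathcal{B}\rVert\,\lVert(1-P_R)\Psi_\varepsilon\rVert$ is slightly cleaner than the paper's symmetric factorization through $(\mathrm{d}\mathcal{G}_\varepsilon(\omega)^{1/2}+1)^{\pm 1/2}$ and $(\mathcal{K}_0+c_0)^{\pm s/2}$, but the underlying idea is the same.
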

 
\begin{proof}
  Let us introduce compact approximate identities $ \lf\{ 1_m \ri\}_{m\in \mathbb{N}} \subset
  \mathscr{L}^{\infty}(L^2(\R^{dN}))$ as
  follows:
  \begin{equation*}
    1_m : = \one_{[-m,m]}(\mathcal{K}_0)\;,
  \end{equation*}
  where $ \one_{[-m,m]}: \mathbb{R}\to \{0,1\}$ is the characteristic function of the
  interval $[-m,m]$, so that the r.h.s. of the above expression is the usual
  spectral projector of $ \mathcal{K}_0 $ constructed via spectral
  theorem. For later convenience, let us also define $ \mathcal{B}_m
  :=\mathcal{B} 1_m\; $. Therefore, we
  have that 
  \begin{multline}
    \braketr{\Psi_{\varepsilon_n}}{\mathrm{Op}_{\varepsilon_n}^{\mathrm{Wick}}\lf(\mathcal{V}_z(\xv_j)\ri) \mathcal{B} \Psi_{\varepsilon_n}}_{\mathscr{H}_{\varepsilon_n}} =\braketr{\Psi_{\varepsilon_n}}{\mathrm{Op}_{\varepsilon_n}^{\mathrm{Wick}}\lf(\mathcal{V}_z(\xv_j)
      \ri) \mathcal{B}_m \Psi_{\varepsilon_n}}_{\mathscr{H}_{\varepsilon_n}} \\
    +\braketr{\Psi_{\varepsilon_n}}{\mathrm{Op}_{\varepsilon_n}^{\mathrm{Wick}}\lf(\mathcal{V}_z(\xv_j)\ri) \lf(\mathcal{B} - \mathcal{B}_m \ri) \Psi_{\varepsilon_n}}_{\mathscr{H}_{\varepsilon_n}} \; .
  \end{multline}
  The first term on the r.h.s. converges when $ n \to + \infty$, for any fixed $m\in
  \mathbb{N}$,
  since $\mathcal{B}_m \in \mathscr{L}^{\infty} (L^2(\R^{dN}))$ (see \eqref{eq:6}),
  {\it i.e.},
  \bdm
  \lim_{n\to +\infty}\braketr{\Psi_{\varepsilon_n}}{\mathrm{Op}_{\varepsilon_n}^{\mathrm{Wick}}\lf(\mathcal{V}_z(\xv_j)\ri) \mathcal{B}_m \Psi_{\varepsilon_n}}_{\mathscr{H}_{\varepsilon_n}} = \int_{\mathfrak{h}_{\omega}}^{}\mathrm{d}\mu_{\mathfrak{m}}(z) \: \tr_{L^2(\R^{dN})} \lf[\gamma_{\mathfrak{m}}(z)\mathcal{V}_z(\xv_j) \mathcal{B}_m \ri] \; .
  \edm
  By dominated convergence, we can then take the limit $m\to + \infty$, to obtain
  \begin{equation}
    \lim_{m \to + \infty} \lim_{n\to +\infty} \braketr{\Psi_{\varepsilon_n}}{\mathrm{Op}_{\varepsilon_n}^{\mathrm{Wick}}\lf(\mathcal{V}_z(\xv_j) \ri) \mathcal{B}_m \Psi_{\varepsilon_n}}_{\mathscr{H}_{\varepsilon_n}} = \int_{\mathfrak{h}_{\omega}}^{} \mathrm{d}\mu_{\mathfrak{m}}(z) \: \tr_{L^2(\R^{dN})} \lf[\gamma_{\mathfrak{m}}(z)  \mathcal{V}_z(\xv_j) \mathcal{B} \ri].
  \end{equation}
  It remains to prove that
  \beq
  \label{eq: lemma9 proof 1}
  \lim_{m\to + \infty} \sup_{\varepsilon\in (0,1)}\lf|\braketr{\Psi_{\varepsilon}}{\mathrm{Op}_{\varepsilon}^{\mathrm{Wick}}\lf(\mathcal{V}_z(\xv_j)\ri) \lf(\mathcal{B} - \mathcal{B}_m \ri) \Psi_{\varepsilon}}_{\mathscr{H}_{\varepsilon}} \ri|=0\;.
  \eeq
  For any $0<s\leq \frac{1}{2}$ and for any $ c_0 > \lf| \inf \sigma \lf(
  \mathcal{K}_0 \ri) \ri| $,
  \begin{multline*}
    \lf| \braketr{\Psi_{\varepsilon}}{\mathrm{Op}_{\varepsilon}^{\mathrm{Wick}}\lf(\mathcal{V}_z(\xv_j) \ri) \lf(\mathcal{B} - \mathcal{B}_m \ri) \Psi_{\varepsilon}}_{\mathscr{H}_{\varepsilon}} \ri|  
    \leq 2 \lf\| \lf(\mathcal{B}-\mathcal{B}_m \ri) \lf(\mathcal{K}_0+c_0 \ri)^{-\frac{s}{2}} \ri\|_{\mathscr{B}(L^2(\R^{dN}))}^{} 	\\
    \times \lf\| \lf( \mathrm{d}\mathcal{G}_{\varepsilon}(\omega)^{\frac{1}{2}}+1 \ri)^{-\frac{1}{2}} a_{\varepsilon}(\lambda(\xv_j)) \lf(\mathrm{d}\mathcal{G}_{\varepsilon}(\omega)^{\frac{1}{2}}+1 \ri)^{-\frac{1}{2}}  \ri\|_{\mathscr{B}(\mathscr{H}_{\varepsilon})}^{} \lf\| \lf( \mathrm{d}\mathcal{G}_{\varepsilon}(\omega)^{\frac{1}{2}}+1 \ri)^{\frac{1}{2}} \lf(\mathcal{K}_0 + c_0 \ri)^{\frac{s}{2}} \Psi_{\varepsilon} \ri\|_{\mathscr{H}_{\varepsilon}}^2  \\
    \leq C \lf\| \mathcal{B} \ri\|_{\mathscr{B}} \lf\| \lf(1-1_m \ri) \lf(\mathcal{K}_0+c_0 \ri)^{-\frac{s}{2}} \ri\|_{\mathscr{B}}^{} \lf\| 
    \omega^{-\frac{1}{2}}\lambda \ri\|_{L^{\infty}(\mathbb{R}^{d},\mathfrak{h})}^{} \meanlrlr{\Psi_{\varepsilon}}{ \mathcal{K}_0^{2s} + \mathrm{d}\mathcal{G}_{\varepsilon}(1) + 1}{\Psi_{\varepsilon}}_{\mathscr{H}_{\varepsilon}}       \\
    \leq C  \sup_{\eta \in \lf[ (-\infty, -m) \cup (m, + \infty)  \ri] \cap \sigma(\mathcal{K}_0)} \frac{1}{\lf( \eta + c_0 \ri)^{\frac{s}{2}}} \leq C m^{-\frac{s}{2}},
  \end{multline*}
  for $ m $ large enough, {\it e.g.}, larger than $ |\inf \sigma \lf(
  \mathcal{K}_0 \ri)| $. Therefore, since the
  above quantity vanishes as $ m
  \to + \infty $ uniformly w.r.t. $ \eps \in (0,1) $, we conclude
  that \eqref{eq: lemma9 proof 1}  holds true and the result follows.
\end{proof}

\begin{proof}[Proof of \cref{lemma:4}]
  The result follows by taking $ \mathcal{B} = 1 $ in \cref{lemma:9}. Again,  this makes crucial use of
  the fact that $\mathcal{K}_0=-\Delta+\mathcal{W}$ has compact resolvent, and  that $\Psi_{\varepsilon}$ is regular
  enough w.r.t. $\mathcal{K}_0$.
\end{proof}

\begin{proof}[Proof of \cref{lemma:5}]
  The proof of \cref{lemma:5} stems from a known result that allows to
  compare the expectation of the square of the free energy $ \hfree^2 $ with
  the expectation of the square of the full Hamiltonian $H_{\varepsilon}^2$. This is a
  consequence of Kato-Rellich's inequality: there exists $
  C>0 $
  (independent of $ \eps $), such that
  \begin{equation}
    \label{eq: kato-rellich}
    \meanlrlr{\Psi_{\varepsilon}}{\hfree^{2}}{\Psi_{\varepsilon}}_{\mathscr{H}_{\varepsilon}} \leq C \meanlrlr{\Psi_{\varepsilon}}{H_{\varepsilon}^2+1}{\Psi_{\varepsilon}}_{\mathscr{H}_{\varepsilon}}\; .
  \end{equation}
  The idea of the proof of this standard inequality goes as follows: from the
  triangular inequality, we get 
  \bdm
  \meanlrlr{\Psi_{\varepsilon}}{\hfree^{2}}{\Psi_{\varepsilon}}_{\mathscr{H}_{\varepsilon}} \leq 2 \meanlrlr{\Psi_{\varepsilon}}{H_{\eps}^{2}}{\Psi_{\varepsilon}}_{\mathscr{H}_{\varepsilon}} + 2\meanlrlr{\Psi_{\varepsilon}}{\lf(H_{\eps} - \hfree \ri)^{2}}{\Psi_{\varepsilon}}_{\mathscr{H}_{\varepsilon}}\;.  
  \edm 
  Now, using inequality \eqref{eq:7}, we get that for any
  $\alpha<1/\sqrt{2}$, 
  \bdm 
  \lf( 1 - 2 \alpha^2 \ri) \meanlrlr{\Psi_{\varepsilon}}{\hfree^{2}}{\Psi_{\varepsilon}}_{\mathscr{H}_{\varepsilon}} \leq 2 \meanlrlr{\Psi_{\varepsilon}}{H_{\eps}^{2}}{\Psi_{\varepsilon}}_{\mathscr{H}_{\varepsilon}} + C_\alpha \lf\| \Psi_{\varepsilon} \ri\|_{L^2(\mathscr{H}_{\varepsilon}}^2\; , 
  \edm
  with $C_\alpha $ independent of $\varepsilon$. The result then easily follows.

  It remains to prove that there exists a minimizing sequence
  $ \lf\{ \Psi_{\varepsilon,\delta} \ri\}_{\varepsilon,\delta\in (0,1)} \subset \dom(H_{\varepsilon})$ for $H_{\varepsilon}$, such that
  \begin{equation}
    \meanlrlr{\Psi_{\varepsilon}}{H_{\varepsilon}^2}{\Psi_{\varepsilon}}_{\mathscr{H}_{\varepsilon}} \leq \max \lf\{ E_{\varepsilon}^2,(E_{\varepsilon}+\delta)^2 \ri\} \leq C ,
  \end{equation}
  with the last inequality given by \cref{prop:7}. Indeed, combining the
  above estimate with \eqref{eq: kato-rellich}, we immediately deduce that
  \eqref{eq: apriori 4} holds true. Let us denote by $ \one_{(a,b)}(H_{\varepsilon})$
  the spectral projections of $H_{\varepsilon}$, and by $\mathscr{P}_{(a,b)}:=
  \one_{(\alpha,\beta)}(H_{\varepsilon}) \mathscr{H}_{\varepsilon}$ the associated
  spectral subspaces. Let now choose, for any $\delta>0$,
  \begin{equation*}
    \Psi_{\varepsilon,\delta}\in \lf\{ \Psi \in \mathscr{P}_{(E_{\varepsilon}-\delta,E_{\varepsilon}+\delta)} \: \big| \: \lf\| \Psi \ri\|_{\mathscr{H}_{\eps}} = 1 \ri\}.
  \end{equation*}
  Each spectral subspace above is not empty by definition of $ E_{\varepsilon}= \inf
  \sigma(H_{\varepsilon})$.
  Therefore, on one hand,
  \begin{equation*}
    \meanlrlr{\Psi_{\varepsilon,\delta}}{H_{\varepsilon}}{\Psi_{\varepsilon,\delta}}_{\mathscr{H}_{\varepsilon}} \leq E_{\varepsilon}+\delta\; ,
  \end{equation*}
  and, on the other,
  \begin{equation*}
    \lf\| H_{\varepsilon} \Psi_{\varepsilon,\delta}  \ri\|_{\mathscr{H}_{\varepsilon}}^2 \leq \max \lf\{ E_{\varepsilon}^2, (E_{\varepsilon}+\delta)^2 \ri\}\; .
  \end{equation*}
\end{proof}

It remains only to prove \cref{lemma:8}, used in the non-trapped case.

\begin{proof}[Proof of \cref{lemma:8}]
  To prove the result, it is sufficient to show that, if $\Psi_{\varepsilon}$ is such that
  \begin{equation*}
    \lf| \meanlrlr{\Psi_{\varepsilon}}{\lf(\mathrm{d}\mathcal{G}_{\varepsilon}(\omega)+1 \ri)^{\delta}}{\Psi_{\varepsilon}}_{\mathscr{H}_{\varepsilon}}  \ri|_{}^{}\leq C \;,  			\end{equation*}
  for some $ \delta \geq 1/2 $ and some finite constant $ C $, and, if $\Psi_{\varepsilon_n}
  \xrightarrow[n \to + \infty]{\mathrm{gqc}} \mathfrak{n}$, then \eqref{eq: g
    potential convergence}
  holds true, {\it i.e.}, for all $\mathcal{B}\in
  \mathscr{B}(L^2(\R^{dN}))$,
  \begin{equation*}
    \lim_{n\to + \infty} \braketr{\Psi_{\varepsilon_n}}{\mathrm{Op}_{\varepsilon_n}^{\mathrm{Wick}} \lf( \mathcal{V}_z) \ri) \mathcal{B} \Psi_{\varepsilon_n}}_{\mathscr{H}_{\varepsilon}} = \int_{\mathfrak{h}_{\omega}}^{}  \mathrm{d}\mathfrak{n} \lf[\mathcal{V}_z \mathcal{B} \ri]\; .
  \end{equation*}
  Such a result is however a special case of \cite[Prop.
  2.6]{correggi2019arxiv}, if in that statement Wigner measures are
  substituted by generalized Wigner measures, the test with compact operators
  of the small system is replaced with the test with bounded operators, and
  $\mathrm{d}\mathcal{G}_{\varepsilon}(1)$ is replaced by
  $\mathrm{d}\mathcal{G}_{\varepsilon}(\omega)$. The proof given there is generalized to
  this setting straightforwardly, recalling the properties of generalized
  Wigner measures outlined in \cref{sec:gener-state-valu}. There is only one
  thing that is worth to remark explicitly: the integration of
  operator-valued functions w.r.t. generalized Wigner measures makes sense
  only if $ \mathrm{Ran} (z\mapsto \mathcal{V}_z) \subset \mathscr{B}(L^2(\R^{dN}))$ is
  separable in the norm topology of $\mathscr{B}(L^2(\R^{dN}))$. Let us check
  explicitly that $\mathrm{Ran} (z\mapsto \mathcal{V}_z)$ is indeed separable:
  since $\mathfrak{h}_{\omega}$ is separable, let us denote by $\mathfrak{k}\subset
  \mathfrak{h}_{\omega}$ a countable dense subset and
 denote by
  \begin{equation*}
    \mathcal{V}_\mathfrak{k} : = \lf\{ \mathcal{V}_{\zeta}(\xv) \in \mathscr{B}(L^2 (\mathds{R}^{dN}   )), \zeta \in \mathfrak{k} \ri\} 
  \end{equation*}
  the image of $\mathfrak{k}$ by means of $z\mapsto\mathcal{V}_z $. Now, for any
  $z\in \mathfrak{h}_{\omega}$, $\zeta\in \mathfrak{k}$, we have that
  \begin{equation*}
    \lf\| \mathcal{V}_z - \mathcal{V}_\zeta \ri\|_{\mathscr{B}(L^2(\R^{dN}))}^{} \leq 2 \lf\| \omega^{-\frac{1}{2}}\lambda  \ri\|_{L^{\infty}(\mathbb{R}^d;\mathfrak{h})}^{} \lf\| z-\zeta  \ri\|_{\mathfrak{h}_{\omega}}^{}\;,
  \end{equation*}
  which implies that $\mathcal{V}_\mathfrak{k}$ is dense in $\mathrm {Ran}(z\mapsto
  \mathcal{V}_z)$ w.r.t.\ the $\mathscr{B}(L^2(\R^{dN}))$-norm
  topology.
\end{proof}

\subsection{The polaron model}
\label{sec:polaron-model}

The polaron model, introduced in \cite{frohlich1937prslA}, describes $N$
electrons (spinless for simplicity) subjected to the vibrational (phonon)
field of a lattice. This model is similar to Nelson's, however the coupling
is slightly more singular. The one-excitation space is $ \mathfrak{h} =
L^2(\mathbb{R}^d) $, while the
form factor is given by \eqref{eq: lambda polaron}: the quasi-classical energy has the same form as 
in the Nelson model, as well as the effective potential $ \mathcal{V}_z $ (see \eqref{eq: potential N}),
although now 
\bdm 
\lambda(\xv; \kv) = \sqrt{\alpha} \frac{e^{-i \kv\cdot \xv}}{\lvert \kv \rvert_{}^{\frac{d-1}{2}}}, \qquad \omega = 1\,, 
\edm
where $\alpha>0$ is a constant measuring the coupling's strength. The assumptions
on $ \mathcal{K}_0 = -\Delta + \mathcal{W} $ are the same as in the Nelson
model. Let us remark that in this case since $\omega=1$,
$\mathfrak{h}_{\omega}=\mathfrak{h}$.

The key difference with the aforementioned Nelson model is thus that $ \exists z \in
\mathfrak{h} $
such that
\begin{equation*}
  \mathcal{V}_z(\: \cdot \:) \notin L^{\infty}(\mathbb{R}^d)\; ,
\end{equation*}
due to the fact that $ \lambda \notin L^{\infty}(\R^d; \mathfrak{h}) $. However, it is
possible to write $\mathcal{V}_z$ as the sum of an $L^{\infty}$ function and the
commutator between an $L^{\infty}$ vector function and the momentum operator
$-i\nabla_{\xv} $:
\begin{equation}
  \label{eq: potential splitting}
  \mathcal{V}_z(\xv)= \sqrt{\alpha} \lf( {\mathcal{V}}_{<,z}(\xv)+ \lf[-i\nabla_{\xv}\,,\,\bm{\mathcal{V}}_{>,z}(\xv) \ri] \ri)\;,
\end{equation}
where 
\beqn
{\mathcal{V}}_{<,z}(\xv) &=& 2\Re \mathscr{F}^{-1} \lf[ \lambda_< z \ri](\xv)\;,	\qquad		\lambda_{<}(\kv): = \one_{\lf| \kv   \ri|_{}^{}\leq \varrho} \, \lf| \kv  \ri|_{}^{-\frac{d-1}{2}}, \nonumber\\
\bm{\mathcal{V}}_{>,z}(\xv) &=& 2\Re \mathscr{F}^{-1} \lf[ \bm{\lambda}_> z\ri](\xv)\;, \qquad \bm{\lambda}_>(\kv) : = \one_{\lf| \kv \ri|_{}^{} > \varrho}\, \lf| \kv\ri|_{}^{-\frac{d+1}{2}} \hat{\kv}, \nonumber 
\eeqn
where $\hat{\kv} :=\frac{\kv}{\lvert \kv \rvert_{}^{}} $ and $ \mathscr{F} $ stands for
the Fourier transform in $ \R^d $. Note that, for any $ \varrho > 0 $, $ \lambda_< \in
\mathfrak{h} $ and $ \bm{\lambda}_> \in \mathfrak{h} \otimes \mathbb{C}^d $. 
By KMLN theorem, it then follows that $\mathcal{H}_z$ is self-adjoint and
bounded from below for all $z\in \mathfrak{h}$, with $z$-independent form domain
$ \mathscr{Q}(\mathcal{H}_z) = \mathscr{Q}(\mathcal{K}_0)$. Let us remark that, choosing
$\rho$ suitably large (independent of $ z $) in the above decomposition, it is
possible to make the operator $ \HH_z $ bounded from below uniformly
w.r.t. $z\in \mathfrak{h}$ (see, \emph{e.g.},
\cite[Prop. 3.21]{correggi2017ahp}).

The quasi-classical Wick quantization of $\mathcal{H}_z $ formally yields the
same expression as in the Nelson model (with $ \omega = 1 $ and $ \lambda $ as
above). Such a formal operator gives rise to a closed and bounded from below
quadratic form, via the decomposition \eqref{eq: potential splitting} (this
can also be proved by KLMN theorem, choosing $\varrho$ sufficiently large (see,
\emph{e.g.}, \cite{lieb1997cmp,frank2014lmp})). We still denote the
corresponding self-adjoint operator by $ H_{\eps} $ with a little abuse of
notation. The polaron Hamiltonian $H_{\varepsilon}$ has a ground state, if $-\Delta+
\mathcal{W}$ has
compact resolvent by an application of the HVZ theorem
analogous to the one for the Nelson model (see the aforementioned result in
\cite{derezinski1999rmp}). It is known that ground states exist also for
non-confining but suitably regular external potentials $\mathcal{W}$.

\begin{proof}[Proof of \cref{prop:7}]
  These lower and upper bounds are well-known (see,
  \emph{e.g.},\cite{lieb1997cmp,correggi2017ahp}). The lower bound is a
  direct consequence of KLMN theorem, while the upper bound is proved using
  coherent states for the field in a fashion that is completely analogous to
  the one discussed for the Nelson model. Thus here we focus on the lower
  bound.
  
  Let us introduce the unperturbed operator $ \hfree = \mathcal{K}_0 \otimes 1 + 1
  \otimes \mathrm{d}\mathcal{G}_{\varepsilon}(1) $, as in the
  Nelson model. Then, for any $\Psi_{\varepsilon} \in \mathscr{Q}(\hfree)$, for all $\varrho>0$, and for all $\beta>0$, we
  can bound the interaction term in the polaron quadratic form via 
  \begin{multline}
        \lf| \meanlrlr{\Psi_{\varepsilon}}{\mathrm{Op}_{\varepsilon}^{\mathrm{Wick}} \lf(\mathcal{V}_{<,z}(\xv) \ri) - i\nabla_{\xv} \cdot \mathrm{Op}_{\varepsilon}^{\mathrm{Wick}} \lf(\bm{\mathcal{V}}_{>,z}(\xv) \ri) + i\mathrm{Op}_{\varepsilon}^{\mathrm{Wick}} \lf( \bm{\mathcal{V}}_{>,z}(\xv) \ri) \cdot \nabla_{\xv}}{\Psi_{\varepsilon} }_{\mathscr{H}_{\varepsilon}} \ri| \\
    \leq 2 \lf\| \lambda_< \ri\|_{\mathfrak{h}}^{} \meanlrlr{\Psi_{\varepsilon}}{\hfree^{1/2}}{\Psi_{\varepsilon}}_{\mathscr{H}_{\varepsilon}} + 4 \lf\| \bm{\lambda}_> \ri\|_{\mathfrak{h}}^{}\meanlrlr{\Psi_{\varepsilon}}{\hfree}{\Psi_{\varepsilon}}_{\mathscr{H}_{\varepsilon}} \\
    \leq \tx\frac{1}{\beta} \lf\| \lambda_< \ri\|^2_{\mathfrak{h}} \lf\| \Psi_{\varepsilon}\ri\|^2_{\mathscr{H}_{\varepsilon}} + \lf( \beta + 4 \lf\| \bm{\lambda}_>\ri\|_{\mathfrak{h}}^{} \ri)\meanlrlr{\Psi_{\varepsilon}}{\hfree}{\Psi_{\varepsilon}}_{\mathscr{H}_{\varepsilon}}.
  \end{multline}
  Obviously, the norms of $ \lambda_< $ and $ \bm{\lambda}_> $ depend on $ \varrho $. However,
  since the norm of $ \bm{\lambda}_> $ diverges as $ \varrho \to 0 $ and vanishes as $ \varrho \to
  + \infty $, we can always choose $\varrho= \varrho(\beta)$, such that
  \beq
  4 \lf\| \bm{\lambda}_> \ri\|_{\mathfrak{h}}^{} = \beta\;.
  \eeq
  Hence, we can bound
  \bdm
  \lf|\meanlrlr{\Psi_{\eps}}{H_I}{\Psi_{\varepsilon}}_{\mathscr{H}_{\eps}} \ri| \leq \sqrt{\alpha} N \lf[2 \beta \meanlrlr{\Psi_{\varepsilon}}{\hfree}{\Psi_{\varepsilon}}_{\mathscr{H}_{\varepsilon}} + \tx\frac{1}{\beta}\lf\| \lambda_< \ri\|^2_{\mathfrak{h}} \lf\| \Psi_{\varepsilon} \ri\|^2_{\mathscr{H}_{\varepsilon}}\ri],
  \edm
  so that, taking $ \beta = (2 \sqrt{\alpha} N)^{-1} $, we conclude that
  \begin{equation}
    E_{\varepsilon} \geq - 2 \alpha  N^2 \lf\| \lambda_< \ri\|^2_{\mathfrak{h}},
  \end{equation}
  where the last norm is evaluated at $ \varrho((2 \sqrt{\alpha} N)^{-1}) $.
\end{proof}

Let us now prove \cref{lemma:4,lemma:5}. The assumption in the former takes
the following simplified form for the polaron model: assuming that there
exists a finite constant $C $, such that \beq
\label{eq: assumption lemma4}
\lf| \meanlrlr{\Psi_{\eps}}{\lf(\mathcal{K}_0+ \mathrm{d}\mathcal{G}_{\varepsilon}(1)^2 +
  1 \ri)}{\Psi_{\varepsilon}}_{\mathscr{H}_{\varepsilon}} \ri| \leq C\;, \eeq then the convergence
\eqref{eq: potential convergence} holds true for any limit point in $
\mathscr{W}\lf(\Psi_{\eps}, \eps \in (0,1) \ri) $.

\begin{proof}[Proof of \cref{lemma:4}]
  Using again the splitting \eqref{eq: potential splitting}, we immediately
  see that the term involving the quantization of $ \mathcal{V}_{<,z} $
  converges by \cref{lemma:9}. Let us consider then the other
  term. Analogously to the proof of \cref{lemma:9}, we define compact
  approximate identities $ \lf\{ 1_m \ri\}_{m\in \mathbb{N}} \subset
  \mathscr{L}^{\infty}(L^2(\R^{dN}))$ as $ {1}_m := \one_{[-m,m]}(\mathcal{K}_0)
  $.
  
  We can now rewrite explicitly the term involving the quantization of $
  \bm{\mathcal{V}}_{>,z} $,
  by introducing $ \bm{\xi} \in L^{\infty}(\R^d;
  \mathfrak{h}) $ given by
  \beq
  \label{eq: xi}
  \bm{\xi}(\xv; \kv) : = \bm{\lambda}_> e^{- i \kv \cdot \xv},
  \eeq
  as
  \begin{multline}
    \sqrt{\alpha} \sum_{j=1}^N \meanlrlr{\Psi_{\varepsilon_n}}{\lf[-i\nabla_j\,,\,\mathrm{Op}_{\varepsilon_n}^{\mathrm{(Wick)}} \lf( \bm{\mathcal{V}}_{>}(\xv_j) \ri) \ri]}{\Psi_{\varepsilon_n}}_{\mathscr{H}_{\varepsilon_n}} \\
    = 2\sqrt{\alpha} \sum_{j=1}^N \Re \braketr{-i\nabla_{j} \Psi_{\varepsilon_n}}{\lf[ a^{\dagger}_{\varepsilon_n}\lf(\bm{\xi}(\xv_j) \ri) + a_{\varepsilon_n} \lf(\bm{\xi}(\xv_j) \ri) \ri]\Psi_{\varepsilon_n}}_{\mathscr{H}_{\varepsilon_n}}\;.
  \end{multline}
  In order to prove its convergence, we estimate
  \begin{multline}
    \lf| \braketr{-i\nabla_{j} \Psi_{\varepsilon_n}}{\lf[ a^{\dagger}_{\varepsilon_n}\lf( \bm{\xi}(\xv_j) \ri) + a_{\varepsilon_n} \lf(\bm{\xi}(\xv_j) \ri) \ri] \Psi_{\varepsilon_n}}_{\mathscr{H}_{\varepsilon_n}} \ri|	\\
    \leq \lf| \braketr{-i\nabla_{j} \Psi_{\varepsilon_n}}{\lf[ a^{\dagger}_{\varepsilon_n}\lf( \bm{\xi}(\xv_j) \ri)+ a_{\varepsilon_n} \lf(\bm{\xi}(\xv_j) \ri) \ri] 1_m \Psi_{\varepsilon_n}}_{\mathscr{H}_{\varepsilon_n}}\ri|\\
    + \lf| \braketr{-i\nabla_{j} \Psi_{\varepsilon_n}}{\lf[ a^{\dagger}_{\varepsilon_n}\lf( \bm{\xi}(\xv_j) \ri)+ a_{\varepsilon_n} \lf(\bm{\xi}(\xv_j) \ri) \ri] \lf(1 - 1_m\ri)\Psi_{\varepsilon_n}}_{\mathscr{H}_{\varepsilon_n}} \ri|
  \end{multline}
      
  The first term on the r.h.s.\ converges, when $n\to +\infty$ and $m\in \mathbb{N}$ is fixed,
  thanks to \cite[Prop.\ 7.1]{correggi2019arxiv}; then, a dominated
  convergence argument allows to take the limit $m\to + \infty$, yielding the sought
  result. It remains therefore to prove that the second term on the
  r.h.s. converges to zero as $m\to + \infty$, uniformly w.r.t. $\varepsilon\in (0,1)$. This is
  done as follows:
  \begin{multline}
        \lf| \braketr{-i\nabla_{j} \Psi_{\varepsilon_n}}{\lf[ a^{\dagger}_{\varepsilon_n}\lf( \bm{\xi}(\xv_j) \ri) + a_{\varepsilon_n} \lf(\bm{\xi}(\xv_j) \ri) \ri] \lf(1 - 1_m\ri) \Psi_{\varepsilon_n}}_{\mathscr{H}_{\varepsilon_n}} \ri| \\
    \leq \lf\| \nabla_{j}\Psi_{\varepsilon}  \ri\|_{\mathscr{H}_{\varepsilon}}^{} \lf\| \lf[ a^{\dagger}_{\varepsilon_n}\lf( \bm{\xi}(\xv_j) \ri) + a_{\varepsilon_n} \lf(\bm{\xi}(\xv_j) \ri) \ri] \lf(1 - 1_m\ri) \Psi_{\varepsilon}  \ri\|_{\mathscr{H}_{\varepsilon}}\\
    \leq 2 \lf(\varepsilon + \lf\| \bm{\xi} \ri\|_{L^{\infty}(\mathbb{R}^d;\mathfrak{h})}^{} \ri)\lf\|\nabla_{j}\Psi_{\varepsilon} \ri\|_{\mathscr{H}_{\varepsilon}}^{} \lf\|\lf(\mathrm{d}\mathcal{G}_{\varepsilon}(1)+1 \ri)^{1/2} \lf(1 - 1_m\ri) \Psi_{\varepsilon}\ri\|_{\mathscr{H}_{\varepsilon}} \;.  
  \end{multline}
  Thus, for all $\beta > 0$, $\varepsilon\in (0,1)$ and $s>0$ and for any $ c_0 >
  |\inf\sigma(\mathcal{K}_0) | $: 
  \begin{multline}
    \lf| \braketr{-i\nabla_{j} \Psi_{\varepsilon_n}}{\lf[ a^{\dagger}_{\varepsilon_n}\lf( \bm{\xi}(\xv_j) \ri) + a_{\varepsilon_n} \lf(\bm{\xi}(\xv_j) \ri) \ri] \lf(1 - 1_m\ri) \Psi_{\varepsilon_n}}_{\mathscr{H}_{\varepsilon_n}} \ri|	\leq  \lf(1 + \lf\| \bm{\xi} \ri\|_{L^{\infty}}^{} \ri) \lf[ \beta \lf\| \mathcal{K}_0^{1/2} \Psi_{\varepsilon}  \ri\|_{\mathscr{H}_{\varepsilon}}^2 \ri.	\\
    \lf. + \tfrac{1}{\beta} \lf\| \lf({1}-{1}_m\ri) \lf(\mathcal{K}_0+c_0 \ri)^{-\frac{s}{2}} \ri\|_{\mathscr{B}(L^2(\R^{dN}))}^2 \lf\| \lf(\mathrm{d}\mathcal{G}_{\varepsilon}(1)+1 \ri)^{1/2} \lf(\mathcal{K}_0 + c_0 \ri)^{\frac{s}{2}} \Psi_{\varepsilon} \ri\|_{\mathscr{H}_{\varepsilon}}^2 \ri] \\
    \leq 2 \lf(1 + \lf\| \bm{\xi} \ri\|_{L^{\infty}}^{} \ri) \lf(\beta +\tfrac{1}{\beta} \lf\|\lf({1}-{1}_m\ri) \lf(\mathcal{K}_0+c_0 \ri)^{-\frac{s}{2}}\ri\|_{\mathscr{B}}^2 \ri) \meanlrlr{\Psi_{\varepsilon}}{\mathcal{K}_0+\mathcal{K}_0^{2s}+\mathrm{d}\mathcal{G}_{\varepsilon}(1)^2+1}{\Psi_{\varepsilon}}_{\mathscr{H}_{\varepsilon}}\;    .  
  \end{multline}
  Hence, using \eqref{eq: assumption lemma4}, for any $s\leq \frac{1}{2}$, we get 
  \beq
  \label{eq: proof lemma4 1}
  \lf| \braketr{-i\nabla_{j} \Psi_{\varepsilon_n}}{\lf[ a^{\dagger}_{\varepsilon_n}\lf( \bm{\xi}(\xv_j) \ri) +a_{\varepsilon_n} \lf(\bm{\xi}(\xv_j) \ri) \ri] \lf(1 - 1_m\ri)\Psi_{\varepsilon_n}}_{\mathscr{H}_{\varepsilon_n}} \ri| \leq C \beta_m\;,
  \eeq
  where we have chosen
  \bdm
  \beta = \beta_m : = \lf\| \lf({1}-{1}_m\ri) \lf(\mathcal{K}_0+c_0\ri)^{-\frac{s}{2}} \ri\|_{\mathscr{B}} = \sup_{\eta \in \lf[ (-\infty, -m) \cup (m, +\infty) \ri] \cap \sigma(\mathcal{K}_0)} \frac{1}{\lf( \eta + c_0 \ri)^{\frac{s}{2}}}\xrightarrow[m \to + \infty]{} 0.
  \edm
  Since the r.h.s. of \eqref{eq: proof lemma4 1} is independent of $\varepsilon$ and
  converges to zero as $m\to+ \infty$, the result is proven.
\end{proof}

\begin{proof}[Proof of \cref{lemma:5}]
  The proof is analogous to the one for the Nelson model. The expectation of
  the number operator squared is bounded via the \emph{pull-through formula}
  by means of the expectation of $H_{\varepsilon}^2$. As discussed in
  \cite{correggi2019arxiv}, the pull-through formula was originally proved
  for the renormalized Nelson Hamiltonian with a bound that is $\varepsilon$-dependent
  in \cite{ammari2000mpag}; the uniformity of such bound with respect to $\varepsilon\in
  (0,1)$ has been proved in \cite{ammari2017sima}. Since the renormalized
  Nelson model ``contains'' all type of terms appearing in the polaron model,
  the proof of the formula extends to the polaron model immediately (see
  \cite{olivieri2020sap} for additional details).

  The pull-through formula reads as follows: there exists a finite constant
  $C $ (independent of $ \eps $), such that
  \begin{equation}
    \label{eq: pull-through}
    \meanlrlr{\Psi_{\varepsilon}}{\mathrm{d}\mathcal{G}_{\varepsilon}(1)^{2}}{\Psi_{\varepsilon}}_{\mathscr{H}_{\varepsilon}} \leq C \meanlrlr{\Psi_{\varepsilon}}{(H_{\varepsilon} +1)^2}{\Psi_{\varepsilon}}_{\mathscr{H}_{\varepsilon}}\; .
  \end{equation}

  The expectation of $ \hfree $ is bounded by means of the expectation of
  $H_{\varepsilon}$, using the KMLN inequality, already discussed in the proof of
  \cref{prop:7}, in the very same way we used Kato-Rellich inequality for the
  Nelson model. The fact that there exists a minimizing sequence such that
  the expectation of $H_{\varepsilon}^2$ is bounded uniformly w.r.t. $\varepsilon\in (0,1)$ is also
  discussed in the proof for the Nelson model and it does not depend on the
  model at hand. We omit further details for the sake of brevity.
\end{proof}

It remains only to prove \cref{lemma:8} for non-trapping potentials.

\begin{proof}[Proof of \cref{lemma:8}]
  The proof is done using the following fact: if $\Psi_{\varepsilon}$ is such that there
  exists $\delta \geq1$ and a finite constant $C $, such that
  \begin{equation}
    \lf| \meanlrlr{\Psi_{\eps}}{\lf(\mathcal{K}_0+ \mathrm{d}\mathcal{G}_{\varepsilon}(1)^\delta + 1 \ri)}{\Psi_{\varepsilon}}_{\mathscr{H}_{\varepsilon}}  \ri| \leq C\;,
  \end{equation}
  then, if $\mathfrak{n}\in \mathscr{GW} \lf(\Psi_{\varepsilon},\varepsilon\in (0,1) \ri)$ and $ \Psi_{\varepsilon_n}
  \xrightarrow[n \to + \infty]{\mathrm{gqc}} \mathfrak{n}$, one has that \eqref{eq:
    g potential convergence}
  holds true.
  	
  Such a result is proved by a combination of \cite[Props.\ 2.6 \&
  7.1]{correggi2019arxiv}, if in these propositions Wigner measures are
  substituted by generalized Wigner measures and the test with compact
  operators of the small system is substituted by the test with the identity
  operator. The proof given there is generalized to this setting
  straightforwardly, recalling the properties of generalized Wigner measures
  outlined in \cref{sec:gener-state-valu}.
	
  As in the proof for the Nelson model, let us check explicitly that
  $\mathrm{Ran}(z\mapsto \mathcal{V}_z)$ is separable in the norm operator
  topology\footnote{More precisely, we prove that $\mathrm{Ran} (z\mapsto
    (\mathcal{K}_0+1)^{-\frac{1}{2}}
    \mathcal{V}_z(\mathcal{K}_0+1)^{-\frac{1}{2}}) $ has separable
    range. This is sufficient to prove that $\mathcal{V}_{( \, \cdot \, )}$ is
    integrable w.r.t. $\mathfrak{n}$, since the latter is in the domain of
    $\mathcal{K}_0+1$.}. By using the decomposition \eqref{eq: potential
    splitting}, we see that the part containing
  $ \mathcal{V}_{<,z} $ has separable range, since it is equivalent to the one appearing in the
  Nelson model. Let us focus then on the remaining one containing the
  expectation of the operator $ \lf[-i\nabla_x\,,\,\bm{\mathcal{V}}_{>,z} \ri]$. Such an operator
  is not bounded. Nonetheless, it is $\mathfrak{n}_{\mathcal{T}}$-integrable
  with $\mathcal{T}=\mathcal{K}_0 +1$ by \cref{lemma:7}, provided that
  \begin{equation}
    \label{eq: lemma8 proof map}
    \mathfrak{h}\ni z\mapsto \sum_{j=1}^N \mathcal{T}^{-\frac{1}{2}} \lf[-i\nabla_j\,,\,\bm{\mathcal{V}}_{>,z}(\xv_j) \ri] \mathcal{T}^{-\frac{1}{2}}\in \mathscr{B}(L^2(\R^{dN}))
  \end{equation}
  has separable range. Since $\mathfrak{h}$ is separable, let us denote by
  $\mathfrak{k}\subset \mathfrak{h}$ a countable dense subset and denote by
  \begin{equation*}
    \mathcal{T}^{-\frac{1}{2}} \widetilde{\mathcal{V}}_{\mathfrak{k}}\mathcal{T}^{-\frac{1}{2}} : = \lf\{ \tx\sum_j \mathcal{T}^{-\frac{1}{2}} \lf[-i\nabla_{j}\,,\,\bm{\mathcal{V}}_{>,\zeta(\xv_j)} \ri] \mathcal{T}^{-\frac{1}{2}} \in \mathscr{B}(L^2 (\mathds{R}^{dN}   )), \zeta\in \mathfrak{k} \ri\} 
  \end{equation*}
  the image of $\mathfrak{k}$ through $\mathcal{T}^{-\frac{1}{2}} \sum_j
  \lf[-i\nabla_{j}\,,\,\bm{\mathcal{V}}_{>,\, \cdot \,}(\xv_j) \ri]
  \mathcal{T}^{-\frac{1}{2}}$. Now, for any $z\in \mathfrak{h}$, $\zeta\in
  \mathfrak{k}$ and
  $ j = 1, \ldots, N $, we have that (recall \eqref{eq: xi})
  \begin{equation*}
    \lf\|  \mathcal{T}^{-\frac{1}{2}} \lf[-i\nabla_{j}\,,\,\bm{\mathcal{V}}_{>,z}(\xv_j) \ri] \mathcal{T}^{-\frac{1}{2}} - \mathcal{T}^{-\frac{1}{2}} \lf[-i\nabla_{j}\,,\,\bm{\mathcal{V}}_{>,\zeta}(\xv_j) \ri]  \mathcal{T}^{-\frac{1}{2}}  \ri\|_{\mathscr{B}(L^2(\R^{dN}))}^{} \leq 4 \lf\| \bm{\xi} \ri\|_{L^{\infty}}^{} \lf\|  z-\zeta  \ri\|_{\mathfrak{h}}^{}\; ,
  \end{equation*}
  which implies that $\mathcal{T}^{-\frac{1}{2}}
  \widetilde{\mathcal{V}}_{\mathfrak{k}} \mathcal{T}^{-\frac{1}{2}}$ is dense
  in the image of the map \eqref{eq: lemma8 proof map} w.r.t. the norm
  topology in $\mathscr{B}(L^2(\R^{dN}))$.
\end{proof}

\subsection{The Pauli-Fierz model}
\label{sec:pauli-fierz-model}

The Pauli-Fierz model describes $N$ spinless charges (with an extended and
sufficiently smooth charge distribution) interacting with the electromagnetic
field in the Coulomb gauge, in three dimensions. Generalizations to other
gauges, to particles with spin or to two dimensions are possible without much
effort. The one-excitation Hilbert space is thus $\mathfrak{h} =
L^2(\mathbb{R}^3;\mathbb{C}^{2}) $. Let the charge
density of each particle be given by $
\lambda_j(\xv) $, with $ \lambda_j \in L^{\infty}\bigl(\mathbb{R}^3;L^2(\mathbb{R}^3)\bigr)$, $ j = 1, \ldots, N $,
such that $-i\nabla_j \lambda_j(\xv; \kv) = \kv \lambda_j(\xv; \kv)$ and let the polarization
vectors be denoted $\vec{e}_{p}\in L^{\infty}(\mathbb{R}^3;\mathbb{R}^3)$, $p =1,2$, such that for
a.e.\ $ \kv \in \mathbb{R}^3$, $\vec{e}_{p}(\kv)\cdot \vec{e}_{p'}(\kv) = \delta_{pp'}$, $ \kv \cdot
\vec{e}_{p}(\kv)=0$ (Coulomb gauge). The quasi-classical 
energy functional is then given by \eqref{eq: classical hamiltonian PF}, 
{\it i.e.}\footnote{W.l.o.g. we fix the charge $e=1$ since it does not play any relevant role in these arguments.}, 
\bdm
\mathcal{H}_z = \sum_{j = 1}^N \frac{1}{2m_{j}}\Bigl( - i \nabla_j + \av_{z,j}(\xv_j)\Bigr)^2 + \mathcal{W}(\xxv) + \meanlrlr{z}{\omega}{z}_{\mathfrak{h}} 
\edm
where the classical field is 
\bdm 
\av_{z,j}(\xv) = 2 \Re \braket{z}{\bm{\lambda}_j(\xv)}_{\mathfrak{h}} = 2 \Re \sum_{p=1}^2 \braket{z_p}{\lambda_j(\xv)\vec{e}_{p}}_{L^2(\R^3)} \in \mathbb{C}^3 
\edm
and, as usual, $ \mathcal{W} $ is an external positive potential acting on
the particles. Note that the field free energy reads
\begin{equation*}
  \meanlrlr{z}{\omega}{z}_{\mathfrak{h}} = \sum_{p=1}^2 \meanlrlr{z_p}{\omega}{z_p}_{L^2(\R^3)}.
\end{equation*}
The operator $\mathcal{H}_z $ is self-adjoint for all $z\in \mathfrak{h}_{\omega}$,
with domain of self-adjointness $ \dom(\mathcal{K}_0)$, where we recall that
$ \mathcal{K}_0 = -\Delta + \mathcal{W} $, where in this case we adopt the
notation
\begin{equation*}
  -\Delta=\sum_{j=1}^N -\frac{\Delta_{j}}{2m_j}\; .
\end{equation*}

The quasi-classical Wick quantization of $\mathcal{H}_z$ yields the
Pauli-Fierz Hamiltonian in \eqref{eq: hamiltonian PF}: 
\bdm
\label{eq: hamiltonian PF quant}
H_{\eps} = \sum_{j=1}^N{\frac{1}{2m_{j}}\bigl(-i\nabla_j + \aav_{\varepsilon,j}(\xv_j)\bigr)}^2+ \mathcal{W}(\xv_1,\dotsc, \xv_N) + 1 \otimes \diff \GG_{\eps}(\omega), 
\edm 
where
\begin{equation*}
  \aav_{\eps, j}(\xv) = a^{\dagger}_{\varepsilon}(\bm{\lambda}_j(\xv)) + a_{\varepsilon}(\bm{\lambda}_j(\xv)) = \sum_{p=1}^{2} \lf( a^{\dagger}_{\varepsilon, p}(\lambda_j(\xv) \vec{e}_p) +  a_{\varepsilon, p}(\lambda_j(\xv)\vec{e}_p) \ri) 
\end{equation*}
is the quantized magnetic potential. The Pauli-Fierz Hamiltonian is
self-adjoint on $ \dom(\mathcal{K}_0+\mathrm{d}\mathcal{G}_{\varepsilon}(\omega))$, provided
that for almost all $\xv \in \mathbb{R}^3$, and for all $j=1,\dotsc,N$, $ \lambda_{j}(\xv) \in
\mathscr{Q}(\omega + \omega^{-1})$ (see
\cite{hiroshima2000cmp, hiroshima2002ahp, hasler2008rmp, falconi2015mpag,  
  matte2017mpag}), that we assumed in \eqref{eq: lambda PF}. The Pauli-Fierz Hamiltonian has a
ground state for suitable choices of the potential $\mathcal{W}$,
\emph{e.g.}, if it is the sum of single particle and pair potentials with
suitable properties (clustering, binding, etc.) (see, \emph{e.g.},
\cite{arai1999jfa,gerard2000ahp,griesemer2001im,hiroshima2001tams} and
references therein). In particular, this holds true when the field is massive
\cite{gerard2000ahp}, {\it i.e.}, for $ \omega > 0 $. As for the other models, we
refrain from giving a detailed description of the conditions allowing to have
a ground state, since for our purposes it is sufficient that a ground state
do exist in some cases.

\begin{proof}[Proof of \cref{prop:7}]
  The lower bound follows from the diamagnetic inequality
  \cite{matte2017mpag}:
  \begin{equation}
    \label{eq:9}
    \meanlrlr{\Psi_{\varepsilon}}{-\Delta_j}{\Psi_{\eps}}_{\mathscr{H}_{\varepsilon}} \leq   \meanlrlr{\Psi_{\eps}}{\lf( -i\nabla_j + \aav_{\varepsilon,j}(\xv_j) \ri)^2}{ \Psi_{\varepsilon}}_{\mathscr{H}_{\varepsilon}}\;,
  \end{equation}
  which in particular implies that $ H_{\eps} $ is positive. The upper bound
  is proved using coherent states for the field, analogously to the Nelson
  model and the polaron.
\end{proof}

Let us now prove \cref{lemma:4,lemma:5} for the Pauli-Fierz model. The former
takes the following form.

\begin{proof}[Proof of \cref{lemma:4}]
  The ``potential'' \eqref{eq: potential PF} is composed of two parts:
  \bdm
  \mathcal{V}_z(\xv) = 2 \sum_{j = 1}^N \frac{1}{m_j} \lf[ - i \Re\braket{z}{\bm{\lambda}_j(\xv)}_{\mathfrak{h}} \cdot \nabla_j + \lf( \Re\braket{z}{\bm{\lambda}_j(\xv)}_{\mathfrak{h}} \ri)^2 \ri]
  \edm
  as well as its Wick quantization. The convergence of the quantization of
  the second term is perfectly analogous to the one given for the Nelson
  model in \cref{lemma:9}. The proof of convergence for the quantization of
  the term involving the gradient is given in the proof of \cref{lemma:4} for
  the polaron.
\end{proof}

\begin{proof}[Proof of \cref{lemma:5}]
  The proof follows from the following estimate, due to F.\ Hiroshima, and
  whose detailed proof will be given in \cite{ammari2019prep}. There exists a finite constant
  $C>0$ such that, for all $\Psi_{\varepsilon}\in \mathscr{D}(H_{\mathrm{free}})$,
  \begin{equation}
    \label{eq:2}
    \lVert H_{\mathrm{free}}\Psi_{\varepsilon}  \rVert_{\mathscr{H}_{\varepsilon}}^{}\leq C \lVert H_{\varepsilon}\Psi_{\varepsilon}  \rVert_{\mathscr{H}_{\varepsilon}}^{}\; .
  \end{equation}
  Let us remark that the expectation of $ \mathcal{K}_0 = - \sum_j \frac{1}{2
    m_j} \Delta_j + \mathcal{W} $ could also be
  bounded by means of the expectation of $H_{\varepsilon}$ using the diamagnetic
  inequality \eqref{eq:9}. Hence if $\omega>0$, \eqref{eq:4} could be proved
  combining the diamagnetic inequality and the pull-through formula
  \eqref{eq: pull-through}.

  Finally, the fact that there exists a minimizing sequence such that the
  expectation of $H_{\varepsilon}^2$ is bounded uniformly w.r.t. $\varepsilon\in (0,1)$ is also
  discussed in the proof of \cref{lemma:5} for the Nelson model.
\end{proof}

It remains only to prove \cref{lemma:8} for non-trapped systems.

\begin{proof}[Proof of \cref{lemma:8}]
  The proof here is obtained combining the proofs given for the Nelson and
  polaron models. In fact, the quadratic terms can be treated exactly as the
  linear terms in the Nelson model and the gradient terms are equivalent to
  the ones appearing in the polaron.
\end{proof}

\appendix

\section{Algebraic State-Valued Measures}
\label{sec:gener-state-valu}

The quasi-classical Wigner measures are state-valued by construction
\cite{falconi2017arxiv,correggi2019arxiv}. In other words, quasi-classical
measures are countably additive (in a sense to be clarified below) measures
on the measurable phase space of classical fields, taking values in quantum
states, or, more generally, in the Banach cone $\mathfrak{A}'_+$ of positive
elements in the dual of a C*-algebra $\mathfrak{A}$. In addition, the
quasi-classical symbols are measurable functions from the phase space to a
W*-algebra $\mathfrak{B}\supseteq \mathfrak{A}$ of observables (operators), where
$\mathfrak{A}$ is supposed to be an ideal of $\mathfrak{B}$. It is therefore
necessary to properly define integration of operator-valued symbols w.r.t.\ a
state-valued measure. In this appendix we collect some technical properties
of state-valued measures and integration, from a general algebraic standpoint
that includes both state-valued and generalized state-valued measures, as
used throughout the paper. The ideas developed here in great generality are
particularly suited for what we called generalized state-valued measures, and
they are mostly taken from \cite{bartle1956sm} and \cite{neeb1998mm}. In
fact, if state-valued measures have been already studied in semiclassical
analysis and adiabatic theories (see \cite{ba,fg,pg,gms,teu} and references
therein contained), the reader might not be so familiar with generalized
state-valued measures. Since for the latter there is no Radon-Nikod\'{y}m
property, their description is more abstract, and there are some limitations,
especially concerning integration of operator-valued functions. This
justifies the abstract approach followed in this appendix.

\subsection{Algebraic State-Valued Measures}
\label{sec:state-valu-meas-1}

Let $\mathfrak{A}$ be a C*-algebra and denote by $\mathfrak{A}'_+$ the cone
of positive elements in the dual of $\mathfrak{A}$. In addition, let $(X,\Sigma)$
be a measurable space. There are two \emph{equivalent} ways of defining an
$\mathfrak{A}'_+$-valued measure on $(X,\Sigma)$.

\begin{definition}[State-valued measure \cite{neeb1998mm}]
  \label{def:a1}
  \mbox{}	\\
  A family of real-valued measures ${(\mu_A)}_{A\in \mathfrak{A}_+}$ defines a
  weak-$*$ $\sigma$-additive measure $\mm: \Sigma\to \mathfrak{A}'_+$ as
  \begin{equation*}
    \lf[ \mm(S) \ri] (A_1-A_2+iA_3-iA_4)=\mu_{A_1}(S)-\mu_{A_2}(S)+i\mu_{A_3}(S)-i\mu_{A_4}(S)\;,
  \end{equation*}
  for any $S\in \Sigma$ and $A_1,A_2,A_3,A_4\in \mathfrak{A}_+$, iff for any $A,B\in
  \mathfrak{A}_+$ and $\lambda\in \mathbb{R}_+$,
  $\mu_{\lambda A+B}=\lambda\mu_A+\mu_B$.
\end{definition}

\begin{definition}[Algebraic state-valued measure \cite{bartle1956sm}]
  \label{def:a2}
  \mbox{}	\\
  An application $\mm:\Sigma\to \mathfrak{A}'_+$ is a measure iff $\mm(\varnothing )=0$, and
  for any family ${(S_n)}_{n\in \mathbb{N}}\subset \Sigma$ of mutually disjoint measurable sets,
  \begin{equation*}
    \mm \lf(\bigcup_{n\in \mathbb{N}}S_n \ri)=\sum_{n\in \mathbb{N}}^{}\mm(S_n)\; ,
  \end{equation*}
  where the r.h.s. converges unconditionally in the norm of $\mathfrak{A}'$.
\end{definition}
It is clear that any $\mm$ satisfying \cref{def:a2} satisfies also
\cref{def:a1}, since $\sigma$-additivity in norm implies weak-$*$
$\sigma$-additivity. The converse, \emph{i.e.}, that a $\mm$ satisfying
\cref{def:a1} also satisfies \cref{def:a2} is nontrivial, and follows from
properties of uniform boundedness in Banach spaces, as proved by
\cite[Chapter II]{dunford1938tams}. We use these two definitions
interchangeably, depending on the context. Let us remark that with the
definitions above, any state-valued measure is automatically finite, since
$\mm(X)\in \mathfrak{A}'_+$. Actually, in the main body of the paper, we
consider probability measures, \emph{i.e.}, $\lVert \mm(X)
\rVert_{\mathfrak{A}'}^{}=1$.

\begin{remark}[State-valued and generalized state-valued measures]
  \label{rem:1}
  \mbox{} \\
  The state-valued measures used in the paper correspond to choosing
  $\mathfrak{A}=\mathscr{L}^{\infty}(\mathscr{H})$; generalized state-valued
  measures are in a subset of the measures obtained by picking $\mathfrak{A}=\mathscr{L}^1(\mathscr{H})$.
\end{remark}

For algebraic state-valued (cylindrical) measures on vector spaces, Bochner's
theorem holds, and the Fourier transforms are completely positive maps that
are weak-* continuous when restricted to any finite-dimensional subspace (see
\cite{falconi2017arxiv} for additional details). An algebraic state-valued
measure is also \emph{monotone}:

\begin{lemma}
  \label{lemma:a1}
  \mbox{}	\\
  For any $S_1\subseteq S_2\in \Sigma$,
  \begin{equation*}
    \mm(S_1)\leq \mm(S_2)\;,
  \end{equation*}
  {\it i.e.}, $\mm(S_2)-\mm(S_1)\in \mathfrak{A}'_+$.
\end{lemma}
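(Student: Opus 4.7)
The plan is to reduce the monotonicity statement to an immediate consequence of finite additivity together with the fact that $\mm$ takes values in the positive cone $\mathfrak{A}'_+$. Starting from the inclusion $S_1\subseteq S_2$, I would write the disjoint decomposition $S_2 = S_1 \sqcup (S_2\setminus S_1)$, with $S_2\setminus S_1 \in \Sigma$ since $\Sigma$ is a $\sigma$-algebra. Applying the $\sigma$-additivity in the sense of \cref{def:a2} to this finite disjoint union (padding with $\varnothing$ and using $\mm(\varnothing)=0$ to reduce countable to finite additivity) yields
\begin{equation*}
  \mm(S_2) = \mm(S_1) + \mm(S_2\setminus S_1)\; ,
\end{equation*}
as an identity in $\mathfrak{A}'$.

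From this the conclusion follows at once: by definition of the measure, $\mm(S_2\setminus S_1)\in \mathfrak{A}'_+$, so
\begin{equation*}
  \mm(S_2)-\mm(S_1) = \mm(S_2\setminus S_1) \in \mathfrak{A}'_+\; ,
\end{equation*}
which is precisely the meaning of $\mm(S_1)\leq \mm(S_2)$ with respect to the cone ordering on $\mathfrak{A}'_+$. There is essentially no obstacle in this argument; the only point that deserves a brief comment is that the series in \cref{def:a2} can be truncated to a finite sum thanks to $\mm(\varnothing)=0$, which is why one does not need the full strength of unconditional norm-$\sigma$-additivity but only finite additivity, itself an immediate corollary.
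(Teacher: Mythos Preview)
Your argument is correct. It differs from the paper's proof in which of the two equivalent definitions of an algebraic state-valued measure is invoked. The paper works through \cref{def:a1}: for each $A\in\mathfrak{A}_+$ the scalar measure $\mu_A$ is monotone, hence $[\mm(S_2)-\mm(S_1)](A)=\mu_A(S_2)-\mu_A(S_1)\geq 0$ for all $A\in\mathfrak{A}_+$, which is the definition of $\mm(S_2)-\mm(S_1)\in\mathfrak{A}'_+$. You instead use \cref{def:a2} directly, extracting finite additivity and then reading off positivity of $\mm(S_2\setminus S_1)$ from the fact that $\mm$ is $\mathfrak{A}'_+$-valued. Both routes are equally short; yours avoids the detour through the family of scalar measures, while the paper's version foreshadows the systematic use of the $\mu_A$ that recurs in the proofs of \cref{prop:a3} and \cref{lemma:a3}.
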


\begin{proof}
  The scalar measures $\mu_A$, $A\in \mathfrak{A}_+$, are monotonic. Therefore,
  for all $A\in \mathfrak{A}_+$,
  \begin{equation}
    \lf[ \mm(S_2) \ri](A):= \mu_A(S_2)\geq \mu_A(S_1)=:\lf[ \mm(S_1) \ri](A)\; .
  \end{equation}
  Hence, for all $A\in \mathfrak{A}_+$,
  \begin{equation*}
    \bigl[ \mm(S_2)-\mm(S_1)\bigr](A)\geq 0\; .
  \end{equation*}
\end{proof}
	
We can now introduce the scalar norm measure $m$, satisfying $\mu_A(S)\leq \lVert A
\rVert_{\mathfrak{A}}^{}m(S)$,
for any $S\in \Sigma$, that proves to be a very useful
tool to compare vector integrals with scalar integrals.

\begin{definition}[Norm measure]
  \label{def:a7}
  \mbox{}	\\
  Let $\mm$ be an algebraic state-valued measure. Then, its norm measure
  $m:\Sigma\to \mathbb{R}_+$ is defined as
  \begin{equation}
    m(S):= \lf\lVert \mm(S)  \ri\rVert_{\mathfrak{A}'}^{}\; ,
  \end{equation}
  for any measurable set $ S $.
\end{definition}
Using the cone properties of positive states in a C*-algebra, it is possible
to prove that $m$ is a finite measure. Let us recall that the C*-algebra
$\mathfrak{A}$ may not be unital, so from now on we assume that there exists
a W*-algebra $\mathfrak{B}\supseteq \mathfrak{A}$. If
$\mathfrak{A}=\mathscr{L}^{\infty}(\mathscr{K})$, the compact operators on a
separable Hilbert space $\mathscr{K}$, and
$\mathfrak{B}=\mathscr{B}(\mathscr{K})$, it is well-known that the
aforementioned property is satisfied: $\mathfrak{A}$ is actually in this case
a two-sided ideal of $\mathfrak{B}$. Let us denote by $ e \in \mathfrak{B}$ the
identity element.

\begin{proposition}[Properties of the norm measure]
  \label{prop:a3}
  \mbox{}	\\
  Let $\mm$ be an algebraic state-valued measure. Then, its norm measure $m$
  is a finite measure on $(X,\Sigma) $ and $\mm \ll m$.
\end{proposition}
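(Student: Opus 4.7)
The proof rests on two standard facts about positive linear functionals on a C*-algebra, which I would state (or briefly recall) up front and then apply: (i) the norm is additive on the positive cone, namely $\lVert \varphi+\psi\rVert_{\mathfrak{A}'}=\lVert\varphi\rVert_{\mathfrak{A}'}+\lVert\psi\rVert_{\mathfrak{A}'}$ for all $\varphi,\psi\in\mathfrak{A}'_+$; and (ii) $\lVert\varphi\rVert_{\mathfrak{A}'}=\sup_\lambda \varphi(e_\lambda)$ for any bounded approximate identity $\{e_\lambda\}\subset\mathfrak{A}$ (and equals $\varphi(e)$ after extending through the enveloping W*-algebra $\mathfrak{B}\supseteq\mathfrak{A}$). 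The triangle inequality gives $\leq$ in (i), while evaluating at an approximate identity gives $\varphi(e_\lambda)+\psi(e_\lambda)\leq\lVert\varphi+\psi\rVert_{\mathfrak{A}'}$ and, passing to the sup, the reverse inequality.

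With (i)--(ii) at hand, the verification that $m$ is a finite measure is routine. Finiteness is immediate: $m(X)=\lVert\mm(X)\rVert_{\mathfrak{A}'}<+\infty$ because $\mm(X)\in\mathfrak{A}'$. The normalization $m(\varnothing)=0$ follows from $\mm(\varnothing)=0$. For $\sigma$-additivity, let $\{S_n\}_{n\in\mathbb{N}}\subset\Sigma$ be pairwise disjoint. By \cref{def:a2}, the series $\sum_n\mm(S_n)$ converges unconditionally in $\mathfrak{A}'$-norm to $\mm\lf(\bigcup_n S_n\ri)$. Combining finite additivity on the cone (applied $N$ times) with the continuity of the norm,
\begin{equation*}
  m\lf(\bigcup_{n\in\mathbb{N}} S_n\ri)=\lf\lVert \sum_{n\in\mathbb{N}}\mm(S_n)\ri\rVert_{\mathfrak{A}'}=\lim_{N\to+\infty}\sum_{n=1}^{N}\lVert\mm(S_n)\rVert_{\mathfrak{A}'}=\sum_{n\in\mathbb{N}}m(S_n),
\end{equation*}
where the key step uses fact (i). Nothing else is required.

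Absolute continuity is the easiest part: if $m(S)=\lVert\mm(S)\rVert_{\mathfrak{A}'}=0$, then $\mm(S)$ is the zero functional, so $\mm(S)=0$. Hence $m(S)=0$ implies $\mm(S)=0$, which is precisely $\mm\ll m$ in the vector-measure sense (equivalently, every scalar measure $\mu_A$ from \cref{def:a1} satisfies $\lvert\mu_A(S)\rvert\leq\lVert A\rVert_{\mathfrak{A}}\,m(S)$, so $\mu_A\ll m$ for all $A\in\mathfrak{A}$).

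The only non-cosmetic point is fact (i), i.e.\ the additivity of the norm on $\mathfrak{A}'_+$; this is where the positivity of the values of $\mm$ (and not merely their belonging to $\mathfrak{A}'$) is essential, and it is also where the assumption that $\mathfrak{A}$ sits as an ideal inside a unital W*-algebra $\mathfrak{B}$ is most transparently used (through the existence of an approximate identity or of the unit $e$). I anticipate no genuine obstacle beyond making this observation precise for the concrete instances $\mathfrak{A}=\mathscr{L}^{\infty}(\mathscr{K})$ and $\mathfrak{A}=\mathscr{L}^{1}(\mathscr{K})$ appearing in the body of the paper, in which case (i) reduces to trace additivity of positive trace-class operators and its weak-$*$ closure.
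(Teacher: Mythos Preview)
Your proof is correct and rests on the same underlying mechanism as the paper's: both ultimately use the approximate identity to establish that the norm is additive on $\mathfrak{A}'_+$, and both handle absolute continuity identically. The difference is organizational. You isolate the algebraic fact (i) up front and then pass to the countable case in one line, using the norm convergence of $\sum_n \mm(S_n)$ guaranteed by \cref{def:a2} together with continuity of the norm. The paper instead proves finite additivity inline via the approximate identity and then, for the countable case, argues by hand that the limits over the net $(e_\alpha)$ and over $N$ can be exchanged, using the monotonicity $\mu_{e_\alpha}\leq m$ to get uniformity in $N$. Your route is shorter and avoids that limit-exchange entirely; the paper's route is more self-contained in that it never explicitly invokes the norm-$\sigma$-additivity from \cref{def:a2}, working instead through the scalar measures $\mu_{e_\alpha}$ of \cref{def:a1}.
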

	
\begin{proof}
  The proof that $m(\varnothing )=0$ and $m(X)<+\infty$ follows immediately from the
  definition, while $\sigma$-additivity is proved as follows: let ${(S_n)}_{n\in \mathbb{N}}\subset
  \Sigma$ be a
  family of mutually disjoint measurable sets, we are going to prove
  that, for any $N\in \mathbb{N}$,
  \begin{equation}
    m \lf(\bigcup_{n=1}^N S_n \ri)=\sum_{n=1}^N m(S_n)\;  
  \end{equation}
  Indeed, let ${(e_{\alpha})}_{\alpha\in I}\subset \mathfrak{A}_+$ be an approximate identity
  of $ e \in \mathfrak{B}$. It is well-known that for any $\omega\in \mathfrak{A}'_+$,
  $\lVert \omega \rVert_{\mathfrak{A}'}^{}=\lim_{\alpha\in I}\omega(w_{\alpha})$. Hence, by \cref{def:a1} and
  \cref{def:a7}:
  \begin{equation*}
    m \lf(\bigcup_{n=1}^N S_n \ri)=\lim_{\alpha\in I} \mm \lf(\bigcup_{n=1}^N S_n \ri)(e_{\alpha})=\lim_{\alpha\in I} \mu_{e_{\alpha}} \lf(\bigcup_{n=1}^N S_n \ri)=\lim_{\alpha\in I}\sum_{n=1}^N\mu_{e_{\alpha}}(S_n)=\sum_{n=1}^N m(S_n)\; .
  \end{equation*}
  Next, we show
  \begin{equation}
    \lim_{N\to \infty} m \lf(\bigcup_{n\in \mathbb{N}} S_n \ri)-\sum_{n=1}^N m(S_n)=0\;,
  \end{equation}
  which directly implies $\sigma$-additivity: using again the approximate identity
  on the left hand side, we obtain
  \begin{equation*}
    \lim_{N\to \infty}\lim_{\alpha\in I} m \lf(\bigcup_{n\in \mathbb{N}} S_n \ri)-\sum_{n=1}^N\mu_{e_{\alpha}}(S_n)\; .
  \end{equation*}
  We know that every $\mu_{e_{\alpha}}$, $\alpha\in I$, is $\sigma$-additive, and therefore that
  $\lim_{N\to \infty}\sum_{n=1}^N\mu_{e_{\alpha}}(S_n)=\mu_{e_{\alpha}} \lf(\bigcup_{n\in \mathbb{N}} S_n \ri)$, and
  $\lim_{\alpha\in I}\mu_{e_{\alpha}} \lf(\bigcup_{n\in \mathbb{N}} S_n \ri)=m \lf(\bigcup_{n\in \mathbb{N}} S_n
  \ri)$. Hence, it remains to show that the limits in $N$
  and $\alpha$ can be exchanged. In order to do that, it is sufficient to show
  that the limit in $\alpha$ exists uniformly w.r.t.\ $N$:
  \begin{multline}
    \lim_{\alpha\in I}\sup_{N\in \mathbb{N}}\, \lf\lvert m\lf(\bigcup_{n=1}^N S_n \ri)-\sum_{n=1}^N\mu_{e_{\alpha}}(S_n) \ri\rvert_{}^{} =\lim_{\alpha\in I}\sup_{N\in \mathbb{N}} \lf\lvert m\lf(\bigcup_{n=1}^N S_n \ri) - \mu_{e_{\alpha}} \lf(\bigcup_{n=1}^NS_n \ri) \ri\rvert\\
=\lim_{\alpha\in I}\sup_{N\in \mathbb{N}} \lf(m-\mu_{e_{\alpha}} \ri) \lf(\bigcup_{n=1}^NS_n \ri)\leq \lim_{\alpha\in I} \lf( m-\mu_{e_{\alpha}} \ri)(X)=0 \; ,
  \end{multline}
  where we have used finite additivity of $m$ and $m-\mu_{e_{\alpha}}$ and the fact
  that for any $S\in \Sigma$, $\mu_{e_{\alpha}}(S)\leq m(S)$.
  
  It remains to prove that $\mm$ is absolutely continuous w.r.t.\ $m$. For
  absolute continuity of a vector measure with respect to a scalar one, we
  adopt the definition of \cite[Section I.2, Definition
  3]{diestel1977ms}. Since both $\mm$ and $m$ are countably additive, it is
  sufficient to prove that, for any $S\in \Sigma$, $m(S)=0$ implies
  $\mm(S)=0$. However, since $m(S)=\lVert \mm(S) \rVert_{\mathfrak{A}'}^{}$, and $\lVert \,\cdot
  \, \rVert_{\mathfrak{A}'}^{}$ is a norm, then the
  aforementioned implication follows directly by the properties of norms.
\end{proof}

\subsection{Integration of Scalar Functions}
\label{sec:integr-scal-funct}

The theory of integration for algebraic state-valued measures could be done
in a unified way for scalar- and operator-valued functions. However, it is
instructive to deal with scalar functions first. Let us recall that a
function $g:X\to \mathbb{R}^+$ is simple if there exist a number $N\in \mathbb{N}$, mutually
disjoint measurable sets $S_1,\dotsc,S_N\in \Sigma$ and non-negative numbers $c_1,\dotsc,c_N\in
\mathbb{R}^+$, such that for all $x\in X$,
\begin{equation}
  g(x)=\sum_{j=1}^N c_j\one_{S_j}(x)\; ,
\end{equation}
where $\one_{S_j}$ is the characteristic function of the set
$S_j$. Integration of simple functions w.r.t.\ an algebraic state-valued
measure $\mu$ is straightforwardly defined as
\begin{equation}
  {\int_{X}^{}} \mathrm{d}\mm(x) \: g(x)=\sum_{j=1}^N c_j\mm(S_j)\in \mathfrak{A}'_+\; . 
\end{equation}
The integral of a non-simple function can be defined again in two equivalent
ways:

\begin{definition}[Integrability I \mbox{\cite[Lemma I.12]{neeb1998mm}}]
  \label{def:a3}
  \mbox{}	\\
  A measurable function $f:X\to \mathbb{R}^+$ is $\mm$-integrable iff $f$ is
  $\mu_A$-integrable for any $A\in \mathfrak{A}_+$. Furthermore, its integral
  belongs to $\mathfrak{A}'_+$ and is uniquely defined by the integral
  w.r.t. $ \mu_A $, {\it i.e.}, 
  \begin{multline}
        \lf( \int_S^{}  \mathrm{d}\mm(x) \: f(x) \ri)(A_1-A_2+iA_3-iA_4) = \int_S^{}  \mathrm{d}\mu_{A_1}(x)\: f(x) - \int_S^{}  \mathrm{d}\mu_{A_2}(x) \: f(x) \\
    + i\int_S^{} \mathrm{d}\mu_{A_3}(x) \: f(x) - i\int_S^{} \mathrm{d}\mu_{A_4}(x) \:f(x)\; .
  \end{multline}
  for any $A_1,A_2,A_3,A_4\in \mathfrak{A}_+$.
\end{definition}
	
\begin{definition}[Integrability II \mbox{\cite[Definition 1]{bartle1956sm}}]
  \label{def:a4}
  \mbox{}	\\
  A measurable function $f:X\to \mathbb{R}^+$ is $\mm$-integrable iff for any $S\in \Sigma$ the
  sequence of simple integrals
  \begin{equation*}
    \lf\{{\int_{X}^{}}\mathrm{d}\mm(x) \: f_n(x) \one_S(x) \ri\}_{n\in \mathbb{N}} \in  \mathfrak{A}'\; ,
  \end{equation*}
  where ${(f_n)}_{n\in \mathbb{N}}$ is any approximation of $f$ in terms of simple
  functions, is Cauchy. The integral is then defined as
  \begin{equation}
    \int_S^{}  \mathrm{d}\mm(x)f(x)=\lim_{n\to \infty} \int_{X}^{}\mathrm{d}\mm(x) \: f_n(x) \one_S(x)  \; ,
  \end{equation}
  and it is independent of the chosen approximation.
\end{definition}

In both cases one says that a complex function $f:X\to \mathbb{C}$ is $\mu$-integrable if
and only if $\lvert f \rvert_{}^{}$ is $\mm$-integrable and, in this case, its integral
is given by the complex combination of the integrals of its real positive,
real negative, imaginary positive and imaginary negative parts.

Since the weak-$*$ and strong limits coincide if they both exist, it follows
that the integrals of a function that is $\mm$-integrable
w.r.t. \cref{def:a3} and \cref{def:a4} coincide. In addition, if $f$ is
$\mm$-integrable in the ``strong'' sense of \cref{def:a4}, then it is also
$\mm$-integrable in the weak-$*$ sense of \cref{def:a3}. It remains to show
that if $f$ is $\mm$-integrable in the sense of \cref{def:a3}, then it is
$\mm$-integrable in the sense of \cref{def:a4}, but this can be done
exploiting the norm measure $m$.

\begin{lemma}
  \label{lemma:a3}
  \mbox{}	\\
  If a measurable function $f : X \to \R^+ $ is $\mm$-integrable in the sense
  of \cref{def:a3}, then it is $m$-integrable as well.
\end{lemma}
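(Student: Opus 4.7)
The plan is to exploit the characterization of the norm measure $m$ in terms of an approximate identity (already used in the proof of \cref{prop:a3}) and to compare the scalar integral against $m$ with the scalar integrals against the $\mu_{e_\alpha}$'s, which are controlled by the hypothesis of \cref{def:a3}. Concretely, fix an increasing bounded approximate identity $(e_\alpha)_{\alpha\in I}\subset \mathfrak{A}_+$ with $\lVert e_\alpha\rVert_{\mathfrak{A}}\leq 1$ (which exists in every C*-algebra). Since each $\mu_A$ is monotone in $A$, the net of scalar measures $(\mu_{e_\alpha})_\alpha$ is pointwise increasing, hence for every $S\in \Sigma$ one has, as in the proof of \cref{prop:a3},
\begin{equation*}
  m(S)=\lVert \mathfrak{m}(S)\rVert_{\mathfrak{A}'}=\lim_{\alpha\in I}\mu_{e_\alpha}(S)=\sup_{\alpha\in I}\mu_{e_\alpha}(S)\;.
\end{equation*}

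Next I would pick a non-decreasing sequence of non-negative simple functions $f_n\uparrow f$. For each simple $f_n=\sum_j c_j\mathds{1}_{S_j}$, linearity and the identity above yield
\begin{equation*}
  \int_X \mathrm{d}m(x)\, f_n(x)=\sum_j c_j\,m(S_j)=\sup_{\alpha\in I}\sum_j c_j\,\mu_{e_\alpha}(S_j)=\sup_{\alpha\in I}\int_X\mathrm{d}\mu_{e_\alpha}(x)\,f_n(x)\;,
\end{equation*}
since the sum is finite. Applying monotone convergence both for $m$ and for every scalar measure $\mu_{e_\alpha}$, together with the trivial interchange of suprema $\sup_n\sup_\alpha=\sup_\alpha\sup_n$, I obtain
\begin{equation*}
  \int_X\mathrm{d}m(x)\,f(x)=\sup_n\int_X\mathrm{d}m(x)\,f_n(x)=\sup_{\alpha\in I}\sup_n\int_X\mathrm{d}\mu_{e_\alpha}(x)\,f_n(x)=\sup_{\alpha\in I}\int_X\mathrm{d}\mu_{e_\alpha}(x)\,f(x)\;,
\end{equation*}
the last equality again by monotone convergence applied to the scalar measure $\mu_{e_\alpha}$ (which is legitimate because $f$ is $\mu_{e_\alpha}$-integrable by the assumption of \cref{def:a3}).

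Finally, by the hypothesis of $\mathfrak{m}$-integrability in the sense of \cref{def:a3}, $\omega_f:=\int_X\mathrm{d}\mathfrak{m}(x)\,f(x)$ is a well-defined element of $\mathfrak{A}'_+$, and its action on $A\in\mathfrak{A}_+$ is $\omega_f(A)=\int_X\mathrm{d}\mu_A(x)\,f(x)$. The standard characterization of the norm of a positive functional on a C*-algebra via a bounded approximate identity gives $\lVert \omega_f\rVert_{\mathfrak{A}'}=\sup_{\alpha\in I}\omega_f(e_\alpha)=\sup_{\alpha\in I}\int_X\mathrm{d}\mu_{e_\alpha}(x)\,f(x)$. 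Combining with the chain of equalities above yields $\int_X\mathrm{d}m(x)\,f(x)=\lVert\omega_f\rVert_{\mathfrak{A}'}<+\infty$, so $f$ is $m$-integrable.

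The main delicate step is the swap of the supremum in $\alpha$ with the limit/supremum in $n$ together with the justification that $\sup_\alpha\omega_f(e_\alpha)$ exactly reproduces $\lVert\omega_f\rVert_{\mathfrak{A}'}$; both rely on the monotonicity of the chosen approximate identity and on positivity of $\omega_f$, $\mathfrak{m}$, and $f$, so no genuine analytic obstacle arises once these ingredients are set up.
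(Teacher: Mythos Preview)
Your proof is correct and follows essentially the same skeleton as the paper's: pick a monotone simple approximation $f_n\uparrow f$, express $m$ in terms of the approximate identity $(e_\alpha)$, and pass to the limit by monotone convergence. The one substantive difference is the source of the final bound. The paper invokes an external result (\cite[Lemma~I.5]{neeb1998mm}) to obtain $\int_S\mathrm{d}\mu_A(x)\,f(x)\leq C\lVert A\rVert_{\mathfrak{A}}$ for all $A\in\mathfrak{A}_+$, and then bounds $\int_S\mathrm{d}m\,f_n\leq C$ via $\lVert e_\alpha\rVert\leq 1$. You instead use directly that, by \cref{def:a3}, the integral $\omega_f=\int_X\mathrm{d}\mathfrak{m}\,f$ already lies in $\mathfrak{A}'_+$, so its norm is finite and equals $\sup_\alpha\omega_f(e_\alpha)$. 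This is a genuine simplification: it keeps the argument self-contained and even yields the sharper identity $\int_X\mathrm{d}m(x)\,f(x)=\bigl\lVert\int_X\mathrm{d}\mathfrak{m}(x)\,f(x)\bigr\rVert_{\mathfrak{A}'}$, rather than merely an upper bound.
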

\begin{proof}
  If $f$ is $\mm$-integrable, then for any $S\in \Sigma$, $\int_{S}^{}
  \mathrm{d}\mu_A(x)f(x)$ is finite and
  non-negative for any $A\in
  \mathfrak{A}_+$. Applying \cite[Lemma I.5]{neeb1998mm}, we
  deduce that there exists a finite constant $C$, depending only on $S$,
  $\mm$, and $f$, such that
  \begin{equation}
    \int_{S}^{} \mathrm{d}\mu_A(x) \: f(x)\leq C \lVert A  \rVert_{\mathfrak{A}}^{}\; .
  \end{equation}
  Now, let ${(f_n)}_{n\in \mathbb{N}}$ be a simple pointwise non-decreasing
  approximation of $f$ from below. Then, by monotone convergence theorem,
  \begin{equation*}
    \int_S^{}  \mathrm{d}m(x) \: f(x)=\lim_{n\to \infty} \int_{X}^{}  \mathrm{d}m(x) \: f_n(x) \one_S(x) \; .
  \end{equation*}
  Hence, by \cref{def:a7}, and $\mu_{e_{\alpha}}$-integrability of $f$,
  \begin{multline}
        \int_{X}^{}  \mathrm{d}m(x) \: f_n(x) \one_S(x)=\lim_{\alpha\in I} \int_{X}^{}  \mathrm{d}\mu_{e_{\alpha}}(x) \: f_n(x) \one_S(x) \leq \lim_{\alpha\in I}\int_S^{}  \mathrm{d}\mu_{e_{\alpha}}(x) \: f(x)\\
    \leq C \lim_{\alpha\in I} \lVert e_{\alpha} \rVert_{\mathfrak{A}}^{} \leq C\;,
  \end{multline}
  and taking the limit $ n \to + \infty $, we get the result.
\end{proof}	
	
\begin{proposition}[Equivalence of \cref{def:a3} and \cref{def:a4}]
  \label{prop:a4}
  \mbox{}	\\
  If a measurable function $f : X \to \R^+ $ is $\mm$-integrable in the sense
  of \cref{def:a3}, then it is $\mm$-integrable in the sense of
  \cref{def:a4}. In addition, for any $S\in \Sigma$,
  \beq
  \label{eq: bound norm measure}
  \lf\lVert \int_S^{} \mathrm{d}\mm(x) \: f(x) \ri\rVert_{\mathfrak{A}'}^{}\leq \int_S^{}\mathrm{d}m(x) \: f(x)\; .
  \eeq
\end{proposition}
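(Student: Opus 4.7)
The plan is to use the scalar norm measure $m$ introduced in \cref{def:a7} as the bridge between the two definitions, together with \cref{lemma:a3}, which already guarantees that any $f$ that is $\mm$-integrable in the sense of \cref{def:a3} is also $m$-integrable. The core technical ingredient is an $\mathfrak{A}'$-norm identity for integrals of positive simple functions: for a positive simple $g = \sum_{j=1}^K c_j \one_{S_j}$ with disjoint $S_j \in \Sigma$, the integral $\int \mathrm{d}\mm\, g = \sum_j c_j \mm(S_j)$ lies in $\mathfrak{A}'_+$ by \cref{lemma:a1}, so testing against the approximate identity $(e_\alpha)_{\alpha \in I}$ of $\mathfrak{B}$ (already exploited in the proof of \cref{prop:a3}) yields
\begin{equation*}
  \Bigl\lVert \int \mathrm{d}\mm(x)\, g(x) \Bigr\rVert_{\mathfrak{A}'} \;=\; \lim_{\alpha \in I} \sum_{j=1}^K c_j \, \mu_{e_\alpha}(S_j) \;=\; \int g\, \mathrm{d}m\;.
\end{equation*}
By splitting a real simple function into positive and negative parts, this immediately implies $\lVert \int \mathrm{d}\mm\, h \rVert_{\mathfrak{A}'} \leq \int \lvert h \rvert\, \mathrm{d}m$ for any real simple $h$.

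With this key estimate in hand, the proof proceeds by fixing $S \in \Sigma$ and choosing the canonical monotone simple approximation $f_n \nearrow f$ (which exists by standard measure theory since $f \geq 0$ is measurable). By $m$-integrability of $f$ and the monotone convergence theorem, $\int_S f_n\, \mathrm{d}m \to \int_S f\, \mathrm{d}m < \infty$; applying the bound above to $(f_n - f_k)\one_S$, which is a positive simple function for $n \geq k$, gives
\begin{equation*}
  \Bigl\lVert \int_X \mathrm{d}\mm\, (f_n - f_k) \one_S \Bigr\rVert_{\mathfrak{A}'} \;=\; \int_S (f_n - f_k)\, \mathrm{d}m \;\xrightarrow[n,k \to \infty]{}\; 0\;,
\end{equation*}
so $\{\int_X \mathrm{d}\mm\, f_n \one_S\}_n$ is Cauchy in $\mathfrak{A}'$. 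Defining $\int_S \mathrm{d}\mm\, f$ as this limit realizes the integral in the sense of \cref{def:a4}, and passing to the limit in the norm identity for simple functions, by continuity of $\lVert \,\cdot\, \rVert_{\mathfrak{A}'}$, yields $\lVert \int_S \mathrm{d}\mm\, f \rVert_{\mathfrak{A}'} = \int_S f\, \mathrm{d}m$, which in particular implies the claimed bound \eqref{eq: bound norm measure}.

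The only remaining point, namely independence of the integral from the choice of approximating sequence as stipulated in \cref{def:a4}, follows from the very same estimate: if $(g_n)$ is any other approximating sequence of simple functions for $f$, both $(f_n)$ and $(g_n)$ converge to $f$ in $L^1(m)$, hence $\int_S \lvert f_n - g_n \rvert\, \mathrm{d}m \to 0$, and therefore $\lVert \int_X \mathrm{d}\mm\, (f_n - g_n) \one_S \rVert_{\mathfrak{A}'} \to 0$, so that the two limits agree. I expect the principal subtlety to lie in the precise interpretation of the phrase ``any approximation'' in \cref{def:a4}: if it is read in the standard $L^1(m)$ sense (which is automatic for monotone pointwise convergence to an $m$-integrable limit by monotone convergence and, more generally, follows from a dominated convergence argument), the whole proof hinges only on the norm-integral identity for positive simple functions, which in turn follows from the approximate-identity computation already worked out in \cref{prop:a3}.
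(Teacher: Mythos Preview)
Your proposal is correct and follows essentially the same route as the paper: pick a monotone simple approximation $f_n\nearrow f$, use \cref{lemma:a3} to know $f\in L^1(m)$, and then control $\lVert\int_S\mathrm{d}\mm\,(f_n-f_k)\rVert_{\mathfrak{A}'}$ by $\int_S(f_n-f_k)\,\mathrm{d}m\to 0$. The paper obtains the key simple-function bound more cheaply, via the triangle inequality $\lVert\sum_j c_j\,\mm(S_j)\rVert\le\sum_j c_j\,m(S_j)$ rather than your approximate-identity computation (which actually yields the sharper \emph{equality} for positive simple functions, and hence equality in \eqref{eq: bound norm measure}); and it invokes dominated convergence with dominant $2f$ instead of monotone convergence. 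Your extra paragraph on independence from the approximating sequence is a welcome addition that the paper leaves implicit.
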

\begin{proof}
  We prove that
  \bdm \lf\{ \int_S^{} \mathrm{d}\mm(x)f_n(x) \ri\}_{n\in \mathbb{N}} \in \mathfrak{A}'_+,
  \edm
  where ${(f_n)}_{n\in \mathbb{N}}$ is a non-decreasing simple approximation of $f$, is
  a Cauchy sequence. Observe that for any $n\geq m\in \mathbb{N}$, $f_n-f_m$ is a simple
  positive function, which can be written as
  \begin{equation}
    f_n-f_m=\sum_{j=1}^{N(n,m)} c_j^{(n,m)} \one_{S_j^{(n,m)}}\; .
  \end{equation}
  Hence,
  \begin{multline}
    \lf\| \int_S^{}  \mathrm{d}\mm(x)\bigl(f_n(x)-f_m(x)\bigr) \ri\|_{\mathfrak{A}'}^{}\leq \sum_{j=1}^{N(n,m)} c_j^{(n,m)} m\lf(S_j^{(n,m)}\cap S\ri) \\
    =\int_S^{} \mathrm{d}m(x) \lf(f_n-f_m \ri)(x) \xrightarrow[n, m \to \infty]{} 0\; ,
  \end{multline}
  where in the last limit we have used the dominated convergence theorem,
  since $f_n-f_m\leq 2f$, and $f$ is $m$-integrable by \cref{lemma:a3}. This
  proves both $\mm$-integrability of $f$ in the sense of \cref{def:a4}, and
  the bound \eqref{eq: bound norm measure}.
\end{proof}

Therefore, the two definitions are indeed equivalent: \cref{def:a4} has the
advantage of identifying constructively the integral as the limit of the
integrals of simple approximations of the integrand, while \cref{def:a3} is
useful to prove properties of the integral. The integral defined above is
indeed linear in the integrand and monotonic:

\begin{lemma}
  \label{lemma:a2}
  \mbox{}	\\
  Let $f,g:X\to \mathbb{R}$ be two $\mm$-integrable functions. If for $\mm$-a.e.\ $x\in X$
  $ g(x)\leq f(x) $, then
  \begin{equation}
    \int_X^{}  \mathrm{d}\mm(x) \lf(f(x) - g(x) \ri) \in \mathfrak{A}'_+ .
  \end{equation}
\end{lemma}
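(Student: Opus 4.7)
The plan is to reduce the vector-valued monotonicity statement to the familiar scalar monotonicity by testing against an arbitrary positive element $A \in \mathfrak{A}_+$. Concretely, an element $\omega \in \mathfrak{A}'$ belongs to $\mathfrak{A}'_+$ if and only if $\omega(A) \geq 0$ for every $A \in \mathfrak{A}_+$, so it suffices to prove that, for each fixed $A \in \mathfrak{A}_+$,
\begin{equation*}
  \lf( \int_X \diff \mm(x) \, \lf( f(x) - g(x) \ri) \ri)(A) \geq 0 .
\end{equation*}

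First, I would invoke the characterization in \cref{def:a1} and \cref{def:a3}: by linearity and by the fact that $f, g$ are $\mm$-integrable (and in particular $\mu_A$-integrable for every $A \in \mathfrak{A}_+$), the quantity above equals
\begin{equation*}
  \int_X \diff \mu_A(x) \, f(x) - \int_X \diff \mu_A(x) \, g(x) = \int_X \diff \mu_A(x) \, \lf( f(x) - g(x) \ri) .
\end{equation*}
Since $\mu_A(S) = [\mm(S)](A) \geq 0$ for every measurable $S$ (as $\mm(S) \in \mathfrak{A}'_+$ and $A \in \mathfrak{A}_+$), each $\mu_A$ is a positive finite scalar measure, for which monotonicity of integration is standard.

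The only nontrivial step is transferring the $\mm$-a.e. inequality to a $\mu_A$-a.e. inequality. Here I would use the norm measure $m$ introduced in \cref{def:a7}: by \cref{prop:a3}, $\mm \ll m$, and conversely $m(S) = 0$ whenever $\mm(S) = 0$ since $m(S) = \lVert \mm(S) \rVert_{\mathfrak{A}'}$. Interpreting ``$\mm$-a.e.'' as ``$m$-a.e.'', the hypothesis produces a set $N \in \Sigma$ with $m(N) = 0$ on whose complement $g \leq f$ holds pointwise. As shown in the proof of \cref{lemma:a3}, one has the domination $\mu_A(S) \leq \lVert A \rVert_{\mathfrak{A}} \, m(S)$ for every $S \in \Sigma$, so $\mu_A(N) = 0$ and thus $g(x) \leq f(x)$ holds $\mu_A$-a.e.

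Hence $\int_X \diff \mu_A (f-g) \geq 0$ for every $A \in \mathfrak{A}_+$, which is precisely the assertion that $\int_X \diff \mm \, (f-g) \in \mathfrak{A}'_+$. The ``main obstacle'' is really only the bookkeeping around the meaning of ``$\mm$-a.e.'' for a vector-valued measure; once one agrees that this is the norm-measure notion (which is natural given that $\mm \ll m$ and $m$ is finite by \cref{prop:a3}), the rest is immediate from the positivity of each $\mu_A$ and linearity of the integral in \cref{def:a3}.
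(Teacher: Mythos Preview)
Your proof is correct and follows essentially the same approach as the paper, which simply says the result follows from \cref{def:a3} and monotonicity of the usual integral. You have added the careful bookkeeping about what ``$\mm$-a.e.'' means (via the norm measure $m$ and the domination $\mu_A \leq \lVert A \rVert_{\mathfrak{A}}\, m$), which the paper leaves implicit; otherwise the arguments coincide.
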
	
\begin{proof}
  The result follows from \cref{def:a3},and monotonicity of the usual
  integral
\end{proof}

The dominated convergence theorem holds in a general form (see \cref{thm:a1}
and \cref{thm:a3} below), which in particular implies that it applies to
scalar functions.

\subsection{Integration of Operator-Valued Functions}
\label{sec:integr-oper-valu}

The integration of operator-valued functions is defined similarly to
\cref{def:a4}. Let us discuss first the integration of simple operator-valued
functions and the approximation with simple functions in this context. An
operator valued function $g:X\to \mathfrak{B}$ is simple if there exist $N\in \mathbb{N}$,
mutually disjoint measurable sets $S_1,\dotsc,S_N\in \Sigma$, and $c_1,\dotsc,c_N\in
\mathfrak{B}$ such that for all $x\in X$,
\begin{equation}
  g(x)=\sum_{j=1}^N c_j \one_{S_j}(x)\; .
\end{equation}
Let us recall that since $\mathfrak{A}\subset \mathfrak{B}$, for any $\omega\in
\mathfrak{A}'$ and $B\in \mathfrak{B}$, we can define $\omega \circ B \in \mathfrak{A}'$
as
\begin{equation}
  \lf( \omega \circ B \ri) (\,\cdot \,):= \omega(\, \cdot \, B) \qquad \text{ or } \qquad \lf( \omega \circ B \ri)(\,\cdot \,):=\omega(B\,\cdot \,)\; ,
\end{equation}
depending on which side $\mathfrak{A}$ is an ideal of $\mathfrak{B}$. If it
is a two-sided ideal, both definitions are equivalent. Keeping this
definition in mind, we can define the integral of simple functions as
\begin{equation}
  \int_{X} \mathrm{d}\mm(x) \: g(x) = \sum_{j=1}^N \mm(S_j) \circ c_j \in \mathfrak{A}'\; .
\end{equation}

Next, we recall hypotheses under which an operator-valued function admits a
simple approximation.

\begin{proposition}[Simple approximation \mbox{\cite[Proposition
    E.2]{cohn2013mt}}]
  \label{prop:a1}
  \mbox{}	\\
  Let $f:X\to \mathfrak{B}$ be a measurable function. If $f(X)$ is separable,
  then $f$ admits a simple approximation, {\it i.e.}, there exists a sequence
  $\lf\{f_n \ri\}_{n\in \mathbb{N}}$ of simple functions such that for all $x\in X$ and
  $n\in \mathbb{N}$
  \beq
  \lf\lVert f_n(x) \ri\rVert_{\mathfrak{B}}^{}\leq \lf\lVert f(x)\ri\rVert_{\mathfrak{B}}^{}\;, \qquad \lim_{n\to \infty} \lf\lVert f(x)-f_n(x)\ri\rVert_{\mathfrak{B}}^{} =0\; .
  \eeq
\end{proposition}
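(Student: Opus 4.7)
The plan is to exploit separability of $f(X)$ to discretize the range by a countable dense subset, and then pull back this discretization to a measurable partition of $X$ via $f$.

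First, I would pick a countable dense subset $\{b_k\}_{k\in \mathbb{N}} \subset f(X)$. For each fixed $n\in \mathbb{N}$, the open balls $B(b_k,1/n)$, $k\in \mathbb{N}$, cover $f(X)$, so the sets
\begin{equation*}
  A_{n,1}:=f^{-1}(B(b_1,1/n))\; , \qquad A_{n,k}:=f^{-1}(B(b_k,1/n))\setminus \bigcup_{j<k}A_{n,j}\; ,\quad k\geq 2\; ,
\end{equation*}
form a measurable partition of $X$ by measurability of $f$ and continuity of the norm. Truncating to $k\leq n$ and defining $X_n:=\bigcup_{k=1}^n A_{n,k}$, the candidate
\begin{equation*}
  \tilde f_n(x):=\sum_{k=1}^n b_k\, \one_{A_{n,k}}(x) + 0 \cdot \one_{X\setminus X_n}(x)
\end{equation*}
is simple by construction, and by density of $\{b_k\}$ every $x\in X$ eventually lies in some $X_n$ with $\lVert f(x)-\tilde f_n(x)\rVert_{\mathfrak{B}}<1/n$, so $\tilde f_n(x)\to f(x)$ pointwise in norm.

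The remaining difficulty, and what I expect to be the main obstacle, is enforcing the normalization bound $\lVert f_n(x)\rVert_{\mathfrak{B}}\leq \lVert f(x)\rVert_{\mathfrak{B}}$ without destroying simplicity. A convenient fix is to compose $\tilde f_n$ with a measurable truncation depending on $\lVert f(x)\rVert_{\mathfrak{B}}$, which is itself a measurable scalar function. Specifically, discretize $\lVert f(x)\rVert_{\mathfrak{B}}$ into finitely many levels $\{c_{n,\ell}\}_{\ell=1}^{L(n)}$ with mesh $1/n$ up to height $n$, and set
\begin{equation*}
  f_n(x):=\begin{cases}\tilde f_n(x) & \text{if }\lVert \tilde f_n(x)\rVert_{\mathfrak{B}}\leq \lVert f(x)\rVert_{\mathfrak{B}}\; ,\\[1mm] \dfrac{c_{n,\ell(x)}}{\lVert \tilde f_n(x)\rVert_{\mathfrak{B}}}\,\tilde f_n(x) & \text{otherwise}\; ,\end{cases}
\end{equation*}
where $c_{n,\ell(x)}$ is the largest grid value not exceeding $\lVert f(x)\rVert_{\mathfrak{B}}$. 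The level-set partition $\{x:c_{n,\ell}\leq \lVert f(x)\rVert_{\mathfrak{B}}<c_{n,\ell+1}\}$ is measurable, and intersecting with the $A_{n,k}$ still yields finitely many measurable pieces on which $f_n$ is constant, so $f_n$ is simple.

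Finally I would verify the two required properties: the norm bound $\lVert f_n(x)\rVert_{\mathfrak{B}}\leq \lVert f(x)\rVert_{\mathfrak{B}}$ holds by construction, and pointwise convergence $\lVert f(x)-f_n(x)\rVert_{\mathfrak{B}}\to 0$ follows by splitting into the two cases above and using both $\lVert f(x)-\tilde f_n(x)\rVert_{\mathfrak{B}}<1/n$ and $|c_{n,\ell(x)}-\lVert f(x)\rVert_{\mathfrak{B}}|<1/n$ together with the reverse triangle inequality $|\lVert \tilde f_n(x)\rVert_{\mathfrak{B}}-\lVert f(x)\rVert_{\mathfrak{B}}|<1/n$. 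All the estimates are controlled by $1/n$, so the approximation is genuine.
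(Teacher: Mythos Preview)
The paper does not supply its own proof of this proposition; it simply cites \cite[Proposition~E.2]{cohn2013mt}. So there is nothing in the paper to compare against, but your argument has a genuine gap that needs repair.

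The claim that ``by density of $\{b_k\}$ every $x\in X$ eventually lies in some $X_n$'' is false. Membership $x\in X_n$ requires $\min_{k\leq n}\lVert f(x)-b_k\rVert_{\mathfrak{B}} < 1/n$, whereas density only gives $\min_{k\leq n}\lVert f(x)-b_k\rVert_{\mathfrak{B}} \to 0$; the latter does not imply the former. Concretely, take $f(X)=\{1\}\cup\{1+k^{-1/2}:k\in\mathbb{N}\}\subset\mathbb{R}$ with dense subset $b_k=1+k^{-1/2}$. For any $x$ with $f(x)=1$ one has $\min_{k\leq n}\lvert f(x)-b_k\rvert = n^{-1/2} \geq 1/n$ for all $n\geq 1$, so $x\notin X_n$ for every $n$ and $\tilde f_n(x)=0\not\to 1=f(x)$. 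The standard remedy is to define $\tilde f_n(x)$ as the \emph{nearest} point to $f(x)$ among $\{b_1,\ldots,b_n\}$ (ties broken by smallest index); then $\lVert \tilde f_n(x)-f(x)\rVert_{\mathfrak{B}} = \min_{k\leq n}\lVert b_k-f(x)\rVert_{\mathfrak{B}} \to 0$ automatically, and the level sets $\{x:\tilde f_n(x)=b_k\}$ are still measurable, being finite intersections of sets of the form $\{x:\lVert f(x)-b_j\rVert_{\mathfrak{B}} \leq \lVert f(x)-b_k\rVert_{\mathfrak{B}}\}$.

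There is also a smaller issue in your truncation step: the case distinction ``$\lVert \tilde f_n(x)\rVert_{\mathfrak{B}}\leq \lVert f(x)\rVert_{\mathfrak{B}}$'' in the definition of $f_n$ is not constant on the pieces $A_{n,k}\cap\{c_{n,\ell}\leq\lVert f(x)\rVert_{\mathfrak{B}}<c_{n,\ell+1}\}$, so you must further split each such piece by the measurable set $\{x:\lVert f(x)\rVert_{\mathfrak{B}}\geq \lVert b_k\rVert_{\mathfrak{B}}\}$ before concluding that $f_n$ is simple. With both fixes in place your rescaling scheme does go through.
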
	
	
Due to this result, in the following we only consider operator-valued
functions with \emph{ separable range}, even if not stated explicitly.

\begin{definition}[Integrability III]
  \label{def:a5}
  \mbox{}	\\
  A measurable function with separable range $f:X\to \mathfrak{B}$ is
  $\mm$-integrable iff, for any $S\in \Sigma$, the sequence of simple integrals
  \begin{equation}
    \lf\{ \int_{X} \mathrm{d}\mm(x) \: f_n(x) \one_S(x) \ri\}_{n\in \mathbb{N}} \in \mathfrak{A}'\; ,
  \end{equation}
  where $ \lf\{ f_n \ri\}_{n\in \mathbb{N}}$ is any approximation of $f$ in terms simple
  functions, is Cauchy. The integral is then defined as
  \begin{equation}
    \int_S^{}  \mathrm{d}\mm(x)f(x)=\lim_{n\to \infty}  \int_{X} \mathrm{d}\mm(x) \: f_n(x) \one_S(x) \; ,
  \end{equation}
  and it is independent of the chosen approximation.
\end{definition}

\begin{definition}[Absolute integrability]
  \label{def:a6}
  \mbox{}	\\
  A measurable function with separable range $f:X\to \mathfrak{B}$ is
  $\mm$-absolutely integrable iff $\lVert f(\,\cdot \,) \rVert_{\mathfrak{B}}^{}$ is
  $m$-integrable.
\end{definition}

In fact, any $\mm$-absolutely integrable function is also $\mm$-integrable.

\begin{proposition}[Integrability and absolute integrability]
  \label{prop:a2}
  \mbox{}	\\
  Let $f:X\to \mathfrak{B}$ be a $\mm$-absolutely integrable function. Then,
  $f$ is also $\mm$-integrable and, for all $S\in \Sigma$,
  \begin{equation}
    \lf\lVert \int_S^{}  \mathrm{d}\mm(x) \: f(x)  \ri\rVert_{\mathfrak{A}'}^{} \leq \int_S^{}  \mathrm{d}m(x) \lVert f(x)  \rVert_{\mathfrak{B}}^{}\; .
  \end{equation}
\end{proposition}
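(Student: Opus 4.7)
The plan is to mimic the proof of \cref{prop:a4}, adapting the scalar construction to the operator-valued setting via \cref{prop:a1} and the norm measure $m$ introduced in \cref{def:a7}. First, since $f$ has separable range by assumption, \cref{prop:a1} provides a simple approximation $\{f_n\}_{n\in\mathbb{N}}$ satisfying $\|f_n(x)\|_{\mathfrak{B}}\leq \|f(x)\|_{\mathfrak{B}}$ and $\|f_n(x)-f(x)\|_{\mathfrak{B}}\to 0$ pointwise. The strategy is then to show that $\{\int_X \mathrm{d}\mm(x)\,f_n(x)\mathbf{1}_S(x)\}_{n\in\mathbb{N}}$ is Cauchy in $\mathfrak{A}'$, so that \cref{def:a5} applies and yields integrability; the norm bound will then come by passage to the limit in a simple-function estimate.

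The core estimate is as follows. For simple $g=\sum_{j=1}^N c_j\mathbf{1}_{S_j}$ with mutually disjoint $S_j$, the integral $\int_S \mathrm{d}\mm(x)\,g(x)=\sum_{j}\mm(S_j\cap S)\circ c_j$ satisfies, by sub-additivity of the dual norm, by the obvious bound $\|\mm(T)\circ c\|_{\mathfrak{A}'}\leq \|c\|_{\mathfrak{B}}\,\|\mm(T)\|_{\mathfrak{A}'}=\|c\|_{\mathfrak{B}}\,m(T)$ (which uses that $\mathfrak{A}$ is an ideal of $\mathfrak{B}$ and \cref{def:a7}), the inequality
\begin{equation*}
\Bigl\|\int_S \mathrm{d}\mm(x)\,g(x)\Bigr\|_{\mathfrak{A}'}\;\leq\;\sum_{j=1}^N\|c_j\|_{\mathfrak{B}}\,m(S_j\cap S)\;=\;\int_S \mathrm{d}m(x)\,\|g(x)\|_{\mathfrak{B}}.
\end{equation*}
Applied to $g=f_n-f_m$, which is itself simple, this yields
\begin{equation*}
\Bigl\|\int_S\mathrm{d}\mm(x)\bigl(f_n(x)-f_m(x)\bigr)\Bigr\|_{\mathfrak{A}'}\leq \int_S\mathrm{d}m(x)\,\|f_n(x)-f_m(x)\|_{\mathfrak{B}}.
\end{equation*}
Since $\|f_n(x)-f_m(x)\|_{\mathfrak{B}}\leq 2\|f(x)\|_{\mathfrak{B}}$ and the latter is $m$-integrable by $\mm$-absolute integrability, dominated convergence (applied to the scalar measure $m$) shows that the right-hand side tends to $0$ as $n,m\to\infty$. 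This proves the Cauchy property, hence $\mm$-integrability in the sense of \cref{def:a5}.

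Independence of the approximating sequence follows by the same argument applied to the difference of two simple approximations $f_n^{(1)}-f_n^{(2)}$, whose norms still converge to $0$ pointwise and are dominated by $2\|f\|_{\mathfrak{B}}$. Finally, for the norm bound, we apply the displayed estimate to $g=f_n$, obtaining
\begin{equation*}
\Bigl\|\int_S\mathrm{d}\mm(x)\,f_n(x)\Bigr\|_{\mathfrak{A}'}\leq \int_S\mathrm{d}m(x)\,\|f_n(x)\|_{\mathfrak{B}}\leq \int_S\mathrm{d}m(x)\,\|f(x)\|_{\mathfrak{B}},
\end{equation*}
and pass to the limit $n\to\infty$ on the left-hand side, using continuity of $\|\cdot\|_{\mathfrak{A}'}$ together with the fact that the integrals converge in $\mathfrak{A}'$-norm by the Cauchy argument above.

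The main technical obstacle is really a bookkeeping issue rather than a deep one: one must verify carefully that $\|\mm(T)\circ c\|_{\mathfrak{A}'}\leq \|c\|_{\mathfrak{B}}\,m(T)$ in the chosen convention for the module action (left or right), which relies on the hypothesis that $\mathfrak{A}$ sits as an ideal inside $\mathfrak{B}$ so that $\omega\circ c$ is well-defined as an element of $\mathfrak{A}'$ with the stated norm control. Once this is in place, the rest is a direct transcription of the scalar Cauchy/dominated-convergence argument from \cref{prop:a4}.
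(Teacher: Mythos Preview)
Your proof is correct and is precisely the adaptation of the argument of \cref{prop:a4} to the operator-valued setting that the paper has in mind; indeed the paper's own proof reads in its entirety ``The proof is completely analogous to the proof of \cref{prop:a4}. We omit it for the sake of brevity.'' Your write-up supplies exactly those omitted details, using the simple approximation from \cref{prop:a1} in place of the monotone scalar approximation and the ideal property for the key norm estimate $\|\mm(T)\circ c\|_{\mathfrak{A}'}\leq \|c\|_{\mathfrak{B}}\,m(T)$.
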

\begin{proof}
  The proof is completely analogous to the proof of \cref{prop:a4}. We omit
  it for the sake of brevity.
\end{proof}

\begin{corollary}[Integrability of bounded functions]
  \label{cor:a1}
  \mbox{}	\\
  Any function with separable range $f:X\to \mathfrak{B}$ such that $\lVert f (\cdot )
  \rVert_{\mathfrak{B}}^{}$ is $m$-a.e.\
  uniformly bounded is $\mm$-integrable.
\end{corollary}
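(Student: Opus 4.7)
The plan is to observe that this corollary is an immediate consequence of the preceding \cref{prop:a3} and \cref{prop:a2}, together with the finiteness of $m$. Concretely, the strategy reduces the claim about $\mm$-integrability to the claim about $\mm$-absolute integrability, which in turn reduces to ordinary scalar integration against the finite norm measure $m$.

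First, I would invoke the hypothesis to produce a constant $M \geq 0$ such that $\lVert f(x) \rVert_{\mathfrak{B}} \leq M$ for $m$-almost every $x \in X$. The function $x \mapsto \lVert f(x)\rVert_{\mathfrak{B}}$ is measurable since $f$ has separable range and is itself measurable (see the discussion preceding \cref{prop:a1}). Next, I would recall that by \cref{prop:a3}, the norm measure $m$ is a finite (positive scalar) measure on $(X,\Sigma)$, so $m(X) < +\infty$. Combining these two facts, one obtains
\begin{equation*}
  \int_X \mathrm{d}m(x)\, \lVert f(x) \rVert_{\mathfrak{B}} \leq M\, m(X) < +\infty,
\end{equation*}
so that $\lVert f(\,\cdot\,)\rVert_{\mathfrak{B}}$ is $m$-integrable. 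By \cref{def:a6}, this means precisely that $f$ is $\mm$-absolutely integrable.

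Finally, I would apply \cref{prop:a2}, which states that any $\mm$-absolutely integrable function with separable range is $\mm$-integrable, to conclude that $f$ is $\mm$-integrable. There is no real obstacle here: the content of the corollary is entirely contained in the fact, established in \cref{prop:a3}, that $m$ is a \emph{finite} measure, which is what allows essential boundedness to upgrade automatically to absolute integrability. The only point requiring mild care is confirming that the hypothesis of separable range (needed to invoke \cref{def:a5} and \cref{prop:a2}) is indeed assumed in the statement, which it is.
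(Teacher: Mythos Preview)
Your proposal is correct and is exactly the intended argument: the paper states \cref{cor:a1} as an immediate corollary of \cref{prop:a2} without giving a separate proof, and the details you supply (boundedness plus finiteness of $m$ from \cref{prop:a3} give $m$-integrability of $\lVert f(\,\cdot\,)\rVert_{\mathfrak{B}}$, hence $\mm$-absolute integrability by \cref{def:a6}, hence $\mm$-integrability by \cref{prop:a2}) are precisely what the paper leaves implicit.
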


We are now in a position to state two versions of the dominated convergence
theorem for operator-valued functions. The second, that makes crucial use of
absolute integrability, is the most convenient in our concrete
applications. Note that both results easily applies to the special case of
scalar functions discussed in the previous section.

\begin{thm}[Dominated convergence I \mbox{\cite[Theorem 6]{bartle1956sm}}]
  \label{thm:a1}
  \mbox{}	\\
  Let $\lf\{ f_n \ri\}_{n\in \mathbb{N}}$, $f_n:X\to \mathfrak{B}$ for all $n\in \mathbb{N}$, be a
  sequence of $\mm$-integrable operator-valued functions strongly converging
  $\mm$-a.e.\ to $f:X\to \mathfrak{B}$. If there exists a $\mm$-integrable
  operator-valued function $g$, such that for all $n\in \mathbb{N}$ and $S\in \Sigma$
  \begin{equation}
    \lf\lVert \int_S^{}  \mathrm{d}\mm(x) \: f_n(x)  \ri\rVert_{}^{}\leq \lf\lVert \int_S^{}  \mathrm{d}\mm(x) \: g(x)  \ri\rVert_{}^{}\; ,
  \end{equation}
  then, $f$ is $\mm$-integrable and for any $S\in \Sigma$
  \begin{equation}
    \int_S^{}  \mathrm{d}\mm(x)f(x) = \lim_{n\to \infty}\int_S^{}  \mathrm{d}\mm(x) f_n(x)\; .
  \end{equation}
\end{thm}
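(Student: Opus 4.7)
The plan is to adapt Bartle's classical argument \cite{bartle1956sm} to our algebraic state-valued setting, making essential use of the finite scalar norm measure $m$ from \cref{prop:a3} and the compatibility between vector and scalar integrals established in \cref{prop:a2}. The central auxiliary objects are the indefinite integrals
\begin{equation*}
\nu_n(S):=\int_S \mathrm{d}\mm(x)\,f_n(x)\in\mathfrak{A}',\qquad
\nu_g(S):=\int_S \mathrm{d}\mm(x)\,g(x)\in\mathfrak{A}',
\end{equation*}
which, by an argument entirely analogous to \cref{prop:a3}, are countably additive vector measures on $(X,\Sigma)$. Moreover, \cref{prop:a2} yields $\lVert \nu_g(S)\rVert_{\mathfrak{A}'}\leq \int_S\mathrm{d}m(x)\,\lVert g(x)\rVert_{\mathfrak{B}}$, so that $\nu_g$ is absolutely continuous with respect to $m$ in the sense that $\lVert\nu_g(S)\rVert_{\mathfrak{A}'}\to 0$ as $m(S)\to 0$.

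The first step would be to deduce from the hypothesis $\lVert\nu_n(S)\rVert_{\mathfrak{A}'}\leq \lVert\nu_g(S)\rVert_{\mathfrak{A}'}$ that the family $\{\nu_n\}_{n\in\mathbb{N}}$ is uniformly countably additive and uniformly bounded in $\mathfrak{A}'$. This puts the proof in the position of a Vitali-type argument: since $m$ is a finite measure and $f_n\to f$ strongly $\mm$-a.e.\ (hence $m$-a.e., using $\mm\ll m$), Egorov's theorem provides, for every $\delta>0$, a set $X_\delta\in\Sigma$ with $m(X\setminus X_\delta)<\delta$ on which $f_n\to f$ uniformly in the norm of $\mathfrak{B}$. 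On $S\cap X_\delta$ the uniform convergence, combined with \cref{prop:a2}, immediately shows that $\{\nu_n(S\cap X_\delta)\}_n$ is Cauchy in $\mathfrak{A}'$; on the complement $S\setminus X_\delta$, the uniform countable additivity forces $\sup_n\lVert\nu_n(S\setminus X_\delta)\rVert_{\mathfrak{A}'}$ to vanish as $\delta\to 0$. A standard triangle-inequality splitting then yields that $\{\nu_n(S)\}_n$ is Cauchy in $\mathfrak{A}'$ for each $S\in\Sigma$, so that the completeness of $\mathfrak{A}'$ produces a limit $\nu(S)\in\mathfrak{A}'$.

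To identify $\nu(S)$ with $\int_S\mathrm{d}\mm(x)\,f(x)$ in the sense of \cref{def:a5}, one invokes \cref{prop:a1} to approximate $f$ (whose range is separable as a strong pointwise limit of separable-range functions) by simple operator-valued functions $f^{(k)}$; the same Egorov--uniform-additivity mechanism, now applied to the sequence $f^{(k)}\to f$ and controlled again by the indefinite integral of $g$ (inherited as a limit of the dominance relation), shows that $\int_S\mathrm{d}\mm(x)\,f^{(k)}(x)$ converges in $\mathfrak{A}'$ to the same limit $\nu(S)$, establishing $\mm$-integrability of $f$ and the claimed identity. The main obstacle is that the hypothesis is phrased in terms of \emph{integral} norms rather than pointwise bounds, so no estimate of the form $\lVert f_n(x)\rVert_{\mathfrak{B}}\leq \lVert g(x)\rVert_{\mathfrak{B}}$ is available and the classical pointwise dominated convergence machinery does not apply directly; the replacement by uniform countable additivity plus Egorov is precisely the conceptual core of Bartle's theorem to be transplanted here, and the delicate point is ensuring that the inheritance of the dominance to $f$ survives the passage to a merely strong operator limit.
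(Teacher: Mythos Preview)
The paper does not supply its own proof of this theorem: it is stated with the attribution \cite[Theorem 6]{bartle1956sm} and no proof environment follows (only \cref{thm:a3} receives a proof). So there is nothing to compare against beyond Bartle's original argument, and your sketch is precisely an adaptation of that argument to the algebraic state-valued framework of the appendix: the passage from the integral-norm dominance hypothesis to uniform countable additivity of the $\nu_n$, followed by an Egorov/Vitali splitting controlled by the finite scalar norm measure $m$ of \cref{prop:a3}, is exactly Bartle's mechanism.

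One point in your outline is genuinely loose. In the last step you write that the dominance by $\nu_g$ is ``inherited as a limit'' and then reapply the Egorov--uniform-additivity argument to the simple approximations $f^{(k)}$ of $f$. But the hypothesis gives you control of $\lVert\int_S\mathrm{d}\mm\,f_n\rVert$, not of $\lVert\int_S\mathrm{d}\mm\,f^{(k)}\rVert$, and there is no a priori reason the simple approximants of $f$ satisfy the same bound. Bartle's actual route is different: having established that $\{\nu_n(S)\}_n$ is Cauchy for every $S$, one first shows (via Vitali--Hahn--Saks) that the limit set function $\nu$ is itself a countably additive $\mathfrak{A}'$-valued measure, uniformly absolutely continuous with respect to $m$; then one shows $f$ is integrable by exhibiting a \emph{specific} sequence of simple functions---obtained by interlacing the simple approximants of the $f_n$ on the Egorov sets---whose indefinite integrals converge to $\nu$. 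The identification does not come from re-running the dominance argument on an arbitrary approximation of $f$. If you tighten that step along these lines, the sketch is complete.
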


\begin{thm}[Dominated convergence II]
  \label{thm:a3}
  \mbox{}	\\
  Let $\lf\{ f_n \ri\}_{n\in \mathbb{N}}$, $f_n:X\to \mathfrak{B}$ for all $n\in \mathbb{N}$, be a
  sequence of operator-valued functions strongly converging $\mu$-a.e.\ to
  $f:X\to \mathfrak{B}$. If there exists a $m$-integrable function $G:X\to \mathbb{R}^+$
  such that $\mm$-a.e.
  \begin{equation}
    \lf\lVert f_n(x)  \ri\rVert_{\mathfrak{B}}^{}\leq G(x)\; ,
  \end{equation}
  then, for any $n\in \mathbb{N}$, $f_n, f$ are $\mm$-absolutely integrable, and
  \begin{equation}
    \int_S^{}  \mathrm{d}\mm(x) \: f(x) = \lim_{n\to \infty}\int_S^{}  \mathrm{d}\mm(x) \: f_n(x)\; .
  \end{equation}
\end{thm}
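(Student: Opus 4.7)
The plan is to reduce the convergence of vector integrals to the classical scalar dominated convergence theorem via the norm estimate of \cref{prop:a2}, after first establishing $\mm$-absolute integrability of each $f_n$ and of the limit $f$. Throughout, I read ``strongly converging'' as norm convergence in the Banach space $\mathfrak{B}$, since this is the setting in which the bound of \cref{prop:a2} is directly applicable; the absolute continuity $\mm \ll m$ proved in \cref{prop:a3} will be used systematically to transfer a.e.\ statements between the two measures.

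First, I would verify the $\mm$-absolute integrability claim. Measurability of $\lf\|f_n(\,\cdot\,)\ri\|_{\mathfrak{B}}$ is standard, and the hypothesis $\lf\|f_n(x)\ri\|_{\mathfrak{B}} \leq G(x)$ $\mm$-a.e., combined with \cref{prop:a3}, yields the same bound $m$-a.e. Since $G$ is $m$-integrable by assumption, $\lf\|f_n(\,\cdot\,)\ri\|_{\mathfrak{B}}$ is $m$-integrable too, so \cref{def:a6} gives $\mm$-absolute integrability of each $f_n$. Passing to the norm limit yields $\lf\|f(x)\ri\|_{\mathfrak{B}} \leq G(x)$ $\mm$-a.e.\ as well, hence the same conclusion for $f$. \cref{prop:a2} then guarantees that $f_n, f$ are all $\mm$-integrable in the sense of \cref{def:a5}.

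Second, I would exploit the linearity of the vector integral---inherited from linearity on simple functions and the limiting procedure in \cref{def:a5}---to write
\begin{equation*}
  \int_S \diff\mm(x) \: f(x) - \int_S \diff\mm(x) \: f_n(x) = \int_S \diff\mm(x) \: \lf(f(x) - f_n(x)\ri),
\end{equation*}
and then apply the norm estimate of \cref{prop:a2} to the $\mm$-absolutely integrable difference $f - f_n$:
\begin{equation*}
  \lf\| \int_S \diff\mm(x) \: \lf(f(x) - f_n(x)\ri) \ri\|_{\mathfrak{A}'} \leq \int_S \diff m(x) \: \lf\|f(x) - f_n(x)\ri\|_{\mathfrak{B}}.
\end{equation*}
The right-hand side is now a scalar integral against the finite positive measure $m$; its integrand tends to $0$ pointwise $m$-a.e.\ (again by $\mm \ll m$) and is dominated by the $m$-integrable function $2G$. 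The classical scalar dominated convergence theorem thus gives that the right-hand side vanishes as $n \to \infty$, concluding the proof. I do not foresee any serious obstacle beyond the interpretive point noted above, namely reading ``strong convergence'' as norm convergence in $\mathfrak{B}$; with this convention the argument is essentially a scalar dominated convergence theorem applied to the norm functions, made usable at the vector level by \cref{prop:a2}.
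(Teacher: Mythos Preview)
Your proof is correct and follows essentially the same route as the paper's: establish $\mm$-absolute integrability of $f_n$ and $f$ via the domination by $G$, invoke \cref{prop:a2} to bound $\lf\|\int_S \diff\mm(x)\,(f-f_n)(x)\ri\|_{\mathfrak{A}'}$ by $\int_S \diff m(x)\,\lf\|(f-f_n)(x)\ri\|_{\mathfrak{B}}$, and then apply the scalar dominated convergence theorem with majorant $2G$. One cosmetic point: to pass from ``$\mm$-a.e.'' to ``$m$-a.e.'' you actually need that $\mm$-null sets are $m$-null, which follows directly from $m(S)=\lVert \mm(S)\rVert_{\mathfrak{A}'}$ rather than from the direction $\mm\ll m$ stated in \cref{prop:a3}; the null sets coincide, so the argument is unaffected.
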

\begin{proof}
  By dominated convergence theorem for scalar measures and functions, applied
  to $m$ and $ \lf\{ \lVert f_n(\,\cdot \,) \rVert_{\mathfrak{B}}^{} \ri\}_{n\in \mathbb{N}}$,
  respectively, we get that the $\lVert f_n(\,\cdot \,) \rVert_{\mathfrak{B}}^{},\lVert f (\,\cdot \,)
  \rVert_{\mathfrak{B}}^{}$ are both $m$-integrable and
  therefore, by \cref{prop:a2}, it follows that $f_n,f$ are also $\mm$-integrable. Now, for
  any $S\in \Sigma$, again by \cref{prop:a2},
  \begin{equation*}
    \lf\lVert \int_S^{}  \mathrm{d}\mm(x) (f-f_n)(x) \ri\rVert_{\mathfrak{A}'}^{}\leq \int_S^{}  \mathrm{d}m(x) \lf\lVert (f-f_n)(x)  \ri\rVert_{\mathfrak{B}}^{}\; .
  \end{equation*}
  Hence by dominated convergence theorem for $m$, applied to the sequence of
  scalar functions $\lf\{\lf\lVert (f-f_n)(x) \ri\rVert_{\mathfrak{B}}\ri\}_{n\in \mathbb{N}}$, it
  follows that in the strong topology of $\mathfrak{A}'$,
  \begin{equation*}
    \int_S^{}  \mathrm{d}\mm(x)f(x) = \lim_{n\to \infty}\int_S^{}  \mathrm{d}\mm(x) f_n(x)\; .
  \end{equation*}
\end{proof}

\subsection{Integration of functions with values in unbounded operators}
\label{sec:funct-with-valu}

Let us restrict the attention, for this section, to the concrete case
$\mathfrak{A}=\mathscr{B}(L^2(\R^{dN}))$. In the applications described
above, it is sometimes necessary to integrate functions from some measurable
space $X$ to the unbounded operators on $L^2(\R^{dN})$ (albeit with a rather
explicit form). It is possible to define the integration of such functions
with respect to suitable generalized state-valued measures, as already
outlined in \cref{sec:minim-probl-gener}. Let us repeat here the argument for
the sake of completeness.

Let $\mathcal{T}>0$ be an operator on $L^2(\R^{dN})$, possibly unbounded. A
generalized state-valued measure is in the domain of $\mathcal{T}$ iff there
exists a generalized state-valued measure $\mathfrak{n}_{\mathcal{T}}$ such
that for all $\mathcal{B}\in \mathscr{B}(L^2(\R^{dN}))$, and for any $S\in \Sigma$,
\begin{equation*}
  \mathfrak{n}_{\mathcal{T}}(S)\bigl[\mathcal{T}^{-\frac{1}{2}}\mathcal{B}\mathcal{T}^{-\frac{1}{2}}\bigr]=\mathfrak{n}(S)\bigl[\mathcal{B}\bigr]\; .
\end{equation*}

Given a measure in the domain of $\mathcal{T}$, we can integrate functions
singular ``at most as $\mathcal{T}$''. Let $\mathcal{F}$ be a function from
$X$ to the (closed and densely defined) operators on $L^2(\R^{dN})$. Then
$\mathcal{F}$ is $\mathfrak{n}$-absolutely integrable, with $\mathfrak{n}$ in
the domain of $\mathcal{T}$, iff for $\mathfrak{n}$-a.e.\ $x\in X$:
\begin{itemize}
\item $\mathcal{T}^{-\frac{1}{2}}\mathcal{F}(x)\mathcal{T}^{-\frac{1}{2}}\in
  \mathscr{B}(L^2(\R^{dN}))$;  
\item $\mathcal{T}^{-\frac{1}{2}}\mathcal{F}(x)\mathcal{T}^{-\frac{1}{2}}$ is $\mathfrak{n}_{\mathcal{T}}$-absolutely integrable.
\end{itemize}

Given an absolutely integrable function, we can define the integral as
follows: for any $S\in \Sigma$,
\begin{equation*}
  \int_{S}^{}  \mathrm{d}\mathfrak{n}(x)\bigl[\mathcal{F}(x)\bigr]=\int_S^{}  \mathrm{d}\mathfrak{n}_{\mathcal{T}}(x) \lf[\mathcal{T}^{-\frac{1}{2}}\mathcal{F}(x)\mathcal{T}^{-\frac{1}{2}}\ri]\; .
\end{equation*}

\subsection{Two-Sided Integration}
\label{sec:two-sided-integr}

If $\mathfrak{A}$ is a two-sided ideal of $\mathfrak{B}$, we can give a
slight generalization of the operator-valued integration, to accommodate
integration of one function to the left and one function to the right of the
measure. We use the notations and definitions of
\cref{sec:integr-oper-valu}. Let $g,h:X\to \mathfrak{B}$ be two simple
functions,
\begin{equation*}
  g(x)=\sum_{j=1}^N c_j \one_{S_j}(x)\;, \qquad h(x)=\sum_{j=1}^M d_j \one_{T_j}(x) \; .
\end{equation*}
In addition, for any $B,C\in \mathfrak{B}$ and for any $\omega\in \mathfrak{A}'$, let
us define $B \circ \omega \circ C\in \mathfrak{A}'$ by
\begin{equation}
  \lf( B \circ \omega \circ C \ri)(\,\cdot \,):=\omega(B\,\cdot \,C)\; .
\end{equation}
Hence, it is possible to define two-sided simple integration as
\begin{equation}
  \int_{X}^{} g(x) \: \mathrm{d}\mm(x) \: h(x) = \sum_{j=1}^N\sum_{k=1}^M c_j \circ \mu(S_j\cap T_k) \circ d_k\; .
\end{equation}
Moreover, if $f_1,f_2:X\to \mathfrak{B}$ have separable range, it is
straightforward to extend \cref{def:a5} to define the two-sided integral
\begin{equation}
  \int_S^{} f_1(x) \: \mathrm{d}\mm(x) f_2(x)\in \mathfrak{A}'\; .
\end{equation}
If the above integral exists, we say that the pair $f_1, f_2$ is
$\mm$-two-sided-integrable (the order is relevant). This notion also
preserves positivity: for all $f$ such that $f^{*}\,\cdot \,f$ is
$\mm$-two-sided-integrable, then
\begin{equation}
  \int_S^{} f^{*}(x) \: \mathrm{d}\mm(x) \: f(x)\in \mathfrak{A}'_+\; .
\end{equation}

A pair of functions with separable range $f_1,f_2:X\to \mathfrak{B}$ are
$\mm$-two-sided-absolutely integrable iff $\lVert f_1(\,\cdot \,) \rVert_{\mathfrak{B}}^{}\lVert
f_2(\,\cdot \,) \rVert_{\mathfrak{B}}^{}$ is $m$-integrable. The analogue of
\cref{prop:a2} is the following

\begin{proposition}[Integrability and absolute integrability]
  \label{prop:a5}
  \mbox{}	\\
  Let $f_1,f_2:X\to \mathfrak{B}$ be $\mm$-two-sided-absolutely integrable. Then, $f_1, f_2$ and $f_2, f_1$
  are both $\mm$-two-sided-integrable and, for all $S\in \Sigma$,
  \beq
  \lf\| \int_S^{}f_1(x) \:\mathrm{d}\mm(x) \: f_2(x) \ri\|_{\mathfrak{A}'}^{}\leq \int_S^{} \mathrm{d}m(x) \:\lVert f_1(x) \rVert_{\mathfrak{B}}^{}\lVert f_2(x) \rVert_{\mathfrak{B}}^{},
  \eeq
  with analogous bound when $ f_1 $ and $ f_2 $ are exchanged on the left hand side.
\end{proposition}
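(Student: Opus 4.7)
The plan is to mirror the proof of Proposition \ref{prop:a4}, proceeding through simple approximations and controlling the two-sided integral by the scalar norm measure $m$. First, I note the key algebraic bound for the elementary two-sided simple integral: since $\mathfrak{A}$ is a two-sided ideal of $\mathfrak{B}$, for any $c,d \in \mathfrak{B}$ and $\omega \in \mathfrak{A}'$ one has $\lVert c \circ \omega \circ d \rVert_{\mathfrak{A}'} \leq \lVert c \rVert_{\mathfrak{B}} \lVert d \rVert_{\mathfrak{B}} \lVert \omega \rVert_{\mathfrak{A}'}$, because $(c\circ\omega\circ d)(a)=\omega(cad)$ for $a\in\mathfrak{A}$. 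Combined with the finite additivity of $\mm$ on disjoint sets, this yields, for simple $g(x)=\sum_{j} c_j \one_{S_j}(x)$ and $h(x)=\sum_k d_k \one_{T_k}(x)$ with $\{S_j\}$ and $\{T_k\}$ mutually disjoint partitions,
\begin{equation*}
\lf\lVert \int_S g(x) \, \mathrm{d}\mm(x) \, h(x) \ri\rVert_{\mathfrak{A}'}^{} \leq \sum_{j,k} \lVert c_j \rVert_{\mathfrak{B}}^{} \lVert d_k \rVert_{\mathfrak{B}}^{} m(S_j \cap T_k \cap S) = \int_S \mathrm{d}m(x) \, \lVert g(x) \rVert_{\mathfrak{B}}^{} \lVert h(x) \rVert_{\mathfrak{B}}^{},
\end{equation*}
where the equality uses that $\lVert g(x) \rVert_{\mathfrak{B}}^{}=\sum_j \lVert c_j \rVert_{\mathfrak{B}}^{} \one_{S_j}(x)$ (and similarly for $h$) thanks to the mutual disjointness.

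Next, since $f_1, f_2$ have separable range, by Proposition \ref{prop:a1} I pick simple approximations $\{f_{i,n}\}_{n\in\mathbb{N}}$ with $\lVert f_{i,n}(x) \rVert_{\mathfrak{B}}^{} \leq \lVert f_i(x) \rVert_{\mathfrak{B}}^{}$ and $\lim_n \lVert f_i(x) - f_{i,n}(x) \rVert_{\mathfrak{B}}^{} = 0$ for every $x\in X$. I then verify that $\{\int_S f_{1,n}(x) \, \mathrm{d}\mm(x) \, f_{2,n}(x)\}_{n\in\mathbb{N}}$ is Cauchy in $\mathfrak{A}'$ via the telescoping
\begin{equation*}
f_{1,n} \, \mathrm{d}\mm \, f_{2,n} - f_{1,m} \, \mathrm{d}\mm \, f_{2,m} = (f_{1,n}-f_{1,m}) \, \mathrm{d}\mm \, f_{2,n} + f_{1,m} \, \mathrm{d}\mm \, (f_{2,n}-f_{2,m}),
\end{equation*}
and applying the elementary bound above to each term. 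The resulting scalar integrands are dominated by $2 \lVert f_1(x) \rVert_{\mathfrak{B}}^{} \lVert f_2(x) \rVert_{\mathfrak{B}}^{}$, which is $m$-integrable by the two-sided absolute integrability hypothesis; scalar dominated convergence (as in the proof of Proposition \ref{prop:a4}) forces the Cauchy condition, hence integrability of the pair $(f_1,f_2)$ in the sense of the straightforward extension of Definition \ref{def:a5} to the two-sided case.

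Taking $n\to\infty$ in the elementary bound applied to the simple approximants, and using scalar dominated convergence once more on the right-hand side together with the triangle inequality on the left-hand side, I obtain the announced estimate
\begin{equation*}
\lf\lVert \int_S f_1(x) \, \mathrm{d}\mm(x) \, f_2(x) \ri\rVert_{\mathfrak{A}'}^{} \leq \int_S \mathrm{d}m(x) \, \lVert f_1(x) \rVert_{\mathfrak{B}}^{} \lVert f_2(x) \rVert_{\mathfrak{B}}^{}.
\end{equation*}
The analogous statement for the swapped pair $(f_2,f_1)$ is immediate because the absolute integrability hypothesis is symmetric: $\lVert f_2(x) \rVert_{\mathfrak{B}}^{} \lVert f_1(x) \rVert_{\mathfrak{B}}^{} = \lVert f_1(x) \rVert_{\mathfrak{B}}^{} \lVert f_2(x) \rVert_{\mathfrak{B}}^{}$, so the same argument applies verbatim with the roles interchanged.

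The main obstacle is essentially bookkeeping: ensuring that the two-sided simple integral satisfies the product-of-norms bound even when the partitions $\{S_j\}$ and $\{T_k\}$ underlying $g$ and $h$ are distinct, which is handled by passing to the common refinement $\{S_j \cap T_k\}$ and using $\sigma$-additivity of $m$ proved in Proposition \ref{prop:a3}. Once this elementary two-sided simple estimate is in place, the rest of the argument is a routine transcription of the one-sided proof of Proposition \ref{prop:a4}.
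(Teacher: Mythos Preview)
Your proof is correct and follows exactly the approach the paper intends: the paper does not give an explicit proof of this proposition, implicitly treating it as the two-sided analogue of \cref{prop:a2}, whose proof is in turn stated to be ``completely analogous to the proof of \cref{prop:a4}''. Your transcription of the \cref{prop:a4} argument to the two-sided setting, including the elementary product-of-norms bound on simple integrals and the telescoping Cauchy estimate, is precisely what is required.
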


Finally, dominated convergence applies to two-sided integration too.

\begin{thm}[Dominated convergence III]
  \label{thm:a4}
  \mbox{}	\\
  Let $ \lf\{ f_n \ri\}_{n\in \mathbb{N}}, \lf\{ g_n \ri\}_{n\in \mathbb{N}}$, $f_n,g_n:X\to
  \mathfrak{B}$ for all $n\in \mathbb{N}$, be two sequences of
  operator-valued functions strongly converging $\mm$-a.e.\ to $f,g:X\to
  \mathfrak{B}$,
  respectively. If there exists a $m$-square-integrable function $G:X\to \mathbb{R}^+$
  such that $\mm$-a.e.
  \begin{equation}
    \lVert f_n(x)  \rVert_{\mathfrak{B}}^{}\leq G(x)\;,	\qquad \lVert g_n(x)  \rVert_{\mathfrak{B}}^{}\leq G(x)\; ,
  \end{equation}
  then, for any $n\in \mathbb{N}$, $f_n,g_n$ and $f,g$ are $\mm$-two-sided-absolutely
  integrable, and
  \beqn
  &\disp\int_S^{}f(x) \: \mathrm{d}\mm(x) \: g(x)=\lim_{n\to \infty} \disp\int_S^{}f_n(x) \: \mathrm{d}\mm(x) \: g_n(x)\; ;\\
  &\disp\int_S^{}g(x) \: \mathrm{d}\mm(x) \: f(x)=\lim_{n\to \infty} \disp\int_S^{}g_n(x) \:\mathrm{d}\mm(x) \: f_n(x)\; .
  \eeqn
\end{thm}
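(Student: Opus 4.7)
The plan is to mirror, \emph{mutatis mutandis}, the strategy already used in the proof of \cref{thm:a3}, with the single-sided bound from \cref{prop:a2} replaced by its two-sided counterpart \cref{prop:a5}. In this way the two-sided operator integral gets controlled by a \emph{scalar} integral of the product $\lVert f \rVert_{\mathfrak{B}} \lVert g \rVert_{\mathfrak{B}}$ against the norm measure $m$, and I can then invoke the ordinary (scalar) dominated convergence theorem for the finite measure $m$.

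First I would establish two-sided absolute integrability. Strong convergence $f_n\to f$, $g_n\to g$ $\mathfrak{m}$-a.e., together with the hypothesis $\lVert f_n(x)\rVert_{\mathfrak{B}},\lVert g_n(x)\rVert_{\mathfrak{B}}\leq G(x)$, yields $\lVert f(x)\rVert_{\mathfrak{B}},\lVert g(x)\rVert_{\mathfrak{B}}\leq G(x)$ $\mathfrak{m}$-a.e. by continuity of the norm. Since $G$ is $m$-square-integrable, the products $\lVert f_n\rVert_{\mathfrak{B}}\lVert g_n\rVert_{\mathfrak{B}}$ and $\lVert f\rVert_{\mathfrak{B}}\lVert g\rVert_{\mathfrak{B}}$ are both dominated by $G^2\in L^1(m)$; by \cref{def:a6} extended to the two-sided setting right before \cref{prop:a5}, the pairs $(f_n,g_n)$ and $(f,g)$ are $\mathfrak{m}$-two-sided-absolutely integrable. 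Separability of the range of $f$ and $g$, needed to apply \cref{prop:a5}, follows from the fact that the range of each limit sits in the norm closure of the countable union of the (separable) ranges of the approximants.

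Next, to prove convergence of the integrals, I would use the telescoping decomposition
\begin{multline*}
\int_S f(x)\,\mathrm{d}\mathfrak{m}(x)\,g(x) - \int_S f_n(x)\,\mathrm{d}\mathfrak{m}(x)\,g_n(x)\\
= \int_S \bigl(f(x)-f_n(x)\bigr)\,\mathrm{d}\mathfrak{m}(x)\,g(x) + \int_S f_n(x)\,\mathrm{d}\mathfrak{m}(x)\,\bigl(g(x)-g_n(x)\bigr),
\end{multline*}
and apply \cref{prop:a5} to each summand to get
\begin{equation*}
\left\lVert \int_S (f-f_n)\,\mathrm{d}\mathfrak{m}\,g\right\rVert_{\mathfrak{A}'} \leq \int_S \mathrm{d}m(x)\,\lVert f(x)-f_n(x)\rVert_{\mathfrak{B}}\,\lVert g(x)\rVert_{\mathfrak{B}},
\end{equation*}
and the analogous bound for the second term. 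The integrands on the right are bounded by $2G(x)^2$ which is $m$-integrable; since by strong convergence $\lVert f(x)-f_n(x)\rVert_{\mathfrak{B}}\to 0$ and $\lVert g(x)-g_n(x)\rVert_{\mathfrak{B}}\to 0$ $\mathfrak{m}$-a.e., the scalar dominated convergence theorem applied to $m$ gives that both right-hand sides vanish as $n\to\infty$. This proves the first identity; the second is obtained identically by swapping the roles of $f$ and $g$ (using the symmetric statement of \cref{prop:a5}).

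The only delicate point — and the one I expect to be the main obstacle to write carefully — is the first step, namely that $G^2\in L^1(m)$ really is the correct dominating scalar function for the two-sided setting: unlike in \cref{thm:a3}, here one does not reduce to $m$-integrability of a single $\lVert f_n\rVert_{\mathfrak{B}}$ bound, but rather of a product $\lVert f_n\rVert_{\mathfrak{B}}\lVert g_n\rVert_{\mathfrak{B}}$. This is precisely the reason for assuming $G$ to be \emph{square}-integrable rather than merely integrable, and it is why the hypothesis in \cref{thm:a4} differs from the one in \cref{thm:a3}; once this is observed, the rest of the argument is a routine adaptation.
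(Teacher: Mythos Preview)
Your proposal is correct, and in fact the paper does not spell out a proof of \cref{thm:a4} at all: it states \cref{prop:a5} as the two-sided analogue of \cref{prop:a2}, announces that ``dominated convergence applies to two-sided integration too'', and then states \cref{thm:a4} without proof. Your argument --- reducing via \cref{prop:a5} to scalar dominated convergence for $m$, with $G^2$ as the dominating function and the telescoping split to handle the two factors --- is exactly the adaptation of the proof of \cref{thm:a3} that the paper's setup invites, so there is nothing to compare.
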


\subsection{Radon-Nikod\'ym Property and Push-forward}
\label{sec:radon-nikodym-prop}

If an operator-valued function does not have a separable range, it may fail
to have an approximation with simple functions. It is possible to give an
alternative definition of integration if $\mathfrak{A}'$ is a
\emph{separable} space, as it is the case for the trace class operators on a
separable Hilbert space $\mathscr{L}^1(\mathscr{K})$, thanks to the following
property.

\begin{thm}[Radon-Nikod\'ym property \mbox{\cite[Theorem
    2.1.0]{dunford1940tams}}]
  \label{thm:a6}
  \mbox{}	\\
  If $\mathfrak{A}'$ is separable, then it has the \emph{Radon-Nikod\'ym
    property}: for every algebraic state-valued measure $\mm$, there exists a
  function $\varrho:X\to \mathfrak{A}'_+$, which is $m$-Bochner-integrable and such
  that, for all $S\in \Sigma$,
  \begin{equation}
    \mm(S)=\int_S^{}  \mathrm{d}m(x) \: \varrho(x)\; .
  \end{equation}
  The function $\varrho$ is the \emph{Radon-Nikod\'ym derivative} of $\mm$ w.r.t.\
  $m$, denoted by $\varrho=\frac{\mathrm{d}\mm}{\mathrm{d}m}$.
\end{thm}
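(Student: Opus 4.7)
This is a classical result from vector measure theory, credited to Dunford--Pettis, so I would not aim for an original argument but rather sketch the standard martingale-based proof, exploiting both the absolute continuity $\mm \ll m$ already established in Proposition A.3 (via \cref{def:a7}) and the separability of $\mathfrak{A}'$. The first step is to construct an approximating filtration. Since $\Sigma$ is generated by a countable family (inherited from the separability context of the applications, and in full generality one restricts to the sub-$\sigma$-algebra actually charged by $m$), one picks an increasing sequence $\{\mathcal{P}_n\}_{n\in \mathbb{N}}$ of finite measurable partitions of $X$, each refining its predecessor, whose union generates $\Sigma$ up to $m$-null sets. On each atom $S \in \mathcal{P}_n$ with $m(S) > 0$, one defines the $\mathfrak{A}'_+$-valued step function
\begin{equation*}
\varrho_n(x) := \frac{\mm(S)}{m(S)}\;, \qquad x \in S\; .
\end{equation*}
Because $\lVert \mm(S)\rVert_{\mathfrak{A}'} = m(S)$ by definition of the norm measure, each $\varrho_n$ satisfies $\lVert \varrho_n(x) \rVert_{\mathfrak{A}'} \leq 1$ pointwise, and the finite additivity of $\mm$ implies that $\{\varrho_n\}_{n\in \mathbb{N}}$ is a positive martingale in the Bochner space $L^1(m;\mathfrak{A}')$, with respect to the filtration generated by $\{\mathcal{P}_n\}$.

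The heart of the argument is to promote this martingale to an almost-sure limit $\varrho : X \to \mathfrak{A}'_+$. Here separability of $\mathfrak{A}'$ is essential: general Banach-valued bounded martingales may fail to converge pointwise, but separable dual spaces enjoy the Radon--Nikod\'ym property. Concretely, one can realize the convergence in two complementary ways. First, by choosing a countable dense set $\{A_k\}_{k\in \mathbb{N}} \subset \mathfrak{A}$ and noting that the scalar martingales $x \mapsto \varrho_n(x)[A_k]$ converge $m$-a.e.\ by the Doob--Chacon theorem (this uses only the scalar Radon--Nikod\'ym derivatives $\frac{\mathrm{d}\mu_{A_k}}{\mathrm{d}m}$, which exist thanks to Proposition A.3). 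Second, the uniform bound $\lVert \varrho_n(x)\rVert_{\mathfrak{A}'} \leq 1$ allows one to pass from pointwise weak-$*$ convergence on the countable dense set to norm convergence by a dentability / Doob maximal-inequality argument specific to separable duals.

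Once the pointwise limit $\varrho(x) := \lim_{n\to \infty} \varrho_n(x)$ is in hand, positivity and the uniform norm bound $\lVert \varrho(x)\rVert_{\mathfrak{A}'}\leq 1$ are immediate; Bochner measurability follows from the separability of the range together with the weak-$*$ measurability of each $\varrho_n$; and Bochner integrability w.r.t.\ $m$ follows from the finiteness $m(X) < +\infty$ established in \cref{prop:a3}. The identity $\mm(S) = \int_S \varrho(x)\,\mathrm{d}m(x)$ then holds on the generating algebra $\bigcup_n \mathcal{P}_n$ by the very definition of $\varrho_n$, and extends to all of $\Sigma$ by $\sigma$-additivity of $\mm$ (\cref{def:a2}) and continuity of the Bochner integral. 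The main obstacle is precisely the almost-sure convergence of the vector martingale: this is the step that fails in general Banach spaces and must genuinely exploit the separability hypothesis; everything else is a routine consequence of Propositions A.3--A.4 and the construction.
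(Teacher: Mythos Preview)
The paper does not supply its own proof of this theorem: it is stated with an explicit attribution to Dunford--Pettis \cite[Theorem 2.1.0]{dunford1940tams} and then used as a black box. So there is no argument in the paper to compare your sketch against; the authors treat the Radon--Nikod\'ym property of separable dual spaces as a known structural fact.

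Your martingale outline is a reasonable modern route to this result, and the pieces you invoke from the appendix (absolute continuity $\mm \ll m$ and finiteness of $m$ from \cref{prop:a3}, the identity $m(S)=\lVert \mm(S)\rVert_{\mathfrak{A}'}$) are the right inputs. One caution: the step ``separable dual $\Rightarrow$ bounded vector martingales converge a.e.'' is essentially equivalent to the Radon--Nikod\'ym property itself, so as written the argument risks circularity. The genuine content of the Dunford--Pettis theorem is precisely that separable duals are dentable (equivalently, satisfy the martingale convergence property), and that is proved by a direct geometric argument, not by invoking RNP-type consequences. If you want a self-contained sketch, you should either indicate how separability yields dentability, or follow the original Dunford--Pettis approach via weak-$*$ compactness of the unit ball and the scalar derivatives $\mathrm{d}\mu_A/\mathrm{d}m$.
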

	
Therefore, it is natural to give the following alternative definition of
integrability. Recall that for any $\Gamma\in \mathfrak{A}'$, and $B\in \mathfrak{B}$
we define $ \lf( \Gamma \circ B \ri) (\,\cdot \,)=\Gamma(B\,\cdot \,)$, if $\mathfrak{A}$ is a left
ideal of $\mathfrak{B}$, and $ \lf( \Gamma \circ B \ri)(\, \cdot \,)=\Gamma(\,\cdot \,B)$, if
$\mathfrak{A}$ is a right ideal of $\mathfrak{B}$. If $\mathfrak{A}$ is a
two-sided ideal, the notation $\Gamma B$ denotes indifferently any of the two. In
this case, for any $B,C\in \mathfrak{B}$ we can define $ \lf( B \circ \Gamma \circ C \ri)
(\,\cdot \,)=\Gamma(B\,\cdot \,C)$.

\begin{definition}[Integrability IV]
  \label{def:a8}
  \mbox{}	\\
  Suppose that $\mathfrak{A}'$ is separable, and let $f,g:X\to \mathfrak{B}$ be
  measurable functions (possibly with non-separable range) and $\mm$ an
  algebraic state-valued measure with Radon-Nikod\'ym derivative
  $\varrho=\frac{\mathrm{d}\mm}{\mathrm{d}m}$. Then, $f$ is $\mm$-integrable iff $\varrho
  \circ f\in \mathfrak{A}'$ is $m$-Bochner-integrable and,
  for any $S\in \Sigma$,
  \begin{equation}
    \int_S^{}  \mathrm{d}\mm(x)f(x):=\int_S^{}  \mathrm{d}m(x) \: \varrho(x) \circ f(x)\in \mathfrak{A}'\; .
  \end{equation}
  If in addition $\mathfrak{A}$ is a two-sided ideal of $\mathfrak{B}$, then
  $f, g$ is $\mm$-two-sided-integrable iff $f\varrho g\in \mathfrak{A}'$ is
  $m$-Bochner-integrable, and, for any $S\in \Sigma$,
  \begin{equation}
    \int_S^{}  f(x) \: \mathrm{d}\mm(x) \: g(x):=\int_S^{}  \mathrm{d}m(x) \: f(x) \circ\varrho(x) \circ g(x)\in \mathfrak{A}'\; .
  \end{equation}
\end{definition}

It is straightforward to see that \cref{def:a8} is equivalent to
\cref{def:a5} and the analogous one for the two-sided integral for any $f,g$
with separable range, and therefore \cref{def:a8} extends \cref{def:a5} to
any separable $\mathfrak{A}'$. In addition, since $m$-Bochner-integrability
is equivalent to $\mm$-absolute integrability, it follows that, if
$\mathfrak{A}'$ is separable, then $\mm$-integrability is equivalent to
$\mm$-absolute-integrability. Hence, all the results of
\cref{sec:integr-scal-funct,sec:integr-oper-valu,sec:two-sided-integr}
extend, if $\mathfrak{A}'$ is separable, to functions with non-separable
range.

Suppose now that $X$ is a topological vector space and $\Sigma$ the corresponding
Borel $\sigma$-algebra. In this context, Bochner's theorem holds for algebraic
state-valued measures \cite{falconi2017arxiv}: the Fourier transform, with
$
\xi \in X' $,
\begin{equation}
  \widehat{\mm}(\xi) := \int_X^{} \mathrm{d}\mm(x) \:e^{2i\xi(x)} \in \mathfrak{A}'
\end{equation}
identifies uniquely a measure. Therefore, the push-forward of an algebraic
state-valued measure $\mm$ by means of a linear continuous map $\Phi:X\to Y$,
where $Y$ is again a topological vector space with the Borel $\sigma$-algebra, is
conveniently defined using the Fourier transform, and this definition
suffices for the purposes of this paper: more precisely, the push-forward
measure $\Phi\, \sharp \, \mm$ is the measure on $Y$ whose Fourier transform is defined
by, with $ \eta \in Y' $,
\begin{equation}
  \widehat{(\Phi\, _{\star} \, \mm)}(\eta):= \int_X^{} \mathrm{d}\mm(x) \: e^{2i\eta(\Phi(x))} \in \mathfrak{A}'\; .
\end{equation}


\begin{thebibliography}{CCFO19}
  \providecommand{\url}[1]{\texttt{#1}} \providecommand{\urlprefix}{URL }
  \expandafter\ifx\csname urlstyle\endcsname\relax
  \providecommand{\doi}[1]{doi:\discretionary{}{}{}#1}\else
  \providecommand{\doi}{doi:\discretionary{}{}{}\begingroup
    \urlstyle{rm}\Url}\fi \providecommand{\eprint}[2][]{\url{#2}}

\bibitem[AF14]{ammari2014jsp} Z.~Ammari, M.~Falconi.  \newblock {W}igner
  measures approach to the classical limit of the {N}elson model:
  {C}onvergence of dynamics and ground state energy.  \newblock
  \emph{J. Stat. Phys.}
  \href{http://dx.doi.org/10.1007/s10955-014-1079-7}{\textbf{157}, pp.
    330--362} (2014).  \newblock
  \href {https://arxiv.org/abs/1403.2327} {arXiv:1403.2327}.

\bibitem[AF17]{ammari2017sima} Z.~Ammari, M.~Falconi.  \newblock {Bohr's
    correspondence principle for the renormalized Nelson model}.  \newblock
  \emph{SIAM J. Math. Anal.}
  \href{http://dx.doi.org/10.1137/17M1117598}{\textbf{49}, pp. 5031--5095}
  (2017).  \newblock
  \href {https://arxiv.org/abs/1602.03212} {arXiv:1602.03212}.

\bibitem[AFH20]{ammari2019prep}
Z. Ammari, M. Falconi, and F. Hiroshima.
\newblock From the quantum to the classical electrodynamics of charges and
  fields.
\newblock In preparation.
  
\bibitem[AH12]{abdesselam2012cmp} A.~Abdesselam, D.~Hasler.  \newblock
  Analyticity of the ground state energy for massless {N}elson models.
  \newblock \emph{Comm. Math. Phys.}
  \href{http://dx.doi.org/10.1007/s00220-011-1407-6}{\textbf{310}, pp.
    511--536} (2012).

\bibitem[AHH99]{arai1999jfa} A.~Arai, M.~Hirokawa, F.~Hiroshima.  \newblock
  On the absence of eigenvectors of {H}amiltonians in a class of massless
  quantum field models without infrared cutoff.  \newblock
  \emph{J. Funct. Anal.}
  \href{http://dx.doi.org/10.1006/jfan.1999.3472}{\textbf{168}, pp.  470--497}
  (1999).

\bibitem[AJN17]{amour2017arxiv} L.~{Amour}, L.~{Jager}, J.~{Nourrigat}.
  \newblock {Infinite dimensional semiclassical analysis and applications to
    a model in NMR}.  \newblock Preprint (2017).  \newblock \href
  {https://arxiv.org/abs/1705.07097} {arXiv:1705.07097}.

\bibitem[ALN17]{amour2017jmp} L.~Amour, R.~Lascar, J.~Nourrigat.  \newblock
  Weyl calculus in {QED} {I}. {T}he unitary group.  \newblock
  \emph{J. Math. Phys.}
  \href{http://dx.doi.org/10.1063/1.4973742}{\textbf{58}, pp. 013501, 24}
  (2017).  \newblock
  \href {https://arxiv.org/abs/1510.05293} {arXiv:1510.05293}.

\bibitem[Amm00]{ammari2000mpag} Z.~Ammari.  \newblock Asymptotic completeness
  for a renormalized nonrelativistic {H}amiltonian in quantum field theory:
  the {N}elson model.  \newblock \emph{Math. Phys. Anal. Geom.}
  \href{http://dx.doi.org/10.1023/A:1011408618527}{\textbf{3}, pp.  217--285}
  (2000).

\bibitem[AN08]{ammari2008ahp} Z.~Ammari, F.~Nier.  \newblock Mean field limit
  for bosons and infinite dimensional phase-space analysis.  \newblock
  \emph{Ann. H. Poincar\'e}
  \href{http://dx.doi.org/10.1007/s00023-008-0393-5}{\textbf{9}, pp.
    1503--1574} (2008).  \newblock
  \href {https://arxiv.org/abs/0711.4128} {arXiv:0711.4128}.

\bibitem[AN09]{ammari2009jmp} Z.~Ammari, F.~Nier.  \newblock Mean field limit
  for bosons and propagation of {W}igner measures.  \newblock
  \emph{J. Math. Phys.}
  \href{http://dx.doi.org/10.1063/1.3115046}{\textbf{50}, pp. 042107, 16}
  (2009).  \newblock
  \href {https://arxiv.org/abs/0807.3108} {arXiv:0807.3108}.

\bibitem[AN11]{ammari2011jmpa} Z.~Ammari, F.~Nier.  \newblock Mean field
  propagation of {W}igner measures and {BBGKY} hierarchies for general
  bosonic states.  \newblock \emph{J. Math. Pures Appl.}
  \href{http://dx.doi.org/10.1016/j.matpur.2010.12.004}{\textbf{95}, pp.
    585--626} (2011).  \newblock
  \href {https://arxiv.org/abs/1003.2054} {arXiv:1003.2054}.

\bibitem[AN15a]{ammari2015asns} Z.~Ammari, F.~Nier.  \newblock Mean field
  propagation of infinite-dimensional {W}igner measures with a singular
  two-body interaction potential.  \newblock \emph{Ann. Sc. Norm. Super. Pisa
    Cl. Sci.} \textbf{14}, pp.  155--220 (2015).  \newblock \href
  {https://arxiv.org/abs/1111.5918} {arXiv:1111.5918}.

\bibitem[AN15b]{amour2015arxiv} L.~Amour, J.~Nourrigat.  \newblock
  Hamiltonian systems and semiclassical dynamics for interacting spins in
  {QED}.  \newblock Preprint (2015).  \newblock \href
  {https://arxiv.org/abs/1512.08429} {arXiv:1512.08429}.

\bibitem[Ara01]{arai2001rmp} A.~Arai.  \newblock Ground state of the massless
  {N}elson model without infrared cutoff in a non-{F}ock representation.
  \newblock \emph{Rev. Math. Phys.}
  \href{http://dx.doi.org/10.1142/S0129055X01000934}{\textbf{13}, pp.
    1075--1094} (2001).

\bibitem[Bal85]{ba} A. Balazard-Konlein.  \newblock Asymptotique
  semi-classique du spectre pour des op\'{e}rateurs a symbole operatoriel.
  \newblock {\it Comptes Rendus Acad. Sci. Paris} {\bf 301}:903--906 (1985).

\bibitem[Bar56]{bartle1956sm} R.~G. Bartle.  \newblock A general bilinear
  vector integral.  \newblock \emph{Studia Math.}  \textbf{15}, pp. 337--352
  (1956).

\bibitem[BHL{\etalchar{+}}02]{betz2002rmp} V.~Betz, F.~Hiroshima,
  J.~L{\H{o}}rinczi, R.~A. Minlos, H.~Spohn.  \newblock Ground state properties
  of the {N}elson {H}amiltonian: a {G}ibbs measure-based approach.  \newblock
  \emph{Rev. Math. Phys.}
  \href{http://dx.doi.org/10.1142/S0129055X02001119}{\textbf{14}, pp.
    173--198} (2002).

\bibitem[CCFO19]{carlone2019arxiv} R.~Carlone, M.~Correggi, M.~Falconi,
  M.~Olivieri.  \newblock {Microscopic Derivation of Time-dependent Point
    Interactions}.  \newblock Preprint (2019).  \newblock \href
  {https://arxiv.org/abs/1904.11012} {arXiv:1904.11012}.

\bibitem[CF18]{correggi2017ahp} M.~Correggi, M.~Falconi.  \newblock
  {Effective Potentials Generated by Field Interaction in the Quasi-Classical
    Limit}.  \newblock \emph{Ann. H. Poincar{\'e}}
  \href{http://dx.doi.org/10.1007/s00023-017-0612-z}{\textbf{19}, pp.
    189--235} (2018).  \newblock
  \href {https://arxiv.org/abs/1701.01317} {arXiv:1701.01317}.

\bibitem[CFO19a]{correggi2017arxiv} M.~Correggi, M.~Falconi, M.~Olivieri.
  \newblock {Magnetic Schr\"odinger Operators as the Quasi-Classical Limit of
    Pauli-Fierz-type Models}.  \newblock \emph{J. Spectr. Theory}
  \href{http://dx.doi.org/10.4171/JST/277}{\textbf{9}, pp. 1287--1325} (2019).
  \newblock \href {https://arxiv.org/abs/1711.07413} {arXiv:1711.07413}.

\bibitem[CFO19b]{correggi2019arxiv} M.~Correggi, M.~Falconi, M.~Olivieri.
  \newblock {Quasi-Classical Dynamics}.  \newblock Preprint (2019).
  \newblock \href {https://arxiv.org/abs/1909.13313} {arXiv:1909.13313}.

\bibitem[Coh13]{cohn2013mt} D.~L. Cohn.  \newblock {\em Measure theory}.
  \newblock Birkh\"auser Advanced Texts: Basler
  Lehrb\"ucher. Birkh\"auser/Springer, New York, second edition, 2013.

\bibitem[Der03]{derezinski2003ahp} J.~Derezi\'{n}ski.  \newblock Van {H}ove
  {H}amiltonians---exactly solvable models of the infrared and ultraviolet
  problem.  \newblock \emph{Ann. H. Poincar\'{e}}
  \href{http://dx.doi.org/10.1007/s00023-003-0145-5}{\textbf{4}, pp.
    713--738} (2003).

\bibitem[DG99]{derezinski1999rmp} J.~Derezi{\'n}ski, C.~G{\'e}rard.
  \newblock Asymptotic completeness in quantum field theory. {M}assive
  {P}auli-{F}ierz {H}amiltonians.  \newblock \emph{Rev. Math. Phys.}
  \href{http://dx.doi.org/10.1142/S0129055X99000155}{\textbf{11}, pp.
    383--450} (1999).

\bibitem[DP40]{dunford1940tams} N. Dunford and B.~J. Pettis.  \newblock
  Linear operations on summable functions.  \newblock {\em
    Trans. Amer. Math. Soc.}, {\bf 47}:323--392, 1940.

\bibitem[DU77]{diestel1977ms} J.~Diestel and J.~J. Uhl, Jr.  \newblock {\em
    Vector measures}.  \newblock Mathematical Surveys {\bf 15}. American
  Mathematical Society, Providence, R.I., 1977.

\bibitem[Dun38]{dunford1938tams} N. Dunford.  \newblock Uniformity in linear
  spaces.  \newblock {\em Trans. Amer. Math. Soc.}, {\bf 44}:305--356, 1938.

\bibitem[DV83]{donsker1983cpam} M.D.~Donsker, S.R.S.~Varadhan.  \newblock
  Asymptotics for the polaron.  \newblock \emph{Comm. Pure Appl. Math.}
  \href{http://dx.doi.org/10.1002/cpa.3160360408}{\textbf{36}, pp.  505--528}
  (1983).


\bibitem[Fal15]{falconi2015mpag} M.~Falconi.  \newblock Self-adjointness
  criterion for operators in {F}ock spaces.  \newblock
  \emph{Math. Phys. Anal. Geom.} \textbf{18}, art. 2, 18 (2015).  \newblock
  \href {https://arxiv.org/abs/1405.6570} {arXiv:1405.6570}.

\bibitem[Fal18a]{falconi2017ccm} M.~Falconi.  \newblock {Concentration of
    cylindrical Wigner measures}.  \newblock \emph{Commun. Contemp. Math.}
  \href{http://dx.doi.org/10.1142/S0219199717500559}{\textbf{20}, p.
    1750055} (2018).  \newblock
  \href {https://arxiv.org/abs/1704.07676} {arXiv:1704.07676}.

\bibitem[Fal18b]{falconi2017arxiv} M.~Falconi.  \newblock {Cylindrical Wigner
    measures}.  \newblock \emph{Doc. Math.}
  \href{http://dx.doi.org/10.25537/dm.2018v23.1677-1756}{\textbf{23}, pp.
    1677--1756} (2018).  \newblock
  \href {https://arxiv.org/abs/1605.04778} {arXiv:1605.04778}.

\bibitem[FG02]{fg} C. Fermanian-Kammerer and P. G\'{e}rard.  \newblock
  Mesures Semi-classiques et Croisement de Modes.  \newblock {\it
    Bull. Soc. Math. France} {\bf 130}:123--168, 2002.

\bibitem[FG19]{frank2019nonlinear} R.L.~Frank, Z.~Gang.  \newblock A
  non-linear adiabatic theorem for the one-dimensional Landau-Pekar
  equations.  \newblock Preprint (2019).  \newblock \href
  {https://arxiv.org/abs/1906.07908} {arXiv:1906.07908}.

\bibitem[Fro37]{frohlich1937prslA} H.~Fr\"{o}hlich.  \newblock {Theory of
    Electrical Breakdown in Ionic Crystals}.  \newblock
  \emph{Proc. R. Soc. Lond. Ser. A Math. Phys. Eng. Sci.}
  \href{http://dx.doi.org/10.1098/rspa.1937.0106}{\textbf{160}, pp.  230--241}
  (1937).

\bibitem[FS14]{frank2014lmp} R.L.~Frank, B.~Schlein.  \newblock Dynamics of a
  strongly coupled polaron.  \newblock \emph{Lett. Math. Phys.} \textbf{104},
  pp. 911--929 (2014).  \newblock
  \href {https://arxiv.org/abs/1311.5814} {arXiv:1311.5814}.


\bibitem[G{\'e}r00]{gerard2000ahp} C.~G{\'e}rard.  \newblock On the existence
  of ground states for massless {P}auli-{F}ierz {H}amiltonians.  \newblock
  \emph{Ann. H. Poincar\'e}
  \href{http://dx.doi.org/10.1007/s000230050002}{\textbf{1}, pp. 443--459}
  (2000).

\bibitem[G\'{e}r91]{pg} P. G\'{e}rard.  \newblock Microlocal defect measures.
  \newblock {\it Comm. Partial Differential Equations}, {\bf16}:1761--1794,
  1991.

\bibitem[GMS91]{gms} C.~G{\'e}rard, A. Martinez and J. Sj\"{o}strand.
  \newblock A Mathematical Approach to the Effective Hamiltonian in Perturbed
  Periodic Problems.  \newblock {\it Commun. Math. Phys.}, {\bf 142}:217--244,
  1991.

\bibitem[GGM04]{georgescu2004cmp} V.~Georgescu, C.~G{\'e}rard,
  J.~S. M{\o}ller.  \newblock Spectral theory of massless {P}auli-{F}ierz
  models.  \newblock \emph{Comm. Math. Phys.}
  \href{http://dx.doi.org/10.1007/s00220-004-1111-x}{\textbf{249}, pp.
    29--78} (2004).

\bibitem[GHPS11]{gerard2011cmp} C.~G{\'e}rard, F.~Hiroshima, A.~Panati,
  A.~Suzuki.  \newblock Infrared problem for the {N}elson model on static
  space-times.  \newblock \emph{Comm. Math. Phys.}
  \href{http://dx.doi.org/10.1007/s00220-011-1289-7}{\textbf{308}, pp.
    543--566} (2011).  \newblock
  \href {https://arxiv.org/abs/1004.5204} {arXiv:1004.5204}.

\bibitem[GLL01]{griesemer2001im} M.~Griesemer, E.~H. Lieb, M.~Loss.
  \newblock Ground states in non-relativistic quantum electrodynamics.
  \newblock \emph{Invent. Math.}
  \href{http://dx.doi.org/10.1007/s002220100159}{\textbf{145}, pp.  557--595}
  (2001).

\bibitem[GNV06]{ginibre2006ahp} J.~Ginibre, F.~Nironi, G.~Velo.  \newblock
  Partially classical limit of the {N}elson model.  \newblock
  \emph{Ann. H. Poincar\'e}
  \href{http://dx.doi.org/10.1007/s00023-005-0240-x}{\textbf{7}, pp.  21--43}
  (2006).  \newblock
  \href {https://arxiv.org/abs/math-ph/0411046} {arXiv:math-ph/0411046}.

\bibitem[Gri17]{griesemer2016arxiv2} M.~Griesemer.  \newblock On the dynamics
  of polarons in the strong-coupling limit.  \newblock
  \emph{Rev. Math. Phys.}
  \href{http://dx.doi.org/10.1142/S0129055X17300035}{\textbf{29}} (2017).
  \newblock \href {https://arxiv.org/abs/1612.00395} {arXiv:1612.00395}.

\bibitem[HH08]{hasler2008rmp} D.~Hasler, I.~Herbst.  \newblock On the
  self-adjointness and domain of {P}auli-{F}ierz type {H}amiltonians.
  \newblock \emph{Rev. Math. Phys.}
  \href{http://dx.doi.org/10.1142/S0129055X08003389}{\textbf{20}, pp.
    787--800} (2008).  \newblock
  \href {https://arxiv.org/abs/0707.1713} {arXiv:0707.1713}.

\bibitem[Hir00]{hiroshima2000cmp} F.~Hiroshima.  \newblock Essential
  self-adjointness of translation-invariant quantum field models for
  arbitrary coupling constants.  \newblock \emph{Comm. Math. Phys.}
  \href{http://dx.doi.org/10.1007/s002200050827}{\textbf{211}, pp.  585--613}
  (2000).

\bibitem[Hir01]{hiroshima2001tams} F.~Hiroshima.  \newblock Ground states and
  spectrum of quantum electrodynamics of nonrelativistic particles.
  \newblock \emph{Trans. Amer. Math. Soc.}
  \href{http://dx.doi.org/10.1090/S0002-9947-01-02719-2}{\textbf{353}, pp.
    4497--4528} (2001).

\bibitem[Hir02]{hiroshima2002ahp} F.~Hiroshima.  \newblock Self-adjointness
  of the {P}auli-{F}ierz {H}amiltonian for arbitrary values of coupling
  constants.  \newblock \emph{Ann. H. Poincar{\'e}}
  \href{http://dx.doi.org/10.1007/s00023-002-8615-8}{\textbf{3}, pp.
    171--201} (2002).

\bibitem[Hir06]{hirokawa2006prims} M.~Hirokawa.  \newblock Infrared
  catastrophe for {N}elson's model---non-existence of ground state and
  soft-boson divergence.  \newblock \emph{Publ. Res. Inst. Math. Sci.}
  \textbf{42}, pp. 897--922 (2006).

\bibitem[HM19]{hiroshima2019arxiv} F.~Hiroshima, O.~Matte.  \newblock {Ground
    states and associated path measures in the renormalized Nelson model}.
  \newblock Preprint (2019).  \newblock \href
  {https://arxiv.org/abs/1903.12024} {arXiv:1903.12024}.

\bibitem[LMS20]{leopold2020derivation} N.~Leopold, D.~Mitrouskas,
  R.~Seiringer.  \newblock Derivation of the Landau-Pekar equations in a
  many-body mean-field limit.  \newblock Preprint (2020).  \newblock \href
  {https://arxiv.org/abs/2001.03993} {arXiv:2001.03993}.

\bibitem[LNR14]{lewin2014am} M.~Lewin, P.T.~Nam, N.~Rougerie.  \newblock
  Derivation of Hartree's theory for generic mean-field Bose systems.
  \newblock \emph{Adv. Math.}
  \href{http://dx.doi.org/10.1016/j.aim.2013.12.010}{\textbf{254}, pp.
    570--621} (2014).  \newblock
  \href {https://arxiv.org/abs/1303.0981} {arXiv:1303.0981}.

\bibitem[LNR15]{lewin2015amrx} M.~Lewin, P.T.~Nam, N.~Rougerie.  \newblock
  Remarks on the quantum de Finetti theorem for bosonic systems.  \newblock
  \emph{Appl. Math. Res. Express. AMRX}
  \href{http://dx.doi.org/10.1093/amrx/abu006}{\textbf{1}, pp.  48--63}
  (2015).  \newblock
  \href {https://arxiv.org/abs/1310.2200} {arXiv:1310.2200}.

\bibitem[LNR16]{lewin2016tams} M.~Lewin, P.T.~Nam, N.~Rougerie.  \newblock
  The mean-field approximation and the non-linear Schr\"{o}dinger functional
  for trapped Bose gases \newblock \emph{Trans. Amer. Math. Soc.}
  \href{http://dx.doi.org/10.1090/tran/6537}{\textbf{368}~(9), pp.
    6131--6157} (2016).  \newblock
  \href {https://arxiv.org/abs/1405.3220} {arXiv:1405.3220}.

\bibitem[LS19]{lieb2019arxiv} E.H.~{Lieb}, R.~{Seiringer}.  \newblock
  Divergence of the effective mass of a polaron in the strong coupling limit.
  \newblock Preprint (2019).  \newblock \href
  {https://arxiv.org/abs/1902.04025} {arXiv:1902.04025}.

\bibitem[LT97]{lieb1997cmp} E.H.~Lieb, L.E.~Thomas.  \newblock Exact ground
  state energy of the strong-coupling polaron.  \newblock
  \emph{Comm. Math. Phys.}
  \href{http://dx.doi.org/10.1007/s002200050040}{\textbf{183}~(3), pp.
    511--519} (1997).

\bibitem[Mat17]{matte2017mpag} O.~Matte.  \newblock Pauli-{F}ierz type
  operators with singular electromagnetic potentials on general domains.
  \newblock \emph{Math. Phys. Anal. Geom.}
  \href{http://dx.doi.org/10.1007/s11040-017-9249-x}{\textbf{20}, Art.  18,
    41} (2017).  \newblock
  \href {https://arxiv.org/abs/1703.00404} {arXiv:1703.00404}.

\bibitem[Mit20]{mitrouskas2020arxiv} D.~Mitrouskas.  \newblock A note on the
  Fr\"{o}hlich dynamics in the strong coupling limit.  \newblock Preprint
  (2020).  \newblock
  \href {https://arxiv.org/abs/2003.11448} {arXiv:2003.11448}.

\bibitem[M{\o}l05]{moller2005ahp} J.S.~M{\o}ller.  \newblock The translation
  invariant massive {N}elson model. {I}. {T}he bottom of the spectrum.
  \newblock \emph{Ann. H. Poincar\'e}
  \href{http://dx.doi.org/10.1007/s00023-005-0234-8}{\textbf{6}, pp.
    1091--1135} (2005).

\bibitem[Nee98]{neeb1998mm} K.-H.~Neeb.  \newblock Operator-valued positive
  definite kernels on tubes.  \newblock \emph{Monatsh. Math.}
  \href{http://dx.doi.org/10.1007/BF01473583}{\textbf{126}, pp. 125--160}
  (1998).

\bibitem[Nel64]{nelson1964jmp} E.~Nelson.  \newblock Interaction of
  nonrelativistic particles with a quantized scalar field.  \newblock
  \emph{J. Math. Phys.} \textbf{5}, pp. 1190--1197 (1964).

\bibitem[Oli20]{olivieri2020sap} M.~Olivieri.  \newblock {Quasi-classical
    Dynamics of Quantum Particles Interacting with Radiation}.  \newblock
  Ph.D. thesis, Sapienza Università di Roma (2020).

\bibitem[Par67]{parthasarathy1967pms} K.R.~Parthasarathy.  \newblock
  \emph{Probability measures on metric spaces}.  \newblock Probability and
  Mathematical Statistics \textbf{3}, Academic Press Inc., New York,
  pp. xi+276, 1967.

\bibitem[Pek55]{pekar1955ac} S.I.~Pekar.  \newblock Untersuchungen über die
  elektronentheorie der kristalle.  \newblock \emph{Acta Crystallographica}
  \href{http://dx.doi.org/10.1107/S0365110X55002776}{\textbf{8}, pp.  70--70}
  (1955).

\bibitem[PF38]{pauli1938nc} W.~Pauli, M.~Fierz.  \newblock {Zur Theorie der
    Emission langwelliger Lichtquanten}.  \newblock \emph{Il Nuovo Cimento}
  \textbf{15}, pp. 167--188 (1938).

\bibitem[Piz03]{pizzo2003ahp} A.~Pizzo.  \newblock One-particle (improper)
  states in {N}elson's massless model.  \newblock \emph{Ann. H. Poincar\'e}
  \href{http://dx.doi.org/10.1007/s00023-003-0136-6}{\textbf{4}, pp.
    439--486} (2003).  \newblock
  \href {https://arxiv.org/abs/math-ph/0010043} {arXiv:math-ph/0010043}.


\bibitem[Spo04]{spohn2004dcp} H.~Spohn.  \newblock \emph{Dynamics of charged
    particles and their radiation field}, Cambridge University Press,
  Cambridge, xvi+360 pp, 2004.

\bibitem[Teu03]{teu} S.~Teufel.  \newblock {\it Adiabatic Perturbation Theory
    in Quantum Dynamics}.  \newblock Lect. Notes Math. {\bf 1821}, Springer,
  Berlin, 2003.

\end{thebibliography}

\newcommand{\etalchar}[1]{$^{#1}$}

\end{document}